
\documentclass[a4paper]{article}

%
%
%

\usepackage{authblk}

\usepackage{hyperref}

\usepackage{leftidx}
\usepackage{mathtools} 
\usepackage{paralist}
\usepackage{todonotes}
\usepackage{tikz} 
\usetikzlibrary{arrows}
\usetikzlibrary{arrows,automata}
\usetikzlibrary{automata,positioning,arrows}
\usetikzlibrary{shapes}
\usetikzlibrary{shapes,backgrounds}
\usetikzlibrary{matrix}
\usetikzlibrary{chains,fit,shapes}
\usepackage{amsmath}
\usepackage{amssymb}
\usepackage{amsfonts}
\usepackage{amsthm}
\usepackage{wrapfig}
\usepackage{url}
\usepackage{algorithm2e}

\usepackage{stmaryrd}

\usepackage{tensor}

\usepackage{subfig}

\newcommand{\open}[1]{\tensor[^{#1}]{\triangleright}{_{}}}
\newcommand{\close}[1]{\tensor[_{}]{\triangleleft}{^{#1}}}

\newcommand{\DBfont}[1]{\mathcal{#1}}

\DeclareMathOperator{\bigOstar}{O^*}
\newcommand{\poly}{\textsf{poly}}

\DeclareMathOperator{\reflang}{\mathcal{L}_{\refmark}}
\newcommand{\reflangimagebound}[1]{\mathcal{L}^{\leq #1}_{\refmark}}
\newcommand{\reflangvarmap}[1]{\mathcal{L}^{#1}_{\refmark}}
\DeclareMathOperator{\refmark}{\textsf{ref}}

\DeclareMathOperator{\equrel}{\mathsf{er}}

\DeclareMathOperator{\NFAintQuery}{\alpha_{\textsf{ni}}}
\DeclareMathOperator{\NFAintQueryk}{\alpha^k_{\textsf{ni}}}

\DeclareMathOperator{\RE}{\mathsf{RE}}
\DeclareMathOperator{\var}{\textsf{var}}

\newcommand{\varmap}[1]{\mathsf{vmap}_{#1}}
\newcommand{\varmapX}[2]{\mathsf{vmap}_{#1, #2}}
\newcommand{\varprefix}[1]{\triangledown_{#1}}
\newcommand{\interxregex}[1]{\langle#1\rangle_{\mathsf{int}}}

\DeclareMathOperator{\DBD}{\DBfont{D}}

\DeclareMathOperator{\npclass}{\mathsf{NP}}

\DeclareMathOperator{\bigO}{O}

\DeclareMathOperator{\pspaceclass}{\mathsf{PSpace}}

\DeclareMathOperator{\expspaceclass}{\mathsf{ExpSpace}}

\DeclareMathOperator{\nlclass}{\mathsf{NL}}

\DeclareMathOperator{\emptyword}{\varepsilon}

\DeclareMathOperator{\booleProb}{\textsc{Bool-Eval}}

\DeclareMathOperator{\checkProb}{\textsc{Check}}

\newcommand{\CLPQ}{\Re\text{-}\mathsf{CPQ}}

\newcommand{\unionsof}[1]{\cup\text{-}#1}

\newcommand{\RPQ}{\mathsf{RPQ}}
\newcommand{\CRPQ}{\mathsf{CRPQ}}
\newcommand{\ECRPQ}{\mathsf{ECRPQ}}
\newcommand{\eqECRPQ}{\mathsf{ECRPQ}^{\equrel}}

\newcommand{\bisCXRPQ}[1]{\mathsf{CXRPQ}^{\leq #1}}
\newcommand{\logCXRPQ}{\mathsf{CXRPQ}^{\log}}

\newcommand{\CXRPQ}{\mathsf{CXRPQ}}
\newcommand{\vsfCXRPQ}{\mathsf{CXRPQ}^{\mathsf{vsf}}}
\newcommand{\vsfpbCXRPQ}{\mathsf{CXRPQ}^{\mathsf{vsf}, \mathsf{fl}}}

\DeclareMathOperator{\NFA}{\mathsf{NFA}}

\DeclareMathOperator{\eword}{\varepsilon}

\DeclareMathOperator{\lang}{\mathcal{L}}

\newcommand{\langimagebound}[1]{\mathcal{L}^{\leq #1}}
\newcommand{\langvarmap}[1]{\mathcal{L}^{#1}}

\DeclareMathOperator{\altop}{\vee}

\DeclareMathOperator{\xregex}{\mathsf{XRE}}
\DeclareMathOperator{\conxregex}{\mathsf{CXRE}}
\newcommand{\dimconxregex}[1]{#1$-$\mathsf{CXRE}}
\newcommand{\deref}{\mathsf{deref}}

\newcommand{\varset}{\ensuremath{\mathcal{X}_s}}

\newcommand{\varsx}{\ensuremath{\mathsf{x}}}
\newcommand{\varsy}{\ensuremath{\mathsf{y}}}
\newcommand{\varsz}{\ensuremath{\mathsf{z}}}
\newcommand{\varsu}{\ensuremath{\mathsf{u}}}

\newcommand{\nodevarset}{\ensuremath{\mathcal{X}_n}}

\newcommand{\varnx}{\ensuremath{\mathit{x}}}
\newcommand{\varny}{\ensuremath{\mathit{y}}}
\newcommand{\varnz}{\ensuremath{\mathit{z}}}

\newcommand{\ta}{\ensuremath{\mathtt{a}}}
\newcommand{\tb}{\ensuremath{\mathtt{b}}}
\newcommand{\tc}{\ensuremath{\mathtt{c}}}
\newcommand{\td}{\ensuremath{\mathtt{d}}}

\newcommand{\tone}{\ensuremath{\mathtt{1}}}
\newcommand{\tzero}{\ensuremath{\mathtt{0}}}

\newcommand{\tp}{\ensuremath{\mathtt{p}}}
\newcommand{\ts}{\ensuremath{\mathtt{s}}}

\newtheorem{lemma}{Lemma}
\newtheorem{theorem}{Theorem}
\newtheorem{proposition}{Proposition}
\newtheorem{definition}{Definition}
\newtheorem{example}{Example}
\newtheorem{corollary}{Corollary}



\begin{document}

\title{Conjunctive Regular Path Queries with String Variables}

\author[1]{Markus L. Schmid}

\affil[1]{Humboldt-Universit\"at zu Berlin, Unter den Linden 6, D-10099, Berlin, Germany, \texttt{MLSchmid@MLSchmid.de}}

\maketitle

\begin{abstract}
We introduce the class $\CXRPQ$ of conjunctive xregex path queries, which are obtained from conjunctive regular path queries ($\CRPQ$s) by adding string variables (also called backreferences) as found in practical implementations of regular expressions. $\CXRPQ$s can be considered user-friendly, since they combine two concepts that are well-established in practice: pattern-based graph queries and regular expressions with backreferences. Due to the string variables, $\CXRPQ$s can express inter-path dependencies, which are not expressible by $\CRPQ$s. The evaluation complexity of $\CXRPQ$s, if not further restricted, is $\pspaceclass$-hard in data-complexity. We identify three natural fragments with more acceptable evaluation complexity: their data-complexity is in $\nlclass$, while their combined complexity varies between $\expspaceclass$, $\pspaceclass$ and $\npclass$. In terms of expressive power, we compare the $\CXRPQ$-fragments with $\CRPQ$s and unions of $\CRPQ$s, and with extended conjunctive regular path queries ($\ECRPQ$s) and unions of $\ECRPQ$s.
\end{abstract}

\section{Introduction}

The popularity of graph databases (commonly abstracted as directed, edge-labelled graphs) is due to their applicability in a variety of settings where the underlying data is naturally represented as graphs, e.\,g., Semantic Web and social networks, biological data, chemical structure analysis, pattern recognition, network traffic, crime detection, object oriented data. The problem of querying graph-structured data has been 
studied over the last three decades and still receives a lot of attention. 
%
%
%
%
For more background information, we refer to the introductions of the recent papers~\cite{LibkinEtAl2016, CzerwinskiEtAl2018, BarceloEtAl2012, BarceloEtAl2014}, and to the survey papers~\cite{AnglesEtAl2017, Barcelo2013, Wood2012, AnglesEtAl2019}.\par
Many query languages for graph databases (for practical systems as well as those studied in academia) follow an elegant and natural declarative approach: a query is described by a \emph{graph pattern}, i.\,e., a graph $G = (V, E)$ with edge labels that represent some path-specifications. The evaluation of such a query consists in \emph{matching} it to the graph database $\DBD = (V_{\DBD}, E_{\DBD})$, i.\,e., finding a mapping $h : V \to V_{\DBD}$, such that, for every $(x, s, y) \in E$, in $\DBD$ there is a path from $h(x)$ to $h(y)$ whose edge labels satisfy the path-specification $s$. In the literature, such query languages are also called \emph{pattern-based}.
Let us now briefly summarise where this concept can be found in theory and practice. \par
The most simple graph-patterns (called \emph{basic} in~\cite{AnglesEtAl2017}) have just fixed relations (i.\,e., edge-labels) from the graph database as their edge labels. A natural extension are \emph{wildcards}, which can match any edge-label of the database (e.\,g., as described in~\cite{CzerwinskiEtAl2018}), or \emph{label variables}, which are like wildcards, but different occurrences of the same variable must match the same label (see, e.\,g.,~\cite{BarceloEtAl2014}). It is common to extend such basic graph patterns with relational features like, e.\,g., projection, union, and difference (see~\cite{AnglesEtAl2017}).
In order to implement \emph{navigational features} that can describe more complex connectivities between nodes via longer paths instead of only single arcs, we need more complicated path specifications.\par
Navigational features are popular, since they allow to query the \emph{topology} of the data and, if transitivity can be described, exceed the power of the basic relational query languages. 
Using regular expressions as path specifications is the most common way of implementing navigational features. The \emph{regular path queries} ($\RPQ$s) given by \emph{single-edge} graph patterns $(\{x, y\}, \{(x, s ,y)\})$, where $s$ is a regular expression, can be considered the simplest navigational graph patterns. General graph patterns labelled by regular expressions are called \emph{conjunctive regular path queries} ($\CRPQ$s). \par

\begin{figure}[h]
\begin{center}
\scalebox{1.4}{\includegraphics{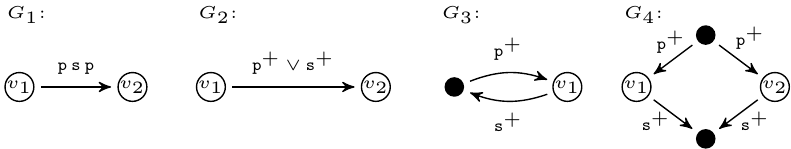}}
\end{center}
\caption{Simple graph patterns.}
\label{GraphPatternsExampleFigureOne}
\end{figure}

For example, consider a graph database with nodes representing persons, arcs $(u, \tp, v)$ meaning ``$u$ is a (biological) parent of v'' and arcs $(u, \ts, v)$ meaning ``$v$ is $u$'s PhD-supervisor''. We consider the graph patterns from Figure~\ref{GraphPatternsExampleFigureOne} (labelled nodes are considered as free variables of the query). Then $G_1$ describes pairs $(v_1, v_2)$, where $v_1$'s child has been supervised by $v_2$'s parent; $G_2$ describes pairs $(v_1, v_2)$, where $v_1$ is a biological ancestor or an academical descendant of $v_2$; $G_3$ describes $v_1$ that have a biological ancestor that is also their academical ancestor; $G_4$ describes pairs $(v_1, v_2)$, where $v_1$ and $v_2$ are biologically related as well as academically. Note that $G_1, G_2$ represent $\RPQ$s, while $G_3, G_4$ represent $\CRPQ$s. \par


The classes of $\RPQ$s and $\CRPQ$s (and modifications of them) have been intensively studied in the literature (see, e.\,g., \cite{CruzEtAl1987, CalvaneseEtAl2000, CalvaneseEtAl2003, ConsensMendelzon1990, BarceloEtAl2016, FlorescuEtAl1998, MendelzonWood1995, AbiteboulEtAl1997, AbiteboulEtAl1999, DeutschTannen2001}). The former can be evaluated efficiently (see, e.\,g.,~\cite{ConsensMendelzon1990}), while evaluation for the latter it is $\npclass$-complete in combined-complexity, but $\nlclass$-complete in data-complexity (see~\cite{BarceloEtAl2012}).\par
Despite their long-standing investigation, these  
basic classes still pose several challenges that are currently studied. For example,~\cite{LosemannMartens2013, MartensTrautner2018, MartensEtAl2020} provide an in-depth analysis of the complexity of $\RPQ$s for different path semantics. So far in this introduction, we implicitly assumed \emph{arbitrary path} semantics, but since there are potentially infinitely many such paths, query languages that also retrieve paths often restrict this by considering simple paths or trails. However, such semantics make the evaluation of $\RPQ$s much more difficult (see~\cite{LosemannMartens2013, MartensTrautner2018, MartensEtAl2020} for details). Much effort has also been spent on extending $\RPQ$s and $\CRPQ$s to the setting where the data-elements stored at nodes of the graph database can also be queried (see~\cite{LibkinEtAl2016, Kostylev2018}). In~\cite{BarceloEtAl2014}, the authors represent partially defined graph data by graph patterns and then query them with $\CRPQ$s (among others). In the very recent paper~\cite{BarceloEtAl2019}, the authors study the boundedness problem for unions of $\CRPQ$s (i.\,e., the problem of finding an equivalent union of (relational) conjunctive queries). \par
Also in the practical world, pattern-based query languages for graph databases play a central role. Most prominently, the \emph{W3C Recommendation} for \emph{SPARQL 1.1} ``is based around graph pattern matching'' (as stated in Section $5$ of~\cite{w3c:sparql11}), and \emph{Neo4J Cypher} also uses graph patterns as a core functionality (see~\cite{Cypher}). Moreover, the graph computing framework \emph{Apache TinkerPop$^{\text{TM}}$} contains the graph database query language \emph{Gremlin}~\cite{Gremlin}, which is more based on the navigational graph traversal aspect, but nevertheless supports pattern-based query mechanisms. Note that~\cite{AnglesEtAl2017} surveys the main features of these three languages. 

\subsection{Main Goal of this Work}

$\CRPQ$s are 
not expressive enough for many natural querying tasks (see, e.\,g., the introduction of~\cite{BarceloEtAl2012}). The most obvious shortcoming is that we cannot express any \emph{inter-path} dependencies, i.\,e., relations between the paths of the database that are matched by the arcs of the graph pattern, except that they must start or end with the same node. The main goal of this work is to extend $\CRPQ$s in order to properly increase their expressive power. In particular, we want to meet the following objectives:
\begin{enumerate}
\item\label{objone} The increased expressive power should be reasonable, i.\,e., it should cover natural and relevant querying tasks. 
\item\label{objtwo} The extensions should be user-friendly, i.\,e., the obtained query language should be intuitive.
\item\label{objthree} The evaluation complexity should still be acceptable. 
\end{enumerate}

The main idea is to allow \emph{string variables} in the edge labels of the graph patterns. For example, in $G_1$ of Figure~\ref{GraphPatternsExampleFigureTwo}, the $\varsx\{\ta \altop \tb\}$ label sets variable $\varsx$ to some word matched by regular expression $\ta \altop \tb$ and the occurrence of $\varsx$ on the other edge label then refers to the value of $\varsx$. Hence, $G_1$ describes all pairs $(v_1, v_2)$ such that $v_1$ has a direct $\ta$-predecessor that has $v_2$ as a transitive successor with respect to $\ta$ or $\tc$, or $v_1$ has a direct $\tb$-predecessor that has $v_2$ as a transitive successor with respect to $\tb$ or $\tc$.
Similarly, $G_2$ of Figure~\ref{GraphPatternsExampleFigureTwo} describes triangles $(v_1, v_2, v_3)$ with a complicated connectivity relation: $v_1$ reaches $v_2$ with $\ta \ta$ or $\tb$, $v_2$ reaches $v_3$ with some path labelled only with symbols different than $\ta$ and $\tb$ ($\Sigma$ is the set of edge labels), while $v_3$ reaches $v_1$ either in the same way as $v_1$ reaches $v_2$, or in the same way as $v_2$ reaches $v_3$. \par

\begin{figure}[h]
\begin{center}
\scalebox{1.4}{\includegraphics{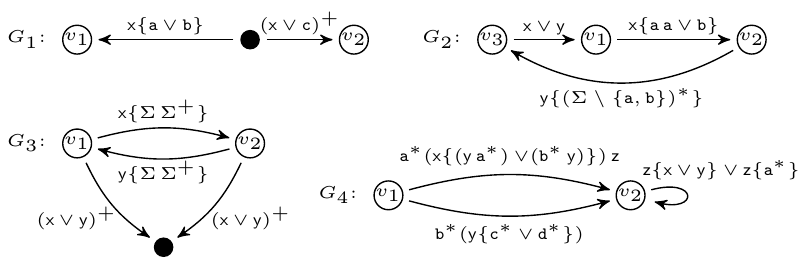}}
\end{center}
\caption{$\CRPQ$s with string variables.}
\label{GraphPatternsExampleFigureTwo}
\end{figure}

The graph patterns $G_1$ and $G_2$ of Figure~\ref{GraphPatternsExampleFigureTwo} could also be considered as $\CRPQ$s with some syntactic sugar. More precisely, both these graph patterns can be transformed into a union of $\CRPQ$s by simply ``spelling out'' all combinations that are possible for the variables $\varsx$ and $\varsy$. However, we can easily build examples that would translate into an exponential number of $\CRPQ$s, and, moreover, it can be argued that the necessity for explicitly listing \emph{all} possible combinations that can conveniently be stated in a concise way, is exactly what a user should not be bothered with. \par
We move on to an example, where a simple application of string variables adds substantial expressive power to a $\CRPQ$. Let us assume that the nodes in a graph database represent persons and arcs represent text messages sent by mobile phone (let $\Sigma$ be the set of messages). The idea is that some individuals try to hide their direct communication by encoding their messages by sequences of simple text messages that are send via intermediate senders and receivers. In particular, we want to discover individuals who are likely to be involved in such a hidden communication network. In this regard, $G_3$ of Figure~\ref{GraphPatternsExampleFigureTwo} describes pairs $(v_1, v_2)$ such that $v_1$ reaches $v_2$ (and $v_2$ reaches $v_1$) by a sequence $\varsx$ ($\varsy$, respectively) of at least $2$ messages, and there is some person that has been contacted by $v_1$ and $v_2$ by paths that are repetitions of these message-sequences. Note that, in this example, both the length of the message-sequences $\varsx$ and $\varsy$, as well as the number of their repetitions in order to reach the mutual friend of $v_1$ and $v_2$, are unbounded. \par
Finally, $G_4$ of Figure~\ref{GraphPatternsExampleFigureTwo} shows a feature that has not been used in the previous examples: references to variables could also occur in the definitions of other variables, e.\,g., $\varsy$ is defined on one edge, and used in the definition of both $\varsx$ and $\varsz$ on the other two edges. Moreover, note that the same variable $\varsz$ has two definitions, which are mutually exclusive and therefore do not cause ambiguities.\par
This formalism obviously extends $\CRPQ$; moreover, it is easy to see that it also covers wildcards and edge variables described above, as well as the fragment of \emph{extended conjunctive regular path queries} ($\ECRPQ$s)~\cite{BarceloEtAl2012} that only have equality-relations as non-unary relations ($\ECRPQ$s shall be discussed in more detail below). 

\subsection{Conjunctive Xregex Path Queries}

Regular expressions of the kind used in the graph patterns of Figure~\ref{GraphPatternsExampleFigureTwo} are actually a well-established concept, which, in the theoretical literature, is usually called \emph{regex} or \emph{xregex}, and string variables are often called \emph{backreferences}.
Xregex have been investigated in the formal language community~\cite{Schmid2016, Schmid2013, FreydenbergerSchmidJCSS, CampeanuEtAl2003, Freydenberger2013}, and, despite the fact that allowing them in regular expressions has many negative consequences (see~\cite{Aho1990, FernauSchmid2015, FernauEtAl2016, FernauEtAl2015, Freydenberger2013}), regular expression libraries of almost all modern programming languages (like, e.\,g., Java, PERL, Python and .NET) support backreferences (although they syntactically and even semantically slightly differ from each other (see the discussion in~\cite{FreydenbergerSchmidJCSS})), and they are even part of the POSIX standard~\cite{ieee:posix}. The syntax of xregex is quite intuitive: on top of normal regular expressions, we can \emph{define} variables by the construct $\varsx\{\ldots\}$ and \emph{reference} them by occurrences of $\varsx$ (see Figure~\ref{GraphPatternsExampleFigureTwo}). Formally defining their semantics is more tricky (mainly due to nested variable definitions and undefined variables), but for simple xregex their meaning is intuitively clear. \par
Obviously, we can define various query classes by replacing the regular expressions in $\CRPQ$ by some more powerful language descriptors. However, this will not remedy the lack of a means to describe inter-path dependencies, and, furthermore, regular expressions seem powerful enough to describe the desired navigational features (in fact, the analyses carried out in~\cite{BonifatiEtAl2017, BonifatiEtAl2019, MartensTrautner2019} suggest that the regular expressions used in practical queries are even rather simple).
What makes xregex interesting is that defining a variable on some edge and referencing it on another is a convenient way of describing inter-path dependencies, while, syntactically, staying in the realm of graph patterns with regular expressions. \par
To define such a query class, we first have to lift xregex to (multi-dimensional) \emph{conjunctive xregex}, i.\,e., tuples $\bar{\alpha} = (\alpha_1, \alpha_2, \ldots, \alpha_m)$ of xregex that can generate tuples $\bar{w} = (w_1, w_2, \ldots, w_m)$ of words, but such that the $w_i$ match the $\alpha_i$ in a \emph{conjunctive} way with respect to the string variables (i.\,e., occurrences of the same variable $\varsx$ in different $\alpha_i$ and $\alpha_j$ must refer to the same string). Labeling graph patterns with xregex and interpreting the edge labels as conjunctive xregex yields our class of \emph{conjunctive xregex path queries} ($\CXRPQ$s).\par
With respect to Objective~\ref{objtwo} from above, we note that $\CXRPQ$s are purely pattern-based, i.\,e., the queries are just graph patterns with edge labels (as pointed out above, such queries are widely adapted in practice and we can therefore assume their user-friendliness), and the xregex used as path-specifications are a well-known practical tool (in fact, the use of string variables (backreferences) in regular expressions is a topic covered by standard textbooks on practical application of regular expressions (see, e.\,g.,~\cite{FriedlBookThirdEdition})). Regarding Objective~\ref{objone}, string variables add some mechanism to describe inter-path relationships and we already saw some illustrating examples. We will further argue in favour of their expressive power in the remainder of the introduction (see also Figure~\ref{expPowerInclusionMap}).

\subsection{Barcelo et al.'s $\ECRPQ$}

An existing class of graph queries that is suitable for comparison with the class $\CXRPQ$ are the extended conjunctive regular path queries ($\ECRPQ$s) introduced in~\cite{BarceloEtAl2012}. While $\CRPQ$s can be seen as graph patterns with unary regular relations on the edges (i.\,e., the regular expressions), the class $\ECRPQ$ permits regular relations of arbitrary arity, which allows to formulate a wide range of inter-path dependencies. $\ECRPQ$s have acceptable evaluation complexity: $\nlclass$-complete in data-complexity and $\pspaceclass$-complete in combined-complexity (see~\cite{BarceloEtAl2012} for details). \par
In terms of expressive power, the class of $\CXRPQ$s completely cover the fragment $\eqECRPQ$ of those $\ECRPQ$s that have only unary relations or equality relations (i.\,e., relations requiring certain paths to be equal). 
On the other hand, we can show that there are $\CXRPQ$ that are not expressible by $\eqECRPQ$. From an intuitive point of view $\CXRPQ$ and $\ECRPQ$ are incomparable in the sense that $\ECRPQ$ can describe inter-path dependencies beyond simple equality, while $\CXRPQ$ can describe equality of arbitrarily many paths as, e.\,g., by $(\varsx \altop \varsy)^+$ in the query $G_3$ of Figure~\ref{GraphPatternsExampleFigureTwo}. \par
The class of $\ECRPQ$ is not purely pattern-based anymore, since also the relations must be given as regular expressions. In this regard, the authors of~\cite{BarceloEtAl2012} mention that ``[\dots] specifying regular relations with regular expressions is probably less natural than specifying regular languages (at least it is more cumbersome) [\dots]'' and suggest that any practical standard would rather provide some reasonable regular relations as built-in predicates. 

\subsection{Technical Contributions} 

In terms of Objective~\ref{objthree}, the best evaluation complexity that we can hope for is $\nlclass$ in data-complexity and $\npclass$ in combined-complexity (since $\CRPQ$s have these lower bounds).
Higher combined-complexity (e.\,g., $\pspaceclass$) is still acceptable, as long as the optimum of $\nlclass$ in data-complexity is reached (this makes sense if we can assume our queries to be rather small in comparison to the data.\par
Unfortunately, $\CXRPQ$s have a surprisingly high evaluation complexity: even for the fixed xregex $\NFAintQuery = \# \varsx\{(\ta \altop \tb)^*\} \, (\#\# \, \varsx)^* \#\#\#$, it is $\pspaceclass$-hard to decide whether a given graph database contains a path labelled by a word from $\lang(\NFAintQuery)$ (so Boolean evaluation is $\pspaceclass$-hard in data complexity).
This hardness result has nevertheless a silver lining: it directly points us to restrictions of $\CXRPQ$ that might lead to more tractable fragments. More precisely, for $\pspaceclass$-hardness it seems vital that references for variable $\varsx$ are subject to the star-operator, and that the variable $\varsx$ can store words of unbounded length.
%
%
Our main positive result will be that by restricting $\CXRPQ$s accordingly, we can tame their evaluation complexity and obtain more tractable fragments.\par
Let $\vsfCXRPQ$ be the class of \emph{variable-star free} $\CXRPQ$s, i.\,e., no variable can be used under a star or plus (but normal symbols still can). For example, $G_2, G_4$ of Figure~\ref{GraphPatternsExampleFigureTwo} are in $\vsfCXRPQ$. This restriction is enough to make the data-complexity drop from $\pspaceclass$-hardness to the optimum of $\nlclass$-completeness (although combined-complexity is $\expspaceclass$). 
The upper bound is obtained by showing that 
$q \in \vsfCXRPQ$ can be transformed into equivalent $q'$ in a certain \emph{normal form}, which can be evaluated in nondeterministic space $\bigO(|q'| \log(|\DBD|))$.
However, $|q'| = \bigO(2^{2^{|q|}})$ and we only get the single exponential space upper bound for combined-complexity by handling one exponential size blow-up with nondeterminism. 
A closer look at the normal form construction reveals that the exponential size blow-up is caused by chains of the following form: a reference of $\varsx$ occurs in the definition of $\varsy$, a reference of $\varsy$ occurs in the definition of $\varsz$, a reference of $\varsz$ occurs in the definition of $\varsu$ and so on (e.\,g., this happens with respect to $\varsx$, $\varsy$ and $\varsz$ in $G_4$ of Figure~\ref{GraphPatternsExampleFigureTwo}). If we require that every variable definition only contains references of variables that themselves have definitions without variables, then we obtain the fragment $\vsfpbCXRPQ$ which have normal forms of polynomial size and therefore the combined complexity drops to $\pspaceclass$ (i.\,e., the same complexity as $\ECRPQ$). \par
The second successful approach is to add a constant upper-bound $k$ on the \emph{image size} of $\CXRPQ$s, i.\,e., the length of the words stored in variables. Let $\bisCXRPQ{k}$ be the corresponding fragments. For every $\bisCXRPQ{k}$, the evaluation complexity drops to the optimum of $\nlclass$-completeness in data-complexity and $\npclass$-completeness in combined-complexity (i.\,e., the same as for $\CRPQ$s).
Unlike for $\vsfCXRPQ$, bounding the image size does not impose any syntactical restrictions; it is rather a restriction of how a $\CXRPQ$ can match a graph database. For example, we can treat $G_3$ of Figure~\ref{GraphPatternsExampleFigureTwo} as a $\bisCXRPQ{10}$, i.\,e., we only have a successful match if the paths between $v_1$ and $v_2$ are of length at most $10$ (note that the paths from $v_1$ and $v_2$ to their mutual friend can have unbounded size). For $G_1$ of Figure~\ref{GraphPatternsExampleFigureTwo}, on the other hand, the image size of variables is necessarily bounded by $1$ and therefore it does not matter whether we interpret it as $\CXRPQ$ or $\bisCXRPQ{k}$ with $k \geq 1$. We stress the fact that this bound only applies to strings stored in variables; we can still specify paths of arbitrary length with regular expressions, i.\,e., $\CRPQ \subseteq \bisCXRPQ{k}$.
%
%
Moreover, evaluating $\bisCXRPQ{k}$ is not as simple as just replacing all variables by fixed words of length at most $k$ and then evaluating a $\CRPQ$, since we also have to take care of dependencies between variable definitions. \par
Finally, there are two more noteworthy results about $\bisCXRPQ{k}$ (a negative and a positive one). While in combined-complexity $\CRPQ$ can be evaluated in polynomial-time if the underlying graph pattern is acyclic (see~\cite{BarceloEtAl2016, BarceloEtAl2012, Barcelo2013}), $\bisCXRPQ{k}$ remain $\npclass$-hard in combined-complexity even for single-edge graph patterns (and $k = 1$). This also demonstrates the general difference of $\bisCXRPQ{k}$ and $\CRPQ$. On the positive side, if instead of a constant upper bound, we allow images bounded logarithmically in the size of the database, then the $\npclass$ upper bound in combined-complexity remains, while the data-complexity slightly increases to $\bigO(\log^2(|\DBD|))$.\par
%
%
%
%
%
The question is whether these fragments are still interesting with respect to Objectives~\ref{objone}~and~\ref{objtwo}. We believe the answer is yes. First observe that the restrictions are quite natural: Not using the star-operator over variables is a rule not difficult for users to comply with, if they are familiar with regular expressions; it is also easily to be checked algorithmically, and the same holds for the additional restriction required by $\vsfpbCXRPQ$. 
%
%
The class $\bisCXRPQ{k}$ does not require any syntactical restriction; when interpreting the query result, the user only has to keep in mind that the paths corresponding to images of variables are bounded in length.\par
Regarding expressive power, all these fragments contain non-trivial examples of $\CXRPQ$s. With respect to the examples from Figure~\ref{GraphPatternsExampleFigureTwo}, $G_4 \in \vsfCXRPQ$ and $G_2 \in \vsfpbCXRPQ$; any $\CXRPQ$ can be interpreted as $\bisCXRPQ{k}$ for any $k$. Both $\vsfCXRPQ$ and $\vsfpbCXRPQ$ still cover the fragment $\eqECRPQ$. It is tempting to misinterpret queries from $\bisCXRPQ{k}$ as $\CRPQ$ with mere syntactic sugar (since string variables range over finite sets of words). However, it can be proven that even $\bisCXRPQ{1}$ contains queries that are not expressible by $\CRPQ$s. 

\section{Preliminaries}\label{sec:prelim}

Let $\mathbb{N} = \{1, 2, 3, \ldots\}$ and $[n] = \{1, 2, \ldots, n\}$ for $n \in \mathbb{N}$. 
$A^+$ denotes the set of non-empty words over an alphabet $A$ and $A^* = A^+ \cup \{\eword\}$ (where $\eword$ is the empty word). For a word $w \in A^*$, $|w|$ denotes its length, and for $k \in \mathbb{N}$, $A^{\leq k} = \{w \in A^* \mid |w| \leq k\}$.
For $w_1, w_2, \ldots, w_n \in A^*$, we set $\Pi^n_{i = 1} w_i = w_1 w_2 \ldots w_n$, and if $w = w_i$, for every $i \in [n]$, then we also write $w^n$ instead of $\Pi^n_{i = 1} w_i$. For some alphabet $A$, a word $w \in A^*$ and a $b \in A$, $|w|_b$ is the number of occurrences of symbol $b$ in $w$.\par
We fix a finite \emph{terminal} alphabet $\Sigma$ and an enumerable set $\varset$ of \emph{string variables}, where $\varset \cap \Sigma = \emptyset$. As a convention, we use symbols $\ta, \tb, \tc, \td, \ldots$ for elements from $\Sigma$, and $\varsx, \varsy, \varsz, \varsx_1, \varsx_2, \ldots, \varsy_1, \varsy_2, \ldots$ for variables from $\varset$. We consistently use sans-serif font for string variables to distinguish them from \emph{node variables} to be introduced later.
We use \emph{regular expressions} and (\emph{nondeterministic}) \emph{finite automata} ($\NFA$ for short) as commonly defined in the literature (see Section~\ref{xregexDefinition} and the remainder of this section for more details).

\subsection{Ref-Words} 

The following \emph{ref-words} (first introduced in~\cite{Schmid2016}) are convenient for defining the semantics of xregex (Section~\ref{sec:xregex}). They have also been used in~\cite{FreydenbergerSchmidJCSS} and for so-called document spanners in~\cite{FreydenbergerEtAl2018, Freydenberger2019, DoleschalEtAl2019}. Ref-words will be vital in our definition of conjunctive xregex (Section~\ref{sec:ConXregex}), which are the basis of the class $\CXRPQ$.\par
For every $\varsx \in \varset$, we use the pair of symbols $\open{\varsx}$ and $\close{\varsx}$, which are interpreted as opening and closing parentheses. 

\begin{definition}[Ref-Words]
A \emph{subword-marked} word (\emph{over terminal alphabet $\Sigma$ and variables $\varset$}) is a word $w \in (\Sigma \cup \{\open{\varsx}, \close{\varsx} \mid \varsx \in \varset\} \cup \varset)^*$ that, for every $\varsx \in \varset$, contains the parentheses $\open{\varsx}$ and $\close{\varsx}$ at most once and all these parentheses form a well-formed parenthesised expression. For every $\varsx \in \varset$, a subword $\open{\varsx} v \close{\varsx}$ in $w$ is called a \emph{definition} (of variable $\varsx$), and an occurrence of symbol $\varsx$ is called a \emph{reference} (of variable $\varsx$). For a subword-marked word $w$ over $\Sigma$ and $\varset$, the binary relation $\preceq_w$ over $\varset$ is defined by setting $\varsx \preceq_{w} \varsy$ if in $w$ there is a definition of $\varsy$ that contains a reference or a definition of $\varsx$. A \emph{ref-word} (\emph{over terminal alphabet $\Sigma$ and variables $\varset$}) is a subword-marked word over $\Sigma$ and $\varset$, such that the transitive closure of $\preceq_{w}$ is acyclic. 
\end{definition}

Ref-words are just words over alphabet $\Sigma \cup \varset$, in which some subwords are uniquely marked by means of the parentheses $\open{\varsx}$ and $\close{\varsx}$.
Moreover, the marked subwords are not allowed to overlap, i.\,e., $\open{\varsx} \open{\varsy} \close{\varsx} \close{\varsy}$ must not occur as subsequence. 
For every variable $\varsx \in \varset$, all occurrences of $\varsx$ in a ref-word are interpreted as references to the definition of $\varsx$. Since definitions may contain itself references or definitions of other variables, there are chains of references, e.\,g., the definition of $\varsx$ contains references of $\varsy$, but the definition of $\varsy$ contains references of $\varsz$ and so on. Therefore, in order to make this implicit referencing process terminate, we have to require that it is acyclic, which is done by requiring the transitive closure of $\preceq_w$ to be acyclic. 
For example, $\ta \varsx \tb \open{\varsx} \ta \tb \close{\varsx} \tc \open{\varsy} \varsx \ta \ta \close{\varsy} \varsy$ is a valid ref-word, while $\ta \varsx \tb \open{\varsx} \ta \tb \close{\varsx} \tc \open{\varsy} \varsx \ta \ta \varsy \close{\varsy} \varsy$, or $\ta \varsx \ta \open{\varsx} \ta \varsy \tb \close{\varsx} \tc \open{\varsy} \varsx \ta \close{\varsy}$ are not.\par
For ref-words over $\Sigma$ and $\varset$, it is therefore possible to successively substitute references by their definitions until we obtain a word over $\Sigma$. This can be formalised as follows.

\begin{definition}[Deref-Function]\label{derefDefinition}
For a ref-word $w$ over $\Sigma$ and $\varset$, $\deref(w) \in \Sigma^*$ is obtained from $w$ by the following procedure:
\begin{enumerate} 
\item\label{refStepOne} Delete all occurrences of $\varsx \in \varset$ without definition in $w$.
\item\label{refLoopHead} Repeat until we have obtained a word over $\Sigma$:
\begin{enumerate} 
\item\label{refStepTwo} Let $\open{\varsx} v_{\varsx} \close{\varsx}$ be a definition such that $v_{\varsx} \in \Sigma^*$.  
\item\label{refStepThree} Replace $\open{\varsx} v_{\varsx} \close{\varsx}$ and all occurrences of $\varsx$ in $w$ by $v_{\varsx}$.
\end{enumerate} 
\end{enumerate}
\end{definition}

\begin{proposition}\label{derefProposition}
The function $\deref(w)$ is well-defined.
\end{proposition}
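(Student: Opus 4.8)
The plan is to show that Definition~\ref{derefDefinition} always terminates and that the word it produces does not depend on the choices made in Step~\ref{refStepTwo}. Termination is the easy part: after Step~\ref{refStepOne} the word $w$ contains only references of variables that do possess a definition in $w$. In each iteration of the loop, Step~\ref{refStepThree} removes one pair of parentheses $\open{\varsx}, \close{\varsx}$ (and replaces all references of $\varsx$ by a terminal word), so the number of remaining definitions strictly decreases; after finitely many iterations there are no parentheses left and, since by then every remaining symbol is either from $\Sigma$ or is a reference that was eliminated together with its definition, the resulting word lies in $\Sigma^*$.

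The nontrivial point is that Step~\ref{refStepTwo} is always applicable as long as the word is not yet over $\Sigma$, and here the acyclicity of (the transitive closure of) $\preceq_w$ is used. First I would argue that $\preceq_w$ behaves monotonically under the loop: if $w'$ is obtained from $w$ by one application of Steps~\ref{refStepTwo}--\ref{refStepThree}, then $\preceq_{w'}$ is (isomorphic to) the restriction of $\preceq_w$ to the variables that still have a definition in $w'$, so acyclicity is preserved throughout the procedure. Now suppose the current word $w'$ still contains a parenthesis but no "innermost" definition $\open{\varsx} v_{\varsx} \close{\varsx}$ with $v_{\varsx} \in \Sigma^*$. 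Then every definition contains a reference or a nested definition of another variable that still has a definition, i.e.\ every variable $\varsx$ with a definition in $w'$ has some $\varsy \neq \varsx$ with a definition in $w'$ and $\varsy \preceq_{w'} \varsx$. Following this edge relation backwards through the finitely many such variables produces a cycle in $\preceq_{w'}$, contradicting acyclicity. Hence whenever the word is not over $\Sigma$, an innermost definition exists and the loop can proceed.

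Finally I would establish that the output is independent of the order in which innermost definitions are processed. The cleanest way is a confluence argument: if in some word two distinct innermost definitions $\open{\varsx} v_{\varsx} \close{\varsx}$ and $\open{\varsy} v_{\varsy} \close{\varsy}$ (with $v_{\varsx}, v_{\varsy} \in \Sigma^*$) are both available, then because both bodies are already terminal words they are disjoint and neither contains a reference of the other, so substituting one does not destroy the applicability of the other and the two substitutions commute — applying them in either order yields the same word. Standard diamond-lemma / Newman-style reasoning then gives that the terminating rewriting relation is confluent, so $\deref(w)$ is a well-defined single word. I expect this last commutation-and-confluence step to be the main obstacle to write carefully, since one has to be precise about what "innermost" means after several substitutions and about the fact that substituting $\varsx$ may turn a previously non-innermost definition into an innermost one without affecting the others; but conceptually it is routine, and an alternative route is to give an explicit closed-form description of $\deref(w)$ by recursion along a linear extension of $\preceq_w$ and verify that every execution of the procedure computes exactly this value.
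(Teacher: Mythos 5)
Your proposal is correct and follows essentially the same route as the paper: both arguments use the acyclicity of $\preceq_w$ to guarantee that, as long as a definition remains, some definition has a body already in $\Sigma^*$, and both note that the procedure terminates and is order-independent. You are merely more explicit where the paper is terse — in particular you spell out the commutation/confluence argument for order-independence, which the paper dismisses with ``it can be easily verified.''
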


\begin{proof}
If we reach Step~\ref{refLoopHead} without any definition for a variable, then there can also be no reference of a variable, since we deleted all variable references that have no definition in Step~\ref{refStepOne}, and for every variable definition deleted in Step~\ref{refStepThree}, we also deleted all corresponding references. Consequently, if we reach Step~\ref{refLoopHead} without variable definitions, then we have obtained a word over $\Sigma$ and terminate. If, on the other hand, there is at least one definition for some variable $\varsx$ when we reach Step~\ref{refLoopHead}, then, by the definition of a ref-word, there must be at least one definition that does not contain another definition. 
Hence, there must be at least one definition $\open{\varsx} v_{\varsx} \close{\varsx}$ such that $v_{\varsx} \in \Sigma^*$, as required by Step~\ref{refStepTwo}. In Step~\ref{refStepThree} this definition of $\varsx$ and all its references are then replaced by $v_{\varsx}$, which is a word over $\Sigma$. Moreover, it can be easily verified that $\deref(w)$ does not depend on the actual choices of the definitions that are replaced in the iterations of Step~\ref{refStepTwo}. 
\end{proof}

For a ref-word $w$, the procedure of Definition~\ref{derefDefinition} that computes $\deref(w)$ also uniquely allocates a subword $v_{\varsx}$ of $\deref(w)$ to each variable $\varsx$ that has a definition in $w$ (i.\,e., the subwords $v_{\varsx}$ defined in the iterations of Step~\ref{refStepTwo}). In this way, a ref-word $w$ over terminal alphabet $\Sigma$ and variables $\varset$ describes a \emph{variable mapping} $\varmapX{w}{\varset} : \varset \to \Sigma^*$, i.\,e., we set $\varmapX{w}{\varset}(x) = v_{\varsx}$ if $\varsx$ has a definition in $w$ and we set $\varmapX{w}{\varset}(x) = \eword$ otherwise. The elements $\varmapX{w}{\varset}(x)$ for $x \in \varset$ are called \emph{variable images}. 
\par
Note that even if $\varsx$ has a definition in $w$, $\varmapX{w}{\varset}(x) = \eword$ is possible due to a definition $\open{\varsx} \: \close{\varsx}$ or to a definition $\open{\varsx} v_{\varsx} \close{\varsx}$, where $v_{\varsx}$ is a non empty word with only references of variables $\varsy$ with $\varmapX{w}{\varset}(y) = \eword$. Also note that any ref-word $w$ over terminal alphabet $\Sigma$ and variables $\varset$ is also a ref-word over any terminal alphabet $\Sigma' \supset \Sigma$ and variables $\varset' \supset \varset$ ($\varmapX{w}{\varset'}$ then equals the extension of $\varmapX{w}{\varset}$ by $\varmapX{w}{\varset'}(x') = \emptyword$ for every $\varsx' \in \varset' \setminus \varset$). If the set of variables $\varset$ is clear from the context or negligible, we shall also denote the variable mapping by $\varmap{w}$ and if there is some obvious implicit order on $\varset$, e.\,g., given by indices as in the case $\varset = \{\varsx_1, \varsx_2, \ldots, \varsx_m\}$, then we also write $\varmap{w}$ as a tuple $(\varmap{w}(\varsx_1), \varmap{w}(\varsx_2), \ldots, \varmap{w}(\varsx_m))$.\par
A set $L$ of ref-words is called \emph{ref-language} and we extend the $\deref$-function to ref-languages in the obvious way, i.\,e., $\deref(L) = \{\deref(w) \mid w \in L\}$. Let us illustrate ref-words with an example.

\begin{example}
Let $\Sigma = \{\ta, \tb, \tc\}$ and let $\varsx_1, \varsx_2, \varsx_3, \varsx_4 \in \varset$.
\begin{equation*}
w = \ta \varsx_4 \ta \open{\varsx_1} \ta \tb \open{\varsx_2} \ta \tc \tc \close{\varsx_2} \ta \varsx_2 \varsx_4 \close{\varsx_1} \open{\varsx_3} \varsx_1 \ta \varsx_2 \close{\varsx_3} \varsx_3 \tb \varsx_1\,.
\end{equation*}
The procedure of Definition~\ref{derefDefinition} will first delete all occurrences of $\varsx_4$. Then $\open{\varsx_2} \ta \tc \tc \close{\varsx_2}$ and all references of $\varsx_2$ are replaced by $\ta \tc \tc$. After this step, the definition for variable $\varsx_1$ is $\open{\varsx_1} \ta \tb \ta \tc \tc\ta \ta \tc \tc \close{\varsx_1}$, so $\ta \tb \ta \tc \tc\ta \ta \tc \tc$ can be substituted for the definitions and references of $\varsx_1$. After replacing the last variable $\varsx_3$, we obtain $\deref(w)$. Note that $\varmap{w} = (\ta \tb \ta \tc \tc \ta \ta \tc \tc, \ta \tc \tc, \ta \tb \ta \tc \tc \ta \ta \tc \tc \ta \ta \tc \tc, \eword)$.
\end{example}

\subsection{Graph-Databases}

 A \emph{graph-database} (\emph{over $\Sigma$}) is a directed, edge labelled multigraph $\DBD = (V_{\DBD}, E_{\DBD})$, where $V_{\DBD}$ is the set of \emph{vertices} (or \emph{nodes}) and $E_{\DBD} \subseteq V_{\DBD} \times \Sigma \times V_{\DBD}$ is the set of \emph{edges} (or \emph{arcs}). A path from $u \in V_{\DBD}$ to $v \in V_{\DBD}$ of length $k \geq 0$ is a sequence $p = (w_0, a_1, w_1, a_2, w_2 \ldots, w_{k-1}, a_{k}, w_k)$ with $(w_{i-1}, a_{i}, w_{i}) \in E_{\DBD}$ for every $i \in [k]$. We say that $p$ is \emph{labelled} with the \emph{word} $a_1 a_2 \ldots a_k \in \Sigma^*$. According to this definition, for every $v \in V_{\DBD}$, $(v)$ is a path from $v$ to $v$ of length $0$ that is labelled by $\eword$. Hence, every node of every graph-database has an $\eword$-labelled path to itself (and these are the only $\eword$-labelled paths in $\DBD$).\par
Nondeterministic finite automata ($\NFA$s) are just graph databases, the nodes of which are called \emph{states}, and that have a specified \emph{start state} and a set of specified \emph{final states}. Moreover, we allow the empty word as edge label as well (which is not the case for graph databases). The language $\lang(M)$ of an $\NFA$ $M$ is the set of all labels from paths from the start state to some final state.\par
In the following, $\nodevarset$ is an enumerable set of \emph{node-variables}; we shall use symbols $\varnx, \varny, \varnz, \varnx_1, \varnx_2, \ldots, \varny_1, \varny_2, \ldots$ for node variables (in contrast to the string variables $\varset$ in sans-serif font).

\subsection{Conjunctive Path Queries} 

Let $\Re$ be a class of language descriptors, and, for every $r \in \Re$, let $\lang(r)$ denote the language represented by $r$. An \emph{$\Re$-graph pattern} is a directed, edge-labelled graph $G = (V, E)$ with $V \subseteq \nodevarset$ and $E \subseteq V \times \Re \times V$; it 
is an $\Re$-graph pattern \emph{over alphabet $\Sigma$}, if $\lang(\alpha) \subseteq \Sigma^*$ for every $(x, \alpha, y) \in E$. For an $\Re$-graph pattern $G = (V, E)$ over $\Sigma$ and a graph-database $\DBD = (V_{\DBD}, E_{\DBD})$ over $\Sigma$, a mapping $h : V \to V_{\DBD}$ is a \emph{matching morphism} for $G$ and $\DBD$ if, for every $e = (x, \alpha, y) \in E$, $\DBD$ contains a path from $h(x)$ to $h(y)$ that is labelled with a word $w_e \in \lang(\alpha)$. The tuple $(w_e)_{e \in E}$ is a tuple of \emph{matching words} (\emph{with respect to $h$}). In particular, a matching morphism can have several different tuples of matching words.\par
A \emph{conjunctive $\Re$-path query} ($\CLPQ$ for short) is a query $q = \bar{z} \gets G_q$, where $G_q = (V_q, E_q)$ is an $\Re$-graph pattern and $\bar{z} = (z_1, \ldots, z_\ell)$  with $\{z_1, z_2, \ldots, z_\ell\} \subseteq V_q$. We say that $q$ is an $\CLPQ$ \emph{over alphabet $\Sigma$} if $G_q$ is an $\Re$-graph pattern over $\Sigma$. The query $q$ is a \emph{single-edge} query, if $|E_q| = 1$. \par
For an $\CLPQ$ $q = \bar{z} \gets G_q$ over $\Sigma$ with $\bar{z} = (z_1, z_2, \ldots, z_\ell)$, a graph-database $\DBD = (V_{\DBD}, E_{\DBD})$ over $\Sigma$ and a matching morphism $h$ for $G_q$ and $\DBD$ 
(we also call $h$ a \emph{matching morphism for $q$ and $\DBD$}), we define $q_h(\DBD) = (h(z_1), h(z_2), \ldots, h(z_\ell))$ and we set $q(\DBD) = \{q_h(\DBD) \mid h \text{ is a matching morphism for $q$ and $\DBD$}\}$. 
The mapping $\DBD \mapsto q(\DBD)$ from the set of graph-databases to the set of relations over $\Sigma$ of arity $\ell$ that is defined by $q$ shall be denoted by $\llbracket q \rrbracket$, and for any class $A$ of conjunctive path queries, we set $\llbracket A \rrbracket = \{\llbracket q \rrbracket \mid q \in A\}$. \par
A \emph{Boolean} $\CLPQ$ has the form $\bar{z} \gets G_q$ where $\bar{z}$ is the empty tuple. In this case, we also denote $q$ just by $G_q$ instead of $() \gets G_q$. For a Boolean $\CLPQ$ $q$ and a graph-database $\DBD$, we either have $q(\DBD) = \{()\}$ or $q(\DBD) = \emptyset$, which we shall also denote by $\DBD \models q$ and $\DBD \not \models q$, respectively. For Boolean queries $q$, the set $\llbracket q \rrbracket$ can also be interpreted as $\{\DBD \mid \DBD \models q\}$. Two $\CLPQ$s $q$ and $q'$ are \emph{equivalent}, denoted by $q \equiv q'$, if $\llbracket q \rrbracket = \llbracket q' \rrbracket$, i.\,e., $q(\DBD) = q'(\DBD)$ for every graph-database $\DBD$ (or, in the Boolean case, $\DBD \models q \Leftrightarrow \DBD \models q'$).\par
For a class $Q$ of conjunctive path queries, $Q$-$\booleProb$ is the problem to decide, for a given Boolean $q \in Q$ and a graph database $\DBD$, whether $\DBD \models q$. By $Q$-$\checkProb$, we denote the problem to check $\bar{t} \in q(\DBD)$ for a given $q \in Q$, a graph database $\DBD$ and a tuple $\bar{t}$. \par
As common in database theory, the \emph{combined-complexity} for an algorithm solving $Q$-$\booleProb$ or $Q$-$\checkProb$ is the time or space needed by the algorithm measured in both $|q|$ and $|\DBD|$, while for the \emph{data-complexity} the query $q$ is considered constant. For simplicity, we assume $|q| = \bigO(|\DBD|)$ throughout the paper.\par
\emph{Conjunctive regular path queries} ($\CRPQ$) are $\CLPQ$ where $\Re$ is the class of regular expressions (which are defined in Section~\ref{sec:xregex}). For some of our results, we need the following result about $\CRPQ$s.

\begin{lemma}[\cite{BarceloEtAl2012}]\label{evalCRPQLemma}
$\CRPQ$-$\booleProb$ is $\npclass$-complete in combined complexity and $\nlclass$-complete in data-complexity.
\end{lemma}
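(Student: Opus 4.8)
This is a classical result, so the plan is to recall the standard arguments. The two bounds (data-complexity in $\nlclass$, combined-complexity in $\npclass$) are proved separately, together with matching hardness.

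\textbf{Upper bounds.} For a Boolean $\CRPQ$ $q = G_q$ with $G_q = (V_q, E_q)$ and a graph database $\DBD = (V_\DBD, E_\DBD)$, I would first convert each regular expression $\alpha$ labelling an edge $e = (x, \alpha, y) \in E_q$ into an $\NFA$ $M_e$; this blow-up is polynomial in $|q|$ and costs nothing in data-complexity since $q$ is fixed there. Observe that $\DBD$ contains a path from $u$ to $v$ labelled by some word in $\lang(\alpha)$ if and only if, in the product automaton $M_e \times \DBD$, some state $(f, v)$ with $f$ a final state of $M_e$ is reachable from $(q_0, u)$ with $q_0$ the start state. For the combined-complexity $\npclass$ bound: guess the matching morphism $h : V_q \to V_\DBD$ (this is $\bigO(|V_q| \log |V_\DBD|)$ bits, polynomial), and then for each edge $e = (x, \alpha, y)$ verify reachability of $(f, h(y))$ from $(q_0, h(x))$ in $M_e \times \DBD$ in deterministic polynomial time (graph reachability). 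For the data-complexity $\nlclass$ bound: $q$ is fixed, so the morphism $h$ has only constantly many possibilities and we can cycle through all of them; for a fixed $h$, each of the constantly many edge-checks is an instance of $\STTCProb$/graph reachability in $M_e \times \DBD$, which is in $\nlclass$ (and $\nlclass$ is closed under the constant-round composition needed here, and under complementation by Immerman--Szelepcsényi, though we only need the positive direction). Hence $\CRPQ$-$\booleProb$ is in $\npclass$ in combined-complexity and in $\nlclass$ in data-complexity.

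\textbf{Lower bounds.} For $\npclass$-hardness in combined-complexity, reduce from a canonical $\npclass$-complete problem such as graph $3$-colourability or directed Hamiltonian path, or more directly from Boolean conjunctive query evaluation (which is $\npclass$-complete), simply viewing a Boolean conjunctive query over a binary relation as a $\CRPQ$ whose edge labels are single terminal symbols; the graph pattern is the query's hypergraph and the database encodes the relation. For $\nlclass$-hardness in data-complexity, reduce from directed $s$-$t$ reachability $\STReach$ (which is $\nlclass$-complete): fix the single-edge Boolean $\CRPQ$ $q$ with edge $(x, \ta^*, y)$ over a one-letter alphabet $\Sigma = \{\ta\}$ together with two distinguished free variables pinned to $s$ and $t$ — or, to keep it Boolean, add source/target gadgets so that $\DBD \models q$ iff $t$ is reachable from $s$ in the input digraph (relabel every edge with $\ta$). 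This gives $\nlclass$-hardness already in data-complexity.

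\textbf{Main obstacle.} There is no serious obstacle; this is textbook material and the excerpt explicitly cites~\cite{BarceloEtAl2012} for it. The only points requiring a little care are (i) making the data-complexity argument genuinely live in $\nlclass$ rather than in $\pclass$ — this works precisely because, with $q$ fixed, both the number of candidate morphisms and the number of edge-checks are constants, so the whole computation is a constant-length sequence of $\nlclass$ reachability tests on a product graph of size $\bigO(|\DBD|)$ (with the $\NFA$s $M_e$ of constant size), and $\nlclass$ is closed under such bounded composition; and (ii) ensuring the hardness reductions stay Boolean and use only graph-database-admissible labels (no $\eword$-edges in $\DBD$), which the gadgets above handle.
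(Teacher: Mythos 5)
The paper gives no proof of this lemma --- it is imported verbatim from~\cite{BarceloEtAl2012} --- and your sketch is the standard argument one would expect to find there, so the two approaches coincide in substance. One inaccuracy worth fixing: in the data-complexity upper bound you assert that the matching morphism $h$ has ``only constantly many possibilities'' because $q$ is fixed; in fact $h$ maps the constantly many nodes of $V_q$ into $V_{\DBD}$, so there are $|V_{\DBD}|^{|V_q|}$, i.e.\ polynomially many, candidates. The $\nlclass$ bound still goes through because such an $h$ can be guessed (or enumerated via a counter) using $\bigO(\log|\DBD|)$ bits, after which the constantly many reachability checks in the product graphs are each in $\nlclass$; the remainder of your argument, including both hardness reductions, is correct.
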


\section{Xregex}\label{sec:xregex}

We next define the underlying language descriptors for our class of conjunctive path queries. Complete formalisations of the class of xregex can also be found elsewhere in the literature (e.\,g.,~\cite{Schmid2016, FreydenbergerSchmidJCSS, Freydenberger2013}). However, since we will extend xregex to conjunctive xregex (the main building block for $\CXRPQ$), we give a full definition.



\begin{definition}[Xregex]\label{xregexDefinition}
The set $\xregex_{\Sigma, \varset}$ of \emph{regular expressions with backreferences} (\emph{over $\Sigma$ and $\varset$}), also denoted by \emph{xregex}, for short, is recursively defined as follows:
\begin{enumerate}
\item\label{regexDefPointOne} $a \in \xregex_{\Sigma, \varset}$ and $\var(a) = \emptyset$, for every $a \in \Sigma \cup \{\eword\}$,
\item $\varsx \in \xregex_{\Sigma, \varset}$ and $\var(\varsx) = \{\varsx\}$, for every $\varsx \in \varset$, 
\item\label{regexDefPointTwo} $(\alpha \cdot \beta) \in \xregex_{\Sigma, \varset}$, $(\alpha \altop \beta) \in \xregex_{\Sigma, \varset}$, and $(\alpha)^+ \in \xregex_{\Sigma, \varset}$, for every $\alpha,\beta\in \xregex_{\Sigma, \varset}$;\\ furthermore, $\var((\alpha \cdot \beta)) = \var((\alpha \altop \beta)) = \var(\alpha) \cup \var(\beta)$ and $\var((\alpha)^+) = \var(\alpha)$, 
\item $\varsx\{\alpha\} \in \xregex_{\Sigma, \varset}$ and $\var(\varsx\{\alpha\}) = \var(\alpha) \cup \{\varsx\}$, for every $\alpha \in \xregex_{\Sigma, \varset}$ and $\varsx \in \varset \setminus \var(\alpha)$.
\end{enumerate}
\end{definition}

For technical reasons, we also add $\varnothing$ to $\xregex_{\Sigma, \varset}$. For $\alpha \in \xregex_{\Sigma, \varset}$, we use $r^*$ as a shorthand form for $r^+ \altop \eword$, and we usually omit the operator `$\cdot$', i.\,e., we use juxtaposition. If this does not cause ambiguities, we often omit parenthesis. For example, $\varsx$, $\varsx\{\varsy \ta\}$ and $\varsx\{(\varsy\{\varsz\{ \ta^* \altop \tb \tc\} \ta\} \varsy)^+ \tb\} \varsx$ are all valid xregex, while neither $\varsx\{\ta \varsx\} \tb$ nor $\varsx\{\ta \varsx\{\tb^*\} \ta \} \tb$ is a valid xregex.\par
We call an occurrence of $\varsx \in \varset$ a \emph{reference of variable $\varsx$} and a subexpression $\varsx\{\alpha\}$ a \emph{definition of variable $\varsx$}. The set $\xregex_{\Sigma, \emptyset}$ 
is exactly the set of regular expressions over $\Sigma$, which shall be denoted by $\RE_{\Sigma}$ in the following. We also use the term \emph{classical} regular expressions for a clearer distinction from xregex.
If the underlying alphabet $\Sigma$ or set $\varset$ of variables is negligible or clear form the context, we also drop these and simply write $\xregex$ and $\RE$. \par
We next define the semantics of xregex, for which we heavily rely on the concept of ref-words (see Section~\ref{sec:prelim}). First, for any $\alpha \in \RE$, let $\lang(\alpha)$ be the regular language described by the (classical) regular expression $\alpha$ defined as usual: $\lang(a) = \{a\}$, $\lang(\alpha \cdot \beta) = \lang(\alpha) \cdot \lang(\beta)$, $\lang(\alpha \altop \beta) = \lang(\alpha) \cup \lang(\beta)$, $\lang(\alpha^+) = \lang(\alpha)^+$. 
Now in order to define the language described by an xregex $\alpha \in \xregex_{\Sigma, \varset}$, we first define a classical regular expression $\alpha_{\mathsf{ref}}$ over the alphabet $\Sigma \cup \varset \cup \{\open{\varsx}, \close{\varsx} \mid \varsx \in \varset\}$ that is obtained from $\alpha$ by iteratively replacing all variable definitions $\varsx\{ \beta \}$ by $\open{\varsx} \beta \close{\varsx}$. For example,
\begin{align*}
\alpha =\:&\varsx\{(\varsy\{\varsz\{ \ta^* \altop \tb \tc\} \ta\} \varsy)^+ \tb\} \varsx\,,\\
\alpha_{\mathsf{ref}} =\:&\open{\varsx}(\open{\varsy} \open{\varsz} (\ta^* \altop \tb \tc) \close{\varsz} \ta\close{\varsy} \varsy)^+ \tb \close{\varsx} \varsx\,.
\end{align*}
 We say that $\alpha$ is \emph{sequential} if every $w \in \lang(\alpha_{\mathsf{ref}})$ contains for every $\varsx \in \varset$ at most one occurrence of $\open{\varsx}$ (if not explicitly stated otherwise, all our xregex are sequential). If an xregex $\alpha$ is sequential, then every $w \in \lang(\alpha_{\mathsf{ref}})$ must be a ref-word. Indeed, this directly follows from the fact that the definitions $\varsx\{\ldots\}$ are always subexpressions. Hence, for sequential xregex, $\lang(\alpha_{\mathsf{ref}})$ is a ref-language, which we shall denote by $\reflang(\alpha)$. 
If a ref-word $v \in \reflang(\alpha)$ contains a definition $\open{\varsx} v_{\varsx} \close{\varsx}$, then we say that the corresponding definition $\varsx\{ \gamma_x \}$ in $\alpha$ is \emph{instantiated} (by $v$). In particular, we observe that sequential xregex can nevertheless have several definitions for the same variable $\varsx$, but at most one of them is instantiated by any ref-word.
Finally, the language described by $\alpha$ is defined as $\lang(\alpha) = \deref(\reflang(\alpha))$. As a special case, we also define $\lang(\varnothing) = \emptyset$. For $\alpha \in \xregex_{\Sigma, \varset}$ and $w \in \Sigma^*$, we say that \emph{$w$ matches $\alpha$ with witness $u \in \reflang(\alpha)$ and variable mapping $\varmap{u}$}, if $\deref(u) = w$. \par

\begin{example}
Let $\alpha \in \xregex_{\{\ta, \tb\}, \varset}$ with $\varsx_1, \varsx_2 \in \varset$:
\begin{align*}
\alpha &= \ta^* \varsx_1\{\ta^*\varsx_2\{(\ta \altop \tb)^*\}\tb^*\ta^*\} \varsx_2^* (\ta \altop \tb)^* \varsx_1\,,\\
\alpha_{\mathsf{ref}} &= \ta^* \open{\varsx_1} \ta^* \open{\varsx_2}(\ta \altop \tb)^* \close{\varsx_2} \tb^* \ta^* \close{\varsx_1} \varsx_2^* (\ta \altop \tb)^* \varsx_1\,,\\
u_1 &= \ta \ta \ta \ta \open{\varsx_1} \open{\varsx_2} \tb \ta \close{\varsx_2} \tb \ta \ta \close{\varsx_1} \varsx_2 \varsx_2 \tb \tb \ta \varsx_1\in \reflang(\alpha)\,,\\
u_2 &= \ta \ta \ta \open{\varsx_1} \ta \open{\varsx_2} \tb \ta \tb \close{\varsx_2} \ta \ta \close{\varsx_1} \varsx_2 \ta \tb \tb \varsx_1\in \reflang(\alpha)\,,\\
w_{\alpha} &= \deref(u_1) = \deref(u_2) = \ta^4 (\tb \ta)^2 (\ta \tb)^3 (\tb \ta)^3 \ta\in \lang(\alpha)\,,\\
\varmap{u_1} &= (\tb \ta \tb \ta \ta, \tb \ta), \varmap{u_2} = (\ta \tb \ta \tb \ta \ta, \tb \ta \tb)\,.
\end{align*}
For $\gamma = \varsx_1\{\tc^* (\varsx_2\{\ta^*\} \altop \varsx_3\{\tb^*\})\} \tc \varsx_2 \tc \varsx_3 \tb \varsx_1$, $\tc^2 \ta^2 \tc \ta^2 \tc \tb \tc^2 \ta^2 \in \lang(\gamma)$ is witnessed by ref-word $u_5 = \open{\varsx_1} \tc^2 \open{\varsx_2} \ta^2 \close{\varsx_2} \close{\varsx_1} \tc \varsx_2 \tc \varsx_3 \tb \varsx_1$, which has the variable mapping $\varmap{u_5} = (\tc^2 \ta^2, \ta^2, \eword)$. Note that empty variable images can result from variables without definitions in the ref-word (as in this example), but also from definitions of variables that define the empty word, as, e.\,g., in the ref-word $\open{\varsx_1} \close{\varsx_1} \tc \varsx_1 \in \reflang(\varsx_1\{(\ta \altop \tb)^*\} \tc \varsx_1)$. 
\end{example}

For $\alpha \in \xregex_{\Sigma, \varset}$, we define the relation $\preceq_{\alpha}$ analogously how it is done for ref-words, i.\,e., $\varsx \preceq_{\alpha} \varsy$ if in $\alpha$ there is a definition of $\varsy$ that contains a reference or a definition of $\varsx$. Even though the transitive closure of $\preceq_v$ is acyclic for every $v \in \reflang(\alpha)$, the transitive closure of $\preceq_{\alpha}$ is not necessarily acyclic. For example, $\alpha = \varsx\{\ta^*\} \varsy\{\varsx\} \altop \varsy\{\ta^*\} \varsx\{\varsy\}$ is an xregex, but the transitive closure of $\preceq_{\alpha}$ is not acyclic. We call xregex \emph{acyclic} if $\preceq_{\alpha}$ is acyclic.\par

\subsection{Conjunctive Xregex}\label{sec:ConXregex}

Syntactically, conjunctive xregex are tuples of xregex. Their semantics, however, is more difficult and we spend some more time with intuitive explanations before giving a formal definition.

\begin{definition}[Conjunctive Xregex]\label{conjunctiveXregexDefinition}
A tuple $\bar{\alpha} = (\alpha_1, \ldots, \alpha_m) \in (\xregex_{\Sigma, \varset})^m$ is a \emph{conjunctive xregex of dimension $m$}, if $\alpha_1 \alpha_2 \ldots \alpha_m$ is an acyclic xregex.
\end{definition}

By $\dimconxregex{m}_{\Sigma, \varset}$, we denote the set of conjunctive xregex of dimension $m$ (over $\Sigma$ and $\varset$) and we set $\conxregex_{\Sigma, \varset} = \bigcup_{m \geq 1} \dimconxregex{m}_{\Sigma, \varset}$. Note that $\dimconxregex{1}_{\Sigma, \varset} = \xregex_{\Sigma, \varset}$. If the terminal alphabet $\Sigma$ or set $\varset$ of variables are negligible or clear from the context, we also drop the corresponding subscripts. We also write $\bar{\alpha}[i]$ to denote the $i^{\text{th}}$ element of some $\bar{\alpha} \in \dimconxregex{m}$. \par
Before defining next the semantics of conjunctive xregex, we first give some intuitive explanations. The central idea of a conjunctive xregex $\bar{\alpha} = (\alpha_1$, $\alpha_2$, $\ldots$, $\alpha_m)$ is that the definition of some $\varsx$ in some $\alpha_i$ also serves as definition for possible references of $\varsx$ in some $\alpha_j$ with $i \neq j$ (which, due to the requirement that $\alpha_1\alpha_2 \ldots \alpha_m$ is an xregex, cannot contain a definition of $\varsx$). Let us illustrate the situation with the concrete example $\bar{\gamma} = (\gamma_1, \gamma_2)$, with $\gamma_1 = (\varsx\{\ta^*\} \altop \tb^*) \varsy$ and $\gamma_2 = \varsy\{\varsx \ta \varsx \tb\} \tb \varsy^*$.\par
As an intermediate step, we first add dummy definitions for the undefined variables, i.\,e., we define $\interxregex{\gamma_1} = \varsy\{\Sigma^*\} \# \gamma_1$ and $\interxregex{\gamma_2} = \varsx\{\Sigma^*\} \# \gamma_2$, which are both xregex in which all variables have a definition. Now, we can treat $(\interxregex{\gamma_1}, \interxregex{\gamma_2})$ as a generator for pairs of ref-words (and therefore as pairs of words over $\Sigma$), but we only consider those pairs of ref-words that have the same variable mapping. For example, $u_1 = \open{\varsy} \ta^5 \tb \close{\varsy} \# \open{\varsx} \ta \ta\close{\varsx} \varsy \in \reflang(\interxregex{\gamma_1})$ and $u_2 = \open{\varsx} \ta \ta \close{\varsx} \# \open{\varsy} \varsx \ta \varsx \tb \close{\varsy} \tb \varsy \varsy \in \reflang(\interxregex{\gamma_2})$ and, furthermore, since $u_1$ and $u_2$ have the same variable mapping $(\ta \ta, \ta^5 \tb)$, they are witnesses for the conjunctive match $(w_1, w_2) = (\ta \ta \ta^5 \tb, \ta^5 \tb \tb (\ta^5 \tb)^2)$ of $\bar{\gamma}$, since $\deref(u_1) = \ta^5 \tb \# w_1$ and $\deref(u_2) = \ta \ta \# w_2$. On the other hand, $v_1 = \open{\varsy} \ta \close{\varsy} \# \open{\varsx} \ta \close{\varsx} \varsy$ and $v_2 = \open{\varsx} \ta \close{\varsx} \# \open{\varsy} \varsx \ta \varsx \tb \close{\varsy} \tb \varsy$ are also ref-words from $\reflang(\interxregex{\alpha_1})$ and $\reflang(\interxregex{\alpha_2})$, respectively, but, even though $\deref(v_1) = \ta  \# \ta \ta$ and $\deref(v_2) =  \ta \# \ta^3 \tb \tb \ta^3 \tb$, the tuple $(\ta \ta, \ta^3 \tb \tb \ta^3 \tb)$ is not a conjunctive match for $\bar{\alpha}$, since $\varmap{v_1}(\varsy) = \ta \neq \ta^3 \tb = \varmap{v_2}(\varsy)$. We shall now generalise this idea to obtain a sound definition of the semantics of conjunctive xregex.\par
For any xregex $\gamma \in \xregex_{\Sigma, \varset}$, let $\varprefix{\gamma} = \Pi_{x \in A} x\{\Sigma^*\}$, where $A \subseteq \varset$ is the set of variables that have no definition in $\gamma$ (the order of the definitions $\varsx\{\Sigma^*\}$ with $\varsx \in A$ in $\varprefix{\gamma}$ shall be irrelevant), and we also define $\interxregex{\gamma} = \varprefix{\gamma} \# \gamma$, where $\#$ is a new symbol with $\# \notin \Sigma$. Finally, a tuple $\bar{w} = (w_1, w_2, \ldots, w_m) \in (\Sigma^*)^m$ is a (\emph{conjunctive}) \emph{match} for $\bar{\alpha} = (\alpha_1, \alpha_2, \ldots, \alpha_m) \in (\xregex_{\Sigma, \varset})^m$ with variable mapping $\psi$ if, for every $i \in [m]$, there is a ref-word $v_i = v'_i \# v''_i \in \reflang(\interxregex{\alpha_i})$ with $\deref(v_i) = \deref(v'_i) \# w_i$ and $\varmap{v_i} = \psi$.\par 
Conjunctive xregex behave similar to xregex in terms of how $\eword$ can be allocated to a variable.
It might happen that $\alpha_i$ contains the definition of some variable $\varsx$ (and therefore $\varprefix{\alpha_i}$ does not contain $\varsx\{\Sigma^*\}$), but the corresponding ref-word $v_i \in \reflang(\interxregex{\alpha_i})$ does not contain a definition of $\varsx$. This means that $\varmap{v_i}(\varsx) = \emptyword$ and therefore all other $v_{j}$ with $i \neq j$ must also allocate $\eword$ to $\varsx$ in their definitions $\varsx\{\Sigma^*\}$ contained in $\varprefix{\alpha_j}$. On the other hand, it is possible that $\varmap{v_i}(\varsx) = \emptyword$ even though $v_i$ contains a definition of $\varsx$. \par
%
%
For an $\bar{\alpha} \in \conxregex$, $\lang(\bar{\alpha})$ is the \emph{set of conjunctive matches} for $\bar{\alpha}$. We say that conjunctive xregex $\bar{\alpha}$ and $\bar{\beta}$ are \emph{equivalent} if $\lang(\bar{\alpha}) = \lang(\bar{\beta})$ (note that $\bar{\alpha}$ and $\bar{\beta}$ having the same dimension is necessary for this). We stress the fact that the order of the elements $\alpha_i$ of a conjunctive xregex $\bar{\alpha} = (\alpha_1, \alpha_2, \ldots, \alpha_m)$ is vital. \par
A special class of conjunctive xregex is $\dimconxregex{m}_{\Sigma, \emptyset}$, i.\,e., the class of all $m$-dimensional tuples of classical regular expressions. For every $\bar{\alpha} \in \dimconxregex{m}_{\Sigma, \emptyset}$, $\lang(\bar{\alpha}) = \lang(\bar{\alpha}[1]) \times \lang(\bar{\alpha}[2]) \times \ldots \times \lang(\bar{\alpha}[m])$.

\begin{example}
Consider the xregex 
\begin{align*}
&\alpha_1 = \:\varsx_2\{\varsx_1 \altop \ta^*\} \tb\,,& &\alpha_2 = \:\varsx_1\{(\ta \altop \tb)^*\} \varsx_3\{\tc^*\} \tb \varsx_3\,,&\\
&\alpha_3 = \:\varsx_2^* \ta^* \varsx_1\,,& &\alpha_4 = \:\varsx_4\{\ta^*\} \tb \varsx_4 \varsx_1\{\varsx_2 \ta\}\,.&
\end{align*}
Then $(\alpha_2, \alpha_4)$ is not a conjunctive xregex since $\alpha_2 \alpha_4$ is not sequential, but both $(\alpha_3, \alpha_4)$ and $(\alpha_1, \alpha_2, \alpha_3)$ are conjunctive xregex.\par
Obviously, $w_1 = \ta \ta \tb$, $w_2 = \tb \tb \ta \tc \tb \tc$ and $w_3 = \ta \ta$ are matches for $\alpha_1$, $\alpha_2$ and $\alpha_3$, respectively, with variable mappings $(\eword, \ta \ta, \eword)$, $(\tb \tb \ta, \eword, \tc)$ and $(\eword, \eword, \eword)$, respectively. However, 
for $(w_1, w_2, w_3)$ to be a conjunctive match for $(\alpha_1, \alpha_2, \alpha_3)$, there must be words $u_1, u_2, u_3$ such that the following $w'_i$ match the $\interxregex{\alpha_i}$ all with the same variable mapping $(u_1, u_2, u_3)$, which can be easily seen to be impossible.
\begin{align*}
&w'_1 = u_1 u_3 \# w_1& &\interxregex{\alpha_1} = \:\varsx_1\{\Sigma^*\} \varsx_3\{\Sigma^*\} \# \alpha_1& \\
&w'_2 = u_2 \# w_2& &\interxregex{\alpha_2} = \:\varsx_2\{\Sigma^*\} \# \alpha_2& \\
&w'_3 = u_1 u_2 u_3 \# w_3& &\interxregex{\alpha_3} = \:\varsx_1\{\Sigma^*\} \varsx_2\{\Sigma^*\} \varsx_3\{\Sigma^*\} \# \alpha_3&
\end{align*}
However, it can be verified that $(\ta \tb \tb, \ta \tb \tc \tc \tb \tc \tc, \ta \tb \ta \tb \ta \ta \ta \tb)$ is a conjunctive match for $(\alpha_1, \alpha_2, \alpha_3)$ with variable mapping $(\ta \tb, \ta \tb, \tc \tc)$.
\end{example}

The following lemma is straightforward, but nevertheless helpful in the following proofs. It also points out how the semantics of a conjunctive xregex $\bar{\alpha}$ depends on the ref-languages of the individual components $\bar{\alpha}[i]$.

\begin{lemma}\label{refLangStableChangesLemma}
Let $\bar{\alpha} = (\alpha_1, \alpha_2, \ldots, \alpha_m) \in \dimconxregex{m}_{\Sigma, \varset}$ and let $\beta_1, \beta_2, \ldots, \beta_m \in \xregex_{\Sigma, \varset}$ such that, for every $i \in [m]$, $\reflang(\alpha_i) = \reflang(\beta_i)$. Then $\bar{\beta} = (\beta_1$, $\beta_2$, $\ldots$, $\beta_m)$ is a conjunctive xregex with $\lang(\bar{\alpha}) = \lang(\bar{\beta})$.
\end{lemma}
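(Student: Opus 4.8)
The plan is to show that the two conditions defining a conjunctive match for $\bar{\alpha}$ and for $\bar{\beta}$ coincide, by inspecting how the match predicate depends on the components. First I would observe that the hypothesis $\reflang(\alpha_i) = \reflang(\beta_i)$ implies in particular that each $\beta_i$ is sequential (since $\reflang$ is only defined for sequential xregex and the hypothesis presupposes $\reflang(\beta_i)$ exists), so $\beta_i$ is a well-formed xregex for which ref-languages make sense; moreover, $\reflang(\alpha_i) = \reflang(\beta_i)$ means the two expressions produce exactly the same ref-words, hence the same definitions and references appear (across all generated ref-words), so $\preceq_{\alpha_i}$ and $\preceq_{\beta_i}$ agree on the set of pairs realizable by some ref-word. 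Since $\bar{\alpha}$ is a conjunctive xregex, $\alpha_1 \alpha_2 \ldots \alpha_m$ is acyclic; I would argue that acyclicity of the transitive closure of $\preceq$ on the concatenation is a property that only depends on which variable has a definition where and which references/definitions appear nested inside which definitions, and all of this is witnessed by the ref-words in $\bigcup_i \reflang(\alpha_i)$, which equals $\bigcup_i \reflang(\beta_i)$. Hence $\beta_1 \beta_2 \ldots \beta_m$ is also acyclic, so $\bar{\beta}$ is a conjunctive xregex.

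Next I would turn to the equality $\lang(\bar{\alpha}) = \lang(\bar{\beta})$. Unfolding the definition: $\bar{w} = (w_1, \ldots, w_m)$ is a conjunctive match for $\bar{\alpha}$ with variable mapping $\psi$ iff for every $i$ there is a ref-word $v_i = v'_i \# v''_i \in \reflang(\interxregex{\alpha_i})$ with $\deref(v_i) = \deref(v'_i) \# w_i$ and $\varmap{v_i} = \psi$. The key sub-claim is that $\reflang(\interxregex{\alpha_i}) = \reflang(\interxregex{\beta_i})$, which would make the two match predicates literally identical. Recall $\interxregex{\gamma} = \varprefix{\gamma} \# \gamma$ where $\varprefix{\gamma}$ adds a dummy definition $\varsx\{\Sigma^*\}$ for each variable $\varsx$ with no definition in $\gamma$. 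I would show that $\varprefix{\alpha_i} = \varprefix{\beta_i}$ (up to the irrelevant ordering of the dummy definitions): a variable $\varsx$ has a definition in $\gamma$ iff some ref-word in $\reflang(\gamma)$ contains a definition of $\varsx$ — more carefully, ``$\varsx$ has a (syntactic) definition in $\gamma$'' is equivalent to ``$\varsx\{\ldots\}$ is a subexpression of $\gamma$'', and for sequential xregex this holds iff $\open{\varsx}$ appears in at least one word of $\lang(\gamma_{\refmark}) = \reflang(\gamma)$ (the ``only if'' is because a definition subexpression need not be instantiated; but here I actually want the converse direction, and I should be careful). Let me instead argue directly: the set of variables appearing as a parenthesis symbol $\open{\varsx}$ in some ref-word of $\reflang(\gamma)$ is determined by $\reflang(\gamma)$; call it $D(\gamma)$. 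Since $\reflang(\alpha_i) = \reflang(\beta_i)$, we get $D(\alpha_i) = D(\beta_i)$. Now a variable has a syntactic definition in $\gamma$ but is never instantiated only if that definition subexpression lies in a branch of an alternation that is never taken together with... — actually since the definition is a subexpression, any ref-word passing through that part of the parse tree includes $\open{\varsx}$, and such a ref-word exists unless the subexpression is ``dead'' (underneath a $\varnothing$, which we excluded in well-formed examples). I would handle this cleanly by noting that $\varprefix{\gamma}$ only affects $\interxregex{\gamma}$ through which dummy definitions are prepended, and a dummy definition $\varsx\{\Sigma^*\}$ for $\varsx$ is redundant precisely when $\varsx$ is already ``live-definable'' in $\gamma$, i.e. when $\varsx \in D(\gamma)$; for the variables in $D(\gamma)$ the dummy is absent in both. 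For variables with a dead syntactic definition, the dummy is present in neither $\varprefix{\alpha_i}$ nor... — here is the one place I need $\reflang(\alpha_i) = \reflang(\beta_i)$ to also control ``dead'' definitions, which it does not directly. So the cleanest route is: define $\varprefix{\gamma}$ equivalently via $D(\gamma)$ rather than via syntactic definitions (the paper's examples suggest definitions are always live, i.e. not under $\varnothing$), or simply invoke that for sequential xregex a syntactic definition of $\varsx$ forces $\open{\varsx}$ into every ref-word reaching it and hence $\varsx \in D(\gamma)$ unless the whole expression is empty. Granting that, $\varprefix{\alpha_i} = \varprefix{\beta_i}$ up to reordering.

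Finally, from $\varprefix{\alpha_i} = \varprefix{\beta_i}$ and $\reflang(\alpha_i) = \reflang(\beta_i)$ I would conclude $\reflang(\interxregex{\alpha_i}) = \reflang(\varprefix{\alpha_i} \# \alpha_i) = \reflang(\varprefix{\beta_i} \# \beta_i) = \reflang(\interxregex{\beta_i})$, using that $\reflang$ of a concatenation $\mu \# \gamma$ is $\reflang(\mu) \cdot \{\#\} \cdot \reflang(\gamma)$ — or more simply, that $\reflang$ is a syntax-directed homomorphism so replacing $\gamma$ by an expression with the same ref-language inside a context preserves the ref-language (the context $\varprefix{\gamma} \# [\cdot]$ is just a left concatenation, and $\reflang$ respects concatenation: $(\mu\cdot\gamma)_{\refmark} = \mu_{\refmark}\cdot\gamma_{\refmark}$, so $\lang$ thereof is the product). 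With $\reflang(\interxregex{\alpha_i}) = \reflang(\interxregex{\beta_i})$ for all $i$, the defining condition for ``$\bar{w}$ is a conjunctive match with variable mapping $\psi$'' quantifies over exactly the same ref-words $v_i$ and imposes exactly the same constraints $\deref(v_i) = \deref(v'_i)\#w_i$ and $\varmap{v_i} = \psi$, so $\lang(\bar{\alpha}) = \lang(\bar{\beta})$.

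\textbf{Main obstacle.} The only non-routine point is the claim $\varprefix{\alpha_i} = \varprefix{\beta_i}$, i.e. that having a syntactic definition of a variable in an xregex is determined by its ref-language. This is where one must use that the xregex are sequential (hence definitions are subexpressions, so every definition is ``reachable'' by some ref-word unless it sits under $\varnothing$) and quietly assume, as the paper's development does, that definitions are not placed in dead ($\varnothing$-dominated) positions. Everything else is bookkeeping: $\reflang$ is defined syntax-directedly and commutes with concatenation, $\deref$ and $\varmap{\cdot}$ depend only on the ref-word, and the conjunctive-match predicate is a first-order statement over these ref-languages.
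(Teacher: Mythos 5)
Your proof is correct and follows essentially the same route as the paper's: both reduce everything to showing $\varprefix{\alpha_i} = \varprefix{\beta_i}$ and hence $\reflang(\interxregex{\alpha_i}) = \reflang(\interxregex{\beta_i})$, after which the two conjunctive-match predicates quantify over literally the same ref-words and so coincide. The only differences are cosmetic: the paper establishes that $\bar{\beta}$ is a conjunctive xregex by observing that non-sequentiality of $\beta_1\cdots\beta_m$ would transfer back to $\alpha_1\cdots\alpha_m$ via $\reflang(\alpha_1\cdots\alpha_m)=\reflang(\beta_1\cdots\beta_m)$ (a point worth adding alongside your acyclicity argument), and it simply asserts $\varprefix{\alpha_i}=\varprefix{\beta_i}$ without the legitimate worry about uninstantiable definitions that you raise --- which, as you note, is vacuous for $\varnothing$-free xregex, since there every syntactic definition of $\varsx$ is witnessed by an occurrence of $\open{\varsx}$ in some ref-word.
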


\begin{proof}
We first note that $\reflang(\alpha_i) = \reflang(\beta_i)$ for every $i \in [m]$ also implies that $\reflang(\alpha_1 \alpha_2 \ldots \alpha_m) = \reflang(\beta_1 \beta_2 \ldots \beta_m)$. If $\bar{\beta}$ is not a conjunctive xregex, then $\beta_1 \beta_2 \ldots \beta_m$ is not sequential and, since $\reflang(\alpha_1 \alpha_2 \ldots \alpha_m) = \reflang(\beta_1 \beta_2 \ldots \beta_m)$, this directly implies that $\alpha_1 \alpha_2 \ldots \alpha_m$ is not sequential, which is a contradiction.\par
It remains to show that $\lang(\bar{\alpha}) = \lang(\bar{\beta})$. Since, for every $i \in [m]$, $\reflang(\alpha_i) = \reflang(\beta_i)$, we can conclude that the $\varprefix{\beta_i} = \varprefix{\alpha_i}$. Moreover, this also implies that, for every $i \in [m]$, $\reflang(\varprefix{\beta_i} \# \beta_i) = \reflang(\varprefix{\alpha_i} \# \alpha_i)$ and therefore $\reflang(\interxregex{\alpha_i}) = \reflang(\interxregex{\beta_i})$. This directly implies that any $(w_1, w_2, \ldots, w_m)$ is a conjunctive match for $\bar{\alpha}$ if and only if it is a conjunctive match for $\bar{\beta}$.
\end{proof}

\section{Conjunctive Xregex Path Queries}\label{sec:CXRPQ}

Finally, we can define 
conjunctive xregex path queries:

\begin{definition}[$\CXRPQ$]
A \emph{conjunctive xregex path query} ($\CXRPQ$ for short) is an $\CLPQ$ $q = \bar{z} \gets G_q$, where $\Re$ is the class of xregex, $G_q = (V_q, E_q)$ with $E_q = \{(x_i, \alpha_i, y_i) \mid i \in [m]\}$  and $\bar{\alpha} = (\alpha_1, \alpha_2, \ldots, \alpha_m)$ is a conjunctive xregex (which is also called \emph{the conjunctive xregex of $q$}).
\end{definition}

The semantics are defined by combining the semantics of conjunctive path queries and conjunctive xregex in the obvious way. More precisely, let $q = \bar{z} \gets G_q$ be a $\CXRPQ$, where $G_q = (V_q, E_q)$ with $E_q = \{(x_i, \alpha_i, y_i) \mid i \in [m]\}$, and let $\DBD$ be a graph database. Then $h : V_q \to V_{\DBD}$ is a matching morphism for $q$ and $\DBD$ if there is a conjuctive match $(w_1, w_2, \ldots, w_m)$ of $\bar{\alpha}$, such that, for every $i \in [m]$, $\DBD$ contains a path from $h(x_i)$ to $h(y_i)$ labelled with $w_i$. Note that this definition of a matching morphism $h$ for $\CXRPQ$ also implies definitions for $q_h(\DBD)$, $q(\DBD)$, $\DBD \models q$, $\llbracket q \rrbracket$ and $\llbracket \CXRPQ \rrbracket$. \par
Since, for a $q \in \CXRPQ$ and a fixed graph database $\DBD$, the set $q(\DBD)$ is completely determined by the corresponding graph pattern and the set of conjunctive matches of the corresponding conjunctive xregex, the following can directly be concluded from the definitions.

\begin{proposition}\label{xregexDetermineGraphQueryProposition}
Let $q = \bar{z} \gets G_q$ be a $\CXRPQ$ with conjunctive xregex $\bar{\alpha} \in \dimconxregex{m}$. Let $\bar{\beta} \in \dimconxregex{m}$ and let $q' = \bar{z} \gets G_{q'}$ be a $\CXRPQ$ where $G_{q'}$ is obtained from $G_q$ by replacing each edge label $\bar{\alpha}[i]$ by $\bar{\beta}[i]$, for every $i \in [m]$. If $\lang(\bar{\alpha}) = \lang(\bar{\beta})$, then $q \equiv q'$.
\end{proposition}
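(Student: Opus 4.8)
The plan is to unwind the definitions and observe that the only way the graph pattern $G_q$ interacts with the conjunctive xregex $\bar\alpha$ is through the set $\lang(\bar\alpha)$ of conjunctive matches. Concretely, fix an arbitrary graph database $\DBD = (V_\DBD, E_\DBD)$ over $\Sigma$; I want to show $q(\DBD) = q'(\DBD)$, which by definition of $\llbracket \cdot \rrbracket$ gives $q \equiv q'$. Since $G_{q'}$ is obtained from $G_q$ merely by relabelling edges, the two patterns share the same node set $V_q = V_{q'}$, the same edge ``skeleton'' (same pairs $(x_i, y_i)$ in the same order $i \in [m]$), and the same output tuple $\bar z$. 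Hence a candidate mapping $h : V_q \to V_\DBD$ is a mapping $V_{q'} \to V_\DBD$ and vice versa, and $q_h(\DBD) = (h(z_1), \ldots, h(z_\ell)) = q'_h(\DBD)$ whenever $h$ is a matching morphism for either query. So it suffices to prove: $h$ is a matching morphism for $q$ and $\DBD$ iff $h$ is a matching morphism for $q'$ and $\DBD$.

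For that equivalence I would just cite the definition of a matching morphism for a $\CXRPQ$: $h$ is a matching morphism for $q$ and $\DBD$ iff there is a conjunctive match $(w_1, \ldots, w_m) \in \lang(\bar\alpha)$ such that, for every $i \in [m]$, $\DBD$ contains a path from $h(x_i)$ to $h(y_i)$ labelled with $w_i$. The analogous statement holds for $q'$ with $\lang(\bar\beta)$ in place of $\lang(\bar\alpha)$. Now the hypothesis $\lang(\bar\alpha) = \lang(\bar\beta)$ makes the existential conditions literally identical, so the two notions of matching morphism coincide. This yields $q(\DBD) = q'(\DBD)$ for every $\DBD$, i.e. $q \equiv q'$.

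There is essentially no hard part here — the statement is, as the paper says, a direct consequence of the definitions, and the proof is a short chain of rewritings. The only point that needs a sentence of care is the bookkeeping that relabelling edges does not change the node set, the edge order, or $\bar z$, so that the same $h$ can be used on both sides and produces the same output tuple; once this is noted, the equivalence of matching morphisms is immediate from the definition of $\CXRPQ$ semantics and the hypothesis $\lang(\bar\alpha) = \lang(\bar\beta)$. (One could equivalently phrase this as: $\llbracket q \rrbracket$ is a function of the pair $(G_q\text{-skeleton with }\bar z,\ \lang(\bar\alpha))$, and both arguments are preserved when passing from $q$ to $q'$.) I would keep the write-up to a single short paragraph.
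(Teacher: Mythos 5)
Your proof is correct and is exactly the argument the paper has in mind: the paper gives no separate proof, justifying the proposition only by the observation that $q(\DBD)$ is completely determined by the graph pattern and the set of conjunctive matches of its conjunctive xregex, which is precisely the unwinding you carry out.
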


Next, we observe that evaluating $\CXRPQ$ is surprisingly difficult. Let $\Delta = \{\ta, \tb, \#\}$ and let $\NFAintQuery = \# \varsz\{(\ta \altop \tb)^*\} \, (\#\# \, \varsz)^* \#\#\# \in \xregex_{\Delta, \varset}$. 

\begin{theorem}\label{CXRPQDataComplexityHardnessTheorem}
Deciding whether a given graph-database over $\Sigma \supset \Delta$ contains a path labelled with some $w \in \lang(\NFAintQuery)$ is $\pspaceclass$-hard.
\end{theorem}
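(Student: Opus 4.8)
The plan is to give a polynomial‑time many‑one reduction from a $\pspaceclass$‑complete problem. The natural candidate is non‑emptiness of intersection for deterministic finite automata: given DFAs $A_1,\ldots,A_n$ over $\{\ta,\tb\}$, decide whether $\bigcap_{i=1}^{n} L(A_i)\neq\emptyset$. From such an instance I would build a graph‑database $\DBD$ over $\Delta=\{\ta,\tb,\#\}$ whose ``$\{\ta,\tb\}$‑part'' is essentially the disjoint union of (preprocessed) copies of the $A_i$, and whose $\#$‑edges wire these copies together: a block of $\ta,\tb$‑labelled edges that runs $A_i$ from its start state to its (unique, made‑a‑sink) accepting state is followed by a length‑two $\#\#$‑path leading into the start state of $A_{i+1}$ (indices cyclically), with two extra nodes $v_0,v_f$ joined to the construction by a single leading $\#$‑edge into it and a single trailing $\#$‑edge out of it. The idea is that a word $w\in\{\ta,\tb\}^*$ together with a path labelled $\#\,w\,\#\#\,w\,\#\#\cdots w\,\#\#\#$ through $\DBD$ exists precisely when the \emph{same} $w$ is accepted by every $A_i$: the backreference $\varsz$ in $\NFAintQuery$ forces all the $\ta,\tb$‑blocks of the path to spell the same string, while the unbounded $(\#\#\,\varsz)^*$ lets the path traverse all $n$ automata (by going around the whole cycle), so that all $n$ membership tests are carried out at once. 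Preprocessing (handle $\eword\in\bigcap_i L(A_i)$ separately; force $\eword\notin L(A_i)$; make accepting states unique sinks; make start states non‑accepting) is what makes the block structure rigid.

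The easy direction is routine: if $w\in\bigcap_i L(A_i)$ then the traversal $v_0\to(\text{once around the cycle})\to v_f$ is labelled by a word of $\lang(\NFAintQuery)$, so $\DBD$ has the required path. The main obstacle, and the step needing real care, is the converse: a path in $\DBD$ labelled by \emph{some} $u\in\lang(\NFAintQuery)$ need not be the intended traversal. It could start in the middle of the wiring (entering at the inner vertex of one of the two‑edge $\#\#$‑connectors), thereby visiting only a suffix $A_j,A_{j+1},\ldots,A_n$ of the automata and certifying a much weaker condition; it could use $\varsz\mapsto\eword$, degenerating to a path of $\#$‑edges only; and similar shortcuts. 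Ruling all of these out is exactly where the asymmetry of $\lang(\NFAintQuery)$ — exactly one leading $\#$, exactly two $\#$'s between consecutive blocks, exactly three trailing $\#$'s — must be exploited: a word of $\lang(\NFAintQuery)$ with $\varsz\neq\eword$ has a non‑$\#$ symbol in its second position, so no such path may begin with two $\#$‑edges, which pins down where a valid path can enter the structure, and bounding the length of the longest all‑$\#$ path in $\DBD$ eliminates the degenerate case. One then argues that the surviving paths are precisely the full cycle‑traversals, hence carry a common accepted word, completing the correspondence.

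I expect the shortcut analysis to be the hard part, and it may well be cleaner to reduce instead from a $\pspaceclass$‑complete problem that is inherently robust to \emph{where} a walk is started — for instance the carefully‑synchronizing‑word problem for partial DFAs, or the acceptance problem for a polynomial‑space Turing machine encoded so that every ``round'' re‑checks the whole obligation — so that mid‑structure entries do no harm; but in each variant the crux is the same, namely controlling the unintended matches that the freedom of path labelling in a graph‑database otherwise permits. Since $\NFAintQuery$ is a fixed xregex and $\DBD$ is computable from the instance in polynomial time, this yields $\pspaceclass$‑hardness in data complexity.
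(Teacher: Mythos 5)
Your overall strategy --- reduce from automaton intersection non-emptiness and string the automata together with $\#$-separators mirroring the block structure $\#\,\varsz\,(\#\#\,\varsz)^*\,\#\#\#$ --- is essentially the paper's (the paper uses a \emph{linear} chain $s \xrightarrow{\#} M_1 \xrightarrow{\#\#} M_2 \cdots M_k \xrightarrow{\#\#\#} t$ built from NFAs rather than your cycle of DFAs). You have also correctly isolated the one step that carries all the weight: ruling out matching paths that enter the gadgetry somewhere other than the designated source. But your proposal does not close that step, and the two arguments you sketch for it do not work. First, the observation that a word of $\lang(\NFAintQuery)$ with $\varsz \neq \eword$ has a non-$\#$ second symbol only excludes paths whose first \emph{two} edges are $\#$-edges; it does not exclude a path that starts at the \emph{interior node} of a $\#\#$-connector, whose first edge is the single remaining $\#$ into the start state of the next automaton and whose second edge is already an $\ta/\tb$-edge. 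Such a path is labelled $\# w' (\#\#\, w')^{r} \#\#\#$ for some $r \geq 0$, hence lies in $\lang(\NFAintQuery)$, yet certifies only that $w'$ is accepted by the automata downstream of the entry point; in the extreme case (entering at the connector just before the automaton carrying the $\#\#\#$-exit) it certifies non-emptiness of a \emph{single} automaton. Second, the cyclic wiring does not repair this: $(\#\#\,\varsz)^*$ is a star, not a plus with a prescribed iteration count, so a path may take the $\#\#\#$-exit the first time it reaches the exit automaton instead of completing a full tour, and the same one-automaton shortcut survives.

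So the hard direction of your reduction is genuinely open as written, and your closing remark --- that one should perhaps reduce from a problem that is robust to where the walk is started --- is the right instinct rather than an optional refinement. For what it is worth, the paper's own proof disposes of this issue by simply asserting that any path labelled by a word of $\lang(\NFAintQuery)$ must run from $s$ to $t$, without addressing the interior-entry paths you describe; since the paper's $\#\#$- and $\#\#\#$-"arcs" are explicitly paths with fresh intermediate nodes, your concern applies verbatim to that construction as well. Any complete argument must either change the source problem so that every suffix of the automaton chain is as hard as the whole chain, or change the gadget so that the interior nodes of the separators admit no continuation consistent with $\# w'$ for $w' \neq \eword$; neither is achieved by the construction you describe, so the crux you flagged remains the missing piece.
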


\begin{proof}
We will prove the result by a reduction from the $\pspaceclass$-complete $\NFA$-intersection problem over binary alphabet $\{\ta, \tb\}$, which is defined as follows: Given $\NFA$ $M_1, \ldots, M_k$ over alphabet $\{\ta, \tb\}$, decide whether or not $\bigcap^k_{i = 1} \lang(M_i) \neq \emptyset$. \par
Without loss of generality, we assume that, for every $i \in [k]$, $M_i$ has state set $Q_i$, transition function $\delta_i$, initial state $q_{0, i}$, only one accepting state $q_{f, i}$, and we also assume that $\bigcap^{k}_{i = 1} Q_i = \emptyset$. We transform the $\NFA$ $M_1, \ldots, M_k$ into a graph-database $\DBD = (V_{\DBD}, E_{\DBD})$ over alphabet $\Delta = \{\ta, \tb, \#\}$ as follows. For the sake of convenience, we assume that arcs can also be labelled by the words $\#\#$ and $\#\#\#$ (technically, these would be paths of length $2$ and $3$, respectively, instead of single arcs). We set $V_{\DBD} = (\bigcup^k_{i = 1} Q_i) \cup \{s, t\}$, where $\{s, t\} \cap (\bigcup^k_{i = 1} Q_i) = \emptyset$, and we set $E_{\DBD} = (\bigcup^k_{i = 1} \delta_i) \cup \{(q_{f, i}, \#\#, q_{0, i+1}) \mid 1 \leq i \leq k-1\} \cup \{(s, \#, q_{0, 1}), (q_{f, k}, \#\#\#, t)\}$. \par
We first note that in $\DBD$ there is a path labelled with a word from $\lang(\NFAintQuery)$ if and only if there is such a path from node $s$ to node $t$. Let us now assume that there is a path in $\DBD$ from $s$ to $t$ labelled with a word $w \in \lang(\NFAintQuery)$. Since $w \in \lang(\NFAintQuery)$, we can conclude that $w = \# w' (\#\# w' )^\ell \#\#\#$ for some $w' \in \{\ta, \tb\}^*$ and $\ell \geq 0$. By the structure of $\DBD$, this directly implies that $\ell = k-1$ and, for every $i \in [k]$, there is a path labelled with $w'$ from $q_{0, i}$ to $q_{f, i}$. Consequently, $w' \in \bigcap^k_{i = 1} \lang(M_i)$. On the other hand, if there is some word $w' \in \bigcup^k_{i = 1} \lang(M_i)$, then, for every $i \in [k]$, there is a path labelled with $w'$ from $q_{0, i}$ to $q_{f, i}$, which directly implies that there is a path from $s$ to $t$ labelled with $\# w' (\#\# w' )^{k-1} \#\#\# \in \lang(\NFAintQuery)$. 
\end{proof}

Theorem~\ref{CXRPQDataComplexityHardnessTheorem} constitutes a rather strong negative result: $\CXRPQ$-$\booleProb$ is $\pspaceclass$-hard in data-complexity, even for a single-edge query with a conjunctive xregex $\alpha \in \xregex_{\Sigma, \varset}$, where $|\Sigma| = 3$ and $|\varset| = 1$. However, as explained in the introduction, $\NFAintQuery$ gives a hint how to restrict $\CXRPQ$ to obtain more tractable fragments. Such fragments are investigated in the following Sections~\ref{sec:vsfCXRPQ}~and~\ref{sec:boundedImageSize}.


\section{Variable-Star Free $\CXRPQ$}\label{sec:vsfCXRPQ}

An xregex $\alpha$ is \emph{variable-star free} (\emph{vstar-free}, for short), if no variable reference is subject to the $+$-operator.
\footnote{Since we use the Kleene-star $r^*$ only as short hand form for $r^+ \altop \eword$, the term ``variable-\emph{plus} free'' seems more appropriate. We nevertheless use the term ``star free'', since it is much more common in the literature on regular expressions and languages.} See Example~\ref{restrictionsExample} for an illustration. 
A conjunctive xregex $\bar{\alpha}$ of dimension $m$ is vstar-free if, for every $i \in [m]$, $\bar{\alpha}[i]$ is vstar-free, and a $q \in \CXRPQ$ is vstar-free, if its conjunctive xregex is vstar-free. Finally, let $\vsfCXRPQ$ be the class of vstar-free $\CXRPQ$. \par
The main result of this section is as follows.

\begin{theorem}\label{vsfCXRPQEvalUpperBoundTheorem}
$\vsfCXRPQ$-$\booleProb$ is in $\expspaceclass$ with respect to com\-bined-complexity and in $\nlclass$ with respect to data-complexity.
\end{theorem}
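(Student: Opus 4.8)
The plan is to reduce $\vsfCXRPQ$-$\booleProb$ to an evaluation problem for a $\CRPQ$-like query, at the cost of an (at most doubly) exponential blow-up in the query size, and then evaluate the resulting object using the $\nlclass$ data-complexity bound from Lemma~\ref{evalCRPQLemma}. The crucial observation is that in a vstar-free conjunctive xregex no variable reference sits under a $+$, so for a fixed match the total length of all variable images is bounded by a function of the query only — indeed, unrolling the definitions, the image of each variable is obtained by concatenating terminal symbols and images of other variables a bounded number of times, and since $\preceq_{\bar\alpha}$ is acyclic, this terminates with image lengths bounded by $\bigO(2^{|q|})$ at worst (a chain of variables each copying the previous one a bounded number of times multiplies the lengths). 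Consequently $\bar\alpha$ has, up to the choice of variable mapping $\psi$, only finitely many "shapes," and the set of relevant $\psi$ can be enumerated.

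First I would make this precise by defining a \emph{normal form} for vstar-free conjunctive xregex: for each $\bar\alpha$ and each candidate variable mapping $\psi:\varset\to\Sigma^{\le N}$ (with $N=\bigO(2^{|q|})$), substitute every reference $\varsx$ by the classical regular expression denoting the single word $\psi(\varsx)$ and every definition $\varsx\{\beta\}$ by $\beta$ intersected with the condition that the defined subword equals $\psi(\varsx)$ — this is possible because $\beta$, after the same substitution on its own variables (legitimate by acyclicity of $\preceq_{\bar\alpha}$, processing variables in a topological order of $\preceq_{\bar\alpha}$), is a classical regular expression, and one can effectively test whether $\psi(\varsx)\in\lang(\beta)$ and, if so, keep only that branch. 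After this substitution each $\bar\alpha[i]$ becomes a classical regular expression $\beta_{i,\psi}$ over $\Sigma$, and by Lemma~\ref{refLangStableChangesLemma} (applied branch-wise, once the ref-language has been restricted to ref-words with variable mapping $\psi$) we get $\DBD\models q$ iff there exists some admissible $\psi$ such that $\DBD$ matches the $\CRPQ$ $q_\psi = () \gets (V_q, \{(x_i,\beta_{i,\psi},y_i)\mid i\in[m]\})$. The size of each $q_\psi$ is $\bigO(2^{2^{|q|}})$ in the worst case — one exponential from $N$, a second from blowing definitions up with concrete words along a $\preceq_{\bar\alpha}$-chain — while the number of candidate $\psi$ is at most $|\Sigma|^{N|\varset|}$, again doubly exponential.

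For the complexity bounds: for \textbf{data-complexity} the query $q$ (hence $N$, the number of candidate $\psi$, and each $q_\psi$) is constant, so the algorithm guesses $\psi$, verifies admissibility (a constant-time check), and then runs the $\nlclass$ algorithm of Lemma~\ref{evalCRPQLemma} on $q_\psi$ and $\DBD$; the outer guess is absorbed into nondeterminism, so the whole procedure is in $\nlclass$. For \textbf{combined-complexity} I would \emph{not} materialise $q_\psi$ explicitly — that would cost doubly exponential space — but instead handle one exponential with nondeterminism: nondeterministically guess $\psi$ symbol by symbol (this costs $\bigO(N|\varset|\log|\Sigma|)=\bigO(2^{|q|}\poly(|q|))$ bits, hence single-exponential space), and then evaluate the $\CRPQ$ $q_\psi$ in nondeterministic space polynomial in $|q_\psi|\cdot\log|\DBD|$; since $|q_\psi|$ is doubly exponential in $|q|$ but the product is still dominated by a single-exponential space bound once we note the evaluation only ever needs to store the current positions in the automata for the $\beta_{i,\psi}$ and the images, each of single-exponential size — and $\nlclass$-type reachability over a doubly-exponential-size implicit graph can be done in the logarithm of that size, i.e. single-exponential space. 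Savitch's theorem then gives $\expspaceclass$ (the nondeterministic single-exponential space class equals $\expspaceclass$).

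The \textbf{main obstacle} is the bookkeeping in the combined-complexity argument: one must argue that the relevant automata and intermediate words, although derived from a doubly-exponential-size query, are only singly exponential in size, so that their positions can be stored in exponential space and the $\CRPQ$ reachability can be carried out \emph{on the fly} without ever writing down $q_\psi$. Concretely this means analysing the normal-form construction carefully enough to see \emph{where} the double exponential comes from (it is the number of distinct $\psi$ and the textual size of a fully-substituted regex, not the length of any single word or automaton needed at runtime), and then organising the nondeterministic search so that only singly-exponential-size certificates are ever kept in memory. Establishing the precise length bound $N=\bigO(2^{|q|})$ on variable images for vstar-free $\bar\alpha$ — by an induction over the $\preceq_{\bar\alpha}$-order, tracking how many times each definition can be "used" when there is no $+$ over variables — is the other technical point, but it is routine.
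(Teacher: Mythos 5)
Your proposal rests on a premise that is false: you claim that in a vstar-free conjunctive xregex the length of every variable image is bounded by $N=\bigO(2^{|q|})$, a function of the query alone, and you then enumerate/guess the variable mapping $\psi:\varset\to\Sigma^{\le N}$ and reduce to a $\CRPQ$. But vstar-freeness only forbids variable \emph{references and definitions} from occurring under a $+$; it does not restrict what a definition may contain. The single-edge query with label $\varsx\{\ta^*\}\,\tb\,\varsx$ is vstar-free, yet matching it requires a path labelled $\ta^n\tb\ta^n$ where $n$ is dictated by the database, so $|\psi(\varsx)|$ is unbounded in $|q|$. Indeed the paper's own $\pspaceclass$-hardness witness for $\vsfCXRPQ$ (the query $\NFAintQueryk$ in Theorem~\ref{vsfCXRPQEvalLowerBoundTheorem}) forces $\varsz$ to store a word in $\bigcap_i\lang(M_i)$, whose length depends on the data. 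Consequently the set of candidate $\psi$ cannot be enumerated, the inductive length bound "by an induction over the $\preceq_{\bar\alpha}$-order" does not exist, and the whole reduction to finitely many $\CRPQ$s $q_\psi$ collapses. (Your strategy is essentially the one the paper uses for the \emph{different} fragment $\bisCXRPQ{k}$, where bounded image size is imposed by definition.)

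The paper's actual route keeps the variables symbolic. It first shows (Lemma~\ref{simpleCXRPQLemma}) that a \emph{simple} $\CXRPQ$ can be evaluated in nondeterministic space $\bigO(|q|\log|\DBD|)$ by a reachability computation in a product graph $(V_\DBD)^{m'}\times Q_1\times\cdots\times Q_{m'}$ whose edge relation forces the edges carrying the same variable to consume the \emph{same letter at each step} — this is how equality of arbitrarily long images is handled without ever writing an image down. It then proves that every vstar-free conjunctive xregex has an equivalent normal form (an alternation of simple xregex) of size $\bigO(2^{2^{\poly(|q|)}})$ via three steps (Lemmas~\ref{conjunctiveNormalFormLemmaStepOne},~\ref{uniqueDefinitionsLemma},~\ref{removeNonBasicLemma}), and in Lemma~\ref{vsfCXRPQEvalUpperBoundTheoremTwo} absorbs the first exponential blow-up (multiplying out alternations) into nondeterministic guessing, leaving a query of size $2^{\poly(|q|)}$ to which Lemma~\ref{simpleCXRPQLemma} applies. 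Your instinct to "handle one exponential with nondeterminism" is the right one, but it must be applied to the choice of alternation branches, not to a (nonexistent) bounded variable mapping.
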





Before discussing at greater length how to prove the upper bounds of Theorem~\ref{vsfCXRPQEvalUpperBoundTheorem}, we first observe the following lower bound. 

%

\begin{theorem}\label{vsfCXRPQEvalLowerBoundTheorem}
$\vsfCXRPQ$-$\booleProb$ 
\begin{itemize}
\item is $\pspaceclass$-hard in combined-complexity, even for single-edge queries with xregex $\alpha \in \xregex_{\Sigma, \varset}$, where $|\Sigma| = 3$ and $|\varset| = 1$, and
\item is $\nlclass$-hard in data-complexity, even for single-edge queries with xregex $\alpha$, where $\alpha \in \RE_{\Sigma}$ with $|\Sigma| = 2$.
\end{itemize}
\end{theorem}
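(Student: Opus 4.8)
The plan is to prove the two lower bounds separately, since they concern different complexity measures, but both via reductions to single-edge queries.

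\paragraph{$\pspaceclass$-hardness in combined complexity.}
First I would reduce from the $\pspaceclass$-complete $\NFA$-intersection problem over the binary alphabet $\{\ta, \tb\}$, exactly as in the proof of Theorem~\ref{CXRPQDataComplexityHardnessTheorem}, but this time shifting the ``work'' from the data side into the query. Given $\NFA$ $M_1, \ldots, M_k$ over $\{\ta, \tb\}$, the idea is to build a single-edge vstar-free $\CXRPQ$ whose xregex is of the form $\alpha = \varsz\{(\ta \altop \tb)^*\}\, \#\#\, \varsz\, \#\#\, \varsz \cdots \#\#\, \varsz\, \#\#\#$ with exactly $k$ references of $\varsz$ (one definition followed by $k-1$ references, say), i.e., the ``unrolled'' version of $\NFAintQuery$ from the introduction. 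This $\alpha$ is vstar-free since no $\varsz$ sits under a $+$, and $|\Sigma| = 3$, $|\varset| = 1$. For the data, I would take essentially the same chain-database $\DBD$ as in Theorem~\ref{CXRPQDataComplexityHardnessTheorem}, linking copies of the $M_i$ with $\#\#$-arcs and adding source/sink arcs; then $\DBD$ admits a path labelled by a word of $\lang(\alpha)$ iff $\bigcap_{i=1}^k \lang(M_i) \neq \emptyset$. The correctness argument is identical to the one in Theorem~\ref{CXRPQDataComplexityHardnessTheorem}: a matching word must look like $\# w' (\#\# w')^{k-1} \#\#\#$ by the structure of $\DBD$, and the single string variable $\varsz$ forces all $k$ copies to read the same $w' \in \{\ta,\tb\}^*$, so $w'$ lies in every $\lang(M_i)$. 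The only new point to check is that the construction is polynomial: $|\alpha| = \bigO(k)$ and $|\DBD| = \bigO(\sum_i |M_i|)$, so this is a legitimate polynomial-time reduction establishing combined-complexity $\pspaceclass$-hardness.

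\paragraph{$\nlclass$-hardness in data complexity.}
For the data-complexity bound I would reduce from (directed) graph reachability, which is $\nlclass$-complete. The query should be fixed and extremely simple: take the single-edge $\CRPQ$ (hence trivially a vstar-free $\CXRPQ$) with xregex $\alpha = (\ta\altop\tb)^+$ over $\Sigma = \{\ta,\tb\}$ — or even just $\ta^+$ over $\Sigma=\{\ta,\tb\}$ if we want to keep $|\Sigma|=2$ while also having a non-trivial second symbol around; actually $\ta^*$ suffices but using $(\ta\altop\tb)^+$ makes the point cleanly, and either way $\alpha \in \RE_\Sigma$ with $|\Sigma|=2$. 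Given a reachability instance $(H, s, t)$ with $H$ a directed graph, build $\DBD$ by labelling every edge of $H$ with $\ta$ (and discard $\tb$, or use it nowhere), and ask the Boolean query $\bar z \gets (\{\varnx,\varny\},\{(\varnx,\alpha,\varny)\})$ with free-variable tuple instantiated so that we test $(s,t) \in q(\DBD)$ — equivalently, make $q$ Boolean and add a fresh $\ta$-self-loop trick, or simply phrase it as a $\checkProb$ instance. Then $\DBD$ has an $\alpha$-labelled path from $s$ to $t$ iff $t$ is reachable from $s$ in $H$. The reduction is clearly logspace-computable (just relabel edges), and the query is a single fixed expression of constant size, so this gives data-complexity $\nlclass$-hardness for single-edge queries with $\alpha \in \RE_\Sigma$, $|\Sigma|=2$. (Alternatively, this second item is already subsumed by known $\CRPQ$ lower bounds, and one could just cite Lemma~\ref{evalCRPQLemma}, but giving the one-line reduction makes the statement self-contained with the claimed parameters.)

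\paragraph{Main obstacle.}
I expect the routine part — verifying both reductions are correct and resource-bounded — to be genuinely easy, since they are minor variants of arguments already present. The only mildly delicate point is the first item: I must make sure the ``unrolled'' xregex $\alpha$ with $k$ copies of $\varsz$ really is vstar-free (it is, by inspection, since the repetition is spelled out rather than expressed with $+$) and that pushing the parameter $k$ into the query rather than the data is exactly what converts the data-complexity hardness of Theorem~\ref{CXRPQDataComplexityHardnessTheorem} into a combined-complexity hardness for the vstar-free fragment — the database $\DBD$ must now be of size polynomial in the input and independent of any unrolling, which it is. So the crux is simply the bookkeeping that the size blow-up lands in $|q|$ and not in $|\DBD|$, and that $|\Sigma|=3$, $|\varset|=1$ are preserved.
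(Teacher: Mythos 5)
Your first item is essentially the paper's own proof: the same $\NFA$-intersection reduction as in Theorem~\ref{CXRPQDataComplexityHardnessTheorem}, with the star unrolled into the query as $\NFAintQueryk = \# \varsz\{(\ta \altop \tb)^*\}\,(\#\#\,\varsz)^{k-1}\#\#\#$ (note you dropped the leading $\#$ in your displayed $\alpha$ but clearly intend it, since your correctness claim quotes the anchored form $\# w'(\#\# w')^{k-1}\#\#\#$). Your bookkeeping that the blow-up lands in $|q|$ rather than $|\DBD|$, and that $|\Sigma|=3$, $|\varset|=1$ are preserved, is exactly the point, and this part is fine.

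The second item has a genuine gap. The theorem is about $\booleProb$, and a Boolean query is satisfied if \emph{some} pair of nodes in $\DBD$ is connected by a path matching $\alpha$. With your construction --- $\alpha = (\ta\altop\tb)^+$ (or $\ta^+$) and every edge of $H$ relabelled $\ta$ --- the Boolean query holds whenever $\DBD$ has at least one edge, so it does not test reachability from $s$ to $t$ at all. Your proposed remedies do not close this: ``instantiate the free-variable tuple'' and ``phrase it as a $\checkProb$ instance'' change the problem from $\booleProb$ to $\checkProb$, and the ``self-loop trick'' is not spelled out (an $\ta$-self-loop anywhere would again trivialise the query). The paper's fix is to \emph{anchor} the path with symbols that occur only at the endpoints: it uses the fixed query $\alpha = \ta\tb^*\ta\ta$ over $\Sigma=\{\ta,\tb\}$, labels every edge of $H$ with $\tb$, and adds fresh nodes with $\ta$-arcs only into $s$ and out of $t$, so that any path labelled by a word of $\lang(\alpha)$ must start at the new source, traverse a $\tb$-path from $s$ to $t$, and end at the new sink. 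This is a routine repair, and you correctly note the result also follows from known $\CRPQ$ lower bounds, but Lemma~\ref{evalCRPQLemma} as stated does not give the extra parameters ($|\Sigma|=2$, single edge) claimed in the theorem, which is why the explicit anchored construction is needed.
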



\begin{proof}
For the $\pspaceclass$-hardness of $\CXRPQ$-$\booleProb$ in com\-bined-complexity, we can proceed similarly to the proof of Theorem~\ref{CXRPQDataComplexityHardnessTheorem}. An instance $M_1, \ldots, M_k$ of the $\NFA$-intersection problem over alphabet $\{\ta, \tb\}$ is transformed into a graph-database $\DBD = (V_{\DBD}, E_{\DBD})$ over alphabet $\Sigma = \{\ta, \tb, \#\}$ in the same way as in the proof of Theorem~\ref{CXRPQDataComplexityHardnessTheorem} and into a graph pattern $(x, \NFAintQueryk, y)$ with 
\begin{equation*}
\NFAintQueryk = \# \varsz\{(\ta \altop \tb)^*\} \, (\#\# \, \varsz)^{k-1} \#\#\# 
\end{equation*}
(i.\,e., $\NFAintQueryk$ is obtained from $\NFAintQuery$ by replacing $(\#\# \, \varsz)^*$ by $k$ copies of $\#\# \, \varsz$). We note that $\NFAintQueryk$ is vstar-free, but, due to the explicit $(k-1)$-times repetition of $(\#\# \, \varsz)$, it does not have constant size. It follows analogously as in the proof of Theorem~\ref{CXRPQDataComplexityHardnessTheorem} that in $\DBD$ there is a path labelled with some $w \in \lang(\NFAintQueryk)$ if and only if $\bigcap^k_{i = 1} \lang(M_i) \neq \emptyset$. \par
The $\nlclass$-hardness of $\CXRPQ$-$\booleProb$ in data-complexity can directly be concluded from the fact that $\CRPQ$-$\booleProb$ is already $\nlclass$-hard in data-complexity. In order to show the stronger statement of the result, we give a full proof. We consider the Boolean $\CRPQ$ $q$ represented by the graph pattern $(x, \alpha, z)$ with $\alpha = \ta \tb^* \ta \ta$. From an arbitrary directed (and unlabelled) graph $G = (V, E)$ and two vertices $s, t \in V$, we can construct a graph-database $\DBD = (V_{\DBD}, E_{\DBD})$, where $V_{\DBD} = V \cup \{s', t'\}$ and $E_{\DBD} = \{(u, \tb, v) \mid (u, v) \in E\} \cup \{(s', \ta, s), (t, \ta, t'), (t', \ta, t'')\}$. It can be easily verified that in $G$ there is a path from $s$ to $t$ if and only if in $\DBD$ there is a path from $s'$ to $t''$ labelled with $\ta \tb^k \ta \ta$ if and only if $\DBD \models q$ is true. 
\end{proof}



The upper bounds require more work. Before we can give an intuitive idea of our procedure, we have to introduce some more restrictions of xregex. An xregex $\alpha$ is

\begin{itemize}
\item \emph{variable-alternation free} (\emph{valt-free}) if, for every subexpression $(\beta_1 \altop \beta_2)$ of $\alpha$, neither $\beta_1$ nor $\beta_2$ contain any variable definition or variable reference. 
\item \emph{variable-simple} if it is vstar-free and valt-free. 
\item \emph{simple} if it is variable-simple and every variable definition $\varsx\{\gamma\}$ is \emph{basic}, i.\,e., $\gamma$ is a classical regular expression or a single variable reference.
\item in \emph{normal form} if $\alpha = \alpha_1 \altop \alpha_2 \altop \ldots \altop \alpha_m$ and, for every $i \in [m]$, $\alpha_i$ is simple. 
\end{itemize}

Let us clarify these definitions with some intuitive explanations and examples. The condition of being vstar-free or valt-free can also be interpreted as follows: an xregex is vstar-free (or valt-free), if every subtree of its syntax tree rooted by a node for a $+$-operation (for a $\altop$-operation, respectively) does not contain any nodes for variable definitions or references.
If this is true for all $+$-operation nodes \emph{and} all $\altop$-operation nodes, then the xregex is variable-simple. Equivalently, $\alpha$ is variable-simple if $\alpha = \beta_1 \beta_2 \ldots \beta_k$, where each $\beta_i$ is a classical regular expression, a variable reference or a variable definition $\varsx\{\gamma\}$, where $\gamma$ is also variable-simple. If additionally each variable definition $\varsx\{\gamma\}$ is such that $\gamma$ is a classical regular expressions or a single variable reference, then $\alpha$ is simple. Finally, an xregex is in normal form, if it is an alternation of simple xregex.

\begin{example}\label{restrictionsExample}
The xregex $\varsx\{\ta^*\} (\tb \varsx (\tc \altop \ta))^* \tb$ is not vstar-free, but valt-free. The xregex $\varsx\{\ta^*\} \varsy{((\tb \varsx) \altop (\tc \ta))} \tb^* \varsy$ is vstar-free, but not valt-free. Furthermore, the xregex $\ta \varsx\{(\tb \altop \tc)^* \tb \varsy\{\td \varsx \ta^*\} \} \tb \varsx \ta^* \varsz\{\td^*\} \varsz \varsy$ is variable-simple, but not simple, 
and $\ta \varsx\{(\tb \altop \tc)^*\td \ta\} \tb \varsx \ta^* \varsy\{\varsz\} \varsx \varsy$ is simple.
\end{example}

We extend these restrictions to conjunctive xregex and $\CXRPQ$ in the obvious way, i.\,e., as also done above for vstar-free xregex. \par
%

We can now give a high-level ``road map'' for the proof of Theorem~\ref{vsfCXRPQEvalUpperBoundTheorem}. As an important building block, we first prove that $\CXRPQ$ that are simple (in the sense defined above) can be evaluated in nondeterministics space $\bigO(|q| \log(|\DBD|) + |q|\log(|q|))$ (see Lemma~\ref{simpleCXRPQLemma} below). Then, in Subsection~\ref{sec:normalForm}, we show that every variable star-free conjunctive xregex can be transformed into an equivalent one in normal form (Lemmas~\ref{conjunctiveNormalFormLemmaStepOne},~\ref{uniqueDefinitionsLemma}~and~\ref{removeNonBasicLemma}), which also means that $\vsfCXRPQ$ can be transformed into equivalent $\CXRPQ$ in normal form. A $\CXRPQ$ in normal form has a conjunctive xregex whose elements are alternations of simple xregex; thus, they can be evaluated by using Lemma~\ref{simpleCXRPQLemma}. This directly yields a nondeterministic log-space algorithm with respect to data-complexity. For combined-complexity, we have to deal with the problem that the normal form causes a double exponential size blow-up. In order to get the $\expspaceclass$ upper bound, we observe that the first exponential blow-up can be handled by nondeterminism. \par
After this, in Subsection~\ref{sec:pseudobasic}, we discuss more thoroughly what exactly causes the exponential blow-up in our normal form and how it can be avoided. Our observations can be transformed into the fragment $\vsfpbCXRPQ$ (already mentioned in the introduction) that avoids the exponential blow-up, and therefore has a $\pspaceclass$-complete evaluation problem in combined-complexity. \par

We proceed with Lemma~\ref{simpleCXRPQLemma} (note that its proof is similar as the corresponding result for $\CRPQ$ (see, e.\,g.,~\cite{BarceloEtAl2012}).

\begin{lemma}\label{simpleCXRPQLemma}
Given a Boolean $q \in \CXRPQ$ that is simple and a graph-database $\DBD$, we can nondeterministically decide whether or not $\DBD \models q$ in space $\bigO(|q| \log(|\DBD|))$.
\end{lemma}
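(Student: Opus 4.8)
The plan is to adapt the standard nondeterministic log-space evaluation algorithm for $\CRPQ$s to the simple $\CXRPQ$ setting, where the main extra work is that the edge labels are simple xregex linked by shared variables. First I would analyse the structure of a simple conjunctive xregex $\bar{\alpha} = (\alpha_1, \ldots, \alpha_m)$ of $q$: being simple, each $\alpha_i$ is a concatenation $\beta_{i,1} \beta_{i,2} \ldots \beta_{i,k_i}$, where each $\beta_{i,j}$ is a classical regular expression, a variable reference $\varsy$, or a basic variable definition $\varsx\{\gamma\}$ with $\gamma$ a classical regular expression or a single reference. Because $q$ is simple (in particular valt-free and vstar-free), no variable reference or definition sits under a $+$ or inside an alternation, so the ``skeleton'' of each $\alpha_i$ — the sequence of which classical $\RE$-blocks, which variable references and which variable definitions occur, and in what order — is fixed by $q$ itself, independent of the database. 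The only nondeterministic choices concern (a) how the matching morphism $h$ maps the node variables of $G_q$ into $V_{\DBD}$, (b) which definition of each variable is instantiated and, for each basic definition $\varsx\{\gamma\}$, along which path of $\DBD$ its $\gamma$-part is matched, and (c) how each purely-$\RE$ block is matched along a path of $\DBD$.

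Next I would describe the algorithm. Guess $h$ restricted to the node variables appearing in $G_q$ (this needs $\bigO(|V_q| \log|\DBD|)$ space). Then, processing the edges $e_i = (x_i, \alpha_i, y_i)$ one at a time, walk through the skeleton of $\alpha_i$ from left to right while maintaining a ``current vertex'' pointer in $\DBD$, starting at $h(x_i)$ and required to end at $h(y_i)$. For a classical-$\RE$ block $\beta_{i,j}$, do the usual on-the-fly product construction: nondeterministically walk a path in $\DBD$ while simultaneously walking an accepting path in an $\NFA$ for $\beta_{i,j}$ (a Thompson/Glushkov automaton of size $\bigO(|\beta_{i,j}|)$), storing only the current $\DBD$-vertex and current $\NFA$-state, i.e. $\bigO(\log|\DBD| + \log|q|)$ bits. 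For a block that is a basic definition $\varsx\{\gamma\}$ with $\gamma$ a classical $\RE$: guess a start vertex, run the same product-walk through $\DBD$ and an $\NFA$ for $\gamma$, and record, as the ``value representation'' of $\varsx$, just the pair (guessed start vertex, guessed end vertex) — crucially we do not store the word, only the endpoints of a $\DBD$-path witnessing it. For a reference block $\varsx$: look up the recorded endpoint pair $(p,q')$ for $\varsx$; we must verify that the same concrete word labels a path from the current pointer position onward. Here I would use the key trick (as for $\ECRPQ$s / equality relations): instead of comparing words, guess a fresh $\DBD$-path from the current pointer and nondeterministically verify letter-by-letter that it is label-equivalent to the recorded $\varsx$-path, by walking two $\DBD$-pointers in lockstep — one along the new path, one re-walking from $p$ towards $q'$ — checking the edge labels agree at each step; this needs only the two current vertices, so $\bigO(\log|\DBD|)$ space. (For the nested case where a basic definition's $\gamma$ is itself a single reference $\varsy$, just copy $\varsy$'s recorded endpoint pair; acyclicity of $\preceq$ guarantees this resolves.) Finally, for the conjunctive constraint, after finishing edge $e_i$ check the current pointer equals $h(y_i)$, and accept iff all edges succeed. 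The total space is $\bigO(|q| \log|\DBD|)$: we hold $h$ on node variables ($\bigO(|V_q|\log|\DBD|)$), one endpoint pair per variable ($\bigO(|\var(q)| \log|\DBD|)$), and a constant number of ``current'' pointers/$\NFA$-states at a time for the block currently being processed.

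For correctness I would argue both directions. If $\DBD \models q$, fix a matching morphism $h$ and a conjunctive match $(w_1,\ldots,w_m)$ with variable mapping $\psi$ together with witnessing ref-words $v_i \in \reflang(\interxregex{\alpha_i})$; from $v_i$ read off which definitions are instantiated and, since $\DBD$ has $w_i$-labelled paths from $h(x_i)$ to $h(y_i)$, extract endpoints for the variable-blocks — these are exactly the nondeterministic choices the algorithm can make, and the lockstep label-check succeeds because all occurrences of $\varsx$ carry the word $\psi(\varsx)$. Conversely, any accepting run yields $h$, yields for each $\varsx$ a concrete word $u_{\varsx}$ (the common label of the re-walked $\DBD$-paths, forced equal by the checks), hence a common variable mapping $\psi$, and by concatenating the block-paths a $w_i$-labelled path from $h(x_i)$ to $h(y_i)$ for each $i$; tracing the skeleton of $\alpha_i$ reconstructs a ref-word $v_i \in \reflang(\interxregex{\alpha_i})$ with $\varmap{v_i} = \psi$ and $\deref(v_i) = \deref(v_i') \# w_i$, so $(w_1,\ldots,w_m) \in \lang(\bar\alpha)$ and $h$ is a genuine matching morphism.

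The main obstacle I expect is handling the variable references correctly in small space: we cannot store variable images (they can be polynomially long in $|\DBD|$), so the whole argument hinges on the ``store only endpoints, verify equality by a lockstep $\DBD$-walk'' idea, and on checking that simplicity (valt-freeness and vstar-freeness plus basicness of definitions) really does make the skeleton database-independent and keeps the number of simultaneously-live variable endpoint pairs bounded by $\bigO(|\var(q)|)$ rather than something depending on $|\DBD|$. A secondary subtlety is bookkeeping for variables that receive the empty word (a definition $\varsx\{\gamma\}$ where $\gamma$ matches $\eword$, or an instantiated-but-empty definition): these are handled by allowing the endpoint pair $(p,p)$, corresponding to the length-$0$ path, which is consistent with the semantics of $\interxregex{\cdot}$ described before Lemma~\ref{refLangStableChangesLemma}.
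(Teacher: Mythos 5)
There is a genuine gap: your treatment of variable references is unsound. You record for each variable $\varsx$ only the pair of endpoints $(p,q')$ of the $\DBD$-path chosen when its definition $\varsx\{\gamma\}$ was processed, and at each later reference you ``re-walk from $p$ towards $q'$'' in lockstep with a freshly guessed path. But an endpoint pair does not determine a word: there may be several paths from $p$ to $q'$ with different labels, so the re-walk is free to follow a path carrying a word different from the one the definition actually matched (and different re-walks at different references of the same $\varsx$ can follow paths with pairwise different labels). Concretely, take the single-edge query with xregex $\varsx\{\ta\}\varsx$ and the database with edges $(u,\ta,v)$, $(u,\tb,v)$, $(v,\tb,w)$: no $u$-to-$w$ path is labelled $\ta\ta$, yet your algorithm matches the definition along $(u,\ta,v)$, records $(u,v)$, and then verifies the reference by re-walking $(u,\tb,v)$ in lockstep with the fresh edge $(v,\tb,w)$ — the labels agree step by step, so it accepts. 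Your converse correctness claim that the checks force ``a common label $u_{\varsx}$ of the re-walked paths'' is exactly where this breaks: nothing ties the re-walked paths to each other or to the definition's path.

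The paper's proof avoids this by never deferring the equality check: after splitting each $\alpha_i$ into its blocks $\beta_1,\ldots,\beta_{m'}$, it builds a single product graph over $(V_{\DBD})^{m'}\times Q_1\times\cdots\times Q_{m'}$ and advances \emph{all} $m'$ walks simultaneously, with the edge relation requiring $b_i=b_{i'}$ whenever $\beta_i$ and $\beta_{i'}$ are a reference or definition of the same variable. This is the genuine ``equality via lockstep'' trick you allude to for $\ECRPQ$s: all paths constrained to carry the same word are traversed in the same step, so a single common word per variable is enforced on the fly and nothing about its value ever needs to be stored. The cost is exactly the $\bigO(|q|)$ simultaneous $\DBD$-pointers, which is where the $\bigO(|q|\log(|\DBD|))$ space bound comes from; your sequential, edge-by-edge processing cannot be repaired without moving to such a simultaneous traversal. (Your preliminary normalisation — eliminating definitions $\varsx\{\varsy\}$ in favour of $\varsy$, splitting edges at block boundaries, and handling empty images via length-$0$ paths — does match the paper.)
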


\begin{proof}
Let $q$ be defined by a graph pattern $G_q$ with edge set $E_q = \{(x_i, \alpha_i, y_i) \mid i \in [m]\}$ and simple conjunctive xregex $\bar{\alpha} = (\alpha_1, \alpha_2, \ldots, \alpha_m) \in \conxregex_{\Sigma, \varset}$. By assumption, for every $i \in [m]$, $\alpha_i = \alpha_{i, 1} \alpha_{i, 2} \ldots \alpha_{i, t_i}$, where, for every $j \in [t_i]$, $\alpha_{i, j}$ is a classical regular expression, a variable reference, or a variable definition $\varsx\{\gamma\}$, where $\gamma$ is either a classical regular expression or a single variable reference. \par
We first note that since $\bar{\alpha}$ is simple, every variable $\varsx \in \varset$ has exactly one definition in every $\bar{v} \in \reflang(\alpha_1) \times \reflang(\alpha_2) \times \reflang(\alpha_m)$. Consequently, if some variable $\varsx$ has a definition $\varsx\{\varsy\}$, then this definition and all references of $\varsx$ can be replaced by references of $\varsy$ without changing the set of conjunctive matches. In the following, we can therefore assume that, for every $i \in [m]$ and $j \in [t_i]$, $\alpha_{i, j}$ is a classical regular expression, a variable reference, or a variable definition $\varsx\{\gamma\}$, where $\gamma$ is a classical regular expression. \par
We next modify $G_q$ such that every $(x_i, \alpha_i, y_i)$ is replaced by edges $$(z_{i, 0}, \alpha_{i, 1}, z_{i, 1}), (z_{i, 1}, \alpha_{i, 2}, z_{i, 2}), \ldots, (z_{i, t_i - 1}, \alpha_{i, t_i}, z_{i, t_i})\,,$$ where $z_{i, 0} = x_i$ and $z_{i, t_i} = y_i$. We denote this modified $\CXRPQ$ by $q'$ and let $\bar{\beta} = (\beta_{1}, \ldots, \beta_{m'})$ be its conjunctive xregex, i.\,e., $m' = \sum^m_{i = 1} t_i$ and $(\beta_{1}, \ldots, \beta_{m'})$ equals 
\begin{equation*}
(\alpha_{1, 1}, \ldots, \alpha_{1, t_1}, \alpha_{2, 1}, \ldots, \alpha_{2, t_2}, \ldots, \alpha_{m, 1}, \ldots, \alpha_{m, t_m})\,.
\end{equation*} 
It can be easily seen that $q \equiv q'$.\par
Next, for every $i \in [m']$, we define an $\NFA$ $M_{i}$ with state-set $Q_{i}$ and transition function $\delta_{i}$ as follows:
\begin{itemize}
\item If $\beta_{i} \in \{\gamma, \varsx\{\gamma\}\}$ for some classical regular expression $\gamma$, then $M_{i}$ accepts $\lang(\gamma)$. 
\item If $\beta_{i} = \varsx$ for some $\varsx \in \varset$, then $M_{i}$ accepts $\Sigma^*$. 
\end{itemize}
We now show how to decide $\DBD \models q'$ for a given graph database $\DBD$. \par
For technical reasons, we first add an $\emptyword$-labelled self-loop to every node in $\DBD$ and to every state in every $M_i$. Next, we define a graph $G_{{q'}, \DBD} = (V_{{q'}, \DBD}, E_{{q'}, \DBD})$, where $V_{{q'}, \DBD} = (V_{\DBD})^{m'} \times Q_{1} \times Q_{2} \times \ldots \times Q_{m'}$ and $E_{{q'}, \DBD}$ contains an edge 
\begin{equation*}
((u_1, \ldots, u_{m'}, p_1, \ldots, p_{m'}), (v_1, \ldots, v_{m'}, r_1, \ldots, r_{m'}))
\end{equation*}
if and only if there are $b_1, b_2, \ldots, b_{m'} \in \Sigma \cup \{\eword\}$ such that,
\begin{itemize}
\item for every $i \in [{m'}]$, $r_i \in \delta_i(p_i, b_i)$ and $(u_i, b_i, v_i) \in E_{\DBD}$, and
\item for every $i, i' \in [m']$, if $\beta_i, \beta_{i'} \in \{\varsx, \varsx\{\gamma\}\}$, for some $\varsx \in \varset$ and a classical regular expression $\gamma$, then $b_{i} = b_{i'}$.
\end{itemize}
Let $\bar{u} = (u_1, \ldots, u_{m'}, p_1, \ldots, p_{m'})$ and $\bar{v} = (v_1, \ldots, v_{m'}, r_1, \ldots, r_{m'})$ be two vertices from $G_{{q'}, \DBD}$. We observe that in $G_{{q'}, \DBD}$ there is a path from $\bar{u}$ to $\bar{v}$ if and only if there is a tuple $(w_1, w_2, \ldots, w_{m'}) \in (\Sigma^*)^{m'}$ such that, for every $i \in [{m'}]$, there is a $w_i$-labelled path from $p_i$ to $r_i$ in $M_i$, a $w_i$-labelled path from $u_i$ to $v_i$ in $\DBD$, and, for every $i, i' \in [m']$, if $\beta_i, \beta_{i'} \in \{\varsx, \varsx\{\gamma\}\}$, for some $\varsx \in \varset$ and a classical regular expression $\gamma$, then $w_{i} = w_{i'}$. \par
We say that a vertex $\bar{u}$ of $G_{{q'}, \DBD}$ is \emph{initial}, if, for every $i \in [{m'}]$, $p_i$ is the initial state of $M_i$, and, for every $i, i' \in [{m'}]$, $x_i = x_{i'}$ implies $u_i = u_{i'}$. Analogously, we say that a vertex $\bar{v}$ of $G_{{q'}, \DBD}$ is \emph{final}, if, for every $i \in [{m'}]$, $r_i$ is the final state of $M_i$, and, for every $i, i' \in [{m'}]$, $y_i = y_{i'}$ implies $v_i = v_{i'}$. With the observation from above, we can conclude that $\DBD \models {q'}$ if and only if in $G_{{q'}, \DBD}$ there is a path from some initial to some final vertex. For simplicity, we add two additional vertices $s$ and $t$ to $G_{{q'}, \DBD}$ with an edge $(s, \bar{u})$ and an edge $(\bar{v}, t)$ for every initial vertex $\bar{u}$ and final vertex $\bar{u}$. Checking $\DBD \models {q'}$ can now be done by checking whether there is a path from $s$ to $t$ in $G_{{q'}, \DBD}$.\par
It remains to describe how this can be done in the claimed time and space bound. To this end, we first observe that the initial replacement of variable definition $\varsx\{\varsy\}$ by $\varsy$ can obviously be carried out in space $\bigO(|q|)$, and it does not increase the size of $q$. Then, transforming $\bar{\alpha}$ into $\bar{\beta}$ can also be done in space $\bigO(|q'|)$, where $|q'| = \bigO(|q|)$. Moreover, we can transform each $\beta_i$ into an $\NFA$ $M_i$ of size $\bigO(|\beta_i|)$ in space $\bigO(|\beta_i|)$. Consequently, we can obtain all $M_i$ in space $\bigO(|{q'}|)$. In particular, we note that removing $\eword$-transitions may result in $\NFA$ of size $\bigO(|\beta_i|^2)$, which justifies our construction of $G_{{q'}, \DBD}$ from above that can also handle possible $\eword$-transitions of the $M_i$. \par
Now a vertex $\bar{u}$ from $G_{{q'}, \DBD}$ can be represented by $\bigO({m'})$ pointers to $\DBD$ and $\bigO(1)$ pointers to each of the $M_i$. Moreover, evaluating the edge relation of $G_{{q'}, \DBD}$, i.\,e., checking for fixed vertices $\bar{u}, \bar{v}$ of $G_{{q'}, \DBD}$ whether there is an edge $(\bar{u}, \bar{v})$, can be done as follows. Checking whether there are $b_1, b_2, \ldots, b_{m'} \in \Sigma \cup \{\eword\}$ such that, for every $i \in [{m'}]$, $r_i \in \delta_i(p_i, b_i)$ and $(u_i, b_i, v_i) \in E_{\DBD}$ can be done by consulting the pointers that represent $\bar{u}$ and $\bar{v}$, the transition functions of the $M_i$ and the edge-relation of $\DBD$. Moreover, checking, for every $i, i' \in [m']$, whether $\beta_i, \beta_{i'} \in \{\varsx, \varsx\{\gamma\}\}$, for some $\varsx \in \varset$ and a classical regular expression $\gamma$, and $b_{i} = b_{i'}$, can be done by using two pointers to $q'$. Consequently, we can nondeterministically check whether there is a path from $s$ to $t$ in $G_{{q'}, \DBD}$ in space $\bigO(|q'| (\log(|\DBD|) + \log(|q'|)) = \bigO(|q| (\log(|\DBD|) + \log(|q|)) = \bigO(|q| \log(|\DBD|) + |q|\log(|q|)) = \bigO(|q| \log(|\DBD|))$.
\end{proof}

\subsection{Construction of the Normal Form}\label{sec:normalForm}

The goal of this section is to show the following result  (which also implies that $\vsfCXRPQ$ can be transformed in normal form).

\begin{theorem}
Let $\bar{\alpha} \in \dimconxregex{m}_{\Sigma, \varset}$ be vstar-free. Then there is an equivalent $\bar{\beta} \in \dimconxregex{m}_{\Sigma, \varset'}$ in normal form with $|\bar{\beta}| = \bigO\left(2^{2^{\poly{|\bar{\alpha}|}}}\right)$. 
\end{theorem}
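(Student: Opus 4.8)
The plan is to prove the theorem in three stages, each corresponding to one of the restrictions (vstar-free $\to$ variable-simple, variable-simple $\to$ simple with unique definitions, simple $\to$ normal form), tracking the size blow-up at every step so that the final bound $|\bar\beta| = \bigO(2^{2^{\poly{|\bar\alpha|}}})$ falls out. By Lemma~\ref{refLangStableChangesLemma}, it suffices to transform each component $\alpha_i$ into an equivalent $\beta_i$ having the \emph{same ref-language} (componentwise), since equality of ref-languages preserves both the property of being a conjunctive xregex and the set of conjunctive matches. This is the crucial simplification: all manipulations can be carried out \emph{locally} on each $\bar\alpha[i]$ as an ordinary xregex, and correctness reduces to showing $\reflang(\alpha_i) = \reflang(\beta_i)$.

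First I would eliminate alternations that contain variables (making each component valt-free, hence variable-simple since it is already vstar-free). The idea is to push alternations outward: whenever a subexpression $(\gamma_1 \altop \gamma_2)$ contains a variable definition or reference, distribute it over the surrounding context using the identities $\beta(\gamma_1\altop\gamma_2)\delta \equiv (\beta\gamma_1\delta)\altop(\beta\gamma_2\delta)$ and $(\gamma_1\altop\gamma_2)^+$ is vacuous here because $\bar\alpha$ is vstar-free so no such $\gamma_j$ sits under a $+$. Repeatedly applying distributivity turns each $\alpha_i$ into a top-level alternation $\alpha_{i,1}\altop\cdots\altop\alpha_{i,k_i}$ where each $\alpha_{i,j}$ has only variable-free subexpressions under $\altop$; one must also recurse into variable definitions $\varsx\{\gamma\}$. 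The ref-language is preserved at each rewrite step (distributivity is sound on $\alpha_{\refmark}$ as well). The size cost: each variable-containing $\altop$ can double the expression, and there are $\bigO(|\bar\alpha|)$ of them, so this stage costs a single exponential, yielding $|\bar\beta| = 2^{\bigO(|\bar\alpha|)}$; however, distributing over \emph{nested} definitions and concatenations can compound, so the honest bound here is $2^{\poly{|\bar\alpha|}}$.

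Second I would enforce that definitions are basic and unique. For uniqueness (Lemma~\ref{uniqueDefinitionsLemma}): after the first stage, within a single simple disjunct the variable $\varsx$ may still have several definitions reachable on different $+$-iterations of variable-free material — but since definitions can't sit under $+$ in a vstar-free expression, within one disjunct there is at most one syntactic definition of $\varsx$ on any generated ref-word; if several disjuncts or concatenated blocks carry definitions of $\varsx$, split further so each final disjunct instantiates exactly one. For basicness (Lemma~\ref{removeNonBasicLemma}): a definition $\varsx\{\gamma\}$ where $\gamma$ is itself variable-simple but not a classical RE or single reference must be ``flattened'' — if $\gamma = \delta_1\cdots\delta_t$ with some $\delta_\ell$ a reference $\varsy$ or a sub-definition $\varsz\{\cdot\}$, one introduces fresh variables and splits the definition, essentially distributing the definition marker. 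This is where the \emph{second} exponential enters: chains of nested definitions $\varsx$-inside-$\varsy$-inside-$\varsz$-inside-$\cdots$ (exactly the chains flagged in the introduction as causing the blow-up) force iterated expansion, and each level can multiply the size, giving an additional exponential factor and the final $2^{2^{\poly{|\bar\alpha|}}}$. Throughout, I keep the expression an alternation of variable-simple disjuncts, so the end result is genuinely in normal form.

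The main obstacle I expect is controlling the \emph{interaction} between the conjunctive semantics and the local rewriting in the basic-definitions step: when I flatten a definition $\varsx\{\gamma\}$ occurring in $\bar\alpha[i]$ by introducing fresh auxiliary variables, I must ensure these fresh variables do not collide with uses of $\varsx$ (or other variables) in the \emph{other} components $\bar\alpha[j]$, and that the conjunctive match condition — same variable mapping $\psi$ across all $v_i$ — is still exactly captured. The ref-language criterion of Lemma~\ref{refLangStableChangesLemma} handles this cleanly \emph{provided} the fresh variables I add have no references outside the single definition being flattened, so that $\varmap{v_i}$ restricted to the original variables $\varset$ is unchanged; making this invariant precise (and checking that $\varprefix{\cdot}$, hence $\interxregex{\cdot}$, behaves correctly when new variables with definitions but no outside references are introduced) is the delicate bookkeeping. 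A secondary obstacle is a careful accounting of the two nested exponentials: one must verify that the first stage's output size $2^{\poly{|\bar\alpha|}}$ feeds into the second stage as the new ``$|\bar\alpha|$'', and that the second stage is itself single-exponential in \emph{its} input, so the composition is $2^{2^{\poly{|\bar\alpha|}}}$ rather than worse.
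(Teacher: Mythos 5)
Your three-stage skeleton matches the paper's (Lemmas~\ref{conjunctiveNormalFormLemmaStepOne},~\ref{uniqueDefinitionsLemma}~and~\ref{removeNonBasicLemma}), and your accounting of where the two exponentials come from is essentially right. But there is a genuine gap at the heart of your correctness argument: you claim that \emph{all} stages reduce, via Lemma~\ref{refLangStableChangesLemma}, to showing componentwise $\reflang(\alpha_i) = \reflang(\beta_i)$. That is true only for the first stage (multiplying out alternations). In the uniqueness stage, a definition $\varsx\{\gamma\}$ in disjunct $j$ is renamed to $\varsx^{(j)}\{\gamma\}$ and \emph{every} reference of $\varsx$ --- including those in other components $\alpha_{j'}$ with $j' \neq i$ --- is replaced by the concatenation $\prod_j \varsx^{(j)}$; in the basicness stage, $\varsz\{\gamma_1\cdots\gamma_p\}$ is replaced by $\varsy_1\{\cdot\}\cdots\varsy_p\{\cdot\}$ and every reference of $\varsz$ everywhere by $\varsy_1\cdots\varsy_p$. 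Both operations change the ref-languages of \emph{all} components that mention the variable, so Lemma~\ref{refLangStableChangesLemma} simply does not apply; the paper says so explicitly in the proof of Lemma~\ref{removeNonBasicLemma} and instead gives a direct two-case argument constructing, for each conjunctive match, witness ref-words $v_{i,\beta} = v'_{i,\beta}\#v''_{i,\beta} \in \reflang(\interxregex{\beta_i})$ with a common variable mapping (distinguishing whether the modified definition is instantiated or not), and then reversing the construction for the converse inclusion.

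Your own ``main obstacle'' paragraph tries to rescue this by requiring that the fresh variables ``have no references outside the single definition being flattened,'' but that condition is exactly what fails: the entire point of the flattening is that references of $\varsz$ in \emph{other} components become references of the fresh $\varsy_\ell$, whose images must then be synchronized across components through the $\varprefix{\beta_i}$ prefixes (which gain new entries $\varsy_\ell\{\Sigma^*\}$). This synchronization --- in particular that the $\varsy_\ell$ are forced to $\eword$ when $\varsz$ is not instantiated, and that $\varmap{}(\varsy_1)\cdots\varmap{}(\varsy_p) = \varmap{}(\varsz)$ when it is --- is the actual content of the correctness proof and is missing from your proposal. A secondary omission: you do not explain why the flattening must be applied to non-basic definitions in the topological order of $\preceq_{\bar{\alpha}}$ (processing roots of the DAG first); without that ordering the modification step can turn a basic definition $\varsx\{\varsz\}$ into a non-basic one $\varsx\{\varsy_1\cdots\varsy_p\}$ after you have already declared $\varsx$ finished, so termination in normal form is not guaranteed.
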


Our construction to transform vstar-free $\bar{\alpha} \in \conxregex$ in normal form has three main steps, represented by Lemmas~\ref{conjunctiveNormalFormLemmaStepOne},~\ref{uniqueDefinitionsLemma}~and~\ref{removeNonBasicLemma}. Before stating these lemmas, we will first explain the three main steps on an intuitive level with an example.


\begin{figure}
\begin{center}
\scalebox{1.5}{\includegraphics{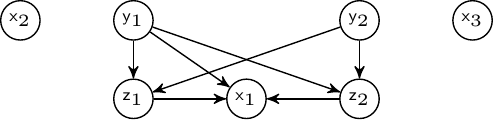}}
\end{center}
\caption{The DAG $G_{\bar{\gamma}}$ at Step $3$.}
\label{DagFigure}
\end{figure}

Let $\bar{\gamma} = (\gamma_1, \gamma_2)$ be a variable star free conjunctive xregex with
\begin{align*}
\gamma_1 &= \varsx\{\ta^* \varsy\{\tb^*\} \ta \varsz\} \altop (\varsx\{\tb^*\} \cdot (\varsz \altop \varsy\{\tc^*\})\\
\gamma_2 &= (\ta^* \altop \varsx) \cdot \varsz\{\varsy \cdot (\ta \altop \tb)\}
\end{align*}
\noindent\textbf{Step 1:} 
We will ``multiply-out'' alternations with definitions or variables, which transforms $\gamma_1$ and $\gamma_2$ into alternations of variable-simple xregex. 
\begin{align*}
\gamma_1 &= [\varsx\{\ta^* \varsy\{\tb^*\} \ta \varsz\}] \altop [\varsx\{\tb^*\} \cdot \varsz] \altop [\varsx\{\tb^*\} \cdot \varsy\{\tc^*\}]\\
\gamma_2 &= [\ta^* \cdot \varsz\{\varsy \cdot (\ta \altop \tb)\}] \altop [\varsx \cdot \varsz\{\varsy \cdot (\ta \altop \tb)\}]
\end{align*}
This only results in an equivalent xregex because $\bar{\gamma}$ is vstar-free. This transformation may cause an exponential size blow-up.\par
\noindent\textbf{Step 2:} Next, we relabel the variables in such a way, that every variable has at most one definition in $\bar{\gamma}$. This also requires that variable references are substituted by several variable references (e.\,g., references of $\varsx$ by $\varsx_{1}\varsx_{2}\varsx_{3}$). The size blow-up is at most quadratic.
\begin{align*}
\gamma_1 =\:&[\varsx_{1}\{\ta^* \varsy_{1}\{\tb^*\} \ta \varsz_{1} \varsz_2\}] \altop [\varsx_{2}\{\tb^*\} \cdot \varsz_{1} \varsz_{2}] \altop [\varsx_{3}\{\tb^*\} \cdot \varsy_{2}\{\tc^*\}]\\
\gamma_2 =\:&[\ta^* \cdot \varsz_{1}\{\varsy_{1} \varsy_{2} \cdot (\ta \altop \tb)\}] \altop [\varsx_{1}\varsx_{2}\varsx_{3} \cdot \varsz_{2}\{\varsy_{1} \varsy_{2} \cdot (\ta \altop \tb)\}]
\end{align*}
\noindent\textbf{Step 3:} We are now able to remove non-basic variable definitions, which remains to be done in order to obtain the normal form. The main idea is to split $\gamma$ of a variable definition $\varsx\{\gamma\}$ into smaller parts and (if they are not already definitions) store them in new variables. Instead of $\varsx$, we can then reference the new variables, which allows to remove $\varsx$.
This, however, has to be done in the order given by the DAG induced by the relation $\preceq_{\bar{\alpha}}$ (see Figure~\ref{DagFigure}). More precisely, we first consider the roots of the DAG representing variables $\varsx_2, \varsx_3, \varsy_1, \varsy_2$, which have basic variable definitions and therefore can be left unchanged. After deleting them from the DAG, the nodes for $\varsz_1, \varsz_2$ are roots. The non-basic definition $\varsz_{1}\{\varsy_{1} \varsy_{2} \cdot (\ta \altop \tb)\}$ is replaced by $\varsu_{1}\{\varsy_{1}\} \varsu_{2}\{\varsy_{2}\} \varsu_{3}\{\ta \altop \tb\}$ and $\varsz_{2}\{\varsy_{1} \varsy_{2} \cdot (\ta \altop \tb)\}$ is replaced by $\varsu_{4}\{\varsy_{1}\} \varsu_{5}\{\varsy_{2}\} \varsu_{6}\{\ta \altop \tb\}$.
All references for $\varsz_1$ and $\varsz_2$ are replaced by $\bar{\varsz}_1 = \varsu_{1}\varsu_{2}\varsu_{3}$ and $\bar{\varsz}_2 = \varsu_{4}\varsu_{5}\varsu_{6}$, respectively.
\begin{align*}
\gamma_1 =\:&[\varsx_{1}\{\ta^* \varsy_{1}\{\tb^*\} \ta \bar{\varsz}_{1} \bar{\varsz}_2\}] \altop [\varsx_{2}\{\tb^*\} \cdot \bar{\varsz}_{1} \bar{\varsz}_{2}] \altop [\varsx_{3}\{\tb^*\} \cdot \varsy_{2}\{\tc^*\}]\\
\gamma_2 =\:&[\ta^* \cdot \varsu_{1}\{\varsy_{1}\} \varsu_{2}\{\varsy_{2}\} \varsu_{3}\{\ta \altop \tb\}] \altop\\
&[\varsx_{1}\varsx_{2}\varsx_{3} \cdot \varsu_{4}\{\varsy_{1}\} \varsu_{5}\{\varsy_{2}\} \varsu_{6}\{\ta \altop \tb\}]
\end{align*}
Finally, after deleting the nodes for $\varsz_1$ and $\varsz_2$ from the DAG, the node for $\varsx_1$ is the only root left. Therefore, we can now replace 
\begin{equation*}
\varsx_{1}\{\ta^* \varsy_{1}\{\tb^*\} \ta \bar{\varsz}_{1} \bar{\varsz}_2\} = \varsx_{1}\{\ta^* \varsy_{1}\{\tb^*\} \ta \varsu_{1}\varsu_{2}\varsu_{3} \varsu_{4}\varsu_{5}\varsu_{6}\}
\end{equation*} 
by the following concatenation of variable definitions:
\begin{align*}
&\varsu_7\{\ta^*\} \varsy_{1}\{\tb^*\} \varsu_8\{\ta\} \varsu_9\{\varsu_{1}\} \varsu_{10}\{\varsu_{2}\}\varsu_{11}\{\varsu_{3}\}\\ 
&\varsu_{12}\{\varsu_{4}\}\varsu_{13}\{\varsu_{5}\}\varsu_{14}\{\varsu_{6}\}\,.
\end{align*} 
Moreover, all references of $\varsx_1$ are replaced by $\bar{\varsx}_1 = \varsu_7 \varsy_{1} \varsu_8 \varsu_9 \ldots \varsu_{14}$. The thus obtained conjunctive xregex is in fact in normal form.\par
We shall discuss a few particularities. While Steps $1$ and $2$ are more or less straightfoward, Step $3$ is more complicated. In particular, it is not easy to see why it is necessary to use new variables $\varsu$ with definition $\varsu\{\varsx\}$ instead of just using the existing $\varsx$. For example, if a variable definition $\varsz\{\varsy_1 \ta^* \varsy_2\}$ would be replaced by $\varsy_1 \varsu\{\ta^*\} \varsy_2$ and references of $\varsz$ by $\varsy_1 \varsu \varsy_2$, then $\varsy_1 \varsu \varsy_2$ must refer to $\eword$ in the case that $\varsz$ is not instantiated, which is not the case if $\varsy_1$ is instantiated elsewhere. So we use $\varsu_1\{\varsy_1\} \varsu_2\{\ta^*\} \varsu_3\{\varsy_2\}$ and replace references of $\varsz$ by $\varsu_1 \varsu_2 \varsu_3$, which means that if $\varsz$ is not instantiated, then also none of the $\varsu_1$, $\varsu_2$ and $\varsu_3$ are instantiated in the modified xregex. Why it is necessary to apply these modifications according to the order given by $\preceq_{\bar{\gamma}}$ will be discussed in detail in the proof of the corresponding lemma further below. The possible exponential size blow-up of this step is also discussed 
in Subsection~\ref{sec:pseudobasic}.\par
We now give the lemmas for the three steps of the construction (it will be helpful to keep in mind that the constructions of these lemmas are illustrated by the example from above). The proofs of the first two lemmas are more or less straightforward proofs of correctness for the constructions of Step $1$ and $2$ illustrated above. The proof of the third lemma is technically more involved.

\begin{lemma}\label{conjunctiveNormalFormLemmaStepOne}
(\textbf{Step 1}) Let $\bar{\alpha} \in \dimconxregex{m}_{\Sigma, \varset}$ be vstar-free. Then there is an equivalent $\bar{\beta} \in \dimconxregex{m}_{\Sigma, \varset}$ with $|\bar{\beta}| = \bigO(2^{|\bar{\alpha}|})$, such that each $\bar{\beta}[i]$ is an alternation of variable-simple xregex.
\end{lemma}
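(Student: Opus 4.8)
The plan is to prove Lemma~\ref{conjunctiveNormalFormLemmaStepOne} by establishing the single-xregex version first and then lifting it componentwise using Lemma~\ref{refLangStableChangesLemma}. So the core claim is: for a vstar-free xregex $\gamma \in \xregex_{\Sigma, \varset}$, there is an xregex $\gamma'$ with $\reflang(\gamma) = \reflang(\gamma')$, with $\gamma'$ an alternation $\gamma'_1 \altop \gamma'_2 \altop \ldots \altop \gamma'_k$ of variable-simple xregex, and $|\gamma'| = \bigO(2^{|\gamma|})$. Once this is done for each $\gamma = \bar{\alpha}[i]$, the tuple $\bar{\beta} = (\gamma'_1, \ldots, \gamma'_m)$ satisfies $\reflang(\bar{\beta}[i]) = \reflang(\bar{\alpha}[i])$ for all $i$, so by Lemma~\ref{refLangStableChangesLemma} it is a conjunctive xregex with $\lang(\bar{\beta}) = \lang(\bar{\alpha})$; and each component is an alternation of variable-simple xregex as required, with the total size bounded by $\sum_i \bigO(2^{|\bar{\alpha}[i]|}) = \bigO(2^{|\bar{\alpha}|})$.

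**Key steps.** The main work is the single-xregex statement, which I would prove by structural induction on $\gamma$, maintaining as the inductive invariant that the output is an alternation of variable-simple xregex with ref-language equal to $\reflang(\gamma)$. The base cases ($a \in \Sigma \cup \{\eword\}$, $\varsx \in \varset$, $\varnothing$) are already variable-simple (a single-element alternation). For $\gamma = (\alpha \altop \beta)$, apply induction to get alternations $\bigvee_i \alpha_i$ and $\bigvee_j \beta_j$ of variable-simple pieces; their concatenation-as-alternation $\bigvee_i \alpha_i \altop \bigvee_j \beta_j$ works, with additive size. For $\gamma = (\alpha \cdot \beta)$, apply induction to get $\bigvee_i \alpha_i$ and $\bigvee_j \beta_j$ and output $\bigvee_{i,j} (\alpha_i \cdot \beta_j)$; each $\alpha_i \cdot \beta_j$ is variable-simple since the concatenation of two variable-simple xregex is variable-simple (using the characterization ``$\alpha = \beta_1 \cdots \beta_k$ with each $\beta_\ell$ a classical RE, a reference, or a definition $\varsx\{\gamma\}$ with $\gamma$ variable-simple'' from the excerpt's discussion). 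This is the step producing the multiplicative blow-up. For $\gamma = (\alpha)^+$: here is where vstar-freeness is used — since no variable reference is under a $+$, $\alpha$ contains no variable reference, and (because $\gamma$ is an xregex, $\alpha$ may contain definitions, but a definition $\varsx\{\delta\}$ under a $+$ would force multiple occurrences of $\open{\varsx}$ in words of $\reflang(\gamma)$, contradicting sequentiality) $\alpha$ contains no variable definition either; hence $\alpha$ is a classical regular expression and $\gamma = \alpha^+$ is already a classical regular expression, which is variable-simple. For $\gamma = \varsx\{\alpha\}$: apply induction to $\alpha$ to obtain $\bigvee_i \alpha_i$ with each $\alpha_i$ variable-simple and $\reflang(\bigvee_i \alpha_i) = \reflang(\alpha)$; then output $\bigvee_i \varsx\{\alpha_i\}$. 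Each $\varsx\{\alpha_i\}$ is variable-simple (a single block $\varsx\{\gamma\}$ with $\gamma$ variable-simple), and $\reflang(\bigvee_i \varsx\{\alpha_i\}) = \bigcup_i \open{\varsx}\reflang(\alpha_i)\close{\varsx} = \open{\varsx}\reflang(\alpha)\close{\varsx} = \reflang(\varsx\{\alpha\})$.

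**Size bound and obstacle.** For the size bound, a clean induction gives $|\gamma'| \le c^{|\gamma|}$ for a suitable constant $c$: the concatenation case multiplies (sums of products are bounded by products of sums of sizes), the alternation case adds, the $+$ and definition cases are essentially size-preserving up to the recursive call, so $\bigOstar$ is unnecessary and $\bigO(2^{|\gamma|})$ holds after absorbing the constant into the exponent (or one states it as $2^{\bigO(|\gamma|)}$, which is $\bigO(2^{\poly{|\gamma|}})$ and suffices). I also need to check that the resulting $\gamma'$ is still an xregex in the sense of Definition~\ref{xregexDefinition} — in particular that the side-condition $\varsx \in \varset \setminus \var(\alpha_i)$ for each block $\varsx\{\alpha_i\}$ still holds; this follows since $\var(\alpha_i) \subseteq \var(\alpha)$ and $\reflang$ is preserved so no new variable is introduced under a definition of itself. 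The one genuinely delicate point — the expected main obstacle — is the $+$-case argument that vstar-freeness together with sequentiality forces $\alpha$ under a $+$ to be variable-free; I would want to state carefully that a definition $\varsx\{\delta\}$ occurring inside $\alpha$ in $\alpha^+$ yields, in $(\alpha^+)_{\refmark} = (\alpha_{\refmark})^+$, words with two occurrences of $\open{\varsx}$, contradicting the standing assumption that all xregex here are sequential — and similarly a reference under $+$ is exactly what vstar-freeness forbids. Everything else is bookkeeping on the variable-simple normal form.
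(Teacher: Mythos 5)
Your proof is correct and follows essentially the same route as the paper: both distribute concatenation over every alternation that contains a variable occurrence, observe that vstar-freeness (together with sequentiality, which rules out instantiable definitions under a $+$) guarantees such alternations never sit under a $+$-operator so that the ref-language of each component is preserved, and then invoke Lemma~\ref{refLangStableChangesLemma} componentwise to get equivalence of the conjunctive xregex. The differences are only presentational — you organize the rewriting as a structural induction rather than the paper's iterated local replacement of one offending alternation at a time, and your slightly weaker size bound $2^{\bigO(|\bar{\alpha}|)}$ is just as serviceable for everything downstream.
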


\begin{proof}
Let $\bar{\alpha} = (\alpha_1, \alpha_2, \ldots, \alpha_m)$. For every $i \in [m]$, we transform $\alpha_i$ into $\beta_i = \beta_{i, 1} \altop \beta_{i, 2} \altop \ldots \altop \beta_{i, t_i}$, such that, for every $j \in [t_i]$, $\beta_{i, j}$ is variable-simple. \par
If $\alpha_i$ is not already in the form described above (otherwise we set $\beta_i = \alpha_i$ and are done), i.\,e., an alternation of variable-simple xregex, then it has the form $\alpha_i = \pi_1 \altop \pi_2 \altop \ldots \altop \pi_q$ (note that $q = 1$ is possible), where at least one $\pi_j$ is not valt-free (since $\alpha_i$ is vstar-free, each $\pi_j$ is vstar-free as well). This means that $\pi_j$ has some subexpression $\gamma = (\gamma_1 \altop \gamma_2)$ and $\gamma$ contains a variable definition or a variable reference. Let $\pi_{j, 1}$ be obtained from $\pi_j$ by replacing $\gamma$ by $\gamma_1$ and let $\pi_{j, 2}$ be obtained from $\pi_j$ by replacing $\gamma$ by $\gamma_2$. Finally, let $\alpha'_i = \pi_1 \altop \ldots \altop \pi_{j, 1} \altop \pi_{j, 2} \altop \ldots \altop \pi_q$ and let $\bar{\alpha}'$ be obtained from $\bar{\alpha}$ by replacing $\alpha_i$ by $\alpha'_i$. \par
We first note that $\alpha'_i$ is sequential, variable-acyclic and vstar-free, which means that $\bar{\alpha}'$ is a vstar-free conjunctive xregex. Moreover, since $\alpha_i$ is vstar-free, $\gamma$ is not under a $+$-operator, which implies that $\reflang(\pi_j) = \reflang(\pi_{j, 1} \altop \pi_{j, 2})$ (which would not be the case if $\gamma$ was under a $+$-operator), and therefore also 
\begin{equation*}
\reflang(\pi_1 \altop \pi_2 \altop \ldots \altop \pi_q) = \reflang(\pi_1 \altop \ldots \altop \pi_{j, 1} \altop \pi_{j, 2} \altop \ldots \altop \pi_q)\,, 
\end{equation*}
Consequently, we have $\reflang(\alpha_i) = \reflang(\alpha'_i)$, which, by Lemma~\ref{refLangStableChangesLemma}, means that $\lang(\bar{\alpha}) = \lang(\bar{\alpha}')$. \par
By repeating this step, we can therefore transform $\bar{\alpha}$ into an equivalent $\bar{\beta} \in \dimconxregex{m}_{\Sigma, \varset}$ such that each $\bar{\beta}[i]$ is an alternation of variable-simple xregex. Finally, we note that in each step of this construction, the size of the xregex can double in the worst case. Thus, $|\bar{\beta}| = \bigO(2^{|\bar{\alpha}|})$.
\end{proof}

\begin{lemma}\label{uniqueDefinitionsLemma}
(\textbf{Step 2}) Let $\bar{\alpha} \in \dimconxregex{m}_{\Sigma, \varset}$ such that, for every $i \in [m]$, $\bar{\alpha}[i]$ is an alternation of variable-simple xregex. Then there is an equivalent $\bar{\beta} \in \dimconxregex{m}_{\Sigma, \varset'}$ such that, for every $i \in [m]$, $\bar{\beta}[i]$ is an alternation of variable-simple xregex, and every $\varsx \in \varset'$ has at most one definition in $\bar{\beta}$. Moreover, $|\bar{\beta}| = \bigO(|\bar{\alpha}|^2)$. 
\end{lemma}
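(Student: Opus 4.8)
The idea is exactly the relabelling illustrated in Step 2 of the running example: we take each definition of each variable and give it a fresh name, and we replace each reference of the original variable by the concatenation of the fresh names of all of its definitions. Formally, fix $\bar\alpha = (\alpha_1,\dots,\alpha_m)$, and for each $\varsx \in \varset$ let $d_{\varsx}$ be the number of distinct syntactic definitions $\varsx\{\gamma\}$ occurring in $\alpha_1\alpha_2\cdots\alpha_m$; introduce fresh variables $\varsx_1,\dots,\varsx_{d_{\varsx}} \in \varset'$. Enumerate the definitions of $\varsx$ (in some fixed order, say by position in the string $\alpha_1\cdots\alpha_m$) as $\varsx\{\gamma^{(1)}\},\dots,\varsx\{\gamma^{(d_{\varsx})}\}$. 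Now build $\bar\beta$ by a simultaneous rewriting of every $\alpha_i$: replace the $j$-th definition $\varsx\{\gamma^{(j)}\}$ by $\varsx_j\{\gamma^{(j)}\}$ (and recurse into $\gamma^{(j)}$, which may itself contain definitions and references of other variables), and replace every reference occurrence of $\varsx$ by the word $\varsx_1\varsx_2\cdots\varsx_{d_{\varsx}}$. This should be carried out for all variables simultaneously so that references inside the $\gamma^{(j)}$ are handled too.

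\textbf{Correctness.} First one checks that $\bar\beta$ is a well-formed conjunctive xregex of dimension $m$: each $\bar\beta[i]$ is still sequential (each fresh $\varsx_j$ has exactly one definition, in the position where the $j$-th definition of $\varsx$ used to be, and at most one of these is instantiated by any given ref-word because $\bar\alpha[i]$ was sequential for $\varsx$), still vstar-free and valt-free on the relevant subtrees (we only renamed definition brackets and expanded single-symbol references into words of single-symbol references, neither of which creates a variable occurrence under a $+$ or inside an alternation branch that was variable-free before), so each $\bar\beta[i]$ remains an alternation of variable-simple xregex. Acyclicity of $\preceq_{\beta_1\cdots\beta_m}$ follows from acyclicity of $\preceq_{\alpha_1\cdots\alpha_m}$: an edge $\varsu_k \preceq \varsx_j$ in $\bar\beta$ arises only from an occurrence of $\varsu$ inside $\gamma^{(j)}$, i.e. from $\varsu \preceq \varsx$ in $\bar\alpha$. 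For the semantic equality $\lang(\bar\alpha) = \lang(\bar\beta)$, the cleanest route is via Lemma~\ref{refLangStableChangesLemma} after a small bookkeeping argument: there is a length- and structure-preserving bijection between ref-words $v \in \reflang(\interxregex{\alpha_i})$ and ref-words $v' \in \reflang(\interxregex{\beta_i})$ — on the $\alpha$-side a ref-word instantiates, for each $\varsx$, at most one definition, say the $j$-th, producing some image $u_{\varsx}$; on the $\beta$-side the corresponding ref-word instantiates the definition of $\varsx_j$ with that same image and instantiates all other $\varsx_k$ (as well as the dummy $\varsx_k\{\Sigma^*\}$ in $\varprefix{\beta_i}$) with $\eword$, so that the reference expansion $\varsx_1\cdots\varsx_{d_{\varsx}}$ derefs to $\eword^{\,k-1}\,u_{\varsx}\,\eword^{\,d_{\varsx}-k} = u_{\varsx}$. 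Hence $\deref$ and the variable images (after the obvious identification of $\psi(\varsx)$ on the $\alpha$-side with the common value $\psi'(\varsx_j)=\psi(\varsx)$ and $\psi'(\varsx_k)=\eword$ for $k\neq j$) are preserved, and a tuple $\bar w$ is a conjunctive match of $\bar\alpha$ with mapping $\psi$ iff it is a conjunctive match of $\bar\beta$ with the induced mapping $\psi'$. One has to be a little careful with the case that a definition of $\varsx$ in $\bar\alpha$ is present but not instantiated, versus $\varsx$ having no definition at all: in both situations every $\varsx_j$ is uninstantiated in the $\beta$-side ref-word and the reference expansion correctly derefs to $\eword$, matching the behaviour of $\varsx$ described in the paragraph on $\interxregex{\cdot}$; this is precisely the subtlety the text flags when motivating the use of fresh variables.

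\textbf{Size bound.} Each reference occurrence of a variable $\varsx$ is replaced by a word of length $d_{\varsx} \le |\bar\alpha|$, and there are at most $|\bar\alpha|$ reference occurrences in total; definitions are only renamed, not enlarged. Summing, $|\bar\beta| = \bigO(|\bar\alpha|^2)$.

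\textbf{Main obstacle.} The routine part is the combinatorics of the size bound; the delicate part is the semantic correctness proof, specifically pinning down the bijection between ref-words on the two sides and verifying that it commutes with $\deref$ and the variable mapping \emph{uniformly across all $m$ components at once} (since conjunctive matches require one common mapping $\psi$). The awkward edge cases are (i) uninstantiated-but-present definitions, and (ii) variables that occur as references inside other variables' definitions, where the expansion $\varsx_1\cdots\varsx_{d_{\varsx}}$ must be shown to behave correctly after the outer definition is itself instantiated or not — this is where invoking Lemma~\ref{refLangStableChangesLemma} (to reduce everything to equality of the component ref-languages $\reflang(\alpha_i)$ vs. $\reflang(\beta_i)$, which is a purely single-expression statement) pays off and keeps the argument from ballooning.
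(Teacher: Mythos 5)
Your proposal is correct and takes essentially the same approach as the paper: rename each syntactic definition of $\varsx$ to a fresh variable, replace every reference of $\varsx$ by the concatenation of all the fresh names, argue via an explicit ref-word correspondence that exactly one fresh variable is instantiated (the rest deref to $\eword$), and bound the blow-up by noting each reference grows to length at most $|\bar{\alpha}|$. One small caveat: Lemma~\ref{refLangStableChangesLemma} does not literally apply here since $\reflang(\alpha_i) \neq \reflang(\beta_i)$ (the ref-words contain different symbols), but the direct bijection you actually spell out is exactly the paper's argument and suffices on its own.
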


\begin{proof}
Let $\bar{\alpha} = (\alpha_1, \alpha_2, \ldots, \alpha_m)$, and, for every $i \in [m]$, let 
\begin{equation*}
\alpha_i = \alpha_{i, 1} \altop \alpha_{i, 2} \altop \ldots \altop \alpha_{i, t_i}\,, 
\end{equation*}
where, for every $j \in [t_i]$, $\alpha_{i, j}$ is variable-simple. 
We transform $\bar{\alpha}$ into an equivalent $\bar{\beta}$ with the property described in the statement of the lemma as follows.\par
Let $\varsx \in \varset$ and let $i \in [m]$ be such that $\alpha_i$ contains a definition of $\varsx$ (since $\bar{\alpha}$ is a conjunctive xregex, there is at most one such $i \in [m]$). Moreover, if, for some $j \in [t_i]$, there are at least two definitions for $\varsx$ in $\alpha_{i, j}$, then $\alpha_{i, j}$ is not sequential or not valt-free, which is a contradiction. Thus, for every $j \in [t_i]$, there is at most one definition for $\varsx$ in $\alpha_{i, j}$, and, for the sake of convenience, we shall assume that, for every $j \in [t_i]$, $\alpha_{i, j}$ contains a definition for $\varsx$ (it will be obvious how to adapt the construction when this is not the case). Now, for every $j \in [t_i]$, we replace in $\alpha_{i, j}$ the variable definition $\varsx\{\gamma\}$ by $\varsx^{(j)}\{\gamma\}$, where $\varsx^{(j)}$ is a new variable. Let $\bar{\alpha}' = (\alpha'_1, \alpha'_2, \ldots, \alpha'_m)$ be the thus obtained conjunctive xregex. Obviously, every variable $\varsx^{(j)}$ with $j \in [t_i]$ has exactly one definition in $\bar{\alpha}'$, but $\bar{\alpha}'$ is not necessarily equivalent to $\bar{\alpha}$, since we did not change the references for variable $\varsx$. Before we do this, we note that, for any ref-word $v \in \reflang(\alpha'_i)$, there is exactly one $j \in [t_i]$ such that $v$ contains a definition of $\varsx^{(j)}$ (or, in other words, for exactly one $j \in [t_i]$ the definition for $\varsx^{(j)}$ is instantiated), while for all other $j' \in [t_i] \setminus \{j\}$, there is no definition of $\varsx^{(j')}$, which means that the image for $\varsx^{(j')}$ with respect to $v$'s variable mapping is necessarily empty. This is due to the fact that $\alpha_i$ is sequential, variable alternation-free and we assume that, for every $j \in [t_i]$, $\alpha_{i, j}$ has a definition of $\varsx$. Hence, we can simply substitute \emph{all} variable references for $\varsx$ in $\bar{\alpha}'$ (not only those in $\alpha_i$) by $\bar{\varsx} = \prod^{t_i}_{j = 1} \varsx^{(j)}$. We denote the thus modified variant of $\bar{\alpha}$ by $\bar{\alpha}'' = (\alpha''_1, \alpha''_2, \ldots, \alpha''_m)$. \par
It is straightforward to see that $\bar{\alpha}''$ is sequential and variable-acyclic; thus, it is a conjunctive xregex, and also that, for every $i \in [m]$, $\alpha''_i$ is an alternation of variable-simple xregex. With the observations from above, it is also easy to see that $\bar{\alpha}$ and $\bar{\alpha}''$ are equivalent. More precisely, every tuple $\bar{v} \in \reflang(\alpha_1) \times \reflang(\alpha_2) \times \ldots \times \reflang(\alpha_m)$ corresponds to a tuple $\bar{v}'' \in \reflang(\alpha''_1) \times \reflang(\alpha''_2) \times \ldots \times \reflang(\alpha''_m)$, obtained from $\bar{v}$ by replacing each variable reference $\varsx$ by $\bar{\varsx}$, and the variable definition $\open{\varsx} u \close{\varsx}$ by $\open{\varsx^{(j)}} u \close{\varsx^{(j)}}$, where $j \in [t_i]$ is such that the variable definition of $\varsx$ in $\bar{v}$ was generated by $\alpha_{i, j}$.\par
By repeating this modification step with respect to every variable from $\varset$, we can obtain a conjunctive xregex $\bar{\beta} = (\beta_1, \beta_2, \ldots, \beta_m)$ with the desired property that is equivalent to $\bar{\alpha}$. 
Moreover, this modification replaces each variable reference by a sequence of $\bigO(|\bar{\alpha}|)$ variable references; thus, $|\bar{\beta}| = \bigO(|\bar{\alpha}|^2)$. 
\end{proof}

\begin{lemma}\label{removeNonBasicLemma}
(\textbf{Step 3}) Let $\bar{\alpha}\in \dimconxregex{m}_{\Sigma, \varset}$ such that, for every $i \in [m]$, $\bar{\alpha}[i]$ is an alternation of variable-simple xregex, and every $\varsx \in \varset$ has at most one definition in $\bar{\alpha}$. Then there is an equivalent $\bar{\beta} \in \dimconxregex{m}_{\Sigma, \varset'}$ in normal form with $|\bar{\beta}| = \bigO(|\bar{\alpha}|^{|\varset| + 1})$. 
\end{lemma}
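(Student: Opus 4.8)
The plan is to carry out Step~3 as an iterative procedure that processes the variables in an order consistent with the DAG induced by $\preceq_{\bar{\alpha}}$, removing non-basic definitions one at a time. Since $\bar{\alpha}$ is a conjunctive xregex, $\preceq_{\bar{\alpha}}$ is acyclic, so we may fix a topological order $\varsx_1, \varsx_2, \ldots, \varsx_n$ of the variables in which $\varsx_p \preceq_{\bar{\alpha}} \varsx_q$ implies $p < q$ (i.e., a variable is processed only after all variables occurring in its definition have been processed). The invariant maintained through the procedure is: after processing $\varsx_1, \ldots, \varsx_p$, every one of those variables either has no definition, has a basic definition, or has been eliminated; moreover each conjunctive xregex component is still an alternation of variable-simple xregex with at most one definition per variable, and the set of conjunctive matches is unchanged (restricted to the original variables, in the sense of Lemma~\ref{refLangStableChangesLemma} and the compatibility of variable mappings).

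To process a variable $\varsz$ whose definition $\varsz\{\gamma\}$ is non-basic: since $\gamma$ is variable-simple, write $\gamma = \delta_1 \delta_2 \cdots \delta_k$ where each $\delta_\ell$ is a classical regular expression, a single variable reference, or a (necessarily basic, by the processing order) variable definition. We introduce fresh variables $\varsu_1, \ldots, \varsu_k$ and replace $\varsz\{\gamma\}$ by $\varsu_1\{\delta'_1\} \varsu_2\{\delta'_2\} \cdots \varsu_k\{\delta'_k\}$, where $\delta'_\ell = \delta_\ell$ if $\delta_\ell$ is a classical regular expression or a reference, and if $\delta_\ell$ is a definition $\varsy\{\eta\}$ then we keep it as the definition of $\varsy$ and let $\varsu_\ell$ wrap a reference, i.e.\ set $\delta'_\ell$ to be that reference (handled by splitting into two adjacent definitions if necessary). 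Simultaneously, every reference of $\varsz$ anywhere in $\bar{\alpha}$ is replaced by the concatenation $\bar{\varsz} = \varsu_1 \varsu_2 \cdots \varsu_k$. The crucial point in the correctness argument is the treatment of non-instantiation: if in some ref-word the definition of $\varsz$ was not instantiated, then in the modified xregex none of $\varsu_1, \ldots, \varsu_k$ is instantiated either (they all sit inside the same syntactic block as the former $\varsz\{\gamma\}$), so every $\varsu_\ell$ maps to $\eword$ and the concatenation $\bar{\varsz}$ correctly evaluates to $\eword$ — which is exactly why we must not reuse an existing variable like $\varsy$ directly in place of $\varsz$ (the example with $\varsz\{\varsy_1 \ta^* \varsy_2\}$ in the surrounding text shows what goes wrong otherwise). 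The processing order is what guarantees that at the moment we eliminate $\varsz$, its definition contains only classical regular expressions, references, and basic definitions, so the split above terminates and does not reintroduce non-basic definitions.

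The main obstacle I expect is the correctness bookkeeping for the variable mappings across all components simultaneously: one must show that the bijection between tuples of ref-words before and after the step preserves the property ``all components have the same variable mapping'', which is the definition of a conjunctive match via the $\interxregex{\cdot}$ construction. This requires tracking how $\varprefix{\bar{\alpha}[i]}$ changes (a formerly undefined $\varsz$ becomes the undefined block $\varsu_1, \ldots, \varsu_k$, consistently across all $i$) and checking that the induced mapping on the $\varsu_\ell$'s is well-defined and the same everywhere — essentially a careful induction using Lemma~\ref{refLangStableChangesLemma} at each micro-step, analogous to the argument in Lemma~\ref{uniqueDefinitionsLemma}. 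For the size bound: each variable, when eliminated, is replaced at each of its reference sites by a block whose length is at most the size of its (already processed) definition; a variable's processed definition can have grown by absorbing the processed forms of the variables it references, and since chains in the DAG have length at most $|\varset|$, iterating this gives a blow-up of $\bigO(|\bar{\alpha}|^{|\varset|+1})$, which we bound by a straightforward induction on the DAG depth (each level multiplies the bound by a factor $\bigO(|\bar{\alpha}|)$). After all non-basic definitions are removed we have a conjunctive xregex each of whose components is an alternation of simple xregex, i.e.\ in normal form, completing the proof.
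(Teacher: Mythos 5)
Your plan is the same as the paper's: process variables in a topological order of $\preceq_{\bar{\alpha}}$, and for each non-basic definition $\varsz\{\gamma\}$ with $\gamma = \delta_1\cdots\delta_k$ replace it by a concatenation of (basic) definitions of fresh variables and replace every reference of $\varsz$ by the concatenation of those fresh variables; the size analysis (one multiplicative factor $\bigO(|\bar{\alpha}|)$ per DAG level) is also the paper's. However, your treatment of the sub-case where $\delta_\ell$ is itself a definition $\varsy\{\eta\}$ is wrong as literally written: ``keep it as the definition of $\varsy$ and let $\varsu_\ell$ wrap a reference \ldots splitting into two adjacent definitions'' yields a block $\varsy\{\eta\}\,\varsu_\ell\{\varsy\}$, which dereferences to \emph{two} copies of the subword generated by $\eta$ where the original $\delta_\ell$ contributes only one, so the generated language changes. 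The correct (and simpler) move, which the paper takes, is to leave $\gamma'_\ell = \varsy\{\eta\}$ untouched and put $\varsy$ itself, not a fresh wrapper, at position $\ell$ of the reference block $\bar{\varsz}$; this is sound precisely because of the precondition that every variable has at most one definition in $\bar{\alpha}$, so $\varsy$ cannot be instantiated outside the block and still maps to $\eword$ whenever $\varsz$ would not have been instantiated.

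Beyond that local slip, the part you explicitly defer --- verifying that the modification preserves the set of conjunctive matches --- is where most of the actual work lies, and it is not a routine application of Lemma~\ref{refLangStableChangesLemma}: that lemma is inapplicable here because $\reflang(\bar{\alpha}[i]) \neq \reflang(\bar{\beta}[i])$ after the step. The paper instead argues directly on witness ref-words, distinguishing the case where the alternation branch containing $\varsz\{\gamma\}$ is instantiated (the factor $\open{\varsz} g_1\cdots g_p \close{\varsz}$ is re-bracketed into $\open{\varsy_1}g_1\close{\varsy_1}\cdots\open{\varsy_p}g_p\close{\varsy_p}$ and the images of the new variables are propagated into the $\varprefix{\cdot}$ prefixes of all other components) from the case where it is not (all new variables get image $\eword$), and then needs a separate structural argument (four invariants of the procedure) to show that no non-basic definition survives --- in particular that a basic definition $\varsx\{\varsy\}$ cannot be turned non-basic by a later replacement of the reference $\varsy$ after $\varsx$ has already been processed. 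Your sketch correctly identifies the processing order and the non-instantiation issue as the crux, but these two pieces of bookkeeping need to be carried out for the proof to be complete.
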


\begin{proof}
Let $\bar{\alpha} = (\alpha_1, \alpha_2, \ldots, \alpha_m)$, and, for every $i \in [m]$, let 
\begin{equation*}
\alpha_i = \alpha_{i, 1} \altop \alpha_{i, 2} \altop \ldots \altop \alpha_{i, t_i}\,, 
\end{equation*}
where, for every $j \in [t_i]$, $\alpha_{i, j}$ is variable-simple, and, for every $\varsx \in \varset$, there is at most one definition for $\varsx$ in $\bar{\alpha}$. First, we define as general modification step, which can be applied to any variable definition in $\bar{\alpha}$. This modification step will then be used in order to transform $\bar{\alpha}$ into normal form.\medskip\par
\noindent\textbf{Main modification step:} Let $\varsz\{\gamma\}$ be a variable definition of $\bar{\alpha}$. Since $\gamma$ is variable-simple, $\gamma = \gamma_1 \gamma_2 \ldots \gamma_p$ such that each $\gamma_{\ell}$ with $\ell \in [p]$ is either a classical regular expression, a variable definition, or a variable reference.\par
For every $\ell \in [p]$, we define a $\gamma'_\ell$ as follows. If $\gamma_\ell$ is a variable definition $\varsy_{\ell}\{\ldots\}$, then we set $\gamma'_\ell = \gamma_\ell$. If $\gamma_\ell$ is a classical regular expression or a variable reference, then we set $\gamma'_\ell = \varsy_{\ell} \{\gamma_\ell\}$ for a \emph{new} variable $\varsy_{\ell} \notin \varset$. Note that $\gamma'_1 \gamma'_2 \ldots \gamma'_p = \varsy_1\{\ldots\} \varsy_2\{\ldots\} \ldots \varsy_p\{\ldots\}$ is a concatenation of variable definitions. Next, we replace the variable definition $\varsz\{\gamma\}$ in $\bar{\alpha}$ by $\gamma'_1 \gamma'_2 \ldots \gamma'_p$. Then, we replace all variable references of $\varsz$ in $\bar{\alpha}$ by $\varsy_1 \varsy_2 \ldots \varsy_p$.\par
Let the thus modified version of $\bar{\alpha}$ be denoted by $\bar{\beta} = (\beta_1$, $\beta_2$, $\ldots$, $\beta_m)$. It can be easily seen that $\bar{\beta}$ is sequential and variable acyclic and therefore a conjunctive xregex. Moreover, for every $i \in [m]$, $\beta_i$ is still an alternation of variable-simple xregex. \medskip\\
\noindent \textbf{Correctness of the main modification step}: Next, we show that $\lang(\bar{\alpha}) = \lang(\bar{\beta})$. 
The difficulty in doing so is due to the fact that we cannot assume, for every $i \in [m]$, that $\reflang(\alpha_i) = \reflang(\beta_i)$; i.\,e., we cannot conveniently apply Lemma~\ref{refLangStableChangesLemma}, as it has been done in the proof of Lemma~\ref{conjunctiveNormalFormLemmaStepOne}.\par
We need a few more notational preliminaries. We assume that $r \in [m]$ and $\widehat{j} \in [t_r]$ is such that the modified variable definition $\varsz\{\gamma\}$ is in $\alpha_{r, \widehat{j}}$.
Moreover, let $\beta_r = \beta_{r, 1} \altop \beta_{r, 2} \altop \ldots \altop \beta_{r, t_r}$, where, for every $j \in [t_r]$, $\beta_{r, j}$ is obtained from $\alpha_{r, j}$ by the construction from above (i.\,e., $\beta_{r, \widehat{j}}$ is obtained from $\alpha_{r, \widehat{j}}$ by replacing the definition $\varsz\{\gamma\}$ with $\gamma'_1 \gamma'_2 \ldots \gamma'_p$ and all references $\varsz$ with $\varsy_1 \varsy_2 \ldots \varsy_p$, and all other $\beta_{r, j}$, with $j \in [t_r] \setminus \{\widehat{j}\}$, are obtained from $\alpha_{r, j}$ by only replacing all references $\varsz$ with $\varsy_1 \varsy_2 \ldots \varsy_p$). Furthermore, all other $\beta_{r'}$ with $r' \in [m] \setminus \{r\}$ are obtained from $\alpha_{r'}$ by only replacing each variable reference for $\varsz$ by $\varsy_1 \varsy_2 \ldots \varsy_p$.\par
Let $\bar{w} = (w_1, w_2, \ldots, w_m)$ be a conjunctive match for $\bar{\alpha}$, with variable mapping $\psi_{\bar{\alpha}}$, i.\,e., for every $i \in [m]$, there is a ref-word $v_{i, \alpha} = v'_{i, \alpha} \# v''_{i, \alpha} \in \reflang(\interxregex{\alpha_{i}})$ with variable mapping $\psi_{\bar{\alpha}}$ and with $\deref(v_{i, \alpha}) = \deref(v'_{i, \alpha}) \# w_i$. In order to show that $\bar{w}$ is a conjunctive match for $\bar{\beta}$, we have to show that, for every $i \in [m]$, there are ref-words $v_{i, \beta} = v'_{i, \beta} \# v''_{i, \beta} \in \reflang(\interxregex{\beta_i})$ with $\deref(v_{i, \beta}) = \deref(v'_{i, \beta}) \# w_i$, and $\varmap{v_{1, \beta}} = \varmap{v_{2, \beta}} = \ldots = \varmap{v_{m, \beta}}$. \par
There are two cases, which we shall first discuss on an intuitive level. The simple case is when the ref-word $v''_{r, \alpha} \in \reflang(\alpha_r)$ (i.\,e., the ref-word that yields $w_r$) can be constructed by some $\alpha_{r, j}$ with $j \in [t_r] \setminus \{\widehat{j}\}$; more precisely, we have $v''_{r, \alpha} \in \reflang(\alpha_{r, j})$ for some $j \in [t_r] \setminus \{\widehat{j}\}$. In this case, the modified variable definition $\varsz\{\gamma\}$ is not instantiated. This means that suitable ref-words $v_{i, \beta} = v'_{i, \beta} \# v''_{i, \beta}$ that witness that $\bar{w}$ is a conjunctive match for $\bar{\beta}$ can be easily obtained from the witnesses $v'_{i, \alpha} \# v''_{i, \alpha}$. We only have to take care of the variables $\varsy_\ell$, for which references may occur in $v''_{i, \alpha}$. However, these variables, regardless of whether they have been newly introduced by the modification step or whether definitions for them have already been present before, have only definitions in $\alpha_{r, \widehat{j}}$ (note that for the variables that are not newly introduced, this is only true since we assume that every variable has at most one definition in $\bar{\alpha}$), and therefore their images must be empty.\par
The more complicated case is when, for every $j \in [t_r] \setminus \{\widehat{j}\}$, $v''_{r, \alpha}$ is not in $\reflang(\alpha_{r, j})$, which means that the only way for $\alpha_r$ to produce $v''_{r, \alpha}$ is to use the subexpression $\alpha_{r, \widehat{j}}$, which is the one that contains the variable definition $\varsz\{\gamma\} = \varsz\{\gamma_1 \gamma_2 \ldots \gamma_p\}$ that has been modified by the construction. Moreover, since $\alpha_{r, \widehat{j}}$ is variable-simple, we also know that it must be instantiated by $v''_{r, \alpha}$, i.\,e., $v''_{r, \alpha} \in \reflang(\alpha_{r, \widehat{j}})$ and $v''_{r, \alpha}$ does contain a factor $\open{\varsz} g_1 g_2 \ldots g_p \close{\varsz}$, where, for every $\ell \in [p]$, $g_{\ell} \in \reflang(\gamma_i)$. In this case, we have to obtain the witnesses $v_{i, \beta} = v'_{i, \beta} \# v''_{i, \beta}$ from the witnesses $v_{i, \alpha} = v'_{i, \alpha} \# v''_{i, \beta}$ in such a way that the $g_\ell$ parts are handled by the variables $\varsy_{\ell}$. We now discuss these two cases more formally.
\begin{enumerate}
\item[(C1)] [$v''_{r, \alpha} \in \reflang(\alpha_{r, j})$ for some $j \in [t_r] \setminus \{\widehat{j}\}$] \par 
In this case, for every $i \in [m]$, we let $v''_{i, \beta}$ be the ref-word obtained from $v''_{i, \alpha}$ by replacing each variable reference $\varsz$ by $\varsy_1 \varsy_2 \ldots \varsy_p$. From the fact that, for every $i \in [m]$, $v''_{i, \alpha} \in \reflang(\alpha_{r, i})$, and from the assumption made in this case, it follows that also $v''_{i, \beta} \in \reflang(\beta_{r, i})$.\par
By construction, we have $\varprefix{\beta_{r}} = \varprefix{\alpha_{r}}$. Consequently, we can set $v'_{r, \beta} = v'_{r, \alpha}$ and observe that $v_{r, \beta} = v'_{r, \beta} \# v''_{r, \beta} \in \reflang(\interxregex{\beta_r})$. It remains to show that $\deref(v_{r, \beta})$ equals $\deref(v'_{r, \beta}) \# w_r$. To this end, we first note that, for every variable $\varsx$ different from $\varsz$ or any $\varsy_{\ell}$, we have $\varmap{v_{r, \beta}}(\varsx) = \varmap{v_{r, \alpha}}(\varsx)$. Moreover, the variables $\varsy_{\ell}$ have no definition in $v''_{r, \beta}$, which means that $\varmap{v_{r, \beta}}(\varsy_\ell) = \emptyword$, and since $v''_{r, \alpha} \in \reflang(\alpha_{r, j})$ for some $j \in [t_r] \setminus \{\widehat{j}\}$, we also have $\varmap{v_{r, \alpha}}(\varsz) = \emptyword$. Finally, since $v''_{i, \beta}$ is obtained from $v''_{i, \alpha}$ by replacing each variable-reference $\varsz$ by $\varsy_1 \varsy_2 \ldots \varsy_p$ and since $\deref(v_{r, \alpha}) = \deref(v'_{r, \alpha}) \# w_r$, it follows that $\deref(v_{r, \beta}) = \deref(v'_{r, \beta}) \# w_r$.\par
%
It remains to define $v_{i, \beta}$ for every $i \in [m] \setminus \{r\}$. To this end, for every $i \in [m] \setminus \{r\}$, we define $v''_{i, \beta}$ analogously to $v''_{r, \beta}$ done above, i.\,e., we let $v''_{i, \beta}$ be obtained from $v''_{i, \alpha}$ by replacing each variable reference $\varsz$ by $\varsy_1 \varsy_2 \ldots \varsy_p$. As above, we note that, for every $i \in [m]$, $v''_{i, \beta} \in \reflang(\beta_{r, i})$.\par
For every $i \in [m] \setminus \{r\}$, due to the new variables $\varsy_\ell$ with $\ell \in [p]$, it is not the case that $\varprefix{\beta_{i}} = \varprefix{\alpha_{i}}$. Therefore, for every $i \in [m] \setminus \{r\}$, let $v'_{i, \beta}$ be such that $\varmap{v'_{i, \beta}}$ is equal to $\varmap{v'_{i, \alpha}}$ extended by $\varmap{v'_{i, \beta}}(\varsy_1) = \varmap{v'_{i, \beta}}(\varsy_2) = \ldots = \varmap{v'_{i, \beta}}(\varsy_p) = \eword$. We observe that $v_{i, \beta} = v'_{i, \beta} \# v''_{i, \beta} \in \reflang(\interxregex{\beta_i})$ and $\deref(v_{i, \beta}) = \deref(v'_{i, \beta}) \# w_i$, since clearly $\varmap{v_{i, \beta}}(\varsz) = \emptyword$. \par
Finally, we note that all $\varmap{v_{i, \beta}}$ with $i \in [m]$ are identical to $\varmap{v_{i, \alpha}}$ for all variables $\varsx \notin \{\varsy_1, \varsy_2, \ldots, \varsy_p\}$, and that $\varmap{v_{i, \beta}}(\varsy_1) = \varmap{v_{i, \beta}}(\varsy_2) = \ldots = \varmap{v_{i, \beta}}(\varsy_p) = \eword$. 
\item[(C2)] [$v''_{r, \alpha} \in \reflang(\alpha_{r, \widehat{j}}) \setminus (\bigcup_{i \in [t_r] \setminus \{\widehat{j}\}} \reflang(\alpha_{r, i}))$] \par
Since $\alpha_{r, \widehat{j}}$ is valt-free, the assumption made in this case implies that $v''_{r, \alpha}$ must contain a definition $\open{\varsz} g_1 g_2 \ldots g_p \close{\varsz}$, where, for every $\ell \in [p]$, $g_\ell \in \reflang(\gamma_\ell)$. We let $v''_{r, \beta}$ be obtained from $v''_{r, \alpha}$ by replacing $\open{\varsz} g_1 g_2 \ldots g_p \close{\varsz}$ by $\open{\varsy_1} g_1 \close{\varsy_2} \open{\varsy_2} g_2 \close{\varsy_2} \ldots \open{\varsy_p} g_p \close{\varsy_p}$ and all occurrences of $\varsz$ by $\varsy_1 \varsy_2 \ldots \varsy_p$. By construction, $v''_{r, \beta} \in \reflang(\beta_{r, \widehat{j}})$ is satisfied. As in Case~$1$, we have $\varprefix{\beta_{r}} = \varprefix{\alpha_{r}}$ and can therefore set $v'_{r, \beta} = v'_{r, \alpha}$ to get that $v_{r, \beta} = v'_{r, \beta} \# v''_{r, \beta} \in \reflang(\interxregex{\beta_r})$. In order to see that $\deref(v_{r, \beta}) = \deref(v'_{r, \beta}) \# w_r$, we again note that for every variable $\varsx$ different from $\varsz$ or any $\varsy_{\ell}$, we have that $\varmap{v_{r, \beta}}(\varsx) = \varmap{v_{r, \alpha}}(\varsx)$. Moreover, for the variables $\varsy_{\ell}$, we have that $\varmap{v_{r, \beta}}(\varsy_\ell) = u_\ell$, such that $u_1 u_2 \ldots u_p = \varmap{v_{r, \alpha}}(\varsz)$. This follows from the fact that $v''_{r, \beta}$ is obtained from $v''_{r, \alpha}$ by replacing $\open{\varsz} g_1 g_2 \ldots g_p \close{\varsz}$ by $\open{\varsy_1} g_1 \close{\varsy_2} \open{\varsy_2} g_2 \close{\varsy_2} \ldots \open{\varsy_p} g_p \close{\varsy_p}$. This directly implies that $\deref(v_{r, \beta}) = \deref(v'_{r, \beta}) \# w_r$.\par
For every $i \in [m] \setminus \{r\}$, we let $v''_{i, \beta}$ be obtained from $v''_{i, \alpha}$ by replacing each variable-reference $\varsz$ by $\varsy_1 \varsy_2 \ldots \varsy_p$. By construction, it follows that $v''_{i, \beta} \in \reflang(\beta_{i})$. For every $i \in [m] \setminus \{r\}$, $\beta_i$ does not contain a definition for $\varsy_\ell$, which means that $\varprefix{\beta_{i}}$ contains the definition $\varsy_\ell\{\Sigma^*\}$. Thus, for every $i \in [m] \setminus \{r\}$, we can define $v'_{i, \beta}$ such that $\varmap{v'_{i, \beta}}$ is equal to $\varmap{v'_{i, \alpha}}$ up to the following exceptions: for every $\ell \in [p]$, $\varmap{v'_{i, \beta}}(\varsy_\ell) = \varmap{v_{r, \beta}}(\varsy_\ell)$ (recall that $v_{r, \beta}$ contains a definition for each $\varsy_{\ell}$). By construction, it can be easily seen that in fact $v_{i, \beta} = v'_{i, \beta} \# v''_{i, \beta} \in \reflang(\interxregex{\beta_r})$ and $\deref(v_{i, \beta}) = \deref(v'_{i, \beta}) \# w_i$ holds for every $i \in [m] \setminus \{r\}$. Furthermore, we observe that all $v_{i, \beta}$ with $i \in [m]$ are defined in such a way that they have the same variable mapping. 
\end{enumerate}
This shows that every conjunctive match for $\bar{\alpha}$ is also a conjunctive match for $\bar{\beta}$. Moreover, on close inspection it can be noted that the way how we obtained appropriate ref-words $v_{i, \beta}$ from the ref-words $v_{i, \alpha}$ can be reversed in order to prove that every conjunctive match for $\bar{\beta}$ is also a conjunctive match for $\bar{\alpha}$. More precisely, we replace $\varsy_1 \varsy_2 \ldots \varsy_p$ by $\varsz$ and $\open{\varsy_1} g_1 \close{\varsy_2} \open{\varsy_2} g_2 \close{\varsy_2} \ldots \open{\varsy_p} g_p \close{\varsy_p}$ by $\open{\varsz} g_1 g_2 \ldots g_p \close{\varsz}$ instead, and instead of changing the ref-words $v'_{i, \alpha}$ such that the variable mappings are extended for variables $\varsy_\ell$, we change the ref-words $v'_{i, \beta}$ such that the variable mappings are restricted accordingly (in particular, in Case $2$, $v'_{i, \alpha}$ must be chosen such that the image of $\varsz$ equals the concatenation of the images of $\varsy_{\ell}$ as determined by $v'_{r, \alpha}$). \par
Consequently, $\lang(\bar{\alpha}) = \lang(\bar{\beta})$, which conclude the proof of the correctness of the main modification step.\medskip\par
\noindent\textbf{Transforming $\bar{\alpha}$ in normal form:} It remains to show how $\bar{\alpha}$ can be transformed into normal form by applications of the main modification step. \par
The idea is to apply the main modification step to each non-basic variable definition of $\bar{\alpha}$, in order to transform them into basic ones. However, this does not necessarily work if we do not apply the modification steps in a specific order. In order to illustrate the potential problem, let $\varsz\{\gamma\}$ be a variable definition and let there be some other variable definition $\varsx\{\gamma'\}$, such that $\gamma'$ contains a reference of $\varsz$. Then the main modification step will replace $\varsz\{\gamma\}$ by a concatenation of variable definitions, but also the reference of $\varsz$ in $\gamma'$ by a concatenation $\varsy_1 \varsy_2 \ldots \varsy_p$ of variable references. In the case that $\gamma'$ is a single variable references and $p \geq 2$, this will turn a basic variable definition, namely $\varsx\{\varsz\}$, into a non-basic one, namely $\varsx\{\varsy_1 \varsy_2 \ldots \varsy_p\}$. \par
As mentioned above, we avoid this by applying the main modifiation steps to the non-basic variable definitions in a particular order. To this end, we recall the binary relation $\preceq_{\bar{\alpha}}$ over $\varset$ with $\varsx \preceq_{\bar{\alpha}} \varsy$ if $\varsx$ has a reference or a definition in the definition of $\varsy$. In particular, we consider the directed graph $G_{\bar{\alpha}} = (\varset, \{(\varsx, \varsy) \mid \varsx \preceq_{\bar{\alpha}} \varsy\})$. \par
Since $\preceq_{\bar{\alpha}}$ is acyclic, we know that $G_{\bar{\alpha}}$ is a directed acyclic graph (DAG). Moreover, the roots of $G_{\bar{\alpha}}$ (i.\,e., nodes without incoming arcs) correspond to the minimal elements of $\preceq_{\bar{\alpha}}$, which are exactly the variables $\varsx$ whose variable definitions do not contain any references or definitions.
We now apply the main modification step to the non-basic variable definition governed by the structure of $G_{\bar{\alpha}}$.\par
We repeatedly choose some root $\varsx$ of (the current version of) $G_{\bar{\alpha}}$, perform a modification on $\bar{\alpha}$ (which might be the identity) and then we delete this root $\varsx$. Since $G_{\bar{\alpha}}$ is a DAG, this step either deletes the last node of $G_{\bar{\alpha}}$, which terminates the procedure, or after its application there is at least one other root in $G_{\bar{\alpha}}$. Consequently, this procedure terminates after $|\varset|$ steps. A single step works as follows. \par
If the definition $\varsx\{\gamma\}$ of the root $\varsx$ is basic, i.\,e., $\gamma$ is a classical regular expression or a single variable reference, then we will just remove the root $\varsx$ from $G_{\bar{\alpha}}$ without modifying $\bar{\alpha}$. If, on the other hand, $\varsx\{\gamma\}$ is not basic, then we apply the main modification step to $\varsx\{\gamma\}$ (which modifies $\bar{\alpha}$), and then we remove the root $\varsx$ from $G_{\bar{\alpha}}$. Note that in this procedure, $\varsx\{\gamma\}$ always refers to the definition of $\varsx$ in the current version of $\bar{\alpha}$, which may have been changed by applications of the main modification steps, i.\,e., variable references in the original definition of $\varsx$ might have been replaced by concatenations of new variable references.\par
Let $\bar{\beta} = (\beta_1, \beta_2, \ldots, \beta_m)$ be the conjunctive xregex obtained by this procedure. Due to the correctness of the main modification step, we know that $\lang(\bar{\alpha}) = \lang(\bar{\beta})$, and that, for every $i \in [m]$, $\beta_i$ is still an alternation of variable-simple xregex. In order to conclude that $\bar{\beta}$ is in normal form, it only remains to show that $\overline{\beta}$ does not contain non-basic variable definitions. To this end, we first state some observations about the procedure described above:
\begin{enumerate}
\item\label{modStepApplCorrPone} References of new variables that are introduced by the procedure will never be replaced anymore. References of variables from $\varset$ that have already been considered by the procedure, but are not removed by an application of the main modification step, will also never be replaced anymore
\item\label{modStepApplCorrPtwo} If the definition of a variable $\varsx \in \varset$ is initially non-basic, then this variable will necessarily be removed by the procedure. This is due to the fact that the procedure considers each variable from $\varset$, and that other applications of the main modification step with respect to some variables different from $\varsx$ cannot transform the definition of $\varsx$ into a basic one.
\item\label{modStepApplCorrPthree} The definition of a variable $\varsx\{\gamma\}$, where $\gamma$ is a classical regular expression, is never changed by the procedure. This holds for the case that $\varsx \in \varset$ and $\varsx\{\gamma\}$ is its initial definition in $\bar{\alpha}$ before the procedure, and for the case that $\varsx$ is a new variable and $\varsx\{\gamma\}$ is created by the application of the main modification step. 
\item\label{modStepApplCorrPfour} Whenever a variable definition $\varsx\{\gamma\}$ is changed in such a way that references in $\gamma$ are replaced by concatenations of references, then $\varsx \in \varset$ and $\varsx$ has not yet been considered by the procedure. This can be seen as follows. If $\varsx \notin \varset$, i.\,e., $\varsx$ is a new variable, then there was some application of the main modification step with respect to some $\varsz$ that has created variable $\varsx$ with an initial definition $\varsx\{\gamma'\}$. By definition of the main modification step, this means that $\gamma'$ is either a classical regular expression, or a single variable reference. In the first case, due to Point~\ref{modStepApplCorrPthree}, $\varsx\{\gamma'\}$ cannot be changed anymore, which contradicts our assumption. Thus, $\gamma' = \varsy$. If $\varsy$ is a new variable, then $\varsy$ is not replaced, due to Point~\ref{modStepApplCorrPone}. If $\varsy \in \varset$, then this means that $\varsy$ has a reference in the definition of the variable $\varsz$, which means that it must already have been considered when the modification step is carried out with respect to $\varsz$. Hence, due to Point~\ref{modStepApplCorrPone}, it will not be replaced anymore. Therefore, we can assume that $\varsx \in \varset$.\par
If $\varsx$ has already been considered by the procedure, then, since it has not been removed, it initially had a basic definition $\varsx\{\gamma'\}$. If $\gamma'$ is a regular expression, then, due to Point~\ref{modStepApplCorrPthree}, it cannot be changed, which is a contradiction. Therefore, $\gamma' = \varsy$, and in the same way as above, we can conclude that $\varsy$ cannot be a new variable. This means that $\varsy$ has already been considered by the procedure and has not been removed, so, due to Point~\ref{modStepApplCorrPone}, this reference will not be changed anymore. This is a contradiction.
%
%
\end{enumerate}
Now let us assume that $\bar{\beta}$ contains a non-basic variable definition $\varsx\{\gamma\}$, and let us first consider the case that $\varsx \in \varset$. Due to Point~\ref{modStepApplCorrPtwo} from above, the original definition of $\varsx$ in $\bar{\alpha}$ must be basic, and due to Point~\ref{modStepApplCorrPthree}, it must have the form $\varsx\{\varsz\}$. If, in the process of the procedure, the definition of $\varsx$ is changed in such a way that references are replaced by concatenations of references (i.\,e., it is changed into a non-basic definition), then, due to Point~\ref{modStepApplCorrPfour}, it will necessarily be considered by the procedure and the main modification step is applied to it, which removes it. This is a contradiction.\par
Next, let us assume that $\varsx \notin \varset$, so $\varsx$ is introduced in some application of the main modification step. This means that initially its definition is basic and has the form $\varsx\{\varsz\}$. By assumption, its definition is not basic after termination of the procedure, thus, it must be changes such that $\varsz$ is replaced by a concatenation of references. Due to Point~\ref{modStepApplCorrPthree}, this means that $\varsx \in \varset$, which is a contradiction.\par
This shows that after the procedure terminates, there are no non-basic variable definitions and therefore $\bar{\beta}$ is in normal form. We next estimate the size of $\bar{\beta}$, and the time and space required to construct it. \par
In the procedure that transforms $\bar{\alpha}$ into $\bar{\beta}$, there are $k \leq |\varset|$ applications of the main modification step. For every $i \in [k] \cup \{0\}$, let $\bar{\alpha}^{(i)}$ be the version of $\bar{\alpha}$ after the $i^{\text{th}}$ application of the main modification step in the procedure that transforms $\bar{\alpha}$ into $\bar{\beta}$. In particular, $\bar{\alpha}^{(0)} = \bar{\alpha}$ and $\bar{\alpha}^{(k)} = \bar{\beta}$. For every $\varsx \in \varset$, let $k_{\varsx}$ be the number of references of $\varsx$ in $\bar{\alpha}$. We claim that, for every $i \in [k] \cup \{0\}$, $|\bar{\alpha}^{(i)}| = \bigO(|\bar{\alpha}|^{i+1})$. For $i = 0$, this obviously holds. Now let $i \in [k-1] \cup \{0\}$ and assume that $|\bar{\alpha}^{(i)}| = \bigO(|\bar{\alpha}|^{i+1})$. Moreover, let the $(i+1)^{\text{th}}$ application of the main modification step apply to variable $\varsx$. This means that $\bar{\alpha}^{(i + 1)}$ is obtained in the $(i + 1)^{\text{th}}$ application of the main modification step by replacing $k_{\varsx}$ symbols in $\bar{\alpha}^{(i)}$ by at most $|\bar{\alpha}^{(i)}|$ symbols. Hence, $|\bar{\alpha}^{(i + 1)}| = \bigO(k_{\varsx} |\bar{\alpha}^{(i)}|) = \bigO(|\bar{\alpha}| |\bar{\alpha}^{(i)}| = \bigO(|\bar{\alpha}| |\bar{\alpha}|^{i+1}) = \bigO(|\bar{\alpha}|^{i+2})$. Consequently, $|\bar{\beta}| = \bigO(|\bar{\alpha}|^{k + 1}) = \bigO(|\bar{\alpha}|^{|\varset| + 1})$. \par
It can be easily verified that $\bar{\beta}$ can also be computed in space $\bigO(|\bar{\beta}|)$ and in time $\bigOstar(|\bar{\beta}|)$

\end{proof}

We are now sufficiently prepared to give a proof of the upper bound claimed in Theorem~\ref{vsfCXRPQEvalUpperBoundTheorem} in the next subsection.

\subsection{Proof of Theorem~\ref{vsfCXRPQEvalUpperBoundTheorem}}\label{sec:sketchUpperBound}

%
%

Theorem~\ref{vsfCXRPQEvalUpperBoundTheorem} is a consequence from the following lemma.

\begin{lemma}\label{vsfCXRPQEvalUpperBoundTheoremTwo}
Given a Boolean $q \in \vsfCXRPQ$ and a graph database $\DBD$, we can nondeterministically check whether $\DBD \models q$ in space $\bigO(2^{\poly(|q|)} \log(|\DBD|))$.
\end{lemma}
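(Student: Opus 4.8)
The plan is to combine the normal-form construction of Subsection~\ref{sec:normalForm} with the evaluation algorithm for simple $\CXRPQ$ (Lemma~\ref{simpleCXRPQLemma}), while being careful about space: the \emph{full} normal form of the conjunctive xregex of $q$ has double-exponential size, so rather than building it explicitly we handle the (single-exponential) blow-up of Step~$1$ by nondeterministic guessing.

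Let $\bar{\alpha} = (\alpha_1, \ldots, \alpha_m)$ be the conjunctive xregex of $q$. \emph{Step~$1$, resolved nondeterministically:} for every $i \in [m]$, guess a branch at every $\altop$-node of $\alpha_i$ that lies above a variable definition or reference; this yields a variable-simple xregex $\beta_i$ with $|\beta_i| \leq |\alpha_i|$, namely one disjunct of the alternation constructed in Lemma~\ref{conjunctiveNormalFormLemmaStepOne}. The tuple $\bar{\beta} = (\beta_1, \ldots, \beta_m)$ is again a conjunctive xregex (sequentiality and acyclicity are inherited from $\bar{\alpha}$), it is variable-simple, and $|\bar{\beta}| = \bigO(|q|)$. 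As in the correctness part of Lemma~\ref{conjunctiveNormalFormLemmaStepOne}, $\lang(\bar{\alpha})$ is the union of the languages $\lang(\bar{\beta})$ taken over all guesses, so, directly from the definition of a matching morphism for a $\CXRPQ$, $\DBD \models q$ holds if and only if for some guess $\DBD \models q_{\bar{\beta}}$, where $q_{\bar{\beta}}$ denotes $q$ with its edge labels replaced according to $\bar{\beta}$.

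\emph{Steps~$2$ and~$3$, deterministic:} applied to the fixed guess $\bar{\beta}$, Step~$2$ (Lemma~\ref{uniqueDefinitionsLemma}) produces an equivalent $\bar{\beta}'$ of size $\bigO(|q|^2)$ in which every variable has at most one definition, and (crucially, since $\bar{\beta}$ has only $\bigO(|q|)$ variables) $\bar{\beta}'$ has only some $k = \bigO(|q|)$ variables. Step~$3$ (Lemma~\ref{removeNonBasicLemma}) then yields an equivalent $\bar{\gamma}$ in normal form of size $\bigO(|\bar{\beta}'|^{k+1}) = 2^{\poly(|q|)}$; by Proposition~\ref{xregexDetermineGraphQueryProposition}, $q_{\bar{\beta}} \equiv q_{\bar{\gamma}}$. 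The point of resolving Step~$1$ up front is exactly that the variable count stays polynomial: had we carried out Step~$1$ explicitly, a single variable could acquire exponentially many definitions, Step~$2$ would create exponentially many variables, and Step~$3$ would blow up doubly exponentially, which is why the unrestricted normal form is double-exponential in general. Since each component $\bar{\gamma}[i]$ of a normal form is an alternation of \emph{simple} xregex, I would now resolve these alternations nondeterministically in exactly the same way as in Step~$1$, guessing one simple disjunct $\delta_i$ per component and obtaining a simple conjunctive xregex $\bar{\delta}$ with $|\bar{\delta}| \leq |\bar{\gamma}| = 2^{\poly(|q|)}$ and $\lang(\bar{\gamma}) = \bigcup \lang(\bar{\delta})$ over all guesses. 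Finally, run the algorithm of Lemma~\ref{simpleCXRPQLemma} on the simple $\CXRPQ$ $q_{\bar{\delta}}$ and $\DBD$, which accepts nondeterministically in space $\bigO(|q_{\bar{\delta}}| \log(|\DBD|)) = \bigO(2^{\poly(|q|)} \log(|\DBD|))$. Composing the finitely many nondeterministic phases gives a single nondeterministic procedure; its space is dominated by storing $\bar{\gamma}$ and $\bar{\delta}$ (each of size $2^{\poly(|q|)}$) and by the final call to Lemma~\ref{simpleCXRPQLemma}, hence $\bigO(2^{\poly(|q|)} \log(|\DBD|))$ in total. Theorem~\ref{vsfCXRPQEvalUpperBoundTheorem} is then immediate: by Savitch's theorem the bound gives $\expspaceclass$ in combined complexity, and for a fixed $q$ it reads $\bigO(\log(|\DBD|))$, i.\,e., $\nlclass$ in data complexity.

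The step I expect to be the main obstacle is proving the two nondeterministic resolution phases correct, and in particular handling \emph{undefined} variables properly. When a component $\alpha_i$ (respectively $\bar{\gamma}[i]$) is replaced by one of its disjuncts, a variable $\varsx$ whose definition lay in a discarded branch must end up with image $\eword$; this comes out correctly only if, in forming $\interxregex{\beta_i}$, one keeps the variable prefix $\varprefix{\alpha_i}$ of the \emph{unresolved} component rather than recomputing it for $\beta_i$, so that $\varsx$ is not handed a free dummy definition $\varsx\{\Sigma^*\}$ in the very component where its real definition was dropped; the requirement that all components of a conjunctive match share one variable mapping then forces $\varsx = \eword$ everywhere, exactly as $\interxregex{\alpha_i}$ prescribes. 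With this convention in place, the remaining points — that each guessed tuple is a genuine sequential, variable-acyclic conjunctive xregex, and that the union of the resolved languages equals the unresolved one — follow by essentially the same arguments as in Lemmas~\ref{conjunctiveNormalFormLemmaStepOne} and~\ref{refLangStableChangesLemma}.
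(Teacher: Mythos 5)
Your proposal follows the paper's proof essentially step for step: resolve the alternations of Step~$1$ by nondeterministic guessing (so that only a single-exponential blow-up remains), apply Steps~$2$ and~$3$ deterministically to the guessed variable-simple tuple, and finish with Lemma~\ref{simpleCXRPQLemma}; your explanation of why the variable count, and hence the exponent in Step~$3$, stays polynomial is exactly the paper's reason for handling Step~$1$ nondeterministically. Two remarks. First, your final guessing phase is redundant: once each component is a single variable-simple disjunct, Steps~$2$ and~$3$ only replace definitions and references by concatenations of definitions and references and never reintroduce a top-level alternation, so $\bar{\gamma}$ is not merely in normal form but already simple, and Lemma~\ref{simpleCXRPQLemma} applies directly. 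Second, the subtlety you flag about discarded definitions is real and is treated \emph{more} carefully in your proposal than in the paper, whose proof only argues the completeness direction (``if $\DBD \models q$ then some guess accepts''): if a definition of $\varsx$ lies only in a discarded branch of $\alpha_i$, then under the standard semantics $\varprefix{\beta_i}$ would grant $\varsx$ a dummy definition $\varsx\{\Sigma^*\}$, and the guessed query could admit matches with $\varsx \neq \eword$ that the original forbids (e.g.\ $\bar{\alpha} = (\varsx\{\ta\} \altop \tb,\ \varsx \tc)$ with the guess $\tb$ for the first component). Your convention of retaining $\varprefix{\alpha_i}$ repairs this; if you prefer not to alter the definition of $\interxregex{\cdot}$, an equivalent syntactic fix is to prepend $\varsx\{\eword\}$ to $\beta_i$ for every such $\varsx$, which forces the empty image while keeping all subsequent lemmas applicable verbatim.
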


\begin{proof} 
Let $q \in \vsfCXRPQ$ be Boolean and represented by the graph pattern $G_q$ with $E_{\DBD} = \{(x_i, \alpha_i, y_i) \mid i \in [m]\}$, and let $\DBD$ be a graph-database. We define a nondeterministic procedure to check whether or not $\DBD \models q$. First, for every $i \in [m]$, we transform $\alpha_i$ as follows. As long as $\alpha_i$ is not valt-free, we choose some subexpression $(\gamma_1 \altop \gamma_2)$ where $\gamma_1$ or $\gamma_2$ contains a variable definition or a variable reference, and we nondeterministically replace it by $\gamma_1$ or $\gamma_2$. In this way, we obtain a variable-simple conjunctive xregex $\bar{\alpha}' = (\alpha'_1, \alpha'_2, \ldots, \alpha'_m)$. Moreover, this can clearly be done in space $\bigO(|q|)$, and, for every $\bar{w} \in (\Sigma^*)^m$, if $\bar{w} \in \lang(\bar{\alpha})$, then it is possible to perform the nondeterministic guesses in such a way that also $\bar{w} \in \lang(\bar{\alpha}')$. In order to see this, it is sufficient to observe that this nondeterministic construction is similar to how vstar-free conjunctive xregex are made variable-simple in the proof of Lemma~\ref{conjunctiveNormalFormLemmaStepOne}; the difference is that we cannot afford to explicitly store all possibilities of resolving alternations with variable references or definitions, so they need to be nondeterministically guessed. In particular, this implies that if $\DBD \models q$, then it is possible to perform the nondeterministic choices such that $\DBD \models q'$, where $q'$ is the $\vsfCXRPQ$ obtained from $q$ by replacing each $\alpha_i$ by $\alpha'_i$ (note that this uses Proposition~\ref{xregexDetermineGraphQueryProposition}). \par
Next, we transform $\bar{\alpha}'$ into $\bar{\alpha}''$ according to the constructions used in the proof of Lemma~\ref{uniqueDefinitionsLemma} and we note that $|\bar{\alpha}''| = \bigO(|\bar{\alpha}'|^2)$. Then, we transform $\bar{\alpha}''$ into $\bar{\beta} = (\beta_1, \beta_2, \ldots, \beta_m)$ according to the construction used in the proof of Lemmas~\ref{removeNonBasicLemma} and we note that $|\bar{\beta}| = \bigO(|\bar{\alpha}''|^{|\bar{\alpha}''| + 1}) = \bigO((|\bar{\alpha}'|^2)^{|\bar{\alpha}'|^2 + 1}) = \bigO(2^{\poly(|\bar{\alpha}|)})$. By $q''$ we denote the query obtained from $q'$ by replacing each $\alpha'_i$ by $\beta_i$. We note that $|q''| = \bigO(2^{\poly(|q|)})$ and, according to Lemmas~\ref{uniqueDefinitionsLemma}~and~\ref{removeNonBasicLemma}, $q' \equiv q''$. Thus, if $\DBD \models q$, then it is also possible to perform the initial nondeterministic guesses in such a way that also $\DBD \models q''$. \par
Next, we use Lemma~\ref{simpleCXRPQLemma} to nondeterministically decide whether $\DBD \models q''$ in space $$\bigO(|q''| \log(|\DBD|) + |q''|\log(|q''|)) = 
\bigO(2^{\poly(|q|)} \log(|\DBD|))\,.$$
%
%
Thus, this whole procedure decides nondeterministically whether $\DBD \models q$ in space $\bigO(2^{\poly(|q|)} \log(|\DBD|))$.
\end{proof}

\subsection{$\pspaceclass$ Combined-Complexity}\label{sec:pseudobasic}

Step $3$ of the normal form construction (Lemma~\ref{removeNonBasicLemma}) works for $\bar{\alpha}$, where, for every $i \in [m]$, $\bar{\alpha}[i]$ is an alternation of variable-simple xregex and every $\varsx \in \varset$ has at most one definition. However, in the proof of Lemma~\ref{vsfCXRPQEvalUpperBoundTheoremTwo}, when we apply Step $3$, we can make the much stronger assumption that $\bar{\alpha}$ is even variable-simple and every $\varsx \in \varset$ has at most one definition in $\bar{\alpha}$.
Unfortunately, as illustrated by the following example, this does not make any difference with respect to the possible exponential size blow-up caused by Step $3$.
\begin{equation*}
\alpha = \varsx_1\{\ta\} \varsx_2\{\varsx_1\varsx_1\} \varsx_3\{\varsx_2\varsx_2\} \varsx_4\{\varsx_3\varsx_3\} \ldots \varsx_n\{\varsx_{n-1}\varsx_{n-1}\}
\end{equation*}
is a conjunctive xregex that is variable-simple and every variable has at most one definition. However, the procedure of Lemma~\ref{removeNonBasicLemma} will apply the main modification step with respect to all variables $\varsx_2, \varsx_3, \ldots, \varsx_n$ in this order (note that $G_{\bar{\alpha}}$ is a path), which replaces each references of $x_2$ by $2$ variables, each references of $x_4$ by $4$ variables, each references of $x_5$ by $8$ variables, and so on. 
%
For example, the first application of the modification step will produce 
\begin{equation*}
\varsx_1\{\ta\} \varsu_1\{\varsx_1\}\varsu_2\{\varsx_1\} \varsx_3\{(\varsu_1\varsu_2)^2\} \varsx_4\{\varsx_3\varsx_3\} \ldots \varsx_n\{\varsx_{n-1}\varsx_{n-1}\}
\end{equation*}
The crucial point seems to be non-basic definitions for variables with references in other non-basic definitions. If we restrict vstar-free conjunctive xregex accordingly, then the exponential size blow-up does not occur in Step $3$ of the normal form construction.\par
A variable $\varsx \in \varset$ is \emph{flat} (in some $\bar{\alpha} \in \conxregex_{\Sigma, \varset}$) if its definition is basic, or it has no reference in any other definition. For example, let $\alpha_1 = \varsu \tb^* \varsx\{ \varsy\{\ta^*\} (\ta \altop \tb)^* \varsz \varsy\}$ and $\alpha_2 = \varsu\{\tc \tb \varsz\{\ta^* (\tb \altop \tc \ta)\}\} \ta \varsx$, then in $\bar{\alpha} = (\alpha_1, \alpha_2)$ every variable is flat. Finally, let $\vsfpbCXRPQ$ be the class of vstar-free $\CXRPQ$ with only flat variables.

We next show that if a conjunctive xregex satisfies the preconditions of Lemma~\ref{removeNonBasicLemma} \emph{and} only has flat variables, then Step $3$ of the normal form construction does not cause an exponential size blow-up.

\begin{lemma}
Let $\bar{\alpha}\in \dimconxregex{m}_{\Sigma, \varset}$ such that, for every $i \in [m]$, $\bar{\alpha}[i]$ is an alternation of variable-simple xregex, every $\varsx \in \varset$ has at most one definition in $\bar{\alpha}$, and all variables are flat. Then there is an equivalent $\bar{\beta} \in \dimconxregex{m}_{\Sigma, \varset'}$ in normal form with $|\bar{\beta}| = \bigO(|\bar{\alpha}|^{2})$. 
\end{lemma}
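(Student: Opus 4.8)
The plan is to re-run the normal-form construction from the proof of Lemma~\ref{removeNonBasicLemma} without any change: since $\bar{\alpha}$ already satisfies the hypotheses of that lemma, it produces an equivalent $\bar{\beta} \in \dimconxregex{m}_{\Sigma, \varset'}$ in normal form, and the entire correctness argument ($\lang(\bar{\alpha}) = \lang(\bar{\beta})$, and each $\bar{\beta}[i]$ being an alternation of simple xregex) carries over verbatim. The only thing that needs a fresh argument is the size bound, which we sharpen from $\bigO(|\bar{\alpha}|^{|\varset|+1})$ to $\bigO(|\bar{\alpha}|^2)$ by exploiting flatness. Call a variable \emph{fat} if its definition in $\bar{\alpha}$ is non-basic. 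By flatness, a fat variable has no reference inside any variable definition, so all its references sit at top-level positions of $\bar{\alpha}$ (i.e., not inside any definition). The rough idea is that flatness breaks exactly the multiplicative cascade exhibited by the worst-case example preceding this lemma (there $\varsx_i$ has a non-basic definition \emph{and} is referenced inside the non-basic definition of $\varsx_{i+1}$), so reference-replacements can no longer compound.

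First I would record a few invariants of the procedure of Lemma~\ref{removeNonBasicLemma}, all immediate from the description of the main modification step: (i) the main modification step is applied precisely to the fat variables, since every originally-basic definition stays basic, every variable freshly introduced by the step has a basic definition, and every originally-non-basic definition stays non-basic until its variable is removed; (ii) when the step is applied to a fat $\varsz$, the reference-replacement $\varsz \mapsto \varsy_1 \cdots \varsy_p$ lengthens no \emph{other} variable definition, because $\varsz$ has no reference inside any definition; (iii) the step creates new references only for variables with basic definitions — namely the freshly introduced wrapping variables, and the variables defined by the direct nested sub-definitions of $\varsz$'s current body, which (by the topological processing order on $G_{\bar{\alpha}}$) have necessarily already been processed, hence, if they had been fat, would already have been rewritten away; so they are basic and were only deleted from the DAG. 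In particular no fat variable ever acquires a new reference, so for every fat $\varsz$ the number $r_{\varsz}$ of its references at the moment it is processed equals its number of references in $\bar{\alpha}$, whence $\sum_{\varsz \text{ fat}} r_{\varsz} \le |\bar{\alpha}|$.

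Next I would bound $p_{\varsz}$, the number of top-level pieces of $\varsz$'s definition body at the moment $\varsz$ is processed. The body of a fat $\varsz$ is never altered by reference-replacement (references inside it are of basic variables, which are never rewritten); it changes only by replacing fat nested sub-definitions by their already-computed expansions. Since the bodies of the fat variables nested directly inside $\varsz$'s body are pairwise disjoint sub-expressions of the original body $\gamma_{\varsz}$, an induction on nesting depth gives that the body of $\varsz$ at processing time has size $\bigO(|\gamma_{\varsz}|) = \bigO(|\bar{\alpha}|)$, hence $p_{\varsz} = \bigO(|\bar{\alpha}|)$. Combining everything, $|\bar{\beta}|$ is at most $|\bar{\alpha}|$, plus the $\bigO(p_{\varsz})$ wrapping symbols inserted when each fat $\varsz$ is processed, plus the $r_{\varsz} \, p_{\varsz}$ symbols inserted by reference-replacement for each fat $\varsz$; using $p_{\varsz} = \bigO(|\bar{\alpha}|)$ and $\sum_{\varsz \text{ fat}} r_{\varsz} \le |\bar{\alpha}|$, this totals $\bigO(|\bar{\alpha}|) + \bigO(|\bar{\alpha}|^2) + \bigO(|\bar{\alpha}|)\cdot\bigO(|\bar{\alpha}|) = \bigO(|\bar{\alpha}|^2)$, as claimed.

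The main obstacle is the bookkeeping in the two middle steps: rigorously establishing that no fat variable ever accumulates references (so reference-replacements do not chain) and the induction bounding $p_{\varsz}$ by using disjointness of nested bodies. Everything surrounding this — validity of the resulting normal form, the equivalence of $\bar{\alpha}$ and $\bar{\beta}$, and the fact that the procedure terminates — is inherited directly from Lemma~\ref{removeNonBasicLemma} and need not be repeated.
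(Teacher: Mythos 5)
Your proposal is correct and follows essentially the same route as the paper: rerun the procedure of Lemma~\ref{removeNonBasicLemma} unchanged, inherit correctness, and observe that flatness ensures a non-basic variable has no reference inside any other definition, so the main modification step never enlarges other definitions and reference-replacements cannot compound, yielding the quadratic bound. Your accounting (tracking that fat variables acquire no new references and bounding the body size $p_{\varsz}$ at processing time) is a more careful version of the paper's terser argument that every symbol of $\bar{\alpha}$ is replaced by at most $|\bar{\alpha}|$ symbols.
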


\begin{proof}
Since the conjunctive xregex $\bar{\alpha}$ satisfies the conditions of the statement of Lemma~\ref{removeNonBasicLemma}, we can transform $\bar{\alpha}$ to $\bar{\beta}$ in exactly the same way as in the proof of Lemma~\ref{removeNonBasicLemma}. In order to conclude the proof, we only have to show that if all variables of $\bar{\alpha}$ are flat, then $|\bar{\beta}| = \bigO(|\bar{\alpha}|^{2})$. \par
Let $\varsx \in \varset$. If the definition of $\varsx$ is basic, then it is not changed by the procedure that changes $\bar{\alpha}$ to $\bar{\beta}$. If the definition of $\varsx$ is not basic, then, by assumption, it has no reference in any other variable definition. Thus, when the main modification step is applied to $\varsx$, then the size of all other variable definitions does not increase. Consequently, in the procedure that changes $\bar{\alpha}$ to $\bar{\beta}$, every reference of $\varsx$ is not changed if the definition of $\varsx$ is basic, and it is replaced by at most $|\bar{\alpha}|$ other variable references, if it is not basic. This implies that, in the worst case, every symbol of $\bar{\alpha}$ can only be replaced by at most $|\bar{\alpha}|$ symbols, which means that $|\bar{\beta}| = \bigO(|\bar{\alpha}|^2)$.
\end{proof}

With this lemma, and the observation that Steps~$1$~and~$2$ of the normal form construction preserve flat variables, we can now show that the space upper bound for Boolean evaluation of vstar-free conjunctive xregex path queries is polynomial.

\begin{lemma}\label{pseudoBasicLemma}
Given a Boolean $q \in \vsfCXRPQ$ with only flat variables and a graph database $\DBD$, we can nondeterministically check whether $\DBD \models q$ in space $\bigO(\poly(|q|) \log(|\DBD|))$.
\end{lemma}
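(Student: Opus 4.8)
The plan is to re-run the argument of Lemma~\ref{vsfCXRPQEvalUpperBoundTheoremTwo} almost verbatim, except that in Step~$3$ we invoke the preceding lemma (the flat-variable version of Step~$3$) instead of Lemma~\ref{removeNonBasicLemma}, so that the resulting normal form stays of polynomial size; the only genuinely new bookkeeping is that flatness survives Steps~$1$ and~$2$.

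First, given a Boolean $q \in \vsfCXRPQ$ all of whose variables are flat, we carry out the nondeterministic variant of Step~$1$ used in the proof of Lemma~\ref{vsfCXRPQEvalUpperBoundTheoremTwo}: for each edge label $\alpha_i$, while $\alpha_i$ is not valt-free we pick a subexpression $(\gamma_1 \altop \gamma_2)$ that contains a variable definition or reference and nondeterministically replace it by $\gamma_1$ or $\gamma_2$. This uses space $\bigO(|q|)$ and produces a variable-simple conjunctive xregex $\bar{\alpha}'$; moreover, for every $\bar{w} \in \lang(\bar{\alpha})$ the guesses can be made so that $\bar{w} \in \lang(\bar{\alpha}')$, so by Proposition~\ref{xregexDetermineGraphQueryProposition} (exactly as in the proof of Lemma~\ref{vsfCXRPQEvalUpperBoundTheoremTwo}) if $\DBD \models q$ then the guesses can be made so that $\DBD \models q'$, where $q'$ is the resulting query. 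The extra point to record is that this step only ever deletes variable definitions and references; hence it neither turns a basic definition into a non-basic one nor creates a reference of a variable inside a foreign definition, so \emph{flatness is preserved}, i.e.\ every variable of $q'$ is still flat.

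Next we apply the construction of Lemma~\ref{uniqueDefinitionsLemma} (Step~$2$) to $\bar{\alpha}'$, obtaining an equivalent $\bar{\alpha}''$ with a unique definition per variable and $|\bar{\alpha}''| = \bigO(|\bar{\alpha}'|^2) = \bigO(|q|^2)$. Step~$2$ only relabels a variable $\varsx$ into fresh copies $\varsx^{(j)}$ whose bodies are the former bodies of $\varsx$ and replaces each reference of $\varsx$ by a concatenation of references of those copies; so a basic definition stays basic, and a variable with no reference in a foreign definition keeps that property, hence flatness is again preserved. Thus $\bar{\alpha}''$ satisfies exactly the hypotheses of the preceding lemma (each component an alternation of variable-simple xregex, unique definitions, all variables flat), which yields an equivalent $\bar{\beta}$ in normal form with $|\bar{\beta}| = \bigO(|\bar{\alpha}''|^2) = \bigO(|q|^4)$. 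Let $q''$ be the corresponding query; then $|q''| = \poly(|q|)$ and, by Lemma~\ref{uniqueDefinitionsLemma} and the preceding lemma, $q' \equiv q''$. Since Step~$1$ already collapsed each component to a \emph{single} variable-simple xregex and Steps~$2$ and~$3$ introduce no new top-level alternation, each component of $\bar{\beta}$ is a single simple xregex, i.e.\ $q''$ is simple.

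Finally we invoke Lemma~\ref{simpleCXRPQLemma} on $q''$, deciding $\DBD \models q''$ nondeterministically in space $\bigO(|q''|\log(|\DBD|)) = \bigO(\poly(|q|)\log(|\DBD|))$. Correctness is as in Lemma~\ref{vsfCXRPQEvalUpperBoundTheoremTwo}: if $\DBD \models q$ then the Step~$1$ guesses can be made so that $\DBD \models q''$, and any accepting run of the whole procedure exhibits a conjunctive match witnessing $\DBD \models q$. I do not expect a real obstacle: everything beyond citing the earlier results is the routine verification that flatness survives Steps~$1$ and~$2$ and that the output of Step~$3$ meets the hypotheses of Lemma~\ref{simpleCXRPQLemma}; all the substance sits in the polynomial-size bound of the preceding lemma.
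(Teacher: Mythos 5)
Your proposal is correct and follows essentially the same route as the paper: re-run the argument of Lemma~\ref{vsfCXRPQEvalUpperBoundTheoremTwo}, observe that flatness survives the nondeterministic Step~$1$ and the construction of Lemma~\ref{uniqueDefinitionsLemma}, and then use the flat-variable version of Step~$3$ to get a polynomial-size simple query to feed into Lemma~\ref{simpleCXRPQLemma}. Your verification that flatness is preserved (and that the resulting query is simple, not merely in normal form) is in fact spelled out more carefully than in the paper's own proof.
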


\begin{proof}
We can proceed analogously to the proof of Lemma~\ref{vsfCXRPQEvalUpperBoundTheoremTwo}. We only have to note that the initial nondeterministic transformation as well as the procedure of Lemma~\ref{uniqueDefinitionsLemma} preserves the property of having only flat variables. Consequently, the application of the procedure Lemma~\ref{removeNonBasicLemma} yields a $\CXRPQ$ in normal form that is of size polynomial in the initial query. 
\end{proof}

Finally, let us summarise all these observations more concisely as follows.

\begin{theorem}\label{pseudoBasicTheorem}
$\vsfpbCXRPQ$-$\booleProb$ is $\pspaceclass$-complete in combined complexity.
\end{theorem}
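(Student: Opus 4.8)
The plan is to prove Theorem~\ref{pseudoBasicTheorem} by establishing the upper bound and the lower bound separately. The upper bound, $\vsfpbCXRPQ$-$\booleProb \in \pspaceclass$ in combined complexity, follows essentially for free from Lemma~\ref{pseudoBasicLemma}: given a Boolean $q \in \vsfpbCXRPQ$ and a graph database $\DBD$, Lemma~\ref{pseudoBasicLemma} gives a nondeterministic procedure that decides $\DBD \models q$ in space $\bigO(\poly(|q|)\log(|\DBD|))$. Since we assume $|q| = \bigO(|\DBD|)$ throughout, this is $\bigO(\poly(|\DBD|))$ space, hence the problem is in $\npspaceclass$, and by Savitch's theorem in $\pspaceclass$. (If $\pspaceclass$ is intended with respect to $|q|$ and $|\DBD|$ jointly, the same bound $\bigO(\poly(|q|)\log(|\DBD|)) \subseteq \bigO(\poly(|q| + |\DBD|))$ works directly.) So the first, and easy, step is just to invoke Lemma~\ref{pseudoBasicLemma} and Savitch.

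The more substantial direction is $\pspaceclass$-hardness of $\vsfpbCXRPQ$-$\booleProb$ in combined complexity. The natural route is to reuse the construction from the proof of Theorem~\ref{vsfCXRPQEvalLowerBoundTheorem}: that proof reduces the $\pspaceclass$-complete $\NFA$-intersection problem to $\vsfCXRPQ$-$\booleProb$ via the single-edge query $(x, \NFAintQueryk, y)$ with $\NFAintQueryk = \# \varsz\{(\ta \altop \tb)^*\} \,(\#\# \, \varsz)^{k-1} \#\#\#$ and a graph database built from $M_1, \ldots, M_k$. I would check that this query already lies in $\vsfpbCXRPQ$. It is vstar-free (the variable $\varsz$ never appears under a $+$, only the explicit $(k-1)$-fold concatenation), and it has exactly one variable, namely $\varsz$, whose definition is $\varsz\{(\ta \altop \tb)^*\}$ --- a basic definition (a classical regular expression). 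By the definition of flatness, a variable with a basic definition is automatically flat. Hence $\varsz$ is flat and the query is in $\vsfpbCXRPQ$. Since the reduction from Theorem~\ref{vsfCXRPQEvalLowerBoundTheorem} is a polynomial-time many-one reduction, $\vsfpbCXRPQ$-$\booleProb$ is $\pspaceclass$-hard in combined complexity.

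Putting the two halves together gives $\pspaceclass$-completeness. I would also remark that the lower bound holds under the same restrictions noted in Theorem~\ref{vsfCXRPQEvalLowerBoundTheorem}: single-edge queries over a ternary alphabet with a single string variable.

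I do not expect any real obstacle here --- the theorem is explicitly a summary of the preceding lemmas. The only points requiring a moment's care are (i) confirming that the hardness instance from Theorem~\ref{vsfCXRPQEvalLowerBoundTheorem} genuinely satisfies the flatness condition (which it does, since its unique variable has a basic definition), and (ii) being explicit about the use of Savitch's theorem to pass from the nondeterministic space bound of Lemma~\ref{pseudoBasicLemma} to $\pspaceclass$. Neither is technically deep; the heavy lifting was already done in Lemmas~\ref{simpleCXRPQLemma}, \ref{removeNonBasicLemma}, and \ref{pseudoBasicLemma} and in the reduction of Theorem~\ref{vsfCXRPQEvalLowerBoundTheorem}.
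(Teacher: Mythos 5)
Your proposal is correct and follows essentially the same route as the paper: the upper bound is exactly Lemma~\ref{pseudoBasicLemma} (whose nondeterministic space bound is polynomial in combined complexity, hence $\pspaceclass$ by Savitch), and the lower bound reuses the $\NFA$-intersection reduction of Theorem~\ref{vsfCXRPQEvalLowerBoundTheorem}, whose query $\NFAintQueryk$ has a single variable $\varsz$ with the basic definition $\varsz\{(\ta \altop \tb)^*\}$ and is therefore flat, placing it in $\vsfpbCXRPQ$. Your explicit check of the flatness condition is the one detail the paper leaves implicit, and it is right.
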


\section{$\CXRPQ$ with Bounded Image Size}\label{sec:boundedImageSize}

The fragment to be defined in this section directly follows from the definition of the subsets $\langimagebound{k}(\alpha) \subseteq \lang(\alpha)$ that contain the words that match xregex $\alpha$ with a witness $v$ such that $|\varmap{v}(\varsx)| \leq k$ for every $\varsx \in \varset$. This can be easily done as follows.  \par
Let 
$\alpha \in \xregex_{\Sigma, \varset}$ with $\varset = \{\varsx_1, \varsx_2, \ldots, \varsx_n\}$ and $\bar{v} \in (\Sigma^*)^n$. 
We define $\reflangvarmap{\bar{v}}(\alpha) = \{u \in \reflang(\alpha) \mid \varmap{u, \varset} = \bar{v}\}$ and
$\langvarmap{\bar{v}}(\alpha) = \deref(\reflangvarmap{\bar{v}}(\alpha))$.
For every $k \geq 1$, let $\reflangimagebound{k}(\alpha) = \bigcup_{\bar{v} \in (\Sigma^{\leq k})^n} \reflangvarmap{\bar{v}}(\alpha)$.
Finally, we define $\langimagebound{k}(\alpha) = \deref(\reflangimagebound{k}(\alpha))$.\par
The notions $\reflangvarmap{\bar{v}}$ and $\reflangimagebound{k}$ also extend to conjunctive xregex in the obvious way, and therefore to $\CXRPQ$ as follows. For a $q \in \CXRPQ$ with conjunctive xregex $\bar{\alpha} \in \conxregex_{\Sigma, \varset}$, a $\bar{v} \in (\Sigma^*)^n$, and a graph database $\DBD$, by $q^{\bar{v}}(\DBD)$ we denote the subset of $q(\DBD)$ that contains those $q_h(\DBD) \in q(\DBD)$, where $h$ is a matching morphism with respect to some matching words $\bar{w} = (w_1, w_2, \ldots, w_m) \in \langvarmap{\bar{v}}(\bar{\alpha})$; $q^{\leq k}(\DBD)$ is defined analogously by restricting the matching words to be from $\langimagebound{k}(\bar{\alpha})$. In the Boolean case, we also set $\DBD \models^{\bar{v}} q$ and $\DBD \models^{\leq k} q$ to denote that $q^{\bar{v}}(\DBD)$ or $q^{\leq k}(\DBD)$, respectively, contains the empty tuple.

The above defined restrictions do not restrict the class $\CXRPQ$ (as it is the case for $\vsfCXRPQ$ considered in Section~\ref{sec:vsfCXRPQ}), but rather how the results of queries from $\CXRPQ$ are defined. However, it shall be convenient to allow a slight abuse of notation and define, for every $k \in \mathbb{N}$, the class $\bisCXRPQ{k}$, which is the same as $\CXRPQ$, but for every $q \in \bisCXRPQ{k}$ it is understood that $q(\DBD)$ actually means $q^{\leq k}(\DBD)$ (and $\DBD \models q$ actually means $\DBD \models^{\leq k} q$).
In this way, for every $k \in \mathbb{N}$, also the problems $\bisCXRPQ{k}$-$\booleProb$ and $\bisCXRPQ{k}$-$\checkProb$ are defined.  \par
For the classes $\bisCXRPQ{k}$, we can show the following upper complexity bounds (which shall be proven in Subsection~\ref{sec:imageSizeUpperBound}). 

\begin{theorem}\label{upperBoundsBISCXRPQTheorem}
For every $k \in \mathbb{N}$, $\bisCXRPQ{k}$-$\booleProb$ is in $\npclass$ with respect to combined-complexity and in $\nlclass$ with respect to data-complexity.
\end{theorem}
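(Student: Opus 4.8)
The plan is to reduce $\bisCXRPQ{k}$-$\booleProb$ to the evaluation of ordinary $\CRPQ$s by first guessing the variable images and then exploiting that, with the images fixed, the components of the conjunctive xregex decouple. Let $q$ have conjunctive xregex $\bar{\alpha}$ and let $n$ be the number of variables occurring in $q$. By definition $\DBD \models^{\leq k} q$ holds iff there is some $\bar{v} \in (\Sigma^{\leq k})^{n}$ with $\DBD \models^{\bar{v}} q$, so I would first nondeterministically guess such a $\bar{v}$; this needs only $\bigO(nk\log|\Sigma|)$ bits, which is $\bigO(\log|\DBD|)$ when $n,k$ are treated as constants (the data-complexity case) and polynomial in general (the combined-complexity case). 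It then remains to decide $\DBD \models^{\bar{v}} q$.

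The key observation is that a conjunctive match with variable mapping $\bar{v}$ requires, for every $i$, a ref-word $v_i = v'_i \# v''_i \in \reflang(\interxregex{\bar{\alpha}[i]})$ with $\varmap{v_i} = \bar{v}$ and $\deref(v_i) = \deref(v'_i)\#w_i$, and since such ref-words for different $i$ now interact only through the \emph{fixed} mapping $\bar{v}$, we obtain $\langvarmap{\bar{v}}(\bar{\alpha}) = L_1 \times L_2 \times \cdots \times L_m$, where $L_i := \{\, w \mid \exists\, v = v' \# v'' \in \reflang(\interxregex{\bar{\alpha}[i]}),\ \varmap{v} = \bar{v},\ \deref(v) = \deref(v') \# w \,\}$. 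The plan is to show that each $L_i$ is regular and describable by an $\NFA$ of size $\bigO(\poly(|q|))$ computable in polynomial time (here constancy of $k$ is essential), so that deciding $\DBD \models^{\bar{v}} q$ becomes Boolean evaluation of the $\CRPQ$ obtained from $G_q$ by labelling the edge $(x_i,y_i)$ with $L_i$. By Lemma~\ref{evalCRPQLemma} — whose standard proof works verbatim when edge labels are given as $\NFA$s rather than regular expressions, i.e.\ one guesses the matching morphism $h$ and then verifies each edge by reachability in the product of $\DBD$ with the corresponding $\NFA$ — this is in $\npclass$ in combined-complexity and in $\nlclass$ in data-complexity, and composing with the guess of $\bar{v}$ (only $\bigO(\log|\DBD|)$ extra space, resp.\ polynomially many extra bits) yields the claimed bounds.

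To construct the $\NFA$s for the $L_i$ I would proceed as in the proof of Lemma~\ref{vsfCXRPQEvalUpperBoundTheoremTwo}: nondeterministically resolve every alternation $(\gamma_1 \altop \gamma_2)$ occurring in some $\bar{\alpha}[i]$ that contains a variable definition or reference by replacing it with $\gamma_1$ or $\gamma_2$, obtaining a valt-free tuple $\bar{\alpha}'$ (this does not increase the size). Using Proposition~\ref{xregexDetermineGraphQueryProposition} and $\reflang(\bar{\alpha}'[i]) \subseteq \reflang(\bar{\alpha}[i])$, every conjunctive match of $\bar{\alpha}'$ with mapping $\bar{v}$ is one of $\bar{\alpha}$, and a conjunctive match of $\bar{\alpha}$ with mapping $\bar{v}$ survives for a suitable choice of the guesses. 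In $\bar{\alpha}'$ every variable has at most one definition per component, and since $\bar{\alpha}'[i]$ is still sequential no variable definition can lie under a $+$; hence whenever $\bar{\alpha}'[i]$ defines $\varsx$, that definition is instantiated by \emph{every} ref-word of $\bar{\alpha}'[i]$ and thus contributes exactly $\bar{v}_{\varsx}$ (the $\deref$ of its body) to the matching word. I would then turn $\bar{\alpha}'[i]$ into a classical regular expression by: (a) replacing each reference of $\varsx$ by the literal string $\bar{v}_{\varsx}$ (harmless even under a $+$); and (b) for each definition $\varsx\{\gamma\}$, recursively building an $\NFA$ for the analogously-defined language of $\gamma$, checking that $\bar{v}_{\varsx}$ lies in it (rejecting the current nondeterministic branch otherwise), and replacing $\varsx\{\gamma\}$ by the literal $\bar{v}_{\varsx}$. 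One extra consistency check is needed: for each variable whose unique definition in $\bar{\alpha}$ was deleted by the alternation-resolution one must verify $\bar{v}_{\varsx} = \eword$, since such a variable is then defined nowhere in $\bar{\alpha}'$ and its image is forced to $\eword$. The resulting $\NFA$s have size $\bigO(|q|\,k) = \bigO(\poly(|q|))$; in the data-complexity setting $q$ is fixed, so only constant-size $\NFA$s arise and the $\CRPQ$ produced has constant size.

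The main obstacle is the correctness of step (b) together with the instantiation bookkeeping: one must argue carefully that a variable definition in a valt-free, sequential component always ``fires'' and contributes precisely the fixed string $\bar{v}_{\varsx}$ (so the only residual constraint is the membership test for the body), that references may be uniformly substituted by $\bar{v}_{\varsx}$ regardless of whether the variable is defined in the component, in the dummy-definition prefix $\varprefix{\bar{\alpha}'[i]}$, or nowhere (the prefix part $v'$ is then forced to consist of dummy definitions $\open{\varsx}\bar{v}_{\varsx}\close{\varsx}$ and contributes nothing to the matching word), and that no coupling beyond the shared $\bar{v}$ remains. In contrast to the normal-form construction of Section~\ref{sec:normalForm}, nested definitions cause no difficulty here, because $\bar{v}$ is globally fixed and the $\NFA$s for definition bodies can be obtained by plain structural recursion — which is also why the exponential blow-up of that construction is avoided. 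As the introduction already notes, this is genuinely more than ``replace variables by words and evaluate a $\CRPQ$'': the alternation-deletion step, the instantiation check $\bar{v}_{\varsx} = \eword$, and the membership tests for definition bodies are all essential.
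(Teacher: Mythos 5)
Your overall strategy coincides with the paper's: guess $\bar{v} \in (\Sigma^{\leq k})^n$, observe that fixing the variable mapping decouples the components so that $\langvarmap{\bar{v}}(\bar{\alpha})$ is a product of regular languages of size $\bigO(|q|k)$, pass to the induced $\CRPQ$, and invoke Lemma~\ref{evalCRPQLemma}; this is exactly the paper's Lemmas~\ref{checkConXregexVariableMappingLemma}~and~\ref{checkCXRPQVariableMappingLemma} followed by the three-step algorithm. The gap is in your construction of the languages $L_i$. Your first move nondeterministically resolves every alternation containing a variable definition or reference --- the transformation of Lemma~\ref{vsfCXRPQEvalUpperBoundTheoremTwo} --- but that transformation is match-preserving only for \emph{vstar-free} xregex, and $\bisCXRPQ{k}$ imposes no such syntactic restriction (it restricts only how queries may match; e.g.\ $(\varsx \altop \varsy)^+$ as in $G_3$ of Figure~\ref{GraphPatternsExampleFigureTwo} is a legitimate $\bisCXRPQ{k}$ label). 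If an alternation containing a reference sits under a $+$, distinct iterations may take distinct branches, and committing to one branch loses those ref-words, so your completeness claim (``a conjunctive match of $\bar{\alpha}$ with mapping $\bar{v}$ survives for a suitable choice of the guesses'') fails. Concretely, take the single-edge query with $\alpha = \varsx\{\ta\}\,((\varsx \altop \tb))^+$, $k = 1$, $\bar{v} = (\ta)$, and a database consisting of a path labelled $\ta\ta\tb\ta$: this word lies in $\langvarmap{(\ta)}(\alpha)$ via the ref-word $\open{\varsx}\ta\close{\varsx}\,\varsx\,\tb\,\varsx$, yet neither resolution $\varsx\{\ta\}(\varsx)^+$ nor $\varsx\{\ta\}(\tb)^+$ accepts it, so your algorithm rejects a yes-instance. (Resolving toward a branch that contains a \emph{definition} under a $+$ can moreover destroy sequentiality, so your claim that $\bar{\alpha}'$ is still sequential with every surviving definition instantiated by every ref-word also rests on the hidden vstar-freeness assumption.)

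The repair is what the paper does in Lemma~\ref{checkConXregexVariableMappingLemma}: do not resolve alternations wholesale. Alternations over references are left intact --- references may be substituted by the literal $\bar{v}_{\varsx}$ anywhere, including under a $+$, exactly as your step (a) notes. Definitions are processed bottom-up: replace inner references and definitions by their images, test whether $\bar{v}_{\varsx}$ belongs to the resulting classical expression, and only on failure prune upward to the nearest alternation node (replacing the component by $\varnothing$ if there is none); finally, for each $\varsx$ with $\bar{v}_{\varsx} \neq \eword$ whose definition survives, prune so that the definition is necessarily instantiated. This is deterministic, runs in time polynomial in $|q|$ and $|\bar{v}|$, and yields classical regular expressions of size $\bigO(|\bar{\alpha}|k)$ describing exactly $\langvarmap{\bar{v}}(\bar{\alpha})$; from there your reduction to Lemma~\ref{evalCRPQLemma} and the complexity accounting go through as you describe.
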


Since matching lower bounds
directly carry over from $\CRPQ$s, the evaluation complexity of $\bisCXRPQ{k}$ and $\CRPQ$ seems to be the same. However, we can state a much stronger hardness result, which points out an important difference between $\bisCXRPQ{k}$ and $\CRPQ$ in terms of evaluation complexity: for $\bisCXRPQ{1}$, we have $\npclass$-hardness in combined-complexity even for single-edge graph patterns (with even simple xregex). This is not the case for $\CRPQ$, which can be evaluated in polynomial-time if the structure of the underlying graph pattern is acyclic (see~\cite{BarceloEtAl2016, BarceloEtAl2012, Barcelo2013}).

\begin{theorem}\label{HittingSetReductionTheorem}
$\bisCXRPQ{k}$-$\booleProb$ is 
\begin{itemize}
\item $\npclass$-hard in combined-complexity, even for $k = 1$ and single-edge queries with simple xregex $\alpha \in \xregex_{\Sigma, \varset}$ and $|\Sigma| = 3$, 
\item $\nlclass$-hard in data-complexity, even for $k = 0$ and for single-edge queries with xregex $\alpha$, where $\alpha \in \RE_{\Sigma}$ with $|\Sigma| = 2$.
\end{itemize}
\end{theorem}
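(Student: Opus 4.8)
The plan is to handle the two bullets by separate reductions. The \textbf{data-complexity} bullet is essentially already available in the paper: reuse verbatim the single-edge $\CRPQ$ and the reduction from directed $s$-$t$-reachability used in the data-complexity part of the proof of Theorem~\ref{vsfCXRPQEvalLowerBoundTheorem}, namely the Boolean query $(x, \ta\tb^*\ta\ta, z)$ over $\Sigma = \{\ta,\tb\}$. This query has no variables, so $\DBD \models^{\leq 0} q$ coincides with $\DBD \models q$ in the $\CRPQ$ sense, and since directed reachability is $\nlclass$-complete under log-space reductions this yields $\nlclass$-hardness in data-complexity already for $k = 0$, a single edge, and $|\Sigma| = 2$ with a classical regular expression label. (The bare $\nlclass$-hardness also follows from Lemma~\ref{evalCRPQLemma}; the point of restating the reduction is the promised restrictions.)

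For the \textbf{combined-complexity} bullet I would reduce from \textsc{Hitting Set} (an $\npclass$-complete problem): given $F_1, \dots, F_m \subseteq U = \{1, \dots, n\}$ and a bound $b \leq n$, decide whether there is $H \subseteq U$ with $|H| \leq b$ meeting every $F_j$. I would encode a candidate $H$ by one \emph{bit per element} and take the single-edge Boolean query $q$ whose xregex is
\[
  \alpha \;=\; \varsx_1\{\ta \altop \tb\}\,\varsx_2\{\ta \altop \tb\} \cdots \varsx_n\{\ta \altop \tb\}\;\bigl(\tc\, \varsx_1 \varsx_2 \cdots \varsx_n\bigr)^m
\]
over $\Sigma = \{\ta, \tb, \tc\}$, where $(\tc\,\varsx_1 \cdots \varsx_n)^m$ denotes $m$ \emph{explicitly written} copies (no $+$-operator). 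Then $\alpha$ is \emph{simple}: it is vstar-free and valt-free (its only alternations, $\ta \altop \tb$, contain no variable) and every variable definition is basic; moreover every variable image has length exactly $1$, so $\lang(\alpha) = \langimagebound{1}(\alpha)$ and $q$ is a legitimate $\bisCXRPQ{1}$ query. Reading $\varsx_i = \ta$ as ``$i \in H$'', the words of $\langimagebound{1}(\alpha)$ are precisely the strings $b_1 \cdots b_n\,(\tc\, b_1 \cdots b_n)^m$ with $b_i \in \{\ta, \tb\}$.

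It then remains to build, in polynomial time, a graph database $\DBD$ that has a path labelled by such a word iff the instance is positive. I would use (i) a \emph{counting gadget} with nodes $(i, c)$ for $0 \leq i \leq n$, $0 \leq c \leq b$, an $\ta$-edge $(i, c) \to (i{+}1, c{+}1)$ whenever $c{+}1 \leq b$ and a $\tb$-edge $(i, c) \to (i{+}1, c)$ for all $i, c$, with source $s = (0, 0)$ --- a $b_1 \cdots b_n$-labelled path out of $(0,0)$ lands in some $(n, c)$ and exists iff $|\{i : b_i = \ta\}| \leq b$ --- and (ii) for each $j \in [m]$ a fresh \emph{hit gadget} with nodes $\mathrm{pre}_0, \dots, \mathrm{pre}_n$ and $\mathrm{post}_0, \dots, \mathrm{post}_n$, $\ta$- and $\tb$-edges $\mathrm{pre}_{i-1} \to \mathrm{pre}_i$ and $\mathrm{post}_{i-1} \to \mathrm{post}_i$ for all $i$, and, for each $i \in F_j$, an extra $\ta$-edge $\mathrm{pre}_{i-1} \to \mathrm{post}_i$ --- a $b_1 \cdots b_n$-labelled path from $\mathrm{pre}_0$ to $\mathrm{post}_n$ exists iff some $i \in F_j$ has $b_i = \ta$. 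Finally I would wire these with $\tc$-edges from every $(n, c)$ to $\mathrm{pre}_0$ of the $F_1$-gadget, from $\mathrm{post}_n$ of the $F_j$-gadget to $\mathrm{pre}_0$ of the $F_{j+1}$-gadget ($j < m$), and set $t = \mathrm{post}_n$ of the $F_m$-gadget; this $\DBD$ has $\bigO(n^2 + nm)$ nodes. A path labelled $b_1 \cdots b_n (\tc\, b_1 \cdots b_n)^m$ from $s$ to $t$ then exists iff $H := \{i : b_i = \ta\}$ satisfies $|H| \leq b$ and $H \cap F_j \neq \emptyset$ for all $j$, so $\DBD \models q$ iff the hitting-set instance is positive.

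What remains, and what I expect to be the fiddly part, is to pin the matching morphism down to $x \mapsto s$, $y \mapsto t$. This is a shape-and-length argument: every word of $\langimagebound{1}(\alpha)$ begins with a block of $n$ consecutive letters from $\{\ta, \tb\}$ followed by a $\tc$, and in $\DBD$ the only node admitting such a run is $s = (0,0)$ (every interior node of a counting or hit gadget is forced into a $\tc$ after fewer than $n$ letters, and $(0, c)$ with $c \geq 1$ is not a node); since all these words have the same length $(m{+}1)n + m$, the $s$-rooted path ends at a forced node, which one checks is $t$. The genuine conceptual obstacle, which dictates the whole design, is that a \emph{simple} xregex is valt-free and hence cannot express any disjunction involving a variable --- neither the ``$F_j$ is hit'' tests nor the cardinality bound $|H| \leq b$ can live inside $\alpha$. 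The fix is to let $\alpha$ only \emph{commit} to the bit vector (via the length-$1$ definitions $\varsx_i\{\ta \altop \tb\}$, which are unaffected by the image bound) and re-emit it $m$ times (via references), offloading the counting and the $m$ disjunctive tests onto the routing freedom of the graph database, which may be an arbitrary graph; this is also why single-edge queries already suffice and why $k = 1$ already hurts.
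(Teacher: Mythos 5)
Your overall strategy coincides with the paper's: the data-complexity bullet is discharged exactly as in the paper (reuse the $\CRPQ$ reduction from reachability, noting the query has no variables so the image bound is vacuous), and the combined-complexity bullet is a reduction from \textsc{Hitting Set} via a single-edge query whose simple xregex lets length-$1$ variables commit to a candidate set and then re-emits the references $m$ times explicitly, with the graph database doing the cardinality check and the $m$ membership checks. The paper's concrete encoding differs in inessential ways (elements are encoded in unary as $\tb\ta^i\tb$, the candidate set is chosen by a $k$-step selection chain, and self-loops let the verification chain skip over non-witnessing elements), but your bit-vector encoding with a counting gadget and per-set pre/post chains is an equally legitimate instantiation of the same idea.

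There is, however, a genuine soundness gap in your construction: the word $b_1\cdots b_n(\tc\,b_1\cdots b_n)^m$ has no terminal delimiter, so nothing forces the path to end at $t$. After the $m$-th $\tc$ the path sits at $\mathrm{pre}_0$ of the $F_m$-gadget and must read $n$ more letters from $\{\ta,\tb\}$ --- but your pre-chain carries both an $\ta$- and a $\tb$-edge at every step, so the path can simply stay on the pre-chain and terminate at $\mathrm{pre}_n$ of the $F_m$-gadget without ever using a cross edge. Since the endpoints $x,y$ of the single-edge query are free node variables, this is a perfectly valid match, and your reduction actually decides whether some $H$ with $|H|\le b$ hits $F_1,\dots,F_{m-1}$ only; an instance where such an $H$ exists but $F_m$ cannot be hit simultaneously is a counterexample. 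Your ``shape-and-length'' argument does not close this: fixing the start node $s$ and the word length determines the \emph{index} reached in the last gadget but not \emph{which} of the two equal-length parallel chains the path is on (and, as a smaller slip, $(0,c)$ with $c\ge 1$ \emph{is} a node of your counting gadget, though starting there only tightens the cardinality constraint and is harmless). For $j<m$ the gadgets are correctly enforced, because the next $\tc$ is only available from $\mathrm{post}_n$. The repair is exactly the device the paper uses: book-end $\alpha$ with delimiters (the paper's $\alpha = \#\cdots\#\cdots\#$) and add a final delimiter arc out of $\mathrm{post}_n$ of the $F_m$-gadget into a fresh sink, so that the path is forced onto the post-chain of the last gadget as well. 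With that one-symbol amendment (which keeps $\alpha$ simple, keeps $|\Sigma|=3$, and keeps all images of length $\le 1$) your argument goes through.
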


\begin{figure}
\begin{center}
\scalebox{1.5}{\includegraphics{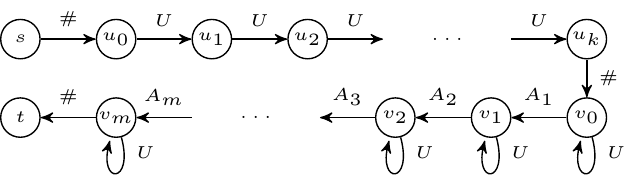}}
\end{center}
\caption{Sketch of the \textsc{Hitting Set} Reduction from Theorem~\ref{HittingSetReductionTheorem}. An arc labelled with $U$ or $A_i$ stands for arcs labelled by $\langle z \rangle$ for every $z \in U$ or $z \in A_i$, respectively.
}
\label{HSReductionAppendix}
\end{figure}

\begin{proof}
Since $\CRPQ \subseteq \bisCXRPQ{1}$, the $\nlclass$-hardness for data-complexity follows in the same way as in the proof for Theorem~\ref{vsfCXRPQEvalLowerBoundTheorem}.\par
In order to prove the $\npclass$-hardness for combined-complexity, we devise a reduction from the problem \textsc{Hitting Set}, which is defined as follows. Given subsets $A_1, A_2, \ldots, A_m$ of some universe $U$
and $k \in \mathbb{N}$, decide whether there is a \emph{hitting set} of size at most $k$, i.\,e., a set $B \subseteq U$ with $|B| \leq k$ and $B \cap A_i \neq \emptyset$ for every $i \in [m]$.\par
Now let $A_1, A_2, \ldots, A_m \subseteq U = \{z_1, z_2, \ldots, z_n\}$ and $k \in \mathbb{N}$ be an instance of \textsc{Hitting Set}. 
Let $\Sigma = \{\ta, \tb, \#\}$ and, for every $z_i \in U$, we define $\langle z_i \rangle = \tb \ta^i \tb$.\par
Next, we define a graph database $\DBD = (V_{\DBD}, E_{\DBD})$ over $\Sigma$ as follows. For the sake of convenience, we also allow words as edge labels in graph databases, which stand for paths in the obvious way, i.\,e., for some $w \in \Sigma^*$, an arc $(u, w, v)$ in a graph-database represents a path from $u$ to $v$ labelled with $w$ (using some distinct intermediate nodes). 
We set 
\begin{equation*}
V_{\DBD} = \{s, u_0, u_1, \ldots, u_k, v_0, v_1, \ldots, v_m, t\}
\end{equation*} 
and the set $E_{\DBD}$ is defined as follows. 
\begin{itemize}
\item There are arcs $(s, \#, u_0)$, $(u_k, \#, v_0)$ and $(v_m, \#, t)$,
\item For every $i \in [k]$ and $z \in U$ there is an arc $(u_{i-1}, \langle z \rangle, u_{i})$, 
\item for every $i \in [m]$ and $z \in A_i$ there is an arc $(v_{i-1}, \langle z \rangle, v_{i})$, 
\item for every $i \in [m] \cup \{0\}$ and every $z \in U$ there is an arc $(v_i, \langle z \rangle, v_i)$.
\end{itemize}
See Figure~\ref{HSReductionAppendix} for an illustration of $\DBD$. Next, we define the xregex 
\begin{equation*}
\alpha = \# \, \prod^{(n+2)k}_{i = 1} \varsx_i \{\ta \altop \tb \altop \eword\} \, \# \, \left(\prod^{(n+2)k}_{i = 1} \varsx_i\right)^m \, \#\,,
\end{equation*}
and the Boolean $q \in \CXRPQ$ is defined by the single edge graph pattern $\{\{x, y\}, (x, \alpha, y)\}$. It is crucial to note that, for every $k \geq 1$, $\langimagebound{k}(\alpha) = \lang(\alpha)$, which is due to the fact that for every possible ref-word in $\reflang(\alpha)$, each variable image is either $\ta$, $\tb$ or $\eword$. Consequently, it does not matter for what $k$ we interpret $q$ to be a $\bisCXRPQ{k}$.\par
In order to prove the correctness, we first make some observations:
\begin{enumerate}
\item\label{ObsPointOne} Due to the occurrences of $\#$, there is a path in $\DBD$ from a node $y$ to a node $z$ labelled with a word from $\lang(\alpha)$ if and only if there is such a path in $\DBD$ from $s$ to $t$.
\item\label{ObsPointTwo} The language $\lang(\alpha)$ contains exactly the words $w = \# w_1 \# w_2 \#$, where $w_1 \in \{\ta, \tb\}^*$ with $|w_1| \leq (n+2)k$, and $w_2 = (w_1)^m$.
\item\label{ObsPointThree} Every path in $\DBD$ from $s$ to $t$ is labelled by $w = \# w_1 \# w_2 \#$, where $w_1 = \langle z_{j_1} \rangle \langle z_{j_2} \rangle \ldots \langle z_{j_k} \rangle$, for some $\{j_1, j_2, \ldots, j_k\} \subseteq [n]$, and $$w_2 = u_1 \langle z_{r_1} \rangle u_2 \langle z_{r_2} \rangle u_3 \ldots u_m \langle z_{r_m} \rangle u_{m+1}\,,$$ such that $\{r_1, r_2, \ldots, r_m\} \subseteq [n]$ and, for every $i \in [m]$, $z_{r_i} \in A_i$; in particular, this means that $\{z_{r_1}, z_{r_2}, \ldots, z_{r_m}\}$ is a hitting set (with respect to the considered problem-instance). 
\end{enumerate}
Next, we assume that in $\DBD$ there is a path $p$ labelled with a word $w \in \lang(\alpha)$. With Point~\ref{ObsPointOne}, we conclude that $p$ is a path from $s$ to $t$. Moreover, with Point~\ref{ObsPointThree}, we have that $w = \# w_1 \# w_2 \#$, where $w_1 = \langle z_{j_1} \rangle \langle z_{j_2} \rangle \ldots \langle z_{j_k} \rangle$
and $w_2 = u_1 \langle z_{r_1} \rangle u_2 \langle z_{r_2} \rangle u_3 \ldots u_m \langle z_{r_m} \rangle u_{m+1}$, such that $\{z_{r_1}, z_{r_2}, \ldots, z_{r_m}\}$ is a hitting set. Finally, Point~\ref{ObsPointTwo} means that $w_2 = (w_1)^m$, which directly implies that $\{z_{r_1}, z_{r_2}, \ldots, z_{r_m}\} \subseteq \{z_{j_1}, z_{j_2}, \ldots, z_{j_k}\}$ and therefore $|\{z_{r_1}, z_{r_2}, \ldots, z_{r_m}\}| \leq k$. \par
On the other hand, if $\{z_{j_1}, z_{j_2}, \ldots, z_{j_k}\}$ is a hitting set of size $k$, then we can construct a word $w \in \lang(\alpha)$ and a path from $s$ to $t$ that is labelled with $w$ as follows. We set $w = \# w_1 \# w_2 \#$ with $w_1 = \langle z_{j_1} \rangle \langle z_{j_2} \rangle \ldots \langle z_{j_k} \rangle$ and $w_2 = (w_1)^m$. We note that, according to Point~\ref{ObsPointTwo}, $w \in \lang(\alpha)$ and we have to show that there is a path from $s$ to $t$ labelled with $w$. There is obviously a path from $s$ to $v_0$ labelled with $\# w_1 \#$. The set $\{z_{j_1}, z_{j_2}, \ldots, z_{j_k}\}$ is a hitting set and each of the $m$ occurrences of factor $w_1$ in $w_2$ contains a factor $\langle z_{j_i} \rangle$ for every $i \in [k]$. Thus, $w_2$ can be factorised into $w_2 = u_1 \langle z_{r_1} \rangle u_2 \langle z_{r_2} \rangle u_3 \ldots u_m \langle z_{r_m} \rangle u_{m+1}$, such that, for every $i \in [m]$, $\langle z_{r_i} \rangle \in A_i$. This means that there is a path from $v_0$ to $t$ labelled with $w_2 \#$ and therefore a path from $s$ to $t$ labelled with $w$.
\end{proof}

\subsection{Proof of Theorem~\ref{upperBoundsBISCXRPQTheorem}}\label{sec:imageSizeUpperBound}

The main building block of our algorithm is that if we have fixed some variable mapping $\bar{v} = (v_1, v_2, \ldots, v_m)$, then the subset $\langvarmap{\bar{v}}(\bar{\alpha})$ of $\lang(\bar{\alpha})$ can be represented by a conjunctive xregex without variable definitions, i.\,e., a tuple of classical regular expressions (see Lemma~\ref{checkConXregexVariableMappingLemma}). However, the corresponding procedure is not as simple as ``replace each $\varsx_i\{\ldots\}$ and $\varsx_i$ by $v_i$''. We shall now illustrate this with an example and some intuitive explanations. Let $\alpha = (\alpha_1, \alpha_2) \in \conxregex_{\Sigma, \varset}$ be defined by
\begin{align*}
\alpha_1 =\:&\varsx_3\{\varsx_1\{\tc \ta^* \tc\} \varsx_2^*\} \altop [\,(\varsx_1\{\tc \tb^*\} \altop \varsx_1\{\varsx_4 \tc^*\}) (\tb \altop \varsx_2^*) \varsx_3\{\varsx_1 \varsx_2 \varsx_1^*\}\,]\,,\\
\alpha_2 =\:&(\varsx_1 \altop \varsx_2)^* \varsx_4\{(\tb \altop \tc)^* \varsx_2^*\} \varsx_2\{(\ta \altop \tb)^* \ta\}\,,
\end{align*}
and let $\bar{v} = (v_1, \ldots, v_4) = (\tc \ta, \ta, \tc \ta \ta \tc \ta, \tc \ta)$. \par 
For computing $\beta = (\beta_1, \beta_2) \in \conxregex_{\Sigma, \emptyset}$ with $\lang(\bar{\beta}) = \langvarmap{\bar{v}}(\bar{\alpha})$, replacing $\varsx_i\{\gamma\}$ by $v_i$ can only be correct, if $\gamma$ can produce $v_i$ (in a match with variable mapping $\bar{v}$). So we should treat the variable definitions and references of $\gamma$ as their intended images and then check whether the thus obtained classical regular expression can produce $v_i$. For example, we transform $\varsx_3\{\varsx_1\{\tc \ta^* \tc\} \varsx_2^*\}$ in $\alpha_1$ to $\varsx_3\{v_1 v_2^*\} = \varsx_3\{\tc \ta \ta^*\}$ and check $v_3 = \tc \ta \in \lang(\tc \ta \ta^*)$. However, this assumes that $\varsx_1\{\tc \ta^* \tc\}$ can really produce $v_1$, which is not the case, so we should rather remove $\varsx_3\{\varsx_1\{\tc \ta^* \tc\} \varsx_2^*\}$ altogether, since it can never be involved in a conjunctive match from $\langvarmap{\bar{v}}(\bar{\alpha})$. Hence, we also need to cut whole alternation branches in $\bar{\alpha}$, and, moreover, we have to make sure that if $\varsx_i \neq \emptyword$, then at least one definition of $\varsx_i$ is necessarily instantiated, while this is not required if $\varsx_i = \emptyword$. Let us illustrate the correct transformations with this example.\par
For every variable definition $\varsx_i\{\gamma\}$, where $\gamma$ is a classical regular expression, we check whether $v_i \in \lang(\gamma)$, and we mark the definition accordingly with $\tone$ or $\tzero$, respectively. Then, we cut all branches that necesssarily produce a definition marked with $\tzero$. This transform $\alpha_1$ as follows
\begin{align*}
\alpha_1 &\leadsto \varsx_3\{\varsx_1\{\tzero\} \varsx_2^*\} \altop [\,(\varsx_1\{\tzero\} \altop \varsx_1\{\tone\}) (\tb \altop \varsx_2^*) \varsx_3\{\tone\}\,] \\
&\leadsto \varsx_1\{\tone\}(\tb \altop \varsx_2^*) \varsx_3\{\tone\}\,.
\end{align*}
If there were variable definitions left, we would repeat this step, but for checking whether $v_i$ can be generated by $\gamma$, we would treat variable definitions marked with $\tone$ as their intended images. With respect to $\alpha_2$, we get 
\begin{align*}
(\varsx_1 \altop \varsx_2)^* \varsx_4\{(\tb \altop \tc)^* \varsx_2^*\} \varsx_2\{(\ta \altop \tb)^* \ta\} \: \leadsto \: (\varsx_1 \altop \varsx_2)^* \varsx_4\{\tone\} \varsx_2\{\tone\}\,.
\end{align*}
Note that for $\varsx_4\{(\tb \altop \tc)^* \varsx_2^*\}$, we check whether $(\tb \altop \tc)^* (v_2)^* = (\tb \altop \tc)^* \ta^*$ can produce $v_4 = \tc \ta$, which is the case.\par
Finally, we replace all definitions and references by the intended images to obtain $(\beta_1, \beta_2) = (\tc \ta(\tb \altop \ta^*) \tc \ta \ta \tc \ta, ((\tc \ta) \altop \ta)^* \tc \ta \ta)$. \par
In the general case, the situation can be slightly more complicated, since we also have to make sure that every ref-word necessarily instantiates a definition for $\varsx_i$ if $v_i \neq \emptyset$, while for $v_i = \emptyset$ ref-words without definition for $\varsx_i$ should still be possible. It can also happen that this procedure reduces an xregex to $\varnothing$, which means that $\langvarmap{\bar{v}}(\bar{\alpha}) = \emptyset$. \par
%
%
%
%
These exemplary observations can be turned into a general procedure, which yields the following results.

\begin{lemma}\label{checkConXregexVariableMappingLemma}
For every $\bar{\alpha} \in \dimconxregex{m}_{\Sigma, \varset}$ with $\varset = \{\varsx_1$, $\varsx_2$, $\ldots, \varsx_n\}$ and every $\bar{v} \in (\Sigma^*)^n$, there is a $\bar{\beta} \in \dimconxregex{m}_{\Sigma, \emptyset}$ such that $\lang(\bar{\beta}) = \langvarmap{\bar{v}}(\bar{\alpha})$. Furthermore, $|\bar{\beta}| = \bigO(|\bar{\alpha}|k)$, where $k = \max\{|\bar{v}[i]| \mid i \in [n]\}$, and $\bar{\beta}$ can be computed in time polynomial in $|\bar{\alpha}|$ and $|\bar{v}|$.
\end{lemma}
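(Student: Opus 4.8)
The plan is to turn the worked example preceding the lemma into a formal construction. First I would exploit that the target variable mapping $\bar v$ is \emph{fixed}, which makes the membership condition decouple across components: $\bar w = (w_1,\dots,w_m)\in\langvarmap{\bar v}(\bar\alpha)$ iff for every $i\in[m]$ there is a ref-word $v_i = v_i'\#v_i''\in\reflang(\interxregex{\alpha_i})$ with $\varmap{v_i}=\bar v$ and $\deref(v_i)=\deref(v_i')\#w_i$, and these choices are independent for different $i$. Since $\varprefix{\alpha_i}$ is a concatenation of definitions $\varsx\{\Sigma^*\}$ containing no references, the prefix part $v_i'$ can always be chosen to realise exactly the restriction of $\bar v$ to the variables not defined in $\alpha_i$. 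Hence it suffices to produce, for a single xregex $\alpha=\alpha_i$ (with $\bar v$ fixed externally), a classical regular expression $\beta=\beta_i$ of size $\bigO(|\alpha|\cdot k)$, computable in polynomial time, whose language is the set $L_i$ of words $w_i$ such that some ref-word $v_i''\in\reflang(\alpha)$ maps every variable defined in $\alpha$ to its $\bar v$-value and derefs (with the remaining references resolved by $\bar v$) to $w_i$. Then $\bar\beta=(\beta_1,\dots,\beta_m)$ works, since for a tuple of classical regular expressions $\lang(\bar\beta)=\lang(\beta_1)\times\cdots\times\lang(\beta_m)=L_1\times\cdots\times L_m=\langvarmap{\bar v}(\bar\alpha)$.

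To build $\beta$ I process the variables in a topological order of the acyclic relation $\preceq_{\bar\alpha}$. When $\varsx_j$ is processed, every reference and sub-definition inside a definition $\varsx_j\{\gamma\}$ of it concerns a previously processed variable, so $\gamma$ becomes a classical regular expression $\gamma^\sharp$ over $\Sigma$: replace each reference to a previously processed $\varsx_\ell$ and each $\tone$-marked sub-definition $\varsx_\ell\{\cdot\}$ by the word $\bar v[\ell]$, and each $\tzero$-marked sub-definition by $\varnothing$ (simplifying $\varnothing\cdot x=x\cdot\varnothing=\varnothing$, $\varnothing\altop x=x$). One then tests $\bar v[j]\in\lang(\gamma^\sharp)$ (polynomial time) and marks $\varsx_j\{\gamma\}$ with $\tone$ or $\tzero$; afterwards every reference to $\varsx_j$ anywhere is replaced by $\bar v[j]$, every $\tzero$-marked definition of $\varsx_j$ together with every subexpression forced to instantiate it is replaced by $\varnothing$ and simplified away, and --- if $\bar v[j]\neq\eword$ and $\varsx_j$ is defined in the component at hand --- the component is replaced by its restriction to the ref-words that instantiate one of the surviving ($\tone$-marked) definitions of $\varsx_j$. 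This last restriction is computed by a bottom-up recursion over the syntax tree; crucially, by sequentiality of $\alpha$ no $+$-node carries a variable definition in its body and, for any concatenation node, at most one child syntactically contains a definition of $\varsx_j$, so the recursion only replaces some subexpressions by $\varnothing$ and does not increase the size. Once all variables are processed, the expression has no references and no $\tzero$-markers, and $\beta$ is obtained by replacing each remaining $\tone$-marked definition $\varsx_\ell\{\cdot\}$ by $\bar v[\ell]$ (which is $\eword$ exactly when $\bar v[\ell]=\eword$). Doing this simultaneously on all components of $\bar\alpha$, as in the worked example, yields $\bar\beta$.

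Correctness is proved by matching up derivations in both directions: from a witnessing ref-word $v_i''\in\reflang(\alpha)$ for $w_i\in L_i$ one reads off a derivation of $w_i$ in $\beta$, and conversely; the invariant carried through the processing of $\varsx_j$ is that the word-language currently represented, together with the $\tone/\tzero$ bookkeeping on not-yet-eliminated definitions, equals the set of derefs of those ref-words of $\alpha$ whose variable mapping agrees with $\bar v$ on the already processed variables. For the size bound, the only size-increasing operation is the replacement of references by words of length at most $k$, and there are at most $|\alpha|$ references, while the $\varnothing$-simplifications and the force-instantiation step do not increase the size; hence $|\beta|=\bigO(|\alpha|\cdot k)$ and $|\bar\beta|=\bigO(|\bar\alpha|\cdot k)$. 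The time bound follows since only $\bigO(|\bar\alpha|)$ regular-expression membership tests are performed, each on an instance of size $\bigO(|\bar\alpha|\cdot k)$.

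I expect the main obstacle to be the correctness bookkeeping, specifically justifying precisely that (i) deleting exactly the subexpressions forced to instantiate a $\tzero$-marked definition, and nothing more, neither adds nor removes valid derefs, and (ii) the restriction to ref-words instantiating a $\tone$-marked definition of $\varsx_j$ (needed only when $\bar v[j]\neq\eword$) is both correct and size-preserving. Both arguments rest on the sequentiality of $\alpha$ --- definitions never lie under a $+$, and the definitions of a fixed variable are pairwise mutually exclusive within any ref-word --- which must be invoked carefully so that the construction stays of size $\bigO(|\bar\alpha|\cdot k)$ rather than blowing up as the normal-form construction of Section~\ref{sec:normalForm} does.
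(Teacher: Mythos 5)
Your proposal is correct and follows essentially the same route as the paper: a topological-order pass that marks each definition $\varsx_j\{\gamma\}$ according to whether $\bar v[j]$ is matched by the value-substituted body, prunes the branches containing failed definitions, forces instantiation of the definitions of variables with non-empty target image, and finally substitutes the $\bar v$-values for all surviving references and definitions, with the same $\bigO(|\bar\alpha|k)$ size and polynomial-time accounting. The only differences are presentational: you make the component-wise decoupling of $\langvarmap{\bar v}(\bar\alpha)$ into a product $L_1\times\cdots\times L_m$ explicit and realize the pruning via $\varnothing$-propagation, whereas the paper cuts alternation branches directly in the syntax tree and performs the forced-instantiation as a separate second pass.
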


\begin{proof}
Let $\bar{\alpha} = (\alpha_1, \alpha_2, \ldots, \alpha_m)$ and let $\bar{v} = (v_1, v_2, \ldots, v_n) \in \Sigma^*$. We give an algorithm that transforms $\bar{\alpha}$ into $\bar{\beta}$ with the desired property. \par
The general idea of the procedure is as follows. For every considered variable definition $\varsx_i\{\gamma\}$ in some $\alpha_j$, we want to determine whether $\gamma$ can produce $v_i$ under the assumption that all variable references and variable definitions in $\gamma$ can produce their corresponding images (according to $\bar{v}$). If this is not the case, then we delete the alternation branch that instantiates the variable definition $\varsx_i\{\gamma\}$ to make sure that it cannot be instantiated. When this procedure terminates, we can replace all remaining variable definitions and variable references by their corresponding images in order to obtain the components $\beta_i$ of $\bar{\beta}$, which are then classical regular expressions. We now describe this procedure in more detail.\medskip \par
\noindent \textbf{Step 1:} We assume that the variable definitions in $\bar{\alpha}$ can be either \emph{marked} or \emph{unmarked} and that they are initially all \emph{unmarked}. The algorithm considers each variable definition separately in such an order that every considered variable definition does only contain other variable definitions (if any) that are already marked. We repeat the following step until all variable definitions are \emph{marked} (recall that initially all variable definitions are \emph{unmarked}). Let $\varsx_i\{\gamma\}$ be some \emph{unmarked} variable definition in some $\alpha_j$ such that $\gamma$ does not contain any \emph{unmarked} variable definition. Let $\gamma'$ be the classical regular expression obtained from $\gamma$ by replacing each reference and definition for a variable $\varsx_{i'}$ by $v_{i'}$.
If $v_i \in \lang(\gamma')$, then we \emph{mark} $\varsx_i\{\gamma\}$. If $v_i \notin \lang(\gamma')$, then we have to modify $\alpha_j$ in such a way that $\varsx_i\{\gamma\}$ is never instantiated. This is achieved as follows. We start at the node of the syntax-tree of $\alpha_j$ that represents $\varsx_i\{\gamma\}$, we move up in the syntax tree and simply delete every node that we encounter (including the node that represents $\varsx_i\{\gamma\}$ where we started, which also means that the whole subtree rooted by this node is deleted), and we stop as soon as we encounter an alternation node, which is then replaced by its other child (i.\,e., the sibling of the node from which we entered the alternation node). In particular, if no alternation node is encountered, then we can conclude that the definition $\varsx_i\{\gamma\}$ under consideration will necessarily be instantiated by every ref-word of $\alpha_j$, which immediately implies that there is no conjunctive match of $\bar{\alpha}$ with variable mapping $(v_1, v_2, \ldots, v_n)$. In this case, the procedure will replace $\alpha_j$ by $\varnothing$ (the unique regular expressions with $\lang(\varnothing) = \emptyset$). 
Let $\bar{\alpha}' = (\alpha'_1, \alpha'_2, \ldots, \alpha'_m)$ be the conjunctive xregex obtained when the procedure of this step terminates.\medskip \par
\noindent \textbf{Step 2:} 
Next, we have to modify $\bar{\alpha}'$ with respect to every $i \in [n]$ with $v_i \neq \emptyword$ as follows. If, for some $j \in [m]$, $\alpha'_j$ contains a definition of $\varsx_i$ (note that this is possible for at most one $\alpha'_j$), then we have to modify it such that it necessarily instantiates a definition for $\varsx_i$, which is done as follows. For every node of the syntax tree for $\alpha'_j$ that corresponds to a definition for $\varsx_i$, we mark it 
and then we move from this node up to the root and mark every visited node along the way. Next, for every marked alternation-node, we remove its unmarked child nodes (note that every such marked alternation-node has either one or two marked child nodes). \par
In particular, we note that this modification is only necessary for $i \in [n]$ with $v_i \neq \emptyset$, since ref-words that have no definition for $\varsx_i$ correspond to variable mappings with image $\eword$ for variable $\varsx_i$, which should not be excluded if $v_i = \eword$. Let $\bar{\alpha}'' = (\alpha''_1, \alpha''_2, \ldots, \alpha''_m)$ be the conjunctive xregex obtained when the procedure of this step terminates.\medskip\par
\noindent After these two modification steps, it is possible that, for some $i \in [m]$ with $v_i \neq \emptyset$, there is no definition of $\varsx_i$ in $\bar{\alpha}''$. If this is the case, we replace each $\alpha''_{j}$ by $\varnothing$.\par
Finally, for every $i \in [n]$, we replace each definition and each occurrence of $\varsx_i$ by $v_i$ in order to obtain a $\bar{\beta} \in \dimconxregex{m}_{\Sigma, \emptyset}$. It can be verified with moderate effort that $\lang(\bar{\beta}) = \langvarmap{\bar{v}}(\bar{\alpha})$.
Moreover, this procedure can obviously be carried out in time polynomial in $|\bar{\alpha}|$ and $|\bar{v}|$, and also $|\bar{\beta}| = \bigO(|\bar{\alpha}|k)$, where $k = \max\{|\bar{v}[i]| \mid i \in [n]\}$.
\end{proof}

The statement of Lemma~\ref{checkConXregexVariableMappingLemma} directly carries over from conjunctive xregex to $\CXRPQ$ as follows.

\begin{lemma}\label{checkCXRPQVariableMappingLemma}
For every $q \in \CXRPQ$ with conjunctive xregex $\bar{\alpha} \in \dimconxregex{m}_{\Sigma, \varset}$ with $\varset = \{\varsx_1, \varsx_2, \ldots, \varsx_n\}$ and every $\bar{v} \in (\Sigma^*)^n$, there is a $q' \in \CRPQ$, such that, for every database $\DBD$, we have that $q^{\bar{v}}(\DBD) = q'(\DBD)$. Furthermore, $|q'| = \bigO(|q|k)$, where $k = \max\{|\bar{v}[i]| \mid i \in [n]\}$, and $q'$ can be computed in time polynomial in $|q|$ and $|\bar{v}|$.
\end{lemma}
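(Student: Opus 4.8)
The plan is to lift Lemma~\ref{checkConXregexVariableMappingLemma} through the definition of conjunctive path queries, with essentially no extra work. Let $q = \bar{z} \gets G_q$ with $G_q = (V_q, E_q)$, $E_q = \{(x_i, \alpha_i, y_i) \mid i \in [m]\}$ and conjunctive xregex $\bar{\alpha} = (\alpha_1, \ldots, \alpha_m)$. First I would apply Lemma~\ref{checkConXregexVariableMappingLemma} to $\bar{\alpha}$ and $\bar{v}$ to obtain a $\bar{\beta} \in \dimconxregex{m}_{\Sigma, \emptyset}$ with $\lang(\bar{\beta}) = \langvarmap{\bar{v}}(\bar{\alpha})$, with $|\bar{\beta}| = \bigO(|\bar{\alpha}|k)$, computable in time polynomial in $|\bar{\alpha}|$ and $|\bar{v}|$. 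Then I would define $q' = \bar{z} \gets G_{q'}$, where $G_{q'}$ is obtained from $G_q$ by replacing, for every $i \in [m]$, the edge label $\alpha_i$ by $\bar{\beta}[i]$. Since $\bar{\beta} \in \dimconxregex{m}_{\Sigma, \emptyset}$ is a tuple of classical regular expressions, every edge of $G_{q'}$ is now labelled by a classical regular expression, so $q' \in \CRPQ$.

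The core of the proof is then to show $q^{\bar{v}}(\DBD) = q'(\DBD)$ for every graph database $\DBD$. The key observation is that once the variable mapping is fixed and the variables are eliminated, the joint (conjunctive) constraint defining $q^{\bar{v}}$ degenerates into a component-wise product constraint: $\langvarmap{\bar{v}}(\bar{\alpha}) = \lang(\bar{\beta}) = \lang(\bar{\beta}[1]) \times \lang(\bar{\beta}[2]) \times \ldots \times \lang(\bar{\beta}[m])$, where the last equality uses the remark that the language of a tuple of classical regular expressions is the Cartesian product of the component languages. Hence a matching morphism $h$ witnessing some $q_h(\DBD) \in q^{\bar{v}}(\DBD)$ is precisely a map $h : V_q \to V_{\DBD}$ such that, for each $i \in [m]$, $\DBD$ has a path from $h(x_i)$ to $h(y_i)$ labelled by some $w_i \in \lang(\bar{\beta}[i])$, with the $w_i$ chosen independently of one another — which is exactly the definition of a matching morphism for $q'$ and $\DBD$. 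Since $q_h(\DBD) = (h(z_1), \ldots, h(z_\ell))$ in either reading, the two output relations coincide.

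For the size and complexity bounds: $G_{q'}$ has the same vertex set and the same edge structure as $G_q$, only the $i$-th edge label changes from $\alpha_i$ to $\bar{\beta}[i]$, and $\sum_{i \in [m]} |\bar{\beta}[i]| = |\bar{\beta}| = \bigO(|\bar{\alpha}|k) = \bigO(|q|k)$, while $\bar{z}$ is unchanged; hence $|q'| = \bigO(|q|k)$. Computing $q'$ amounts to running the procedure of Lemma~\ref{checkConXregexVariableMappingLemma} (polynomial in $|\bar{\alpha}|$ and $|\bar{v}|$) and then performing a linear-time relabelling of the edges, so $q'$ is computable in time polynomial in $|q|$ and $|\bar{v}|$.

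I do not expect a genuine obstacle here, since the statement is a routine corollary of Lemma~\ref{checkConXregexVariableMappingLemma}. The only point that deserves an explicit sentence is the ``collapse to a product'' step — making clear that fixing $\bar{v}$ and removing the string variables turns the conjunctive match condition (which in general is not a product) into the per-edge, independent matching condition of an ordinary $\CRPQ$; this is precisely what guarantees that $q'$ can be taken to be a $\CRPQ$ rather than something still genuinely conjunctive in the string-variable sense.
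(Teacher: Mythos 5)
Your proposal is correct and follows exactly the paper's proof: apply Lemma~\ref{checkConXregexVariableMappingLemma} to obtain the variable-free $\bar{\beta}$, relabel the edges of $G_q$ accordingly, and observe that a variable-free conjunctive xregex imposes only component-wise (product) constraints, so the result is a $\CRPQ$ with the stated size and computation bounds. The extra sentence you add on the ``collapse to a product'' step is a reasonable elaboration of what the paper leaves implicit, but the route is the same.
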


\begin{proof}
According to Lemma~\ref{checkConXregexVariableMappingLemma}, we can compute in time polynomial in $|\bar{\alpha}|$ and $|\bar{v}|$ (and therefore in time polynomial in $|q|$ and $|\bar{v}|$) a $\bar{\beta} \in \dimconxregex{m}_{\Sigma, \emptyset}$ such that $\lang(\bar{\beta}) = \langvarmap{\bar{v}}(\bar{\alpha})$.
Thus, $q'$ can be obtained from $q$ by replacing each edge label $\alpha_i$ by $\beta_i$. In particular, we have $|\bar{\beta}| = \bigO(|\bar{\alpha}|k)$, where $k = \max\{|\bar{v}[i]| \mid i \in [n]\}$, and therefore also $|q'| = \bigO(|q|k)$.
\end{proof}

We are now ready to give a formal proof for Theorem~\ref{upperBoundsBISCXRPQTheorem}.

\begin{proof}(\emph{of Theorem~\ref{upperBoundsBISCXRPQTheorem}})
Let $q \in \bisCXRPQ{k}$ be Boolean with a conjunctive xregex $\bar{\alpha} = (\alpha_1, \alpha_2, \ldots, \alpha_m)$ over $\Sigma$ and $\varset = \{\varsx_1, \varsx_2, \ldots, \varsx_n\}$, and let $\DBD$ be a graph-database. We note that $\DBD \models q$ if and only if there is a $\bar{v} = (v_1, v_2, \ldots, v_n) \in (\Sigma^{\leq k})^n$ such that $\DBD \models^{\bar{v}} q$. Consequently, the following is a nondeterministic algorithm that checks whether $\DBD \models q$.\par
\begin{enumerate}
\item\label{NPAlgoPOne} Nondeterministically guess some $\bar{v} = (v_1, v_2, \ldots, v_n) \in (\Sigma^{\leq k})^n$.
\item\label{NPAlgoPTwo} Compute $q' \in \CRPQ$ such that, for every database $\DBD$, we have that $q^{\bar{v}}(\DBD) = q'(\DBD)$. 
\item\label{NPAlgoPThree} Check whether $\DBD \models q'$.
\end{enumerate}
Point~\ref{NPAlgoPTwo} can be done according to Lemma~\ref{checkCXRPQVariableMappingLemma}, and Point~\ref{NPAlgoPThree} can be done according to Lemma~\ref{evalCRPQLemma}. It remains to show that this nondeterministic algorithm requires polynomial time in combined-complexity and logarithmic space in data-complexity.\par
We first note that Point~\ref{NPAlgoPOne} can be done in time $\bigO(nk)$, which is polynomial in combined complexity. Moreover, the required space for this does only depend on $q$ and $k$, which means that it is constant in data-complexity. \par
According to Lemma~\ref{checkCXRPQVariableMappingLemma}, $q'$ can be computed in time that is polynomial in $|q|$ and $|\bar{v}|$ in combined-complexity, which, since $k$ is a constant, is polynomial in $|q|$. Again, the required space for this does only depend on $q$ and $k$, which means that it is constant in data-complexity.\par
According to Lemma~\ref{evalCRPQLemma}, we can nondeterministically check $\DBD \models q'$ in time polynomial in $|q'|$ and $|\DBD|$ in combined complexity. Moreover, according to Lemma~\ref{checkCXRPQVariableMappingLemma}, $|q'| = \bigO(|q|k) = \bigO(|q|)$, which means that we can nondeterministically check $\DBD \models q'$ in time polynomial in $|q|$ and $|\DBD|$, so polynomial in data-complexity. Finally, we also note that Lemma~\ref{evalCRPQLemma} implies that $\DBD \models q'$ can be checked in nondeterministic space that is logarithmic in $\DBD$ with respect to data-complexity. 
\end{proof}

\subsection{Logarithmically Bounded Image Size}

Analogously to $q^k(\DBD)$ for every $k \geq 1$, $q \in \CXRPQ$ and graph database $\DBD$, we can also define $q^{\log}(\DBD) = \bigcup^{\log(|\DBD|)}_{k = 0} q^k(\DBD)$. Just like we derived the classes $\bisCXRPQ{k}$, this gives rise to the class $\logCXRPQ$. Note that a Boolean $q \in \logCXRPQ$ matches a graph database $\DBD$, if and only if there is a matching morphism with image size bounded by $\log(|\DBD|)$. \par
Analogously to the proof of Theorem~\ref{upperBoundsBISCXRPQTheorem}, we can show the following upper bound for $\logCXRPQ$.

\begin{corollary}\label{logBoundCorollary}
$\logCXRPQ$-$\booleProb$ can nondeterministically be solved in polynomial time in combined-complexity and in space $\bigO(\log^2(|\DBD|))$ in data-com\-plexity.
\end{corollary}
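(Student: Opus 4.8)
The plan is to mimic the proof of Theorem~\ref{upperBoundsBISCXRPQTheorem}, the only essential difference being that the bound on the image size is no longer a constant but $\log(|\DBD|)$, which forces a more careful accounting of the resources. First I would observe that, since $\Sigma^{\leq k} \subseteq \Sigma^{\leq k+1}$, we have $q^k(\DBD) \subseteq q^{k+1}(\DBD)$ for every $k$, so $q^{\log}(\DBD) = q^{\lfloor \log(|\DBD|) \rfloor}(\DBD)$; hence, for a Boolean $q \in \logCXRPQ$ with conjunctive xregex over variables $\varset = \{\varsx_1, \ldots, \varsx_n\}$, we have $\DBD \models q$ if and only if there is a $\bar{v} \in (\Sigma^{\leq \log(|\DBD|)})^n$ with $\DBD \models^{\bar{v}} q$. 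This yields the same three-step nondeterministic algorithm as in Theorem~\ref{upperBoundsBISCXRPQTheorem}: (1) nondeterministically guess $\bar{v} \in (\Sigma^{\leq \log(|\DBD|)})^n$; (2) by Lemma~\ref{checkCXRPQVariableMappingLemma}, compute a $q' \in \CRPQ$ with $q^{\bar{v}}(\DBD) = q'(\DBD)$; (3) check $\DBD \models q'$.

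Next I would carry out the resource analysis, which is where the only real work lies. Guessing $\bar{v}$ in step~(1) takes time and space $\bigO(n \log(|\DBD|))$, which is polynomial in the input and, since $q$ (hence $n$) is fixed, $\bigO(\log(|\DBD|))$ in data-complexity. For step~(2), Lemma~\ref{checkCXRPQVariableMappingLemma} gives $|q'| = \bigO(|q| \cdot k)$ with $k \leq \log(|\DBD|)$, i.e.\ $|q'| = \bigO(|q| \log(|\DBD|))$, computable in time polynomial in $|q|$ and $|\bar{v}|$ (hence polynomial in the input and, for fixed $q$, bounded by $\poly(\log(|\DBD|))$ time and $\bigO(\log(|\DBD|))$ space). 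For step~(3), instead of merely quoting the $\nlclass$ bound of Lemma~\ref{evalCRPQLemma}, I would use that $\CRPQ$-$\booleProb$ can be decided nondeterministically in space $\bigO(|q'| \log(|\DBD|))$ (the $\CRPQ$ special case of Lemma~\ref{simpleCXRPQLemma}, i.e.\ the standard product-automaton construction) and in time polynomial in $|q'|$ and $|\DBD|$. Substituting $|q'| = \bigO(|q| \log(|\DBD|))$: in combined-complexity step~(3) runs in nondeterministic polynomial time, and in data-complexity it runs in nondeterministic space $\bigO(\log(|\DBD|) \cdot \log(|\DBD|)) = \bigO(\log^2(|\DBD|))$. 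Summing the three steps gives nondeterministic polynomial time in combined-complexity and space $\bigO(\log^2(|\DBD|))$ in data-complexity, as claimed in Corollary~\ref{logBoundCorollary}.

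The main obstacle is precisely this bookkeeping in step~(3). When the image-size bound is a constant, $|q'| = \bigO(|q|)$ is of constant size in data-complexity, and the $\nlclass$-completeness of $\CRPQ$ evaluation (Lemma~\ref{evalCRPQLemma}) can be invoked verbatim; here, however, $q'$ depends on $\DBD$, so we must track \emph{how} $|q'|$ grows with $|\DBD|$ — only logarithmically — and combine this with the explicit $\bigO(|q'| \log(|\DBD|))$ space bound for $\CRPQ$ evaluation to land exactly at $\log^2$ rather than a larger polylogarithmic bound, while also checking that the now $|\DBD|$-dependent size of $q'$ still leaves the combined running time polynomial. All other pieces (storing the guessed $\bar{v}$, storing and computing $q'$ on the fly via Lemma~\ref{checkCXRPQVariableMappingLemma}) cost only $\bigO(\log(|\DBD|))$ space in data-complexity and are absorbed by the $\bigO(\log^2(|\DBD|))$ term.
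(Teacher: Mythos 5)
Your proposal is correct and follows essentially the same route as the paper: the same three-step nondeterministic algorithm with $\bar{v}$ guessed from $(\Sigma^{\leq \log(|\DBD|)})^n$, Lemma~\ref{checkCXRPQVariableMappingLemma} giving $|q'| = \bigO(|q|\log(|\DBD|))$, and the explicit $\bigO(|q'|\log(|\DBD|))$ space bound for evaluating the resulting $\CRPQ$ (which the paper obtains by viewing $q'$ as a simple $\CXRPQ$ and invoking Lemma~\ref{simpleCXRPQLemma}, exactly as you suggest) to land at $\bigO(\log^2(|\DBD|))$ in data-complexity. Your bookkeeping of where the query size now depends on $|\DBD|$ matches the paper's calculation.
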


\begin{proof}
We can apply the same nondeterministic algorithm from the proof of Theorem~\ref{upperBoundsBISCXRPQTheorem}, with the only difference that we initially guess $\bar{v} = (v_1, v_2, \ldots, v_n) \in (\Sigma^{\leq \log(|\DBD|)})^{n}$. More precisely, we use the following nondeterministic algorithm:
\begin{enumerate}
\item\label{NPAlgoPOnev2} Nondeterministically guess some $\bar{v} = (v_1, v_2, \ldots, v_n) \in (\Sigma^{\leq \log(|\DBD|)})^{n}$.
\item\label{NPAlgoPTwov2} Compute $q' \in \CRPQ$ such that, for every database $\DBD$, we have that $q^{\bar{v}}(\DBD) = q'(\DBD)$. 
\item\label{NPAlgoPThreev2} Check whether $\DBD \models q'$.
\end{enumerate}
Point~\ref{NPAlgoPOnev2} can be done in time and space $\bigO(n\log(|\DBD|))$. According to Lemma~\ref{checkCXRPQVariableMappingLemma}, $q'$ can be computed in time that is polynomial in $|q|$ and $|\bar{v}|$, which, since $|\bar{v}| = n \log(|\DBD|)$, is polynomial in $|q|$ and $|\DBD|$. Moreover, according to Lemma~\ref{checkCXRPQVariableMappingLemma}, $|q'| = \bigO(|q|\log(|\DBD|))$.\par
We can use Lemma~\ref{evalCRPQLemma} in order to conclude that we can nondeterministically check $\DBD \models q'$ in polynomial time. With respect to space complexity, we note that since $q' \in \CRPQ$, we also have that $q'$ is a simple $\CXRPQ$. Therefore, we can conclude with Lemma~\ref{simpleCXRPQLemma} that we can check $\DBD \models q'$ in space 
\begin{align*}
&\bigO(|q'| \log(|\DBD|) + |q'|\log(|q'|))\\
= &\bigO(|q| \log(|\DBD|) \log(|\DBD|) + |q| \log(|\DBD|)\log(|q| \log(|\DBD|)))\\
= &\bigO(|q| \log(|q|) \log^2(|\DBD|))\,.
\end{align*}
%
%
%
%
%
%
\end{proof}

\section{Expressive Power}\label{sec:expressivePower}

In this section, we compare the expressive power of $\CXRPQ$s and their fragments with other established classes of graph queries.\par
The \emph{extended conjunctive regular path queries} ($\ECRPQ$s), already mentioned in the introduction, have been introduced in~\cite{BarceloEtAl2012}. We shall define them now in more detail (a definition in full details can be found in~\cite{BarceloEtAl2012}). Extended conjunctive regular path queries ($\ECRPQ$s) have the form $q = \bar{z} \gets G_q, \bigwedge_{j \in [t]} R_j(\bar{\omega}_j)$, where $\bar{z} \gets G_q$ is a $\CRPQ$ and, for every $j \in [t]$, $\bar{\omega}_j$ is an $s_j$-tuple over $E_q$ and $R_j$ is a regular expression that describes a regular relations over $\Sigma^*$ of arity $s_j$. The semantics of $\ECRPQ$ can be derived from the semantics of $\CRPQ$ as follows. We interpret $q$ as a $\CRPQ$, but we add to the concept of a matching morphism the requirement that there must be a tuple $(w_{e_1}, w_{e_2}, \ldots, w_{e_m})$ of matching words such that, for each $\bar{\omega}_j = (e_{p_1}, e_{p_2}, \ldots, e_{p_{s_j}})$, we have $(w_{e_{p_1}}, w_{e_{p_2}}, \ldots, w_{e_{p_{s_j}}}) \in \lang(R_j)$.\par
By $\ECRPQ$ with \emph{equality relations} ($\eqECRPQ$ for short), we denote the class of $\ECRPQ$ for which each $R_j$ is the \emph{equality relation} (for some arity $s_j$), i.\,e., the relation $\{(u_1, u_2, \ldots, u_{s_j}) \in (\Sigma^*)^{s_j} \mid u_1 = u_2 = \ldots = u_{s_j}\}$. In order to ease our notations, we sometimes represent the equality relations as a partition $\{E_{q, 1}, E_{q, 2}, \ldots, E_{q, t}\}$ of $E_q$ (i.\,e., for every $j \in [t]$, the edges of $E_{q, j}$ are subject to an equality relation), or, for simple queries, we also state which edges are required to be equal without formally stating the equality relations.\par
We recall that for some conjunctive path query $q$, the mapping $\DBD \mapsto q(\DBD)$ from the set of graph-databases to the set of relations over $\Sigma$ that is defined by $q$ is denoted by $\llbracket q \rrbracket$, and for any class $A$ of conjunctive path queries, $\llbracket A \rrbracket = \{\llbracket q \rrbracket \mid q \in A\}$. \par
For a class $Q$ of conjunctive path queries, a \emph{union of $Q$s} (or $\unionsof{Q}$, for short) is a query of the form $q = q_1 \vee q_2 \vee \ldots \vee q_k$, where, for every $i \in [k]$, $q_i \in Q$. For a graph-database $\DBD$, we define $q(\DBD) = \bigcup_{i \in [k]} q_i(\DBD)$.

The results of this section can be summarised as follows.

\begin{theorem}\label{inclusionDigaramTheorem}
The inclusion-diagram of Figure~\ref{expPowerInclusionMap} is correct.
\end{theorem}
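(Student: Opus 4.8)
The plan is to verify the diagram edge by edge, splitting the work into (i) the positive statements, i.e.\ inclusions $\llbracket Q \rrbracket \subseteq \llbracket Q' \rrbracket$, and (ii) the negative statements, i.e.\ separations $\llbracket Q \rrbracket \not\subseteq \llbracket Q' \rrbracket$ witnessed by a single query. A first layer of inclusions is purely syntactic: $\CRPQ$ is the variable-free case, so $\llbracket \CRPQ \rrbracket \subseteq \llbracket \vsfpbCXRPQ \rrbracket \subseteq \llbracket \vsfCXRPQ \rrbracket \subseteq \llbracket \CXRPQ \rrbracket$ and $\llbracket \CRPQ \rrbracket \subseteq \llbracket \bisCXRPQ{1} \rrbracket \subseteq \llbracket \bisCXRPQ{2} \rrbracket \subseteq \cdots \subseteq \llbracket \logCXRPQ \rrbracket$ hold trivially (an xregex without variables defines the same relation regardless of any image bound), and likewise $\llbracket A \rrbracket \subseteq \llbracket \unionsof{A} \rrbracket$ for every class $A$.

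The genuinely new inclusions I would establish are two. First, $\llbracket \eqECRPQ \rrbracket \subseteq \llbracket \vsfpbCXRPQ \rrbracket$: given an $\eqECRPQ$ whose equality relations partition $E_q$ into classes, choose a representative edge in each class, relabel its (classical) regular expression $\gamma$ by the definition $\varsx\{\gamma\}$ for a fresh $\varsx$, and relabel every other edge of that class by the reference $\varsx$; since $\gamma$ is variable-free, this definition is basic, hence every variable is flat and no reference is under a $+$, so the result lies in $\vsfpbCXRPQ$, and its set of conjunctive matches is exactly the set of matching-word tuples satisfying all the equalities, so the two queries are equivalent. Second, $\llbracket \bisCXRPQ{k} \rrbracket \subseteq \llbracket \unionsof{\CRPQ} \rrbracket$ for every fixed $k$: by Lemma~\ref{checkCXRPQVariableMappingLemma}, for each $\bar v \in (\Sigma^{\leq k})^n$ there is a $q'_{\bar v} \in \CRPQ$ with $q^{\bar v}(\DBD) = q'_{\bar v}(\DBD)$ for all $\DBD$, and since $q^{\leq k}(\DBD) = \bigcup_{\bar v \in (\Sigma^{\leq k})^n} q^{\bar v}(\DBD)$ is a \emph{finite} union, $q$ is equivalent to $\bigvee_{\bar v} q'_{\bar v}$. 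Combining these two translations (and multiplying out alternations) also yields whatever inclusions into $\llbracket \unionsof{\ECRPQ} \rrbracket$ the diagram asserts.

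For each diagram non-edge I would fix a concrete witness query together with a family of graph databases and argue inexpressibility by a pumping/locality argument over ``simple'' databases (disjoint unions of directed paths and cycles). The key witnesses are: (a) a $G_3$-style query with a subexpression $(\varsx \altop \varsy)^+$, which forces equality among an \emph{unbounded} number of paths and so separates already $\vsfpbCXRPQ$ from $\eqECRPQ$ and from $\unionsof{\CRPQ}$, since a fixed query in either of those classes involves only boundedly many equality constraints, respectively CRPQ-disjuncts; (b) an $\ECRPQ$ using a non-equality regular relation, for instance the length-equality relation $\{(u,v) \mid |u| = |v|\}$ or a reversal relation, separating $\ECRPQ$ from $\CXRPQ$; (c) the query defined by $\NFAintQuery$ of Theorem~\ref{CXRPQDataComplexityHardnessTheorem}, whose defined set of databases is not definable by any vstar-free query (with an analogous ``chained-definitions'' witness separating $\vsfCXRPQ$ from $\vsfpbCXRPQ$); and (d) the single-edge $\bisCXRPQ{1}$ witness underlying Theorem~\ref{HittingSetReductionTheorem}, whose inexpressibility by $\CRPQ$s is shown directly by pumping, giving $\llbracket \CRPQ \rrbracket \subsetneq \llbracket \bisCXRPQ{1} \rrbracket$.

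The main obstacle is (b): proving $\llbracket \ECRPQ \rrbracket \not\subseteq \llbracket \CXRPQ \rrbracket$ requires a genuine inexpressibility result for $\CXRPQ$, for which I would isolate an invariant of $\CXRPQ$-definable relations --- informally, that every coordinate tuple of a conjunctive match arises from a single ref-word by substituting \emph{equal} strings for equal variables, so the cross-path content of a match is produced by copying, not by counting or permuting --- and then exhibit a regular relation (length-equality is the natural candidate) violating it via a pumping argument on two-path databases. Making this invariant precise and robust under the graph-database semantics, where one pattern edge can be matched by many different paths, is the delicate point; the remaining separations, once restricted to simple databases, reduce to comparatively routine combinatorial counting, and all the inclusions are immediate from Lemma~\ref{checkCXRPQVariableMappingLemma} and the equality-to-shared-variable translation above.
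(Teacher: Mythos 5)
Your overall architecture (inclusions plus separations, with $\llbracket \eqECRPQ \rrbracket \subseteq \llbracket \vsfpbCXRPQ \rrbracket$ and $\llbracket \bisCXRPQ{k} \rrbracket \subseteq \llbracket \unionsof{\CRPQ} \rrbracket$ as the two key translations, the latter exactly as the paper does it via Lemma~\ref{checkCXRPQVariableMappingLemma}) matches the paper, but there are concrete gaps. First, your translation of $\eqECRPQ$ into $\vsfpbCXRPQ$ is wrong as stated: if an equality class contains edges labelled $\alpha_1,\ldots,\alpha_s$ and you put $\varsx\{\alpha_1\}$ on a representative and bare references $\varsx$ on the others, the resulting query accepts matching words in $\lang(\alpha_1)\setminus\bigcap_i\lang(\alpha_i)$, which the original query rejects. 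You must first replace the representative's label by a regular expression for $\bigcap_i \lang(\alpha_i)$ (and the others by $\Sigma^*$) before introducing the variable, as the paper does in Lemma~\ref{inclusioneqECRPQinvsfpbCXRPQLemma}. Second, your witness for separating $\vsfpbCXRPQ$ from $\eqECRPQ$ contains $(\varsx \altop \varsy)^+$, i.e.\ variable references under a $+$, so it is not variable-star free and lies in neither $\vsfpbCXRPQ$ nor $\vsfCXRPQ$; the diagram in fact only asserts a \emph{non-strict} inclusion $\llbracket \eqECRPQ \rrbracket \subseteq \llbracket \vsfpbCXRPQ \rrbracket$, and the paper explicitly states it can only separate $\eqECRPQ$ from full $\CXRPQ$ (done via the single-edge query $\#\varsy\{\varsx\{\ta^+\tb\}\varsx^*\}\tc\varsy\#$ and a nontrivial shortcut-and-pumping argument, Lemma~\ref{strictnessCXRPQeqECRPQLemma}), not from the vstar-free fragments.

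Third, you spend your main effort on separations the diagram does not assert ($\ECRPQ \not\subseteq \CXRPQ$, $\vsfCXRPQ$ vs.\ $\CXRPQ$, $\vsfCXRPQ$ vs.\ $\vsfpbCXRPQ$ --- all of these arrows are non-strict or absent in Figure~\ref{expPowerInclusionMap}), and you yourself concede the first is not worked out; meanwhile the separations that \emph{are} asserted get only hand-waves. For $\llbracket \CRPQ \rrbracket \subsetneq \llbracket \bisCXRPQ{1} \rrbracket$, ``the Hitting Set witness, shown directly by pumping'' is not a proof (and that construction is a query \emph{family} tied to combined complexity, not a single fixed query); the paper instead uses a small three-edge query with $\varsx\{\ta\altop\tb\}$ and $\varsx\altop\tc$ and a surgery argument on four-node databases (Lemma~\ref{bisCXRPQCRPQstrictnessLemma}). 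Also, $\llbracket \vsfCXRPQ \rrbracket \subseteq \llbracket \unionsof{\eqECRPQ} \rrbracket$ is an edge of the diagram and cannot be obtained by merely ``multiplying out alternations'': it needs the full normal-form construction (Lemmas~\ref{conjunctiveNormalFormLemmaStepOne},~\ref{uniqueDefinitionsLemma},~\ref{removeNonBasicLemma}) followed by subdividing each edge of a simple query and converting shared variables into equality relations. The classical separations $\CRPQ \subsetneq \eqECRPQ \subsetneq \ECRPQ$ (and their union versions) via two-path databases and pigeonhole are indeed routine, as you say, but they too need explicit witnesses ($\tc\ta^n\tc$ vs.\ $\td\ta^n\td$ with an equality relation, and the equal-length variant).
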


Next, we first discuss these results in Subsection~\ref{sec:exprpowerDiscussions}, and then, in the following subsections, we formally prove all the inclusions of Figure~\ref{expPowerInclusionMap}.

\begin{figure}
\begin{center}
\scalebox{1}{\includegraphics{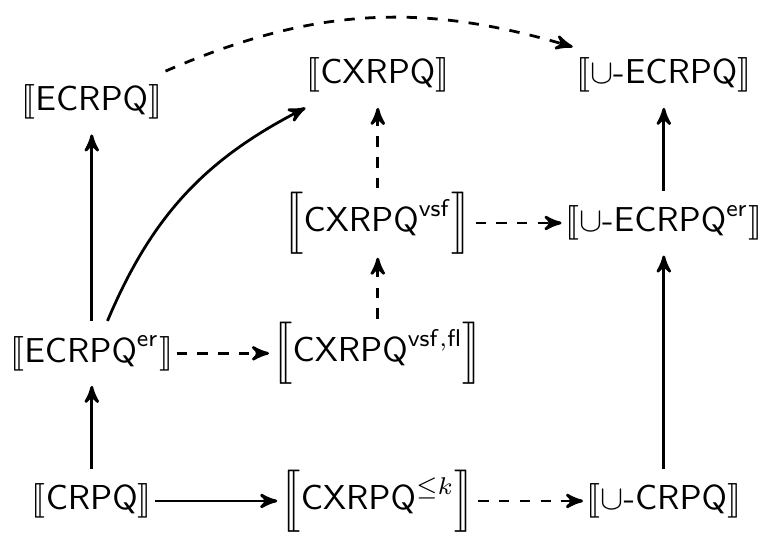}}
\end{center}
\caption{Illustration of the relations between the considered classes of conjunctive path queries. A dashed or solid arrow from $A$ to $B$ denotes $A \subseteq B$ or $A \subsetneq B$, respectively.}
\label{expPowerInclusionMap}
\end{figure}

\subsection{Discussion of Results}\label{sec:exprpowerDiscussions}

The diagram of Figure~\ref{expPowerInclusionMap} shows three vertical layers of query classes of increasing expressive power. More precisely, on the left side we have the \emph{classical} conjunctive path query classes $\CRPQ$, $\eqECRPQ$ and $\ECRPQ$. Then, our new fragments of $\CXRPQ$ follow. Finally, we have the classes of unions of the classical conjunctive path query classes. All these layers are naturally ordered by the subset relation, and these subset relations follow all directly by definition. \par
The more interesting inclusion relations are the vertical ones. The inclusion $\llbracket \eqECRPQ \rrbracket \subseteq \llbracket \vsfpbCXRPQ \rrbracket$ is as expected, but nevertheless points out that also quite restricted classes of $\CXRPQ$ still cover $\CRPQ$ that are extended by equality relations. However, this does not seem to be the case for the classes $\llbracket \bisCXRPQ{k} \rrbracket$, which nevertheless contains the class $\llbracket \CRPQ \rrbracket$.\par
Less obvious are the inclusions of the class $\llbracket \vsfCXRPQ \rrbracket$ in $\llbracket \unionsof{\eqECRPQ} \rrbracket$, and of the class $\llbracket \bisCXRPQ{k} \rrbracket$ in $\llbracket \unionsof{\CRPQ} \rrbracket$. Both of them are more or less a result from the upper bounds for these $\CXRPQ$-fragments shown in Sections~\ref{sec:vsfCXRPQ}~and~\ref{sec:boundedImageSize}. In a sense, this means that the queries of our $\CXRPQ$-fragments can be ``decomposed'' into unions of the more classical $\CRPQ$s and $\eqECRPQ$s. Thus, semantcially, the $\vsfCXRPQ$ and $\bisCXRPQ{k}$ can still be described in the formalisms of $\CRPQ$ and $\eqECRPQ$. However, there is a significant syntactical difference: our conversions of $\vsfCXRPQ$ or $\bisCXRPQ{k}$ into $\unionsof{\CRPQ}$ or $\unionsof{\eqECRPQ}$, respectively, require exponential (or even double exponential) size blow-ups.\par
It is to be expected, that $\vsfCXRPQ$ are strictly more powerful than $\eqECRPQ$. However, we can only show that $\CXRPQ$ are strictly more powerful than $\eqECRPQ$. On the other hand, that also $\bisCXRPQ{k}$s are strictly more powerful than $\CRPQ$s seems less obvious, since the string variables of $\bisCXRPQ{k}$s, which syntactically set them apart from $\CRPQ$s, can only range over finite sets of possible images. So it is more surprising that, in fact, even $\bisCXRPQ{1}$ contains queries that are not expressible as $\CRPQ$. 

\subsection{$\llbracket \CRPQ \rrbracket \subsetneq \llbracket \eqECRPQ \rrbracket \subsetneq \llbracket \ECRPQ \rrbracket$ and \\$\llbracket \unionsof{\CRPQ} \rrbracket \subsetneq \llbracket \unionsof{\eqECRPQ} \rrbracket \subsetneq \llbracket \unionsof{\ECRPQ} \rrbracket$}


\begin{theorem}\label{CRPQECRPQExpressivePowerTheorem}
$\llbracket \CRPQ \rrbracket \subsetneq \llbracket \eqECRPQ \rrbracket \subsetneq \llbracket \ECRPQ \rrbracket$.
\end{theorem}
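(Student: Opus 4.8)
The inclusions $\llbracket \CRPQ \rrbracket \subseteq \llbracket \eqECRPQ \rrbracket \subseteq \llbracket \ECRPQ \rrbracket$ are immediate from the definitions: a $\CRPQ$ is an $\ECRPQ$ with an empty conjunction of relations (hence in particular an $\eqECRPQ$), and an equality relation is a special regular relation, so every $\eqECRPQ$ is an $\ECRPQ$. The content of the theorem is therefore the two \emph{strictness} claims, and the plan is to exhibit, for each, a concrete query in the larger class whose query mapping $\llbracket \cdot \rrbracket$ cannot be realized by any query in the smaller class.

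For $\llbracket \CRPQ \rrbracket \subsetneq \llbracket \eqECRPQ \rrbracket$, I would take the Boolean $\eqECRPQ$ over a one-letter alphabet $\Sigma = \{\ta\}$ given by a graph pattern with two edges $(x, \ta^+, y)$ and $(x, \ta^+, y)$ that are forced to carry equal labels — i.e., the query asserts ``there exist nodes $u, v$ and a $k \geq 1$ such that there is an $\ta^k$-labelled path from $u$ to $v$ along two designated parallel edges, matched by the \emph{same} word.'' The standard trick is that this query, read over appropriate databases, can distinguish structures that no $\CRPQ$ can distinguish. Concretely, I would use the classical argument (going back to the inexpressibility results underlying~\cite{BarceloEtAl2012}) that $\CRPQ$s are monotone and preserved under a suitable notion of homomorphism/unfolding of paths, whereas requiring two paths to have \emph{exactly} the same length is not so preserved: one builds two databases $\DBD_1, \DBD_2$ where $\DBD_2$ is obtained from $\DBD_1$ by a length-changing manipulation of one of the two parallel path-bundles, so that $\DBD_1 \models q$ but $\DBD_2 \not\models q$ for the $\eqECRPQ$ $q$, while every $\CRPQ$ of bounded size (smaller than the path lengths used) cannot tell $\DBD_1$ from $\DBD_2$ because a $\CRPQ$'s edge-expressions, being ordinary regular expressions, are ``blind'' to the exact synchronization of two independent paths. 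A pumping argument on the product of the relevant $\NFA$s with the databases makes this precise: for any candidate $\CRPQ$ $q'$ one chooses the path lengths large relative to $|q'|$ so that the product automaton has a repeated state, yielding $\DBD_1, \DBD_2$ with $\llbracket q' \rrbracket(\DBD_1) = \llbracket q' \rrbracket(\DBD_2)$ but $\llbracket q \rrbracket(\DBD_1) \neq \llbracket q \rrbracket(\DBD_2)$.

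For $\llbracket \eqECRPQ \rrbracket \subsetneq \llbracket \ECRPQ \rrbracket$, I would pick an $\ECRPQ$ using a genuinely non-equality regular relation, e.g. the prefix relation or the ``same length'' relation that is \emph{not} the equality relation — the cleanest choice is the binary relation $\{(u, v) : |u| = |v|\}$ over $\Sigma = \{\ta, \tb\}$, realized by the regular expression for $\{(\ta|\tb) \otimes (\ta|\tb)\}^*$ in the convolution encoding of~\cite{BarceloEtAl2012}. The corresponding single-edge-pair query asserts two parallel paths of equal length but \emph{unconstrained content}. To show no $\eqECRPQ$ expresses this, I would again use a distinguishing pair of databases: over a two-letter alphabet one can have two paths of equal length with differing labels, which the ``same length'' $\ECRPQ$ accepts but which any $\eqECRPQ$ — whose only binary constraints force \emph{equal words} — must reject whenever the natural witness paths are forced to be distinct; combined with a monotonicity/pumping argument against the $\CRPQ$-part of the $\eqECRPQ$ ``faking'' the length check, this separates the classes. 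The union statements $\llbracket \unionsof{\CRPQ} \rrbracket \subsetneq \llbracket \unionsof{\eqECRPQ} \rrbracket \subsetneq \llbracket \unionsof{\ECRPQ} \rrbracket$ then follow from the same separating examples, since those examples are Boolean queries producing, on the separating databases, the constant-true or constant-false behavior that no finite union of the weaker queries can reproduce — a finite union of $\CRPQ$s is still monotone and still cannot synchronize path lengths, and a finite union of $\eqECRPQ$s still cannot express unconstrained equal-length pairs, by the same pumping argument applied simultaneously to all finitely many disjuncts.

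\textbf{Main obstacle.} The delicate point is the inexpressibility direction — showing that \emph{no} $\CRPQ$ (resp. $\eqECRPQ$), of any size, captures the target mapping. For a fixed candidate query this is a pumping/Ehrenfeucht--Fra\"iss\'e-style argument, but one must be careful to get the quantifier order right (the candidate is fixed \emph{first}, then the databases are chosen depending on its size) and to handle the fact that $\CRPQ$s may use many edges and node-variables, not just a single edge; the cleanest route is to invoke the known characterization that $\CRPQ$-definable Boolean queries are closed under a ``path-unfolding'' preservation property (this is folklore from~\cite{BarceloEtAl2012, BarceloEtAl2016}) and then observe the equal-length / equal-word queries violate it. For the $\eqECRPQ \subsetneq \ECRPQ$ separation the extra subtlety is that $\eqECRPQ$s \emph{can} already express some inter-path dependencies, so the separating example must exploit a relation that is provably not definable by any Boolean combination of equalities of path-variables — ``same length without equal content'' is the natural witness, and the argument reduces to showing equality-constraints can only ever force strings to coincide, never merely to have matching length, which is straightforward by considering a database where the only equal-length witness pairs carry different labels.
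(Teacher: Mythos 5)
Your high-level plan (inclusions by definition, then a separating query plus a ``fix the candidate first, then choose databases larger than it'' inexpressibility argument) matches the paper's strategy, but your separating example for $\llbracket \CRPQ \rrbracket \subsetneq \llbracket \eqECRPQ \rrbracket$ does not work. You take two parallel edges $(x,\ta^+,y)$ and $(x,\ta^+,y)$ between the \emph{same} pair of node variables, with an equality constraint. Under the semantics of matching morphisms, an edge only requires the \emph{existence} of some path from $h(x)$ to $h(y)$ whose label lies in the edge's language; since both edges have the same endpoints and the same language, any single witnessing word $\ta^k$ can serve as the matching word for both edges simultaneously, so the equality constraint is vacuous and your query is equivalent to the single-edge $\CRPQ$ $(x,\ta^+,y)$. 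It therefore lies in $\llbracket \CRPQ \rrbracket$ and separates nothing. The paper avoids exactly this trap: its query $q_{\ta^n\ta^n}$ puts the two equality-constrained $\ta^*$-edges between \emph{disjoint} node-variable pairs and flanks them with distinct markers $\tc$ and $\td$, so that on the test databases (two node-disjoint paths $\tc\ta^n\tc$ and $\td\ta^n\td$) the two edges are forced onto genuinely different paths and the equality constraint really synchronizes their lengths.

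The inexpressibility half is also left too vague to check. The paper's argument is not a pumping argument on product automata but a pigeonhole/mixing argument: for a fixed candidate $q'$ with $m$ edges, each matching morphism into $\DBD_{n,n}$ induces a partition of $[m]$ according to which of the two disjoint paths each edge is matched into; since there are finitely many partitions, two values $n_1\neq n_2$ yield the same partition, and the two morphisms can be spliced into a matching morphism for $\DBD_{n_1,n_2}$, a contradiction. For the $\eqECRPQ\subsetneq\ECRPQ$ separation the paper makes the two paths use \emph{disjoint} sub-alphabets ($\{\ta,\tc\}$ versus $\{\tb,\td\}$) precisely so that any equality relation of the candidate either lives entirely on one path or is satisfied only by empty words, and hence survives the splicing; your ``equal length with unconstrained content'' variant loses this protection, since a candidate could be confronted with databases where the two witness paths happen to carry identical labels, and your claim that it ``must reject whenever the natural witness paths are forced to be distinct'' is not an argument. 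To repair the proof you should (i) replace your first example by one in which the two equality-constrained paths are forced into disjoint regions of the database, and (ii) spell out the splicing step and verify explicitly that every equality relation of the candidate is preserved when the two morphisms are combined.
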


\begin{proof}
The inclusions follow immediately, since $\CRPQ$ can be interpreted as $\eqECRPQ$ without any equality relations, and $\eqECRPQ \subseteq \ECRPQ$. We shall next prove that they are proper. \par
Let $\Sigma = \{\ta, \tb, \tc, \td\}$ and let $q_{\ta^n\tb^n}$ be the Boolean $\ECRPQ$ over $\Sigma$ defined by the graph-pattern $G_{q_{\ta^n\tb^n}} = (V_{q_{\ta^n\tb^n}}, E_{q_{\ta^n\tb^n}})$ with 
\begin{align*}
V_{q_{\ta^n\tb^n}} = \{&x, y_1, y_2, z, x', y'_1, y'_2, z'\}\,,\\
E_{q_{\ta^n\tb^n}} = \{&(x, \tc, y_1), (y_1, \ta^*, y_2), (y_2, \tc, z),\\
&(x', \td, y'_1), (y'_1, \tb^*, y'_2), (y'_2, \td, z')\}\,,
\end{align*}
and with only one equal-length relation (i.\,e., the relation $\{(u_1, u_2) \in (\Sigma^*)^2 \mid |u_1| = |u_2|\}$) that applies to the edges $(y_1, \ta^*, y_2)$ and $(y'_1, \tb^*, y'_2)$. See Figure~\ref{ECRPQseparatingFigure} for an illustration. We note that $\llbracket {q_{\ta^n\tb^n}} \rrbracket$ is the set of graph-databases $\DBD$ that contain (not necessarily distinct) vertices $u, v, u', v'$ and a path from $u$ to $v$ labelled with $\tc \ta^n \tc$ and a path from $u'$ to $v'$ labelled with $\td \tb^n \td$, respectively, for some $n \geq 0$.\medskip\\
\emph{Claim $1$}: $\llbracket q_{\ta^n \tb^n} \rrbracket \notin \llbracket \eqECRPQ \rrbracket$.\smallskip\\
\emph{Proof of Claim $1$}: For the sake of convenience, we relabel $q_{\ta^n\tb^n}$ to $q$ in the proof of the claim. We assume that there is a Boolean $q' \in \eqECRPQ$, such that $\llbracket q' \rrbracket = \llbracket q \rrbracket$. Moreover, let $q'$ be defined by a graph pattern $G_{q'} = (V_{q'}, E_{q'})$ with $E_{q'} = \{(\widehat{x}_i, \alpha_i, \widehat{y}_i) \mid i \in [m]\}$ and some equality relations. \par
For every $n \in \mathbb{N}$, let $\DBD_{n, n}$ be the graph-database given by two node-disjoint paths $(r_0, r_1, \dots, r_{n+2})$ and $(s_0, s_1, \dots, s_{n+2})$ labelled with $\tc \ta^n \tc$ and $\td \tb^n \td$, respectively. Obviously, for every $n \in \mathbb{N}$, $\DBD_{n, n} \in \llbracket q \rrbracket = \llbracket q' \rrbracket$, which means that there is at least one matching morphism $h$ for $q'$ and $\DBD_{n, n}$. In the following, for every $n \in \mathbb{N}$, let $h_n$ be some fixed matching morphism for $q'$ and $\DBD_{n, n}$. By the structure of $\DBD_{n, n}$, we also know that, for every $i \in [m]$, the arc $(\widehat{x}_i, \alpha_i, \widehat{y}_i)$ is matched to some sub-path of either $(r_0, r_1, \dots, r_{n+2})$ or $(s_0, s_1, \dots, s_{n+2})$; more precisely, there are $\ell_i, \ell'_i \in [n + 2] \cup \{0\}$ with $0 \leq \ell_i \leq \ell'_i \leq n + 2$ such that either $h_n(\widehat{x_i}) = r_{\ell_i}$ and $h_n(\widehat{y_i}) = r_{\ell'_i}$, or $h_n(\widehat{x_i}) = s_{\ell_i}$ and $h_n(\widehat{y_i}) = s_{\ell'_i}$. Now let $\{C_n, D_n\}$ be a partition of $[m]$ such that $i \in C_n$ if $(\widehat{x}_i, \alpha_i, \widehat{y}_i)$ is matched to some sub-path of $(r_0, r_1, \dots, r_{n+2})$ and $i \in D_n$ if $(\widehat{x}_i, \alpha_i, \widehat{y}_i)$ is matched to some sub-path of $(s_0, s_1, \dots, s_{n+2})$. In particular, we note that $(r_0, r_1, \dots, r_{n+2})$ only contains labels $\ta$ and $\tc$, while $(s_0, s_1, \dots, s_{n+2})$ only contains labels $\tb$ and $\td$. This means that for each single equality relation of $q'$ with arity $p$ that applies to a set $\{e_{j_1}, e_{j_2}, \ldots, e_{j_p}\}$ of arcs, there are three possibilities: (1) $\{j_1, j_2, \ldots, j_p\} \subseteq C_n$, (2) $\{j_1, j_2, \ldots, j_p\} \subseteq D_n$, or (3), for every $i \in [p]$, $h_{n}(\widehat{x}_{j_i}) = h_{n}(\widehat{y}_{j_i})$ (i.\,e., they cover paths of length $0$ that are labelled with $\eword$). \par
Since there is only a finite number of partitions of $[m]$ into two sets, there must be some $n_1, n_2 \in \mathbb{N}$ with $n_1 \neq n_2$ such that $C_{n_1} = C_{n_2}$ and $D_{n_1} = D_{n_2}$. We now define a morphism $h : V_{q'} \to \{r_0, r_1, \dots, r_{n_1+2}\} \cup \{s_0, s_1, \dots, s_{n_2+2}\}$ by setting, for every $i \in C_{n_1} = C_{n_2}$, $h(\widehat{x}_i) = h_{n_1}(\widehat{x}_i)$ and $h(\widehat{y}_i) = h_{n_1}(\widehat{y}_i)$, and, for every $i \in D_{n_1} = D_{n_2}$, $h(\widehat{x}_i) = h_{n_2}(\widehat{x}_i)$ and $h(\widehat{y}_i) = h_{n_2}(\widehat{y}_i)$. We note that $h$ is a matching morphism for $q'$ and $\DBD_{n_1, n_2}$. In particular, due to our observation from above, each equality relation is satisfied. Since $n_1 \neq n_2$, we have that $\DBD_{n_1, n_2} \notin \llbracket q' \rrbracket$, which is a contradiction. \hfill \qed \,(\emph{Claim $1$})\medskip\\
Let $q_{\ta^n\ta^n}$ be the $\eqECRPQ$ over $\Sigma$ that is defined in almost the same way as $q_{\ta^n\tb^n}$, with the only differences that we label the arc from $y'_1$ to $y'_2$ of the graph-pattern by $\ta^*$ instead of $\tb^*$ and that the binary equal-length relation on $(y_1, \ta^*, y_2)$ and $(y'_1, \tb^*, y'_2)$ becomes a binary equality relation on $(y_1, \ta^*, y_2)$ and $(y'_1, \ta^*, y'_2)$. \par
More formally, let $q_{\ta^n\ta^n} \in \eqECRPQ$ over $\Sigma = \{\ta, \tb, \tc, \td\}$ be defined by the graph-pattern $G_{q_{\ta^n\ta^n}} = (V_{q_{\ta^n\ta^n}}, E_{q_{\ta^n\ta^n}})$ with 
\begin{align*}
V_{q_{\ta^n\ta^n}} = \{&x, y_1, y_2, z, x', y'_1, y'_2, z'\}\,,\\
E_{q_{\ta^n\ta^n}} = \{&(x, \tc, y_1), (y_1, \ta^*, y_2), (y_2, \tc, z),\\
&(x', \td, y'_1), (y'_1, \ta^*, y'_2), (y'_2, \td, z')\}\,,
\end{align*}
and only one binary equality relation that applies to the edges $(y_1, \ta^*, y_2)$ and $(y'_1, \ta^*, y'_2)$. We note that $\llbracket q_{\ta^n\ta^n} \rrbracket$ is the set of graph-databases $\DBD$ that contain (not necessarily distinct) vertices $u, v, u', v'$ and a path from $u$ to $v$ labelled with $\tc \ta^n \tc$ and a path from $u'$ to $v'$ labelled with $\td \ta^n \td$, respectively, for some $n \geq 0$.\medskip\\
\emph{Claim $2$}: $\llbracket q_{\ta^n \ta^n} \rrbracket \notin \llbracket \CRPQ \rrbracket$.\smallskip\\
\emph{Proof of Claim $2$}: We assume that there is a $q' \in \CRPQ$, such that $\llbracket q' \rrbracket = \llbracket q \rrbracket$. Moreover, let $q'$ be defined by a graph pattern $G_{q'} = (V_{q'}, E_{q'})$ with $E_{q'} = \{(\widehat{x}_i, \alpha_i, \widehat{y}_i) \mid i \in [m]\}$. \par
We can now obtain a contradiction analogously as in to the proof of Claim $1$. In fact, the argument is almost the same, but we argue with graph-databases $\DBD_{n, n}$ given by two node-disjoint paths $(r_0, r_1, \dots, r_{n+2})$ and $(s_0, s_1, \dots, s_{n+2})$ labelled with $\tc \ta^n \tc$ and $\td \ta^n \td$, respectively. In general, the argument is simpler, because we do not have to take care of possible equality relations. \\\hfill \qed \,(\emph{Claim $2$})\medskip\\
This concludes the proof. 
\end{proof}

\begin{figure}[h]
\begin{center}
\scalebox{1.5}{\includegraphics{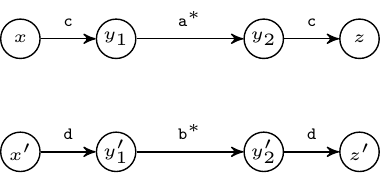}}
\end{center}
      	      	\caption{Illustration of the graph-pattern for $q_{\ta^n\tb^n}$.}
\label{ECRPQseparatingFigure}
\end{figure}

\begin{theorem}\label{UnionsOfCRPQECRPQExpressivePowerTheorem}
$\llbracket \unionsof{\CRPQ} \rrbracket \subsetneq \llbracket \unionsof{\eqECRPQ} \rrbracket \subsetneq \llbracket \unionsof{\ECRPQ} \rrbracket$.
\end{theorem}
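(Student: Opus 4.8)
The two inclusions are immediate from the definitions: a $\unionsof{\CRPQ}$ is a $\unionsof{\eqECRPQ}$ none of whose disjuncts uses an equality relation, and $\eqECRPQ \subseteq \ECRPQ$. For the strictness, the plan is to reuse the two witness queries and the two separation arguments from the proof of Theorem~\ref{CRPQECRPQExpressivePowerTheorem}, lifting each of them from single queries to unions by an extra pigeonhole step over the finitely many disjuncts.

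For $\llbracket \unionsof{\CRPQ} \rrbracket \subsetneq \llbracket \unionsof{\eqECRPQ} \rrbracket$, I take the $\eqECRPQ$ $q_{\ta^n\ta^n}$ from the proof of Theorem~\ref{CRPQECRPQExpressivePowerTheorem} (so $\llbracket q_{\ta^n\ta^n} \rrbracket \in \llbracket \unionsof{\eqECRPQ} \rrbracket$ trivially) and claim it is not in $\llbracket \unionsof{\CRPQ} \rrbracket$. Suppose $q' = q'_1 \vee \ldots \vee q'_k$ with each $q'_i \in \CRPQ$ and $\llbracket q' \rrbracket = \llbracket q_{\ta^n\ta^n} \rrbracket$. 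For every $n \in \mathbb{N}$ consider the database $\DBD_{n,n}$ formed by two node-disjoint paths labelled $\tc \ta^n \tc$ and $\td \ta^n \td$; it lies in $\llbracket q_{\ta^n\ta^n} \rrbracket = \llbracket q' \rrbracket$, so some disjunct $q'_{j(n)}$ matches it. By pigeonhole there is a fixed index $j$ with $\DBD_{n,n} \models q'_j$ for all $n$ in some infinite set $N \subseteq \mathbb{N}$. Now the argument of Claim~$2$ in the proof of Theorem~\ref{CRPQECRPQExpressivePowerTheorem} applies verbatim to the single $\CRPQ$ $q'_j$: each of its arcs is matched into the left or the right path, inducing a partition of its edge set; there are only finitely many such partitions, so two indices $n_1 \neq n_2$ in $N$ induce the same one, and splicing the two matching morphisms along this partition yields a matching morphism of $q'_j$ to the mixed database $\DBD_{n_1,n_2}$. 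Hence $\DBD_{n_1,n_2} \models q'_j$, so $\DBD_{n_1,n_2} \models q'$, contradicting $\DBD_{n_1,n_2} \notin \llbracket q_{\ta^n\ta^n} \rrbracket$ (as $n_1 \neq n_2$).

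For $\llbracket \unionsof{\eqECRPQ} \rrbracket \subsetneq \llbracket \unionsof{\ECRPQ} \rrbracket$, I run the same scheme with the $\ECRPQ$ $q_{\ta^n\tb^n}$ as the witness (it uses an equal-length relation, hence lies in $\llbracket \unionsof{\ECRPQ} \rrbracket$), the databases $\DBD_{n,n}$ now being the two node-disjoint paths labelled $\tc \ta^n \tc$ and $\td \tb^n \td$, and Claim~$1$ of Theorem~\ref{CRPQECRPQExpressivePowerTheorem} playing the role of Claim~$2$. The only extra ingredient over the $\CRPQ$ case is the bookkeeping of equality relations in the chosen disjunct $q'_j$, but this is precisely what Claim~$1$ already handles: since the left path carries only $\ta, \tc$ and the right path only $\tb, \td$, every equality relation of $q'_j$ lives entirely on left arcs, entirely on right arcs, or is satisfied trivially by arcs covering $\eword$-labelled length-$0$ paths, so splicing along the partition preserves all equality relations.

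The main obstacle is organizational rather than conceptual: one must be sure that after the first pigeonhole (restricting to the infinite set $N$ of ``good'' $n$'s) the second pigeonhole (over partitions of the fixed disjunct's edge set) still has infinitely many candidates, which it does, and that the spliced map is genuinely a matching morphism of the chosen disjunct, equality relations included in the $\eqECRPQ$-vs-$\ECRPQ$ case. Since both underlying single-query separations are already established in Theorem~\ref{CRPQECRPQExpressivePowerTheorem}, the lifting to unions is routine once this double-pigeonhole structure is in place.
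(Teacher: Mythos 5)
Your proposal is correct and follows essentially the same route as the paper: both inclusions by definition, and strictness by reusing the witness queries $q_{\ta^n\ta^n}$ and $q_{\ta^n\tb^n}$ together with the databases $\DBD_{n,n}$ and the splicing arguments of Claims~$1$~and~$2$, lifted to unions by a pigeonhole over the finitely many disjuncts combined with the pigeonhole over edge-set partitions. The paper merges these two pigeonhole steps into one (finitely many disjuncts times finitely many partitions), whereas you perform them sequentially via an infinite set $N$, but this is only a presentational difference.
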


\begin{proof}
The Inclusions follow by definition. We next show that the inclusion $\llbracket \unionsof{\eqECRPQ} \rrbracket \subseteq \llbracket \unionsof{\ECRPQ} \rrbracket$ is proper. To this end, we first recall the proof of Claim $1$ in the proof of Theorem~\ref{CRPQECRPQExpressivePowerTheorem}, which showed that for the query $q = q_{\ta^n\tb^n} \in \ECRPQ \subseteq \unionsof{\ECRPQ}$ (see also Figure~\ref{ECRPQseparatingFigure}), we have that $\llbracket q \rrbracket \notin \eqECRPQ$. We have demonstrated that if there is a query $q' \in \eqECRPQ$ with $\DBD_{n, n} \models q'$ for every $n \in \mathbb{N}$, then there are $n_1, n_2 \in \mathbb{N}$ with $n_1 \neq n_2$, such that $q$ can be matched to both $\DBD_{n_1, n_1}$ and $\DBD_{n_2, n_2}$ in such a way that the partition of the edges of $q'$ according to whether they are matched to the $\tc \ta^n \tc$ path or the $\td \tb^n \td$ path are exactly the same. This has lead to the contradiction that $\DBD_{n_1, n_2} \models q'$.\par
If we now instead consider a $q' \in \unionsof{\eqECRPQ}$, then $q'$ contains only a finite number of graph patterns $G^{(1)}_q, G^{(2)}_q, \ldots, G^{(\ell)}_q$, and, for every $n \in \mathbb{N}$, there is at least one $j \in [\ell]$, such that $G^{(j)}_q$ can be matched to $\DBD_{n, n}$. This implies again, that there are $n_1, n_2 \in \mathbb{N}$ with $n_1 \neq n_2$, such that, for some $j \in [\ell]$, $G^{(j)}_q$ can be matched to both $\DBD_{n_1, n_1}$ and $\DBD_{n_2, n_2}$ in such a way that the partition of the edges of $G^{(j)}_q$ according to whether they are matched to the $\tc \ta^n \tc$ path or the $\td \tb^n \td$ are exactly the same. Again, this leads to the contradiction that $G^{(j)}_q$ can be matched to $\DBD_{n_1, n_2}$ and therefore $\DBD_{n_1, n_2} \models q'$.\par
In order to show that the inclusion $\llbracket \unionsof{\eqECRPQ} \rrbracket \subseteq \llbracket \unionsof{\ECRPQ} \rrbracket$ is proper, we argue analogously, but with $q = q_{\ta^n\ta^n} \in \eqECRPQ \subseteq \unionsof{\eqECRPQ}$ (i.\,e., we extend the argument of the proof of Claim $2$ of the proof of Theorem~\ref{CRPQECRPQExpressivePowerTheorem} to the case of unions of queries just as it is done above with respect to Claim $1$ of the proof of Theorem~\ref{CRPQECRPQExpressivePowerTheorem}). 
\end{proof}

\subsection{$\llbracket \eqECRPQ \rrbracket \subseteq \llbracket \vsfpbCXRPQ \rrbracket \subseteq \llbracket \vsfCXRPQ \rrbracket \subseteq \llbracket \CXRPQ \rrbracket$}

All inclusion of $$\llbracket \eqECRPQ \rrbracket \subseteq \llbracket \vsfpbCXRPQ \rrbracket \subseteq \llbracket \vsfCXRPQ \rrbracket \subseteq \llbracket \CXRPQ \rrbracket$$ follow by definition, except the first one, which is due to the following Lemma~\ref{inclusioneqECRPQinvsfpbCXRPQLemma}. Note that this inclusion chain also implies the inclusion $\llbracket \eqECRPQ \rrbracket \subseteq \llbracket \CXRPQ \rrbracket$ depicted in Figure~\ref{expPowerInclusionMap}; its strictness is shown later in Subsection~\ref{sec:strictness}.


\begin{lemma}\label{inclusioneqECRPQinvsfpbCXRPQLemma}
$\llbracket \eqECRPQ \rrbracket \subseteq \llbracket \vsfpbCXRPQ \rrbracket$.
\end{lemma}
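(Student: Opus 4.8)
The plan is to show that every $q \in \eqECRPQ$ can be rewritten as an equivalent $q' \in \vsfpbCXRPQ$ by encoding each equality relation of $q$ as a single string variable shared across the relevant edge labels. First I would take an arbitrary $\eqECRPQ$ $q = \bar{z} \gets G_q$, where $G_q = (V_q, E_q)$ with $E_q = \{(x_i, \alpha_i, y_i) \mid i \in [m]\}$ and the equality relations given (as described just before the lemma) by a partition $\{E_{q,1}, E_{q,2}, \ldots, E_{q,t}\}$ of a subset of $E_q$ — each block $E_{q,j}$ being a set of edges whose matching words must all coincide. (Edges not appearing in any block are left untouched.) The construction: introduce a fresh string variable $\varsz_j \in \varset$ for each block $E_{q,j}$. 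Then, within $E_{q,j}$, pick one designated edge $(x, \alpha, y)$ and replace its label by the xregex $\varsz_j\{\alpha\}$ (a variable definition wrapping the original regular expression $\alpha$), and replace the label $\beta$ of every other edge in $E_{q,j}$ by the single reference $\varsz_j$. Call the resulting $\CXRPQ$ $q'$ with conjunctive xregex $\bar{\alpha}'$.

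The key verification steps, in order, are: (i) check that $\bar{\alpha}'$ is a well-formed conjunctive xregex — it is sequential since each $\varsz_j$ has exactly one definition and the definitions $\varsz_j\{\alpha\}$ are subexpressions sitting on distinct edges with no nesting of distinct variables (the original $\alpha$'s are classical regular expressions, so $\var(\alpha) = \emptyset$), and $\preceq_{\alpha_1' \ldots \alpha_m'}$ is trivially acyclic because no variable reference occurs inside any variable definition; (ii) check that $q'$ is vstar-free — every variable reference $\varsz_j$ replaces a whole edge label, hence is not under any $+$-operator (recall $+$/$*$ operate within a single edge label, not across the conjunction); (iii) check that $q'$ has only flat variables — each $\varsz_j$ either has a basic definition $\varsz_j\{\alpha\}$ with $\alpha \in \RE_\Sigma$, or, if one insists, it has no reference inside any other variable definition (indeed all references of $\varsz_j$ are top-level edge labels), so $q' \in \vsfpbCXRPQ$; (iv) check semantic equivalence $q \equiv q'$. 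For (iv) I would argue directly from the definition of conjunctive matches: a tuple $(w_e)_{e \in E_q}$ of matching words together with a matching morphism $h$ witnesses $\bar t \in q(\DBD)$ iff each $w_e \in \lang(\alpha_e)$ and $w_e = w_{e'}$ whenever $e, e'$ lie in the same block; this is exactly the condition that $(w_e)_{e \in E_q}$ is a conjunctive match of $\bar{\alpha}'$ with variable mapping $\psi$ given by $\psi(\varsz_j) = $ the common value of the edges in $E_{q,j}$ (and $\psi$ arbitrary, say $\eword$, elsewhere), using that $\lang(\varsz_j\{\alpha\}) $ constrains the designated edge to $\lang(\alpha)$ while forcing $\varmap{}(\varsz_j) \in \lang(\alpha)$, and each $\varsz_j$-reference edge is labelled by exactly that image. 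One should be slightly careful about the case $w_e = \eword$ — but the conjunctive-xregex semantics already permit $\varmap{v}(\varsz_j) = \eword$ (either via a $\varsz_j\{\alpha\}$ with $\eword \in \lang(\alpha)$, or, on reference edges, via the dummy definition $\varsz_j\{\Sigma^*\}$ from $\varprefix{\cdot}$), so the correspondence is exact in both directions.

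I expect step (iv), the semantic equivalence, to be the main (though not difficult) obstacle, since one must carefully match the $\interxregex{\cdot}$-based definition of conjunctive matches — with its prefix of dummy definitions $\varprefix{\alpha_i'}$ for undefined variables and the $\#$-separator bookkeeping — against the comparatively transparent equality-relation semantics of $\eqECRPQ$. The cleanest route is: given a matching morphism and matching words for $q$, explicitly exhibit the ref-words $v_i = v_i' \# v_i'' \in \reflang(\interxregex{\alpha_i'})$ (taking $v_i''$ to spell out the edge label with the appropriate parentheses or bare reference, and $v_i'$ to instantiate the dummy definitions of $\varprefix{\alpha_i'}$ with the right images) and verify all $\varmap{v_i}$ agree; then do the reverse direction symmetrically. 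The size bounds are immediate ($|q'| = \bigO(|q|)$), and since each added edge label is either $\varsz_j\{\alpha\}$ with $\alpha \in \RE_\Sigma$ or a single reference, $q'$ is in fact \emph{simple}, which is more than required.
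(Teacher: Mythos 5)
There is a genuine gap in your construction, and it is in the translation step itself rather than in the bookkeeping you flag as the main obstacle. When you replace the label $\beta$ of every non-designated edge of a block $E_{q,j}$ by the bare reference $\varsz_j$, you discard the constraint that the matching word of that edge must lie in $\lang(\beta)$. In $\eqECRPQ$ the common word of a block must lie in $\bigcap_{e \in E_{q,j}} \lang(\alpha_e)$, whereas your $q'$ only forces it into the language of the one designated edge. Concretely: take a two-edge query whose edges carry $\ta^*$ and $\tb^*$ and are tied by a binary equality relation; the original query forces the common word to be $\eword$, but your translation $(\varsz\{\ta^*\},\ \varsz)$ accepts any common word in $\ta^*$, e.g.\ $\ta\ta$. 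So the direction ``conjunctive match of $\bar{\alpha}'$ implies matching morphism for $q$'' fails, and the claimed equivalence in your step (iv) does not hold.

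The fix is exactly the preprocessing step the paper performs before introducing the variables: for each block $E_{q,j} = \{e_{p_1}, \ldots, e_{p_{s_j}}\}$, first replace the designated edge's label by a regular expression $\beta$ for $\bigcap_{i=1}^{s_j} \lang(\alpha_{p_i})$ (regular languages are closed under intersection, and since the lemma concerns only expressive power the possible blow-up in $|\beta|$ is irrelevant) and the other edges' labels by $\Sigma^*$; only then wrap the designated label as $\varsz_j\{\beta\}$ and turn the others into references $\varsz_j$. With that correction the rest of your argument goes through: the result is sequential, acyclic, vstar-free, and every $\varsz_j$ has a basic definition $\varsz_j\{\beta\}$ with $\beta \in \RE_\Sigma$, hence is flat, so the query is indeed in $\vsfpbCXRPQ$ (and, as you note, even simple). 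Your stated size bound $|q'| = \bigO(|q|)$ also no longer holds once the intersection expressions are introduced, but no size bound is claimed in the lemma.
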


\begin{proof}
Let $q \in \eqECRPQ$ be of the form $q = \bar{z} \gets G_q$ with $E_q = \{e_i = (x_i, \alpha_i, y_i) \mid i \in [m]\}$ and let $\{E_{q, 1}, E_{q, 2}, \ldots, E_{q, t}\}$ with $E_{q, j} = \{e_{p_1}, e_{p_2}, \ldots, e_{p_{s_j}}\}$ be the partition of $E_q$ that represents the equality constraints. For every $j \in [t]$, we successively modify $q$ as follows. We replace $(x_{p_1}, \alpha_{p_1}, y_{p_1})$ by $(x_{p_1}, \beta, y_{p_1})$, where $\beta$ is a regular expression for $\bigcap^{s_j}_{i = 1} \lang(\alpha_{p_i})$, and, for every $i$ with $2 \leq i \leq p_{s_j}$, we replace $(x_{p_i}, \alpha_{p_i}, y_{p_i})$ by $(x_{p_i}, \Sigma^*, y_{p_i})$. 
We denote the $\eqECRPQ$ constructed in this way by $q'$ and we note that $q'$ is equivalent to $q$. Moreover, $q'$ is represented by a graph $(V_{q'}, E_{q'})$ and a partition $\{E_{q, 1}, E_{q, 2}, \ldots, E_{q, t}\}$ such that, for every $j \in [t]$, $E_{q, j} = \{(x_{p_1}, \beta_{p_1}, y_{p_1}), (x_{p_2}, \Sigma^*, y_{p_2}), \ldots, (x_{p_{s_j}}, \Sigma^*, y_{p_{s_j}})\}$ for some regular expression $\beta_{j_1}$. We can now translate $q'$ into a $q'' \in \vsfpbCXRPQ$ by replacing, for every $j \in [t]$, edge $(x_{j_1}, \beta_{j_1}, y_{j_1})$ by $(x_{j_1}, \varsz_{j}\{\beta_{j_1}\}, y_{j_1})$ and every edge $(x_{j_{\ell}}, \Sigma^*, y_{j_{\ell}})$, $2 \leq \ell \leq s_j$, by $(x_{j_{\ell}}, \varsz_j, y_{j_{\ell}})$. It can be easily verified that $q''$ is equivalent to $q$.\par
\end{proof}

\subsection{$\llbracket \ECRPQ \rrbracket \subseteq \llbracket \unionsof{\ECRPQ} \rrbracket$ and $\llbracket \vsfCXRPQ \rrbracket \subseteq \llbracket \unionsof{\eqECRPQ} \rrbracket$}

The inclusion $\llbracket \ECRPQ \rrbracket \subseteq \llbracket \unionsof{\ECRPQ} \rrbracket$ follows trivially by definition, whereas $\llbracket \vsfCXRPQ \rrbracket \subseteq \llbracket \unionsof{\eqECRPQ} \rrbracket$ is shown by the following Lemma~\ref{vstarfreeCXRPQToUnionOfECRPQLemma}.

\begin{lemma}\label{vstarfreeCXRPQToUnionOfECRPQLemma}
$\llbracket \vsfCXRPQ \rrbracket \subseteq \llbracket \unionsof{\eqECRPQ} \rrbracket$.
\end{lemma}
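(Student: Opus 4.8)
The plan is to start from a query $q \in \vsfCXRPQ$ and apply the normal form machinery already developed in Section~\ref{sec:normalForm}. By Lemmas~\ref{conjunctiveNormalFormLemmaStepOne},~\ref{uniqueDefinitionsLemma}~and~\ref{removeNonBasicLemma}, $q$ is equivalent to a $\CXRPQ$ $q'$ whose conjunctive xregex $\bar\beta = (\beta_1,\dots,\beta_m)$ is in normal form, i.e. each $\beta_i = \beta_{i,1} \altop \cdots \altop \beta_{i,t_i}$ is an alternation of simple xregex. The key observation is that a conjunctive match of $\bar\beta$ is obtained by choosing, independently for each edge $i$, one alternation branch $\beta_{i,j_i}$ and then matching the simple conjunctive xregex $(\beta_{1,j_1},\dots,\beta_{m,j_m})$. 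Hence $q'$ decomposes as a finite union, over all tuples $(j_1,\dots,j_m) \in [t_1]\times\cdots\times[t_m]$, of $\CXRPQ$s $q'_{(j_1,\dots,j_m)}$ each built from a \emph{simple} conjunctive xregex (with the same head $\bar z$ and the same underlying edge skeleton, but with the chosen simple branch on each edge). So $\llbracket q' \rrbracket = \llbracket \bigvee_{(j_1,\dots,j_m)} q'_{(j_1,\dots,j_m)} \rrbracket$, and it suffices to show that each simple $\CXRPQ$ lies in $\llbracket \eqECRPQ \rrbracket$.

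For a simple $\CXRPQ$, I would argue along the lines of the proof of Lemma~\ref{simpleCXRPQLemma}: first eliminate definitions of the form $\varsx\{\varsy\}$ by substituting references of $\varsy$, and split each edge $(x_i,\alpha_i,y_i)$ with $\alpha_i = \alpha_{i,1}\cdots\alpha_{i,t_i}$ into a path of edges $(z_{i,\ell-1},\alpha_{i,\ell},z_{i,\ell})$, so that every edge label is now a single item: a classical regular expression, a bare variable reference $\varsx$, or a basic variable definition $\varsx\{\gamma\}$ with $\gamma \in \RE_\Sigma$. Since the conjunctive xregex is simple, in every ref-word tuple each variable has \emph{exactly one} definition, so all occurrences of a given variable $\varsx$ (one definition edge, several reference edges) must match the \emph{same} word from $\Sigma^*$. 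This is precisely an equality constraint. Concretely: replace the definition edge $(z,\varsx\{\gamma\},z')$ by $(z,\gamma,z')$ and each reference edge $(z,\varsx,z')$ by $(z,\Sigma^*,z')$, and add the equality relation binding all these edges together; do this for every variable. The resulting query is an $\eqECRPQ$, and a routine check of definitions shows it is equivalent to the simple $\CXRPQ$ (the equality relation exactly captures "same variable image", and the intersection of $\lang(\gamma)$ with $\Sigma^*$ is just $\lang(\gamma)$, while the equal-word requirement forces consistency). Taking the disjunction over all branch-choice tuples then gives a $\unionsof{\eqECRPQ}$ equivalent to $q$, so $\llbracket q\rrbracket \in \llbracket \unionsof{\eqECRPQ}\rrbracket$.

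The main obstacle I anticipate is bookkeeping around \emph{empty} variable images and variables that may have a definition in one branch but none in another. In the union decomposition, a fixed branch choice $(j_1,\dots,j_m)$ pins down which edges carry definitions; but a simple xregex branch can still instantiate a definition $\varsx\{\gamma\}$ whose derived image is $\eword$ (e.g. $\gamma$ generating $\eword$), and a variable with a definition edge in branch $j_r$ of edge $r$ will have empty image in every match of a \emph{different} branch choice — which is fine, because that is a different disjunct. Within one disjunct, however, I must make sure the translation to $\eqECRPQ$ correctly allows $\eword$ matches on the $\Sigma^*$-labelled reference edges (it does, since $\eword \in \lang(\Sigma^*)$) and that the equality relation is consistent with a definition edge producing $\eword$. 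I expect this to be handled by the same careful case analysis on empty images already present in the proofs of Lemmas~\ref{uniqueDefinitionsLemma}~and~\ref{removeNonBasicLemma}, together with the remark following Proposition~\ref{derefProposition} about how conjunctive xregex allocate $\eword$ to variables; no genuinely new difficulty, but it is the place where the argument must be written with care.
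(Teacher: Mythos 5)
Your proposal is correct and follows essentially the same route as the paper's proof: reduce to normal form via Lemmas~\ref{conjunctiveNormalFormLemmaStepOne},~\ref{uniqueDefinitionsLemma}~and~\ref{removeNonBasicLemma}, split the alternations into a finite union of simple $\CXRPQ$s, and convert each simple query to an $\eqECRPQ$ by eliminating $\varsx\{\varsy\}$ definitions, breaking edges into paths of single items, and replacing each variable's definition edge by $\gamma$ and its reference edges by $\Sigma^*$ bound together with an equality relation. The empty-image bookkeeping you flag is real but, as you suspect, causes no difficulty since $\eword \in \lang(\Sigma^*)$ and the equality relation handles it uniformly.
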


\begin{proof}
Let $q = \bar{z} \gets G_q$ with xregex $\bar{\alpha} = (\alpha_1, \alpha_2, \ldots, \alpha_m)$. By Lemmas~\ref{conjunctiveNormalFormLemmaStepOne},~\ref{uniqueDefinitionsLemma}~and~\ref{removeNonBasicLemma}, we can assume that $\bar{\alpha}$ is in normal form, i.\,e., it is such that, for every $i \in [m]$, $\alpha_i = \alpha_{i, 1} \altop \alpha_{i, 2} \altop \ldots \altop \alpha_{i, t_i}$ and every $\alpha_{i, j}$ for $j \in [t_i]$ is simple. \par
We can now transform $q$ into $q_1, q_2, \ldots, q_r \in \vsfCXRPQ$, such that, for every graph-database $\DBD$, $q(\DBD) = \bigcup^r_{i = 1} q_i(\DBD)$, and, for every $i \in [r]$, $q_i$ is simple. More precisely, the $q_j$ with $j \in [r]$ are obtained by considering, for every $i \in [m]$, all possibilities of replacing $\alpha_i$ by exactly one of the $\alpha_{i, 1}, \alpha_{i, 2}, \ldots, \alpha_{i, t_i}$, which obviously results in $\vsfCXRPQ$ that are simple (since $q$ is in normal form). \par
Next, for every $i \in [r]$, we can transform $q_i$ into an equivalent $q''_i \in \eqECRPQ$ as follows. Let $\bar{\alpha}^{(i)} = (\alpha^{(i)}_1, \alpha^{(i)}_2, \ldots, \alpha^{(i)}_m)$ be the conjunctive xregex of $q_i$. We first observe that, analogously to the proof of Lemma~\ref{simpleCXRPQLemma}, since $\bar{\alpha}^{(i)}$ is simple, we can replace definitions $\varsx\{\varsy\}$ and all occurrences of $\varsx$ by references of $\varsy$ without changing the set of conjunctive matches. Let the thus modified version of $\bar{\alpha}^{(i)}$ be denoted by $\bar{\alpha}'^{(i)} = (\alpha'^{(i)}_1, \alpha'^{(i)}_2, \ldots, \alpha'^{(i)}_m)$. For every $j \in [m]$, $\alpha'^{(i)}_{j} = \pi_{1} \pi_{2} \ldots \pi_{t}$, where each $\pi_\ell$ is a classical regular expression $\gamma$, a variable definition $\varsx\{\gamma\}$, where $\gamma$ is a classical regular expression, or a variable reference $\varsx$. Thus, for every $j \in [m]$, we can break up the edge labelled with $\alpha'^{(i)}_{j}$ into a path of size $t$ with the edge labels $\pi_\ell$. If we do this for all $\alpha'^{(i)}_{j}$, then we have turned $q_i$ into a $q'_i \in \vsfCXRPQ$, such that every edge is labelled by a classical regular expression, a variable definition over a classical regular expression, or a variable reference. For every variable $\varsx$, we now do the following. We replace the edge label $\varsx\{\gamma\}$ by $\gamma$ (since $q'_i$ is simple, there can be at most one such edge label) and all edge labels $\varsx$ by $\Sigma^*$ and add an equality constraint that applies to exactly the edges modified by this step. It can be easily seen that the thus obtained $q''_i \in \eqECRPQ$ is equivalent to $q_i$ (note that the tuple of output nodes $\bar{z}$ remains unchanged). 
\end{proof}

%

\subsection{$\llbracket \bisCXRPQ{k} \rrbracket \subseteq \llbracket \unionsof{\CRPQ} \rrbracket$ and $\llbracket \CRPQ \rrbracket \subseteq \llbracket \bisCXRPQ{k} \rrbracket$}

That, for every $k \geq 1$, $\llbracket \bisCXRPQ{k} \rrbracket \subseteq \llbracket \unionsof{\CRPQ} \rrbracket$, is due to the following Lemma~\ref{inclusionbisCXRPQunionsofCRPQLemma}. Moreover, that, for every $k \geq 1$, $\llbracket \CRPQ \rrbracket \subseteq \llbracket \bisCXRPQ{k} \rrbracket$, follows trivially by definition and the strictness of the inclusions is shown in Subsection~\ref{sec:strictness}.

\begin{lemma}\label{inclusionbisCXRPQunionsofCRPQLemma}
For every $k \geq 1$, $\llbracket \bisCXRPQ{k} \rrbracket \subseteq \llbracket \unionsof{\CRPQ} \rrbracket$.
\end{lemma}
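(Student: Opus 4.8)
The plan is to reduce a $\bisCXRPQ{k}$ query to a finite union of $\CRPQ$s by enumerating all possible variable mappings of bounded image size and invoking Lemma~\ref{checkCXRPQVariableMappingLemma}. Concretely, let $q \in \bisCXRPQ{k}$ with conjunctive xregex $\bar{\alpha} \in \dimconxregex{m}_{\Sigma, \varset}$ and $\varset = \{\varsx_1, \varsx_2, \ldots, \varsx_n\}$. By definition of the class $\bisCXRPQ{k}$, for every graph-database $\DBD$ we have $q(\DBD) = q^{\leq k}(\DBD)$, and from the definitions of $q^{\leq k}$ and $q^{\bar v}$ it follows immediately that
\begin{equation*}
q^{\leq k}(\DBD) = \bigcup_{\bar{v} \in (\Sigma^{\leq k})^n} q^{\bar{v}}(\DBD)\,.
\end{equation*}
The index set $(\Sigma^{\leq k})^n$ is finite (of size $\bigO\bigl((|\Sigma|+1)^{kn}\bigr)$), so this is a genuine finite union.

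Next I would apply Lemma~\ref{checkCXRPQVariableMappingLemma}: for each fixed $\bar{v} \in (\Sigma^{\leq k})^n$ there is a query $q_{\bar{v}} \in \CRPQ$ with $q^{\bar{v}}(\DBD) = q_{\bar{v}}(\DBD)$ for every graph-database $\DBD$. Setting $q' = \bigvee_{\bar{v} \in (\Sigma^{\leq k})^n} q_{\bar{v}}$, we obtain a union of $\CRPQ$s with, for every $\DBD$,
\begin{equation*}
q'(\DBD) = \bigcup_{\bar{v} \in (\Sigma^{\leq k})^n} q_{\bar{v}}(\DBD) = \bigcup_{\bar{v} \in (\Sigma^{\leq k})^n} q^{\bar{v}}(\DBD) = q^{\leq k}(\DBD) = q(\DBD)\,,
\end{equation*}
which gives $\llbracket q \rrbracket = \llbracket q' \rrbracket$, hence $\llbracket q \rrbracket \in \llbracket \unionsof{\CRPQ} \rrbracket$. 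Since $q$ was an arbitrary $\bisCXRPQ{k}$ query, this establishes $\llbracket \bisCXRPQ{k} \rrbracket \subseteq \llbracket \unionsof{\CRPQ} \rrbracket$. One should also check that the construction is uniform in the output-tuple $\bar z$ — each $q_{\bar v}$ produced by Lemma~\ref{checkCXRPQVariableMappingLemma} keeps the same graph pattern node set and the same distinguished tuple $\bar z$, so the disjuncts all have matching arities and the union is well-formed.

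I do not expect a serious obstacle here: the statement is essentially a bookkeeping consequence of the machinery already set up in Subsection~\ref{sec:imageSizeUpperBound}. The one point that deserves a sentence of care is the very first displayed equality, $q^{\leq k}(\DBD) = \bigcup_{\bar v} q^{\bar v}(\DBD)$, which should be justified by unfolding the definitions: a match with matching words in $\langimagebound{k}(\bar\alpha)$ is exactly a match whose matching words lie in $\langvarmap{\bar v}(\bar\alpha)$ for some $\bar v \in (\Sigma^{\leq k})^n$, since $\reflangimagebound{k}(\bar\alpha) = \bigcup_{\bar v \in (\Sigma^{\leq k})^n} \reflangvarmap{\bar v}(\bar\alpha)$ by definition. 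If one cares about the size of $q'$, note it is exponential in $|q|$ for fixed $k$ (and note the remark in the introduction that this blow-up is genuinely needed), but the lemma statement only asserts semantic inclusion, so no size bound is required.
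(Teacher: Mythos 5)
Your proof is correct and follows exactly the paper's argument: enumerate the finitely many variable mappings $\bar{v} \in (\Sigma^{\leq k})^n$, apply Lemma~\ref{checkCXRPQVariableMappingLemma} to obtain a $\CRPQ$ $q_{\bar{v}}$ with $q_{\bar{v}}(\DBD) = q^{\bar{v}}(\DBD)$ for each, and take the disjunction. Your extra sentence unfolding why $q^{\leq k}(\DBD) = \bigcup_{\bar{v}} q^{\bar{v}}(\DBD)$ is a welcome addition that the paper leaves implicit.
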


\begin{proof}
Let $k \geq 1$ and let $q \in \bisCXRPQ{k}$ be defined by a graph pattern $G_q = (V_q, E_q)$ with $E_q = \{(x_i, \alpha_i, y_i) \mid i \in [m]\}$, where $\bar{\alpha} \in \conxregex_{\Sigma, \varset}$ with $\varset = \{\varsx_1, \varsx_2, \ldots, \varsx_n\}$. Now, for every $\bar{v} \in (\Sigma^{\leq k})^n$, let $q[\bar{v}]$ be a $\CRPQ$ with the property that, for every graph database $\DBD$, $q[\bar{v}](\DBD) = q^{\bar{v}}(\DBD)$. Such $q[\bar{v}]$ exist due to Lemma~\ref{checkCXRPQVariableMappingLemma}. This directly implies that, for every graph database $\DBD$, $q(\DBD) = \bigcup_{\bar{v} \in (\Sigma^{\leq k})^n} q[\bar{v}](\DBD)$. Moreover, we can conclude that $\llbracket q \rrbracket = \llbracket q' \rrbracket$, where $q' \in \unionsof{\CRPQ}$ is defined by $q' = \bigvee_{\bar{v} \in (\Sigma^{\leq k})^n} q[\bar{v}]$.
%
%
\end{proof}

\subsection{$\llbracket \bisCXRPQ{k} \rrbracket \neq \llbracket \CRPQ \rrbracket$ and $\llbracket \eqECRPQ \rrbracket \neq \llbracket \CXRPQ \rrbracket$}\label{sec:strictness}

Interestingly, we can also show that the expressive power of $\CXRPQ$ properly exceeds that of $\eqECRPQ$, and that the expressive power of $\bisCXRPQ{k}$, for every $k \geq 1$, properly exceeds that of $\CRPQ$. 

%

\begin{figure}
\begin{center}
\scalebox{1.4}{\includegraphics{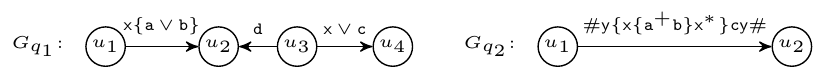}}
\end{center}
\caption{The graph patterns for $q_1$ and $q_2$.}
\label{sepExamplesFigure}
\end{figure}



\begin{lemma}\label{bisCXRPQCRPQstrictnessLemma}
For every $k \geq 1$, there is a Boolean $q \in \bisCXRPQ{k}$ with $\llbracket q \rrbracket \notin \llbracket \CRPQ \rrbracket$.
\end{lemma}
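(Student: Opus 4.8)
The plan is to exhibit, for each $k \geq 1$, a concrete Boolean $q \in \bisCXRPQ{k}$ and show that no $\CRPQ$ defines the same Boolean query, using a pumping/compactness argument on a family of graph databases of growing size — much in the spirit of the proof of Claim~$2$ in Theorem~\ref{CRPQECRPQExpressivePowerTheorem}. For $k = 1$ a natural candidate is the single-edge query with xregex $\alpha = \varsx\{\ta \altop \tb\}\, \tc^* \,\varsx$; this is already a $\bisCXRPQ{1}$ query since every variable image is forced into $\{\ta,\tb\}$ and hence has length~$1$, so $\langimagebound{1}(\alpha) = \lang(\alpha)$. Its semantics $\llbracket q \rrbracket$ is the set of graph databases containing a path labelled $\ta \tc^n \ta$ or a path labelled $\tb \tc^n \tb$ for some $n \geq 0$. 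The intuition for why a $\CRPQ$ cannot express this: a $\CRPQ$ must ``decide'' the first and last letter independently along its graph pattern, but here the first and last letter must \emph{agree} while ranging over $\{\ta,\tb\}$, and the $\tc^n$ in between can be made arbitrarily long, decoupling the two endpoints across the edges of any fixed pattern. For general $k$ one can either reuse the same example (it already lies in $\bisCXRPQ{k}$ for all $k \geq 1$ since image sizes are $\leq 1 \leq k$), or, if one wants the bound to be ``tight'', use $\alpha = \varsx\{(\ta\altop\tb)^{k}\}\, \tc^*\, \varsx$ with an analogous argument.

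First I would fix the family of witness databases: for $n \in \mathbb{N}$, let $\DBD_n^{\ta}$ be a simple path labelled $\ta\tc^n\ta$ and $\DBD_n^{\tb}$ a simple path labelled $\tb\tc^n\tb$, and consider the disjoint union $\DBD_{n,m} = \DBD_n^{\ta} \uplus \DBD_m^{\tb}$. Clearly $\DBD_{n,m} \models q$ for \emph{all} $n,m$, and also $\DBD_n^{\ta} \models q$ and $\DBD_m^{\tb} \models q$. Next, suppose for contradiction $q' \in \CRPQ$ with $\llbracket q' \rrbracket = \llbracket q \rrbracket$, say $q'$ has graph pattern with edge set $\{(\widehat x_i, \beta_i, \widehat y_i) \mid i \in [p]\}$. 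For each $n$ pick a matching morphism $h_n$ for $q'$ on $\DBD_{n,n}$. By the structure of $\DBD_{n,n}$ each edge of $q'$ is matched into a contiguous sub-path of the $\ta\tc^n\ta$-side or of the $\tb\tc^n\tb$-side (an edge matched to an $\eword$-path can be assigned to either side), giving a partition $[p] = C_n \uplus D_n$. Since there are finitely many such partitions, there are $n_1 \neq n_2$ with $C_{n_1} = C_{n_2}$ and $D_{n_1} = D_{n_2}$. Then I glue: define $h$ on $q'$ by using $h_{n_1}$ on the edges in $C_{n_1}$ (mapped into $\DBD_{n_1}^{\ta}$) and $h_{n_2}$ on the edges in $D_{n_2}$ (mapped into $\DBD_{n_2}^{\tb}$). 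The key point is that $q'$ is a plain $\CRPQ$ — there are no inter-edge constraints — so this glued $h$ is still a valid matching morphism, now witnessing $\DBD_{n_1,n_2} \models q'$. But $\DBD_{n_1,n_2}$ contains only paths labelled $\ta\tc^{n_1}\ta$, $\tb\tc^{n_2}\tb$, and their sub-paths; since $\DBD_{n_1}^{\ta} \models q$ and $\DBD_{n_2}^{\tb} \models q$ separately, $\DBD_{n_1,n_2} \models q$ as well, so this particular instance does \emph{not} yet yield a contradiction — I must choose the witness family more carefully so that the ``mixed'' database falls outside $\llbracket q \rrbracket$.

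The fix — and this is the step I expect to be the main obstacle, namely engineering the example so the pumping actually bites — is to make the two endpoint letters of a single matched path interact in a way $q$ enforces but a $\CRPQ$ cannot. Concretely, take instead the graph pattern $G$ with two edges sharing both endpoints, $x \xrightarrow{\varsx\{\ta\altop\tb\}\,\tc^*} y$ and $x \xrightarrow{\td^*\,\varsx} y$ (a two-edge ``bigon''), so that $q$ asserts: there are nodes $u,v$, a path from $u$ to $v$ labelled $e\,\tc^n$ and a path from $u$ to $v$ labelled $\td^m\,e$ for the \emph{same} $e \in \{\ta,\tb\}$. Now the witness databases $\DBD_n$ are built so that the only $\ta$/$\tb$-labelled edges that can serve as the first letter of the first path and the last letter of the second path are distinct copies forced to carry the same letter only via the variable; a $\CRPQ$ replacing $\varsx$ would have to hard-code which letter, and the disjoint-union/glueing argument then shows that for large $n$ one can rewire a $\CRPQ$ match to use an $\ta$ on one side and a $\tb$ on the other, producing a database in $\llbracket q' \rrbracket \setminus \llbracket q \rrbracket$. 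I would work out the precise $\DBD_n$ (essentially: many short ``letter gadgets'' joined by long $\tc$- and $\td$-chains, with the gadget lengths growing so that any fixed-size $\CRPQ$ pattern cannot ``see'' both ends of a chain at once) and verify the three routine facts: $\DBD_n \models q$ for all $n$; for a hypothetical $q' \in \CRPQ$ the finitely-many-partitions pigeonhole yields $n_1 \neq n_2$ with matching structure; and the glued morphism witnesses $\DBD^{\mathrm{mix}} \models q'$ while a direct inspection shows $\DBD^{\mathrm{mix}} \not\models q$ because no single variable value $e$ works simultaneously. For the general-$k$ statement the same construction works verbatim since the variable image still has length $1 \leq k$, so $q \in \bisCXRPQ{k}$; alternatively one pads the variable definition to $(\ta\altop\tb)^{k}$ and adjusts the gadgets, which changes nothing essential in the argument.
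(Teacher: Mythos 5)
Your proposal has a genuine gap: you never actually produce a working separating example. Your first candidate, the single-edge query $\varsx\{\ta \altop \tb\}\,\tc^*\,\varsx$, fails for the reason you yourself identify --- the mixed database $\DBD_{n_1,n_2}$ still satisfies $q$ because each disjoint component already does, so the pigeonhole/glueing argument never produces a database in $\llbracket q' \rrbracket \setminus \llbracket q \rrbracket$. Your proposed fix (the two-edge ``bigon'' with $\varsx\{\ta\altop\tb\}\,\tc^*$ and $\td^*\,\varsx$) is left as a promissory note: the witness databases $\DBD_n$ are not constructed, and none of the three facts you list is verified. Worse, the glueing strategy you want to reuse depends on being able to distribute the edges of the $\CRPQ$ pattern independently across disconnected components, but in the bigon both query edges share both endpoints $x$ and $y$, so any matching morphism forces both paths between the \emph{same} pair of database nodes; the partition-and-recombine step does not go through as stated. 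Since engineering the example so that the recombination ``bites'' is exactly the content of the lemma, what remains is the entire proof.

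For comparison, the paper avoids pumping over growing databases altogether. It uses a fixed three-edge pattern $(u_1, \varsx\{\ta\altop\tb\}, u_2)$, $(u_3, \td, u_2)$, $(u_3, \varsx \altop \tc, u_4)$ and \emph{constant-size} databases $\DBD_{\sigma_1,\sigma_2}$ with a single $\sigma_1$-edge, a $\td$-edge, and a $\sigma_2$-edge. The key observations are: (i) since $\DBD_{\tb,\tb} \models q$, every edge language $\lang(\beta_i)$ of a hypothetical equivalent $\CRPQ$ $q'$ must contain a word with no occurrence of $\ta$; (ii) in a matching morphism for $q'$ on $\DBD_{\ta,\ta}$, the $\ta$-edge $(v_1,\ta,v_2)$ is covered only by query edges mapped exactly onto it, so one can surgically replace that edge by the $\ta$-free words from (i) and preserve the match for $q'$ while destroying the match for $q$. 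If you want to salvage your approach you would need to carry out an argument of comparable concreteness; as written, the proof is not there.
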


\begin{proof}
Let the Boolean $q_1 \in \CXRPQ$ over $\Sigma = \{\ta, \tb, \tc, \td\}$ be defined by the graph pattern $G_{q_1} = (V_{q_1}, E_{q_1})$ with $V_{q_1} = \{u_1, u_2, u_3, u_4\}$ and 
\begin{equation*}
E_{q_1} = \{(u_1, \alpha_1, u_2), (u_3, \alpha_2, u_2), (u_3, \alpha_3, u_4)\}\,, 
\end{equation*}
where $\alpha_1 = \varsx\{\ta \altop \tb\}$, $\alpha_2 = \td$ and $\alpha_3 = \varsx \altop \tc$ (see Figure~\ref{sepExamplesFigure}). We note that, for every $k \geq 1$ and every graph database $\DBD$, $q_1^k(\DBD) = q_1^1(\DBD) = q_1(\DBD)$. Thus, in order to prove the statement of the lemma for every $k \geq 1$, it is sufficient to show that $\llbracket q_1 \rrbracket \notin \llbracket \CRPQ \rrbracket$, where $q_1$ is interpreted as a $\CXRPQ$ without any restrictions.\par
For every $\sigma_1, \sigma_2 \in \Sigma$, let $\DBD_{\sigma_1, \sigma_2} = (V_{\sigma_1, \sigma_2}, E_{\sigma_1, \sigma_2})$ be a graph-database with $V_{\sigma_1, \sigma_2} = \{v_1, v_2, v_3, v_4\}$ and $$E_{\sigma_1, \sigma_2} = \{(v_1, \sigma_1, v_2), (v_3, \td, v_2), (v_3, \sigma_2, v_4)\}\,.$$ We note that $\DBD_{\sigma_1, \sigma_2} \models q_1$ for every $\sigma_1 \in \{\ta, \tb\}$ and $\sigma_2 \in \{\ta, \tb, \tc\}$ with $\sigma_1 = \sigma_2$ or $\sigma_2 = \tc$. We assume that there is a $q' \in \CRPQ$ with $q' \equiv q_1$, defined by the graph pattern $G_{q'} = (V_{q'}, E_{q'})$ with $E_{q'} = \{(x_i, \beta_i, y_i) \mid i \in [m]\}$. \par
If, for some $i \in [m]$, there is no $w_i \in \lang(\beta_i)$ with $|w_i|_{\ta} = 0$, then $\DBD_{\tb, \tb} \not\models q'$, which, since $\DBD_{\tb, \tb} \models q_1$, is a contradiction. Therefore, we can assume that, for every $i \in [m]$, there is some $w_i \in \lang(\beta_i)$ with $|w_i|_{\ta} = 0$ (note that $w_i = \eword$ is possible). Next, we consider a matching morphism $h$ for $q'$ and $\DBD_{\ta, \ta}$, which, since $\DBD_{\ta, \ta} \models q_1$, must exist. Let $A \subseteq [m]$ be exactly the set of $i \in [m]$ with $h(x_i) = v_1$ and $h(y_i) = v_2$. If $A = \emptyset$, then we can conclude that the edge $(v_1, \ta, v_2)$ is not part of any of the paths between some $h(x_i)$ and $h(y_i)$ that are necessary for $h$ being a matching morphism (this is due to the fact that in $\DBD_{\ta, \ta}$ there are no paths of length strictly greater than $2$). Consequently, we can remove $(v_1, \ta, v_2)$ from $\DBD_{\ta, \ta}$ in order to obtain a graph database $\DBD'$, such that $h$ would still be a matching morphism for $q'$ and $\DBD'$. This, however, is a contradiction, since $\DBD' \not \models q_1$. Thus, $A \neq \emptyset$. Now let $\DBD'_{\ta, \ta}$ be the graph database obtained from $\DBD_{\ta, \ta}$, by deleting the edge $(v_1, \ta, v_2)$ and, for every $i \in A$, adding a path labelled with $w_i$ from $v_1$ to $v_2$. We note that $h$ is still a matching morphism for $q'$ and $\DBD'_{\ta, \ta}$, since, for every $i \in A$, there is a path from $v_1$ to $v_2$ labelled with $w_i \in \lang(\beta_i)$, and, furthermore, as already observed above, the deleted edge $(v_1, \ta, v_2)$ was exclusively covered by edges $(x_i, \beta_i, y_i)$ with $i \in A$. However, it can be easily seen that $\DBD'_{\ta, \ta} \not \models q_1$, which is a contradiction.
%
%
\end{proof}




\begin{lemma}\label{strictnessCXRPQeqECRPQLemma}
There is a Boolean $q \in \CXRPQ$ such that $\llbracket q \rrbracket \notin \llbracket \eqECRPQ \rrbracket$.
\end{lemma}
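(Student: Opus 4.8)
The plan is to exhibit a concrete Boolean $q \in \CXRPQ$ whose defining feature is that it forces equality among \emph{unboundedly many} paths, something $\eqECRPQ$ cannot do because each equality relation in an $\eqECRPQ$ has fixed arity and there are only finitely many of them. A natural candidate is the single-edge query induced by the xregex $\varsx\{(\ta\altop\tb)^*\}\,\#\,(\varsx\,\#)^*$ over $\Sigma=\{\ta,\tb,\#\}$ (essentially the pattern $\NFAintQuery$ from Theorem~\ref{CXRPQDataComplexityHardnessTheorem}, perhaps trimmed), taken as a Boolean query $q = G_q$ with $G_q = (\{x,y\},\{(x,\alpha,y)\})$. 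Then $\DBD\models q$ iff $\DBD$ contains a path labelled $w\,\#\,(w\,\#)^\ell$ for some $w\in\{\ta,\tb\}^*$ and $\ell\geq 0$, i.e.\ a path that reads the \emph{same} block $w$ an arbitrary number of times, separated by $\#$'s.

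First I would set up a family of graph databases $\DBD_{n}$ witnessing the hardness: e.g.\ let $\DBD_n$ be a single long path labelled $\ta^n\#\ta^n\#\cdots\#\ta^n$ with $n$ blocks (or blocks with distinguished contents), so that $\DBD_n\models q$ for every $n$, together with ``corrupted'' databases $\DBD_{n}'$ where two of the blocks differ, for which $\DBD_n'\not\models q$. Suppose for contradiction there were $q'\in\eqECRPQ$ with $\llbracket q'\rrbracket = \llbracket q\rrbracket$, defined by a graph pattern with edge set $E_{q'} = \{(\widehat x_i,\beta_i,\widehat y_i)\mid i\in[m]\}$ and finitely many equality relations of arities $s_1,\dots,s_t$. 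The key combinatorial step — and this is the main obstacle — is a pumping/counting argument in the spirit of Claim~1 in the proof of Theorem~\ref{CRPQECRPQExpressivePowerTheorem}: because $q'$ has only $m$ edges and only $t$ equality constraints of bounded arity, when $q'$ is matched to $\DBD_n$ for large $n$ only a bounded number of blocks can be ``tied together'' by the equality relations; the remaining blocks are each matched by some $\beta_i$ independently of the others. One then argues that the regular languages $\lang(\beta_i)$ and the bounded coupling pattern cannot distinguish a genuine ``all blocks equal'' database from a corrupted one in which a block \emph{not} covered by any equality constraint is replaced by a different word that still lies in the relevant $\lang(\beta_i)$; taking $n$ larger than a Ramsey-type bound (so two distinct corruption choices induce the same edge-to-block assignment and the same satisfied-constraint pattern) yields a matching morphism for $q'$ on a database where $q$ fails, contradicting $\llbracket q'\rrbracket=\llbracket q\rrbracket$.

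Concretely I would carry out: (1) fix $q$ and verify $\lang(\alpha)=\{w\#(w\#)^\ell\mid w\in\{\ta,\tb\}^*,\ell\geq0\}$ and hence describe $\llbracket q\rrbracket$ exactly; (2) define the witness databases $\DBD_n$ (all blocks equal, $n$ of them) and, for each choice of which block to corrupt, a database $\DBD_{n,j}$ failing $q$; (3) assume $q'\in\eqECRPQ$ equivalent to $q$, and for each $n$ fix a matching morphism $h_n$ of $q'$ into $\DBD_n$; record the induced map from edges of $q'$ to block-indices (or sub-paths) and the ``type'' of each satisfied equality constraint (which blocks it identifies); (4) observe there are only finitely many such types, so for $n$ large enough there are two corruption positions both avoided by every equality constraint and both replaceable by a word in $\bigcap$ of the relevant $\lang(\beta_i)$ — using that each $\lang(\beta_i)$ is infinite or at least contains a second word of the appropriate shape (this is where I must choose the block alphabet, e.g.\ ensure every $\beta_i$ matching a block matches at least two distinct candidate blocks, possibly by padding blocks with a variable length of $\ta$'s); (5) transplant $h_n$ to a matching morphism on the corrupted database, obtaining $\DBD'\models q'$ with $\DBD'\not\models q$, the desired contradiction. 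The delicate point throughout is guaranteeing that the edges covering a non-identified block really do admit an alternative matching word that breaks the ``sameness'' of $q$ while respecting all the $\beta_i$ and all equality constraints simultaneously; this is exactly the place where the argument must be pushed beyond the simpler Claim~1/Claim~2 arguments, and where choosing the precise form of $\alpha$ and of the $\DBD_n$ matters most.
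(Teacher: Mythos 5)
There is a genuine gap, and it is located exactly where your plan starts: the candidate query does not separate anything. With $\alpha = \varsx\{(\ta\altop\tb)^*\}\,\#\,(\varsx\,\#)^*$ you may take $\varsx = \eword$ and zero iterations of the star, so $\# \in \lang(\alpha)$, and more generally $w\# \in \lang(\alpha)$ for every $w$. Under the existential path semantics, $\DBD \models q$ therefore holds as soon as $\DBD$ contains a $\#$-labelled arc, i.e.\ $\llbracket q \rrbracket$ is already expressible by a single-edge $\CRPQ$. This cannot be repaired by merely forcing at least one (or two) repetitions: if $\DBD$ contains a path labelled $w\#(w\#)^\ell$ for large $\ell$, it also contains the \emph{subpath} labelled $w\#w\#$, and ``contains a path labelled $w\#w\#$ for some $w$'' is an $\eqECRPQ$ (two $(\ta\altop\tb)^*$-edges tied by a binary equality relation, interleaved with $\#$-edges). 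So your query collapses to a bounded-arity equality, and — fatally for step (2) — your ``corrupted'' databases $\DBD'_n$ do \emph{not} fail $q$: they still contain short witnessing subpaths. The intuition ``$\eqECRPQ$ cannot force equality of unboundedly many paths'' is sound, but the existential semantics means the query must be \emph{anchored} so that no short witness exists; otherwise the unbounded repetition is semantically inert.

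The paper's proof therefore uses a differently shaped query, $\beta = \#\varsy\{\varsx\{\ta^+\tb\}\varsx^*\}\tc\varsy\#$, whose matches are $\#(\ta^{n_1}\tb)^{n_2}\tc(\ta^{n_1}\tb)^{n_2}\#$: the outer $\#\cdots\#$ and the middle $\tc$ prevent shrinking to a short witness, and the constraint to be defeated is that \emph{all} $\ta$-blocks inside one long path have equal length $n_1$. The impossibility argument is also more delicate than the counting argument you sketch: one edge of $q'$ can cover arbitrarily many blocks, so ``only boundedly many blocks are tied together'' is not the right invariant. Instead the paper takes a single path $\#(\ta^p\tb)^{pm}\tc(\ta^p\tb)^{pm}\#$ with $p$ the maximal pumping constant, adds for each edge of $q'$ a parallel ``shortcut'' path carrying its matching word, pumps an $\ta^p$-factor inside the shortcuts of all edges covering long subpaths (consistently within each equality class, using a common multiple of the pumping periods), and then deletes one original arc covered only by such edges. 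The matching morphism survives via the shortcuts, but every surviving $s$--$t$ path has \emph{some but not all} of its $\ta$-blocks lengthened, so it no longer matches $\beta$. If you want to salvage your approach, you would need to (i) replace your $\alpha$ by an anchored query of this kind and (ii) replace the ``corrupt one block'' step by a pumping-inside-one-edge argument of this shape.
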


\begin{proof}
Let $q_2$ be defined by a graph pattern with just a single edge $(u_1, \beta, u_2)$ with $\beta = \#\varsy\{\varsx\{\ta^+\tb\}\varsx^*\}\tc\varsy\#$ (see Figure~\ref{sepExamplesFigure}). We note that $\DBD \models q_2$ if and only if $\DBD$ contains a path labelled with $\#(\ta^{n_1}\tb)^{n_2} \tc (\ta^{n_1}\tb)^{n_2}\#$ for some $n_1, n_2 \geq 1$. Let us assume that there is some $q' \in \eqECRPQ$ with $q_2 \equiv q'$ and $q'$ is defined by a graph pattern $G_{q'} = (V_{q'}, E_{q'})$ with $E_{q'} = \{(x_i, \alpha_i, y_i) \mid i \in [m]\}$ and some equality relations. Moreover, for every $i \in [m]$, let $p_i$ be the pumping lemma constant of $\lang(\alpha_i)$ and let $p = \max\{p_i \mid i \in [m]\}$. We consider the graph database $\DBD = (V_{\DBD}, E_{\DBD})$ with $V_{\DBD} = \{v_0, v_1, \ldots, v_{t}\}$, where $t = 2(p^2m + pm) + 3$ and $(v_0, v_1, \ldots, v_{t})$ is a path labelled with $\#(\ta^{p}\tb)^{pm}\tc(\ta^{p}\tb)^{pm}\#$. \par
Since $\DBD \models q'$, there is a matching morphism $h$ for $q'$ and $\DBD$ with some matching words $(w_1, w_2, \ldots, w_m)$, such that, for every $i \in [m]$, $h(x_i) = v_{j_i}$ and $h(y_i) = v_{j'_i}$ with $0 \leq j_i \leq j'_i \leq t$. We now partition $[m]$ into $S = \{i \mid j'_i - j_i < 2p + 1\}$ and $L = [m] \setminus S$, i.\,e., $(x_i, \alpha_i, y_i)$ is matched to a \emph{long} sub-path of $(v_0, v_1, \ldots, v_{t})$ of length at least $2p + 1$ if $i \in L$ and to a \emph{short} sub-path of $(v_0, v_1, \ldots, v_{t})$ of length strictly less than $2p + 1$ otherwise. \par
If there is an arc $(v_r, \sigma, v_{r + 1})$ of the path $(v_0, v_1, \ldots, v_{t})$ that is not covered by some $(x_i, \alpha_i, y_i)$ (i.\,e., for every $i \in [m]$, it is not the case that $j_i \leq r < j'_i$), then we can contract nodes $v_r$ and $v_{r + 1}$ and $h$ is still a matching morphism for $q'$ and the thus modified graph database $\DBD'$, which is not in $\llbracket q_2 \rrbracket$ anymore. This can be seen by observing that removing a single symbol from a word $w \in \lang(\beta)$ yields a word that is not in $\lang(\beta)$ anymore. Therefore, we can assume that every arc $(v_r, \sigma, v_{r + 1})$ of $\DBD$ is covered by some $(x_i, \alpha_i, y_i)$, i.\,e., $j_i \leq r < j'_i$. Since, for every $i \in S$, at most $2p$ edges can be covered by $(x_i, \alpha_i, y_i)$, we also know that $L \neq \emptyset$ (since otherwise not all arcs are covered).\par
We now modify $\DBD$ as follows. For every $i \in [m]$, we add a \emph{shortcut} from $v_{j_i}$ to $v_{j'_i}$, which is a new path of the same length and with the same label as the path $(v_{j_i}, v_{j_i + 1}, \ldots, v_{j'_i})$. In particular, we note that for every $i \in [m]$, the labels of the shortcuts are identical to the matching words of $h$. \par
We now pump some of the shortcuts depending on the equality relation of $q'$ as follows. Assume that $A \subseteq [m]$ represents an equality relation of $q'$, i.\,e., exactly the edges $\{(x_i, \alpha_i, y_i) \mid i \in A\}$ are subject to the equality relation. This also means that either all $(x_i, \alpha_i, y_i)$ with $i \in A$ cover a short path, i.\,e., $A \subseteq S$, or all $(x_i, \alpha_i, y_i)$ with $i \in A$ cover a long path, i.\,e., $A \subseteq L$. Moreover, as observed above, $L \neq \emptyset$, so there is at least one such equality relation $A$ (note that we assume that the equality relations are represented by a partition of the edge-set, i.\,e., every edge is subject to exactly one equality relation, possibly a unary one). \par
Recall that $(w_1, w_2, \ldots, w_m)$ are the matching words for $h$. Thus, for every $i \in A$, $w_i$ is the label of the sub-path $(v_{j_i}, v_{j_i + 1}, \ldots, v_{j'_i})$. Since $h$ is a matching morphism and since we have the equality relation represented by $A$, we know that, for some $u$, we have $u = w_i$ for every $i \in A$; moreover, the corresponding shortcuts for edges $(x_i, \alpha_i, y_i)$ with $i \in A$ are also all labelled with $u$. Since $A \subseteq L$, we have $|u| \geq 2p + 1$, which means that $u = u' \ta^{p} u''$. Hence, for every $i \in A$, there is a $d_i$ such that $u' \ta^{p + \delta d_i} u'' \in \lang(\alpha_i)$ for every $\delta \geq 0$. This means that, for every $i \in A$, $w' = u' \ta^{p + d} u'' \in \lang(\alpha_i)$, where $d = \Pi_{i \in A} d_{i}$. Consequently, we can pump all shortcuts for the edges $(x_i, \alpha_, y_i)$ with $i \in A$ by the factor $\ta^d$, i.\,e., we replace them by paths of length $|u| + d$ labelled by $w'$. We repeat this pumping-step with respect to all equality relations that refer to edges that cover long paths. After this modification, we have the property that in the tuple $(w_1, w_2, \ldots, w_m)$ of matching words for $h$, we can arbitrarily replace some $w_i$ by the label of the corresponding shortcut for $(x_i, \alpha_i, y_i)$ (regardless of whether it has been pumped or not) and still $h$ is a matching morphism with respect to this modified tuple of matching words. \par
We now choose an arbitrary edge $(v_\ell, \sigma, v_{\ell + 1})$ of the original path $(v_0, \ldots, v_t)$, which is only covered by edges $(x_i, \alpha_i, y_i)$ with $i \in L$, which means that their shortcuts have been pumped. Such an edge must exist, since otherwise all edges are covered by edges from $S$, which is not possible. Then, we delete this edge and we denote the obtained graph database by $\DBD'$. After this modification, due to the shortcuts, the matching morphism $h$ must still be a valid matching morphism for $q'$ and $\DBD$, i.\,e., $\DBD \models q'$. We now conclude the proof by showing that $\DBD' \not \models q_2$, which clearly is a contradiction.\par
For $\DBD' \models q_2$, there must be a path in $\DBD'$ that is labelled by a word of the form $\#(\ta^{n_1}\tb)^{n_2} \tc (\ta^{n_1}\tb)^{n_2}\#$ for some $n_1, n_2 \geq 1$. We note that this is only possible for paths from $v_0$ to $v_t$. Now let us consider an arbitrary path from $v_0$ and $v_{t}$ in $\DBD'$. Since we deleted the edge $(v_\ell, \sigma, v_{\ell + 1})$, this path must use at least one shortcut that corresponds to an edge $(x_i, \alpha_i, y_i)$ with $j_i \leq \ell < j'_i$. However, by our choice of $(v_\ell, \sigma, v_{\ell + 1})$, all such shortcuts have been pumped, which means that the path is labelled with a word $\widehat{w}$ that can be obtained from $\#(\ta^{p}\tb)^{pm}\tc(\ta^{p}\tb)^{pm}\#$ by pumping some unary factors over $\ta$. Furthermore, it is not possible that all maximal unary factors over $\ta$, i.\,e., factors of the form $\# \ta^p \tb$, $\tb \ta^p \tb$ or $\tc \ta^p \tb$, have been pumped, since the considered path can take at most $m$ shortcuts.
\end{proof}

\section{Conclusions and Open Problems}

The fact that evaluation for $\CXRPQ$ is at least $\pspaceclass$-hard even in data-complexity is reason enough to look at fragments of $\CXRPQ$ instead. Nevertheless, an upper bound for evaluating unrestricted $\CXRPQ$s would be interesting from a theoretical point of view; similarly, a lower bound for $\vsfCXRPQ$-evaluation in data-complexity would be interesting. Several of our results implicitly pose conciseness questions. Each $\bisCXRPQ{k}$ can be represented as the union of $\bigO(|\Sigma| + 1)^{nk})$ many $\CRPQ$s, and each $\vsfCXRPQ$ can be represented as the union of exponentially many $\eqECRPQ$s of exponential size. Are these exponential blow-ups necessary? Theorem~\ref{upperBoundsBISCXRPQTheorem} gives some evidence that for $\bisCXRPQ{k}$ this is the case.\par
All our algorithms for $\booleProb$ can be extended to the problem $\checkProb$. With respect to also extracting paths instead of only nodes from the graph-database, we can use the general techniques of~\cite{BarceloEtAl2012} to some extent. More precisely, for a $q \in \vsfCXRPQ$ (or $q \in \bisCXRPQ{k}$), a graph database $\DBD$ and a tuple $\bar{t} \in (V_{\DBD})^\ell$, our evaluation algorithms can be adapted to produce an automaton that represents all tuples of paths corresponding to the matching morphisms of $q$ and $\DBD$, but they are rather large. A thorough analysis of $\CXRPQ$-fragments with respect to how they can be used as queries that also extract path-variables is left for future work.\par
Interestingly enough, restrictions that lower the complexity for matching xregex to strings do not seem to help at all if we use them for querying graphs, and vice versa.
In the string case, the $\npclass$-complete matching problem for xregex trivially becomes polynomial-time solvable, if the number of variables is bounded by a constant (see~\cite{Schmid2016}), while this does not help for graphs (see Theorem~\ref{CXRPQDataComplexityHardnessTheorem}). Variable-star freeness (Section~\ref{sec:vsfCXRPQ}) or bounding the image size (Section~\ref{sec:boundedImageSize}) has a positive impact with respect to graphs. However, the matching problem for xregex remains $\npclass$-hard, even if we require variable-star freeness \emph{and} that variables can only range over words of length at most $1$ (see~\cite[Theorem~$3$]{FernauSchmid2015}).\smallskip\\

\subsection{Acknowledgments}

The author thanks Nicole Schweikardt for helpful discussions. The author is supported by DFG-grant SCHM 3485/1-1.

\bibliographystyle{plain}
\bibliography{bibfile}

\end{document}